\newcommand{\ignore}[1]{}
\newcommand{\dtv}{\mathrm{d}_{\mathrm{TV}}}
\newcommand{\dmax}{d_{\max}}
\newcommand{\alg}{\textsf{ALG}}
\newcommand{\DNFLearn}{\textsc{DNF-Learn}}
\newcommand{\genterm}{\textsc{Generate-Long-List-of-Terms}}
\newcommand{\genlistofterms}{\textsc{Generate-List-of-Terms}}
\newcommand{\prune}{\textsc{Prune}}
\newcommand{\expand}{\textsc{Expand}}
\newcommand{\leap}{\textsc{Noise}}
\newcommand{\findfar}{\textsc{Find-Far-Point}}
\newcommand{\exactlearn}{\textsc{Exact-Learn}}
\newcommand{\mix}{\mathrm{mix}}
\newcommand{\quasipoly}{\mathrm{quasipoly}}
\title{
DNF Learning via Locally Mixing Random Walks
}
\author{Josh Alman\thanks{Email: \texttt{josh@cs.columbia.edu}}
%Supported by NSF awards CCF-2106429 and CCF-2107187.}
\\ \textsl{Columbia University} \and Shivam Nadimpalli \thanks{Email: \texttt{shivamn@mit.edu}} \\ \textsl{MIT} \and 
Shyamal Patel\thanks{Email: \texttt{shyamalpatelb@gmail.com}} \\ \textsl{Columbia University} \and Rocco A. Servedio\thanks{Email: \texttt{ras2105@columbia.edu}}\\ \textsl{Columbia University}}
\date{}
\begin{document}

\pagenumbering{gobble}
\maketitle

\begin{abstract}

We give two results on PAC learning DNF formulas using membership queries in the challenging ``distribution-free'' learning framework, where learning algorithms must succeed for an arbitrary and unknown distribution over $\{0,1\}^n$.

\begin{itemize}

\item [(1)] We first give a quasi-polynomial time ``list-decoding'' algorithm for learning a \emph{single term} of an unknown DNF formula.  More precisely, for any target $s$-term DNF formula $f = T_1 \vee \cdots \vee T_s$ over $\zo^n$ and any unknown distribution ${\cal D}$ over $\zo^n$, our algorithm, which uses membership queries and random examples from ${\cal D}$, runs in $\quasipoly(n,s)$ time and outputs a list $\calL$ of candidate terms such that with high probability some term $T_i$ of $f$ belongs to $\calL$.

\item [(2)] We then use result (1) to give a $\quasipoly(n,s)$-time algorithm, in the distribution-free PAC learning model with membership queries, for learning the class of size-$s$ DNFs in which all terms have the same size. Our algorithm learns using a DNF hypothesis.

\end{itemize}

The key tool used to establish result (1) is a new result on ``locally mixing random walks,'' which, roughly speaking, shows that a random walk on a graph that is covered by a small number of expanders has a non-negligible probability of mixing quickly in a subset of these expanders.
\end{abstract}

\newpage

\setcounter{tocdepth}{2}
\tableofcontents

\ 

\

\newpage

\pagenumbering{arabic}
%!TEX root = ../Learning-DNFs.tex

\section{Introduction}

\subsection{Background}

A \emph{DNF (Disjunctive Normal Form) formula} is an OR of ANDs of Boolean literals.
The problem of learning an unknown DNF formula was proposed in the original pioneering work of Leslie Valiant introducing the PAC learning model \cite{Valiant:84}, motivated by the fact that DNFs are a natural form of knowledge representation as well as a fundamental type of Boolean function that arises in countless contexts.   
In the ensuing decades since \cite{Valiant:84}, DNF learning has emerged as a touchstone challenge in the field of computational learning theory, as witnessed by the intensive research effort that has aimed at developing efficient DNF learning algorithms across a wide range of different models.

In the original ``distribution-independent PAC learning'' version of the problem that was posed by Valiant in \cite{Valiant:84}, there is an unknown and arbitrary probability distribution ${\cal D}$ over $\zo^n$, and a sample oracle which, each time it is queried, provides an independent pair $(\bx,f(\bx))$ where $\bx \sim {\cal D}$ and $f$ is the unknown $s$-term DNF formula to be learned. 
The goal of the learning algorithm is to efficiently construct an (efficiently evaluatable) hypothesis function $h: \zo^n \to \zo$ such that with probability at least $1-\delta$,\footnote{Standard and simple arguments \cite{HKL+:91} show that it is enough to achieve this with any fixed constant probability such as $0.9$ or $0.1$; hence, in most of our ensuing discussion, we will gloss over the role of the confidence parameter.} the error rate $\Pr_{\bx \sim {\cal D}}[h(\bx) \neq f(\bx)]$ of the hypothesis $h$ is at most $\epsilon.$
A further goal is for the DNF learning algorithm to construct a DNF formula as its hypothesis, i.e.~so-called \emph{proper} DNF learning.

While DNF learning has been intensively studied, the original distribution-independent learning problem described above turns out to be very challenging, and progress on it has been limited; we describe some of this progress below.

\medskip

\noindent {\bf Distribution-independent PAC learning of $s$-term DNF (with and without queries).}
In his original paper \cite{Valiant:84}, Valiant used feature expansion and a simple reduction to learning conjunctions to give an $n^{O(s)}$-time algorithm for learning $s$-term DNF in the distribution-independent PAC model described above.
For small (e.g.~constant or poly-logarithmic) values of $s$ this is still the fastest runtime known, but progress has been made on mildly exponential algorithms which run faster than $n^{O(s)}$ when $s$ is at least some particular polynomial in $n$.  
More precisely, Bshouty \cite{bsh96} gave an an approach based on decision-list learning which PAC learns $s$-term DNF in time $2^{O(\sqrt{n \log s} (\log n)^{3/2})}$; Tarui and Tsukiji \cite{TaruiTsukiji:99} gave a boosting-based approach which PAC learns $s$-term DNF in time $2^{O(\sqrt{n} \log n \log s)}$; and Klivans and Servedio \cite{KlivansServedio:04jcss} used polynomial threshold function representations and linear programming to give a $2^{O(n^{1/3} \log s \log n)}$-time algorithm.

Given the seeming difficulty of distribution-independent DNF learning in the original PAC model of random examples only, it is natural to give more power to the learning algorithm by allowing it to make \emph{membership queries} (i.e.~black box queries to the function that is being learned).  Indeed, Valiant's original paper already posed the problem of distribution-independent PAC learning of DNF formulas using both random examples and membership queries. 
Starting with the work of Blum and Rudich \cite{BlumRudich:95}, a variety of different algorithms have been given for distribution-independently learning $s$-term DNF over $\{0,1\}^n$ using random examples and membership queries in time $\poly(n) \cdot 2^s$ \cite{Bshouty:95,Kushilevitz:97,Bshouty:97,BBB+:00}. (In fact, all of these algorithms work in the slightly more general but closely related \emph{exact learning} model of learning from membership queries and ``equivalence queries.'') While these algorithms can learn $O(\log n)$-term DNF formulas in polynomial time, once the number of terms $s$ is more than polylogarithmic they run in more than quasipolynomial time, and even for modest polynomial values of $s$ they do not run as fast as the algorithms of \cite{bsh96,TaruiTsukiji:99,KlivansServedio:04jcss}.

\medskip

\noindent
{\bf Relaxing the problem:  learning in easier models, and learning restricted subclasses of DNFs.}   An extensive body of work has considered easier variants of the DNF learning problem. 
One popular way to relax the problem is to require that the learner only succeed under the \emph{uniform distribution} on $\zo^n$; this is a significantly easier setting than the distribution-free setting because terms of length $k$ are only satisfied with probability $2^{-k}$, and hence, intuitively, long terms can be ``safely ignored.''
Building on this intuition, Verbeurgt \cite{ver90} used attribute-efficient disjunction learning algorithms to give an $n^{O(\log(s/\eps))}$-time uniform-distribution algorithm for learning $s$-term DNF using only random examples. 
A more recent result of De et al.~\cite{de2014learning} (which we will discuss further in \Cref{sec:technical-overview}) implies that $s$-term DNF formulas can be learned in $n^{O(\log(s/\eps))}$ time under the uniform distribution given only uniform random satisfying assignments of the DNF.

In a celebrated result, Jackson \cite{jac97} showed that his Harmonic Sieve algorithm, which combines ``smooth'' boosting techniques with Fourier-analytic techniques, learns $s$-term DNF formulas  in $\poly(s,n,1/\eps)$ time under the uniform distribution using membership queries.  (The precise polynomial running time was subsequently improved in \cite{BJT:99,KlivansServedio:03ml,Gavinsky:03}.) 
Subsequent work adapted the Harmonic Sieve to perform uniform-distribution learning given \emph{uniform quantum superpositions} of labeled examples \cite{BshoutyJackson:99,JTY:02}, or given labeled examples generated according to a \emph{uniform random walk} over the Boolean hypercube $\{0,1\}^n$ \cite{BMO+:05}. We remark that all of these algorithms make essential use of Fourier techniques which crucially rely on the product structure of the underlying distribution, and do not seem to be useful for distribution-independent learning.  

Finally, many papers (including \cite{Valiant:84,Hancock:91,AizensteinPitt91,Berggren93,KushilevitzRoth:93,AizensteinPitt:95,PR95,ABKKPR98,dommispit99,BBB+:00}) have considered algorithms for learning restricted subclasses of DNF formulas, typically using membership queries.
We remark that these works generally consider rather heavily restricted subclasses of DNF formulas that satisfy some quite specific structure; despite much effort, only limited success has been achieved in learning rich subclasses of DNF formulas in the distribution-independent setting.\footnote{Many additional works relax the problem in two ways, by considering learning restricted subclasses of DNF formulas under restricted distributions, most typically the uniform distribution \cite{PagalloHaussler89,HancockMansour:91,KMP:94,Khardon:94,Verbeurgt:98,SakaiMaruoka:00,Servedio:04iandc,JacksonServedio:06long, JLSW11:dam, Sellie:08,Sellie-2009}.} 

\subsection{Our Results}

In this paper we give two main algorithmic results towards learning DNF formulas in the distribution-independent PAC model using membership queries. 

The first of these results (see \Cref{thm:list-decoding}) achieves a weaker algorithmic goal than PAC learning, but it applies to general $s$-term DNF formulas in the general distribution-independent PAC + MQ setting.
The second of these results (see \Cref{thm:exact-k-learn-intro}) gives a distribution-independent PAC + MQ learning algorithm for quite a broad subclass of $s$-term DNF formulas.  We emphasize that both of these algorithms run in time $\quasipoly(n,s)$, which is much faster than either the $2^{\tilde{O}(n^{1/2} \log s)}$- or $2^{\tilde{O}(n^{1/3} \log s)}$-type  running times of \cite{TaruiTsukiji:99,bsh96,KlivansServedio:04jcss} or the $\poly(n,2^s)$ running times of \cite{BlumRudich:95,Bshouty:95,Kushilevitz:97,Bshouty:97,BBB+:00} for the general distribution-independent PAC learning problem.

\begin{theorem} [List-decoding a single term.]
\label{thm:list-decoding}
Let $f=T_1 \vee \cdots \vee T_s$ be any unknown $s$-term DNF over $\zo^n$ and let $\calD$ be any unknown distribution over $\zo^n$. 
Let $p := \Pr_{\bx \sim \calD}[f(\bx) = 1].$
There is an algorithm {\sc List-Decode-DNF-Term} in the PAC + MQ model that runs in time\footnote{We remark that the ${\frac 1 p}$ running time dependence is inherent for any algorithm, since $1/p$ calls to the sample oracle are required even to find a single satisfying assignment of $f$; note also that if $p \ll \eps$ then it is trivial to learn $f$ to accuracy $\eps$ under ${\cal D}$.}
\[
	\frac{\Theta(1)}{p} + (ns)^{O(\log(ns))}
\] 
and outputs a list $\calL$ of at most $(ns)^{O(\log(ns))}$ terms, such that with probability at least 0.99 some term $T_i$ of $f$ appears in $\calL$.
\end{theorem}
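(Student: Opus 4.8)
The plan is to reduce to the following core task: given oracle access to $f$ together with a single satisfying assignment $x^\ast \in f^{-1}(1)$, output a short list of terms one of which is a genuine term of $f$. Such an $x^\ast$ is obtained by drawing $\Theta(1)/p$ examples from $\calD$ and keeping one labeled $1$; this is the only use of the sample oracle and of the $\Theta(1)/p$ term, and if $p$ is too small for this to succeed then $f$ is $\eps$-close to the constant-$0$ function and the learning task is trivial. Fix any term $T_i$ with $T_i(x^\ast)=1$. Because $x^\ast$ is now fixed, identifying $T_i$ is the same as identifying its set of variables $V_i\subseteq[n]$: on each $j\in V_i$ the literal of $T_i$ is forced to agree with $x^\ast_j$. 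So it suffices to produce a short list of subsets of $[n]$, one of which is the variable set of some term of $f$.

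The engine is a membership-query-guided random walk on $f^{-1}(1)$ started at $x^\ast$: at each step propose flipping a uniformly random coordinate and accept only if the result is still in $f^{-1}(1)$. The structural point is that $f^{-1}(1)=\bigcup_{j=1}^{s}C_j$, where $C_j$ is the subcube of assignments satisfying $T_j$, and, with respect to the induced hypercube-edge graph, each $C_j$ is an expander with mixing time $\poly(n)$; thus $f^{-1}(1)$ is a graph covered by $s$ expanders. I would then invoke the locally-mixing-random-walks result (the paper's main technical input): a walk of $\poly(n,s)$ steps started at $x^\ast$ has probability at least $1/(ns)^{O(\log(ns))}$ of being, at its endpoint, statistically close to the uniform distribution over $\bigcup_{j\in S}C_j$ for some nonempty, walk-dependent, a priori unknown subcollection $S$ of the subcubes --- possibly a single one. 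Re-running this walk $(ns)^{O(\log(ns))}$ times independently produces, with high probability, at least one batch of near-uniform samples from a union of this form.

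From such a batch of near-uniform samples of $U:=\bigcup_{j\in S}C_j$ one extracts candidate variable sets: a cube $C_{j_0}$ of maximum dimension among $\{C_j\}_{j\in S}$ carries an $\Omega(1/s)$ fraction of the mass of $U$, and on $C_{j_0}$ the coordinates of $V_{j_0}$ are frozen while all others are (nearly) balanced, so an attribute-efficient search in the spirit of De et al.~\cite{de2014learning} and Verbeurgt~\cite{ver90} --- testing $O(\log(ns))$-size coordinate patterns --- recovers $V_{j_0}$, the variable set of a genuine term $T_{j_0}$ of $f$; this last step is where the $\log(ns)$ in the exponent is paid. Since we can verify neither which runs hit the good event nor which extracted set is correct, we simply place \emph{every} candidate produced over all runs into the output list $\calL$; its size remains $(ns)^{O(\log(ns))}$, and a union/expectation bound over the $1/(ns)^{O(\log(ns))}$-probability good event, amplified by the $(ns)^{O(\log(ns))}$ repetitions, shows that some term $T_i$ of $f$ lies in $\calL$ with probability at least $0.99$. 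Together with the $\Theta(1)/p$ cost of finding $x^\ast$, this yields the claimed running time.

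The main obstacle is the locally-mixing step and its interface with extraction. First, one must confirm that the abstraction ``$f^{-1}(1)$ is covered by $s$ expanders'' is faithful even when the subcubes $C_j$ overlap adversarially, and pin down the expander parameters of a Boolean subcube sitting inside the constrained ambient graph. Second --- and this is the genuinely new combinatorial input --- one must prove that the walk nevertheless mixes on \emph{some} sub-union $\bigcup_{j\in S}C_j$ with inverse-quasipolynomial probability; the difficulty is that a global mixing statement over all of $f^{-1}(1)$ is false in general, so the argument must localize to a subcollection while keeping the success probability non-negligible. Third, because the limiting distribution over $\bigcup_{j\in S}C_j$ is skewed toward higher-dimensional cubes and we do not know which samples land in which cube, cleanly reading off a single term's variable set requires the careful averaging described above rather than naively scanning for constant coordinates. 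Keeping the list size and total running time at $(ns)^{O(\log(ns))}$ throughout is then a matter of bookkeeping.
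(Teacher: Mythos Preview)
Your high-level plan matches the paper's: obtain a satisfying assignment via sampling, perform membership-query random walks on $f^{-1}(1)$, invoke the local-mixing theorem, and dump all candidates into a list. However, there are two genuine gaps.

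First, the extraction step is under-specified and, as written, does not work. You assert that after $(ns)^{O(\log(ns))}$ independent walks you have ``at least one batch of near-uniform samples'' from some $U=\bigcup_{j\in S}C_j$ and then run an attribute-efficient search over that batch. But a single walk yields a single endpoint; the good event $E_v$ is unobservable; and the endpoints that do not satisfy $E_v$ are arbitrarily distributed, so you cannot pool endpoints and look at empirical coordinate biases. The paper's extraction is concrete and quite different from what you sketch: for each guessed walk length $\ell$, run $\Theta(\log n)$ \emph{independent} length-$\ell$ walks from the same start and output the single longest term satisfied by \emph{all} of their endpoints. If all $\Theta(\log n)$ walks happen to satisfy $E_v$ simultaneously, then each coordinate outside $T_j$ takes both values across the endpoints with high probability, and this longest common term is exactly $T_j$. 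There is no attribute-efficient search or coordinate-pattern enumeration --- just an intersection --- and the cost is paid in the probability that all $\Theta(\log n)$ walks are simultaneously good.

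Second, even with the correct extraction, your running-time accounting does not close at $(ns)^{O(\log(ns))}$. The local-mixing theorem gives $\Pr[E_v]\ge (ns)^{-O(\log s)}$, so $\Theta(\log n)$ walks all hit $E_v$ with probability only $(ns)^{-O(\log s\cdot\log n)}$; amplifying this costs $(ns)^{O(\log s\cdot\log n)}$, which is the weaker bound the paper proves directly in \Cref{sec:list-decoding-DNFs}. To get the claimed $(ns)^{O(\log(ns))}$, the paper needs an extra idea you are missing (Appendix~B, \Cref{lem:good-point-reached}): first take a \emph{single} walk of length $(ns)^{O(\log s)}$ and show that with probability $\ge 1/8$ it passes through some vertex $w$ whose local-mixing success probability is boosted to $(ns)^{-O(1)}$; then run the $\Theta(\log n)$ second-stage walks from each $\bY_t$ along this first walk. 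From a good $w$, all $\Theta(\log n)$ second-stage walks succeed with probability $(ns)^{-O(\log n)}$, and the total cost becomes $(ns)^{O(\log s)}\cdot (ns)^{O(\log n)}=(ns)^{O(\log(ns))}$. Your proposal lacks this two-stage structure.
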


\begin{remark} [An easy decision tree analogue, and an easy uniform-distribution analogue] \label{remark:DT}
We remark that a similar-in-spirit theorem for \emph{decision trees} with at most $s$ leaves is trivial. This is because any decision tree with at most $s$ leaves must contain some leaf at depth $\leq \log s$, so without using any examples at all it is possible to construct a list, consisting of  all $\approx n^{\log s}$ many possible root-to-leaf paths of length $\leq \log s$, which is guaranteed to contain some actual root-to-leaf path that is present in the target decision tree.  However, this argument completely breaks down for DNF formulas, since, for example, it may be the case that the number of terms $s$ is $\poly(n)$ but all terms have length $\Theta(n).$

We further remark that a simple routine used in \cite{de2014learning} gives a quantitatively similar runtime and performance guarantee to \Cref{thm:list-decoding}, in a setting where the algorithm is given \emph{uniform random satisfying assignments of the DNF $f$}. Perhaps surprisingly, \Cref{thm:list-decoding} shows that (using membership queries) a similar result can be achieved even in the much more challenging \emph{distribution-independent} setting. We discuss the (significant) technical challenges that must be overcome to obtain our distribution-independent result in \Cref{sec:technical-overview}.

\end{remark}

\Cref{thm:list-decoding} does not yet give us a full-fledged algorithm for PAC learning DNF, since instead of constructing a high-accuracy hypothesis for the unknown DNF formula $f$ it only constructs a list of candidates one of which is guaranteed to be one of the terms in $f$. It should also be noted that the term $T_i$ which {\sc List-Decode-DNF-Term} finds need not necessarily have
significant probability of being satisfied under the distribution ${\cal D}$.  
On the positive side, to the best of our knowledge \Cref{thm:list-decoding} is the first quasipolynomial time algorithm that ``does anything non-trivial'' towards learning an unknown and arbitrary $s$-term DNF formula in the general distribution-independent PAC + MQ model.  
We are optimistic that \Cref{thm:list-decoding}, and the tools used in its proof, may lead to renewed progress on the DNF learning problem.  

As evidence for this optimism, our second main result shows that \Cref{thm:list-decoding} can be used (along with a range of other ingredients and ideas) to learn a rich subclass of DNF formulas.
We say that an $s$-term DNF formula $f = T_1 \vee \cdots \vee T_s$ is an \emph{exact-DNF} if there is some value $1 \leq k \leq n$ such that each term $T_i$ contains exactly $k$ distinct literals.
Our second main result is a quasipolynomial-time 
learning algorithm for this class:

\begin{theorem} [Learning exact-DNF]
\label{thm:exact-k-learn-intro}
There is a distribution-independent PAC+MQ\footnote{We remark that we believe that our techniques should extend to learning exact-$k$ DNFs in the stronger exact learning model of Angluin.} algorithm which, given any $\eps > 0$, learns any unknown size-$s$ exact-DNF to accuracy $\eps$ using a hypothesis which is a DNF of size $O(s \log(1/\eps))$. The algorithm runs in time 
\[
\exp \left( \log^{O(1)}(ns) \right) \cdot \poly(\eps^{-1}).
\] 
\end{theorem}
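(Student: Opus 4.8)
The plan is to build the $O(s\log(1/\eps))$-term DNF hypothesis one term at a time, via a greedy set-cover (equivalently, boosting-style) argument over the positive region of $f$. The key structural fact is that the $s$ terms $T_1,\dots,T_s$ of $f$ jointly cover every positive point, so for \emph{any} reweighting of $\calD$ supported on $f^{-1}(1)$, some term of $f$ is satisfied with probability at least $1/s$. Hence it suffices to exhibit a ``weak learner'' that, given sample access to $\calD$ conditioned on ``$f(\bx)=1$ and the current hypothesis falsifies $\bx$'', produces (with non-negligible probability) a term $T$ that (i) is satisfied by a $\ge 1/\poly(s)$ fraction of this conditional distribution, and (ii) satisfies only a negligible $\calD$-mass of negative points. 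Granting such a weak learner, $O(s\log(1/\eps))$ greedy rounds push the false-negative rate below $\eps/2$, while property (ii) summed over the chosen terms keeps the false-positive rate below $\eps/2$, so the final DNF has error at most $\eps$. All conditional distributions are simulated by rejection sampling, which is efficient exactly when the conditioning event has non-negligible probability --- and when it does not, we are already done. We split on the common term length $k$.

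For small $k$ (say $k=\log^{O(1)}(ns)$), no membership queries are needed: expand the feature set to all $\le (2n)^k=\quasipoly(ns)$ width-$k$ conjunctions and run the textbook algorithm for PAC-learning disjunctions (discard every feature falsified by some sampled negative example). The surviving-feature disjunction is a DNF that contains every term of $f$ (no implicant of $f$ is ever falsified) and has error $\le\eps/4$ by a standard Occam/VC bound with $\quasipoly(ns)\cdot\poly(1/\eps)$ samples; greedily trimming it to $O(s\log(1/\eps))$ of its terms to re-cover the positive mass finishes this case, since $f$'s terms are among the candidates.

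The substantive case is large $k$, where enumeration fails and we must invoke \Cref{thm:list-decoding}. In each round, run \textsc{List-Decode-DNF-Term} with respect to the current conditional distribution $\calD'$ (which is supported on $f^{-1}(1)$, so the $1/p$ term in its runtime vanishes), obtaining a list $\calL$ of $\quasipoly(ns)$ candidate terms that with high probability contains a genuine term $T_i$ of $f$. For each $T\in\calL$, use rejection sampling from $\calD$ conditioned on $T(\bx)=1$ to estimate both its false-positive rate $\Pr_{\bx\sim\calD}[f(\bx)=0\mid T(\bx)=1]$ and its coverage $\Pr_{\bx\sim\calD'}[T(\bx)=1]$; discard $T$ if the former is not tiny, or if its $\calD$-mass is so small that it neither aids coverage nor harms correctness; among the survivors, add the one of largest estimated coverage. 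A genuine term has zero false positives and hence always survives the filter, so the added term is at least as good, in coverage, as whichever genuine term the list-decoder produced.

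The main obstacle is exactly the gap between what \Cref{thm:list-decoding} promises --- one genuine term \emph{somewhere} in the list --- and what the cover needs --- a term covering an $\Omega(1/\poly(s))$ fraction of the reweighted positive mass. Since \textsc{List-Decode-DNF-Term} works by drawing satisfying assignments $\bx\sim\calD'$ and operating locally, the genuine term it returns should be one satisfied by such an $\bx$, and an averaging argument then gives coverage $\ge 1/\poly(s)$ under $\calD'$ with noticeable probability; making this rigorous --- and, to obtain the clean $O(s\log(1/\eps))$ hypothesis size rather than $\poly(s)\cdot\log(1/\eps)$, appending a final trimming pass that selects $O(s\log(1/\eps))$ of the accumulated terms --- is where the real work lies. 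Secondary obstacles are propagating the quasipolynomial sample and time budgets through $\poly(s)\log(1/\eps)$ rounds and a $\quasipoly(ns)$-sized union bound, and checking that every rejection-sampling step is invoked only when its conditioning event is non-negligible, so that the total running time remains $\exp(\log^{O(1)}(ns))\cdot\poly(\eps^{-1})$.
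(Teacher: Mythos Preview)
Your high-level boosting/set-cover framework is correct and matches the paper's \Cref{lem:term-booster}: a $\gamma$-weak term learner suffices, and your small-$k$ case via feature expansion is essentially the paper's \Cref{lem:small-k}.

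The gap is in the large-$k$ weak learner. You correctly identify the obstacle---\Cref{thm:list-decoding} only promises that \emph{some} term of $f$ appears in $\calL$, with no control over which one---but your proposed fix does not close it. The claim that ``the genuine term it returns should be one satisfied by'' the starting point $y$ is not supported by the local mixing theorem: the random walk from $y$ may wander into and mix over a term $T_j$ that $y$ does not satisfy, and $T_j$ may have zero mass under $\calD'$. Worse, repeated calls to the list-decoder from fresh samples $y\sim\calD'$ can return the \emph{same} low-coverage term every time, since the mixing behavior is a property of the graph $G[f^{-1}(1)]$ and not of $\calD'$; reweighting the positive points does nothing to steer the walk toward the terms you actually need. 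Your averaging argument shows that a high-coverage term $T_{i^*}$ exists and that $y$ lands in it with probability $\ge 1/s$, but nothing in \Cref{thm:list-decoding} forces the walk to report $T_{i^*}$ rather than some adjacent $T_j$.

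The paper resolves this with substantial additional machinery that you do not mention and that crucially exploits the exact-$k$ hypothesis. It maintains a growing list $\calL$, \emph{expands} it to all nearby width-$k$ terms, and \emph{prunes} via membership queries (sampling uniformly from $T^{-1}(1)$, not from $\calD$). When this stalls, a \findfar{} procedure takes $y$ satisfying some unlearned term $T^\star$ and produces a point $z$ that still satisfies $T^\star$ but is Hamming-far from every satisfying assignment of every $T\in\calL$; from such a far point, \Cref{lem:far-point-walk} shows the random walk provably avoids all of $\calL$, so the list-decoder is forced to return a \emph{new} term of $f$. Constructing far points (via noising and a recursive ``popular coordinate'' branching argument) is where the exact-$k$ assumption does its real work; as the paper notes in \Cref{sec:technical-overview}, for general DNF a far term need not have any far satisfying assignment.
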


We feel that the class of exact-DNF formulas is quite a rich subclass of DNF formulas compared with other subclasses of DNF formulas for which efficient distribution-independent learnability has been established in prior work. 
To the best of our knowledge, prior to \Cref{thm:exact-k-learn-intro}
the fastest known algorithm for learning size-$s$ exact DNF formulas was either the $\poly(n,2^s)$-time algorithms of \cite{BlumRudich:95,Bshouty:95,Kushilevitz:97,Bshouty:97,BBB+:00} 
or the $2^{O(n^{1/3} \log n \log s)}$-time algorithm of \cite{KlivansServedio:04jcss}, depending on the value of $s$.  In contrast, we give a quasipolynomial-time algorithm.
We remark that \Cref{thm:exact-k-learn-intro} is easily seen to also give a quasipolynomial-time algorithm for the class of ``approximately exact'' DNF formulas, i.e.~$s$-term DNF in which all terms have lengths that differ from each other by at most an additive $\polylog(n,s)$.

Finally, we remark that a potentially interesting aspect of our work is that \Cref{thm:list-decoding} and \Cref{thm:exact-k-learn-intro} are proved using algorithmic techniques and technical machinery (in particular, random walks and analysis of expansion and rapid mixing) which are quite different from the ingredients that have been employed in prior DNF learning algorithms.
We further note that there has recently been some algorithmic work \cite{liu2024locally} that analyzes Markov chains before they have fully mixed to their stationary distribution and leverages properties of these Markov chains for various algorithmic applications. While our techniques and analyses are very different from those of \cite{liu2024locally}, our results reinforce the high-level message that there may be unexpected algorithmic applications of studying the behavior of random walks even before they mix to the stationary distribution.
%!TEX root = ../Learning-DNFs.tex

\subsection{Technical Overview} \label{sec:technical-overview}

To motivate our list-decoding algorithm, we begin by recalling an approach due to \cite{de2014learning} for learning $s$-term DNF formulas in a ``relative-error'' sense under a variant of the uniform-distribution learning model.  
In the problem considered in \cite{de2014learning}, the learning algorithm is given independent examples drawn uniformly from $f^{-1}(1)$, where $f$ is an unknown $s$-term DNF over $\zo^n$, and the goal is to construct a hypothesis DNF $h$ such that ${\frac {|h^{-1}(1) \ \triangle f^{-1}(1)|} {|f^{-1}(1)|}}$ is small. 
A key ingredient in the \cite{de2014learning} algorithm for this problem is a routine that 
produces a list $\calL$ of at most $n^{O(\log s)}$ terms such that with high probability some term in ${\cal L}$ is among the terms of $f$ (note that this is quite similar to our list-decoding guarantee).
In the \cite{de2014learning} context this can be done in a simple way by leveraging the following easy observations: (i) if $\bx^1,\dots,\bx^{2 \log n}$ are i.i.d.~uniform satisfying assignments of $f$, then with probability at least $(1/s)^{2 \log n}$ all $2 \log n$ of the $\bx^i$'s satisfy some single fixed term $T_j$ of $f$; and moreover, (ii) if all $2 \log n$ of the $\bx^i$'s satisfy the same term $T_j$ of $f$, then with high probability the \emph{longest} term that is satisfied by all of $\bx^1,\dots,\bx^{2 \log n}$ is precisely $T_j$ (intuitively, this is because any variable that does not occur in $T_j$ is equally likely to be set to 0 or 1 in each $\bx^i$).

The correctness of the \cite{de2014learning} routine relies heavily on the assumption that the examples $\bx^i$ are \emph{uniform random} satisfying assignments of $f$; in the general distribution-free setting that we consider $\calD$  can be arbitrary, so we cannot hope to directly use this simple approach.  
But can anything along these lines be done?
In an effort to adapt the approach to our setting, consider the graph $G[f^{-1}(1)]$ which is the subgraph of the Boolean hypercube $\{0,1\}^n$ that is induced by the satisfying assignments of $f$.
As a jumping-off point, we note that if a suitable \emph{random walk} over that graph (which can be performed using membership queries) had small mixing time, then given an initial satisfying assignment $y \in \{0,1\}^n$ (which can easily be obtained from the distribution ${\cal D}$), running independent random walks starting from $y$ would give us a way to draw samples from the stationary distribution over $G[f^{-1}(1)]$. Since the maximum degree of any vertex in $G[f^{-1}(1)]$ is $O(n)$, intuitively the stationary distribution is not too different from the uniform distribution over satisfying assignments, and indeed this would enable us to  simulate the approach from \cite{de2014learning}.

Of course, the graph $G[f^{-1}(1)]$ need not have small mixing time; even for simple DNF formulas such as $f(x)= (x_1 \land x_2) \lor (\overline{x_1} \land \overline{x_2})$, the graph may not even be connected.  
In this example, though, a random walk starting from a satisfying assignment in $T_1 := x_1 \land x_2$ will rapidly mix to a uniform satisfying example of $T_1$, and likewise for $T_2$.
So at least in this simple example, taking multiple independent random walks starting from an arbitrary point $y \in f^{-1}(1)$ and following the \cite{de2014learning} approach would indeed yield a term of $f$.

\subsubsection{Locally Mixing Random Walks}
Given the previous discussion, it is natural to explore an approach of choosing an initial satisfying assignment $y$ and performing random walks starting from $y$.
We explore this approach by analyzing what we call \emph{locally mixing random walks}. In particular, we are interested in the following general scenario: We are given a large graph $G = (V,E)$, with relatively small maximum degree $\dmax$, whose vertices are covered by sets $A_1, \dots A_s$,  where each induced subgraph $G[A_i]$ is a $\theta$-expander. Roughly speaking, we would like to understand how long it may take a random walk starting from an arbitrary $v \in V(G)$ to mix in $G \left[ \bigcup_{i \in S} A_i \right]$, for some $S \subseteq [s]$.   (As suggested above, in our intended use case the graph $G$ will be $G[f^{-1}(1)]$, the induced subgraph of $\zo^n$ whose vertices are the satisfying assignments of the DNF $f$, and each set $A_i$ will be $T_i^{-1}(1)$, the set of satisfying assignments of the $i$-th term.  Each $G[A_i]$ is a $(n-k_i)$-dimensional hypercube, where $k_i$ is the length of term $T_i$, so each $G[A_i]$ is indeed a good expander, and the maximum degree $\dmax$ of $G$ is $O(n)$, which is not too large.)

To achieve our desired term list-decoding, it would roughly suffice to show that for every $v \in V(G)$ there exists a set $S \subseteq [s]$ such that a random walk starting at $v$ rapidly mixes to the stationary distribution of $G \left[\bigcup_{i \in S} A_i \right]$, i.e.~in at most $n^{O(\log(s))}$ steps.\footnote{Note that while the above statement is about mixing to the stationary distribution, the ratio $\pi(u)/\pi(v)$ of the probabilities of any two vertices $u,v$ under the stationary distribution $\pi$ of $\smash{G \left[ \bigcup_{i \in S} A_i \right]}$ is at most $\dmax$, which we have assumed to be relatively small; so the stationary distribution over $\smash{G \left[ \bigcup_{i \in S} A_i \right]}$ is an adequate proxy for the uniform distribution over $\smash{G \left[ \bigcup_{i \in S} A_i \right]}$ for our purposes.}  Unfortunately, such a statement is not true, as witnessed by the following example:

\begin{example} \label{example:wacky}
Let $G$ be a graph of maximum degree $O(n)$ which is structured as follows: $G$ contains a degree-$n$ expander $G_1$ of size $2^n$ and a vertex-disjoint degree-$2n$ expander $G_2$ of size $2^{2n}.$  $G$ also contains an additional vertex $v$ that is the root of two vertex-disjoint (other than the root) $(n+1)$-regular trees $T_1$ and $T_2$, each of depth $n/\log n$.  The $2^n$ leaves of the first tree are the $2^n$ vertices of $G_1$, and the $2^n$ leaves of the second tree are $2^n$ of the vertices of $G_2$. Note that the graphs $T_1 \cup G_1$ and $T_2 \cup G_2$ are good expanders, and that the vertices of $G$ are covered by the sets $A_1=V(T_1)  = V(T_1) \cup V(G_1)$, $A_2=V(T_2) \cup V(G_2).$  

Now consider a random walk starting from $v$: it has an equal chance of going to either $T_1$ or $T_2$ in its first step, and given that it goes to $T_i$, it has a $1-o(1)$ chance of mixing in $G[A_i]$ within $\poly(n)$ steps.  In fact, given that it initially goes to $T_i$, with probability $1-o(1)$ it will take $2^{\Omega(n)}$ steps before it enters the other tree or the other expander. Thus after, say, $2^{\sqrt{n}}$ steps, a random walk will have probability roughly $1/2$ of being distributed roughly uniformly over $G[A_1]$, and probability roughly $1/2$ of being distributed roughly uniformly over $G[A_2]$. But this does not correspond to the stationary distribution over $G[A_1]$, $G[A_2]$, or $G=G[A_1 \cup A_2].$
\end{example}

Given examples such as the one above, we must lower our requirements; fortunately, we can do this in such a way that meeting those lowered requirements is still useful for a \cite{de2014learning}-type approach.
We only require that for each $v \in V$, there is some event $E_v$ such that a random walk $v:= \bY_0, \dots \bY_\ell$ starting at $v$ mixes rapidly in some $G \left[ \bigcup_{i \in S} A_i \right]$ \emph{conditioned on $E_v(\bY_1, \dots \bY_\ell)$ occuring}, and that moreover $E_v(\bY_1, \dots \bY_\ell)$ occurs with non-negligible probability.\footnote{Returning to \Cref{example:wacky}, we could take the event $E_v$ to be that the first step of the random walk takes it into $T_1$.}
This is made precise in our Local Mixing theorem, which is the main technical result undergirding \Cref{thm:list-decoding}:

\begin{theorem}[Local Mixing, informal statement]
\label{thm:local-mixing-intro}
	Let $G = (V,E)$ be a graph with maximum degree $\dmax$, and let $A_1, ..., A_s$ be such that each $G[A_i]$ is a $\theta$-expander and $\bigcup_i A_i = V$. For any starting vertex $v \in V$, there exists a value $\ell = (s\dmax \log(|V|) \log(1/\eps)/ \theta)^{O(\log(s))}$; an event $E_v$; and an index $j \in [s]$, such that a lazy random walk $v := \bY_0, \bY_1, \dots, \bY_\ell$ in $G$ satisfies 

	\begin{itemize}
	
	\item [$(i)$] $\Pr[E_v(\bY_0, \dots \bY_\ell)] \geq (sd_{\max} \log(|V|) \log(1/\eps)/\theta)^{-O(\log(s))}$; and 
	\item [$(ii)$] the distribution of $\bY_\ell \,|\, E_v(\bY_0, \dots \bY_\ell)$ has total variation distance at most $\eps$ from the uniform distribution over $A_j$.
	
	\end{itemize}
\end{theorem}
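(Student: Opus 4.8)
The plan is to prove the Local Mixing theorem by induction on the number $|S|$ of ``active'' expanders, via a recursion of depth $O(\log s)$ in which each level multiplies both the walk length and the reciprocal success probability by a $\mathrm{poly}(s,\dmax,\log|V|,\log(1/\eps),1/\theta)$ factor; this is exactly what produces final bounds of the form $(s\dmax\log|V|\log(1/\eps)/\theta)^{O(\log s)}$. Concretely, I would prove the following strengthening by induction on $|S|$: for every $S\subseteq[s]$ and every $v\in\bigcup_{i\in S}A_i$ there is a length $\ell_{|S|}$, an event $E$ on a lazy walk $v=\bY_0,\dots,\bY_{\ell_{|S|}}$, and an index $j\in S$, such that $\Pr[E]\ge p_{|S|}$ and, conditioned on $E$, the law of $\bY_{\ell_{|S|}}$ is within total variation distance $\eps$ of uniform over $A_j$, where $\ell_{|S|}\le C^{\,1+\lceil\log_2|S|\rceil}$ and $1/p_{|S|}\le C^{\,1+\lceil\log_2|S|\rceil}$ for $C=\mathrm{poly}(s,\dmax,\log|V|,\log(1/\eps),1/\theta)$. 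Taking $S=[s]$ (so $\bigcup_{i\in S}A_i=V$) gives the theorem.

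\textbf{The inductive step: a ``progress'' lemma.} The heart of the argument is a lemma that, from any $v$ lying in a union $U=\bigcup_{i\in S}A_i$ of $k:=|S|\ge 2$ expanders, with probability $\ge 1/C$ a $\mathrm{poly}(C)$-length walk reaches a simpler situation, in one of two ways. Case (a): if the walk mixes over $U$ within $\mathrm{poly}(C)$ steps, then I would additionally condition on the endpoint $\bY_L$ landing in $A_j$ for a $j\in S$ that is hit with probability $\ge 1/\dmax$ (such a $j$ exists by a counting argument on the overlap multiplicities), after which $\bY_L$ conditioned on this event is $\eps$-close to uniform on $A_j$ — here one passes from the true stationary law of $G[U]$ to uniform using the footnoted fact that this measure has ratio at most $\dmax$ between any two vertices. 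Case (b): otherwise the walk has \emph{not} mixed over $U$, so there is a cut $(W,U\setminus W)$ with $v\in W$ of small conductance; the key structural point is that one can take $W$ so that it meets at most $\lceil k/2\rceil$ of the sets $A_i$ — intuitively because the restriction of any such cut to a single $A_i$ has few crossing edges, so $G[A_i]$ being a $\theta$-expander forces almost all of $A_i$ onto one side. One then conditions on the walk being on $v$'s side at the relevant time, which has non-negligible probability precisely because the cut is sparse, and recurses on $\bY_L\in\bigcup_{i\in S'}A_i$ with $|S'|\le\lceil k/2\rceil$. The base case $|S|=1$ is case (a) applied to a single expander $G[A_j]$, whose rapid mixing is the classical conductance-based mixing-time estimate.

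\textbf{Composing the levels.} To assemble these I would compose the events across the $\le 1+\lceil\log_2 s\rceil$ levels using the Markov property. If $E$ is the level-$k$ event, then conditioned on $E$ the endpoint $\bY_L$ has some distribution over start vertices for the level-$\lceil k/2\rceil$ sub-instance; for each such $w$ the induction supplies an event $E'_w$ and index $j'_w$. Choosing the most popular index $j^\star$ (which has probability $\ge 1/k$ under $\bY_L\mid E$), further conditioning on $\{j'_{\bY_L}=j^\star\}$, and then conditioning on $E'_{\bY_L}$, yields one event of probability $\ge p_k\cdot(1/k)\cdot p_{\lceil k/2\rceil}$ under which the final endpoint is $\eps$-close to uniform on $A_{j^\star}$; since each of $O(\log s)$ levels costs only a $\mathrm{poly}(C)$ factor, the stated bounds follow. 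The remaining technical points — laziness (to kill periodicity), converting ``close to stationary'' into ``close to uniform'' via the $\dmax$ degree-ratio bound, and coupling the conditioned walk against a genuine walk on the relevant induced expander — I would handle with standard spectral/expander tools.

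\textbf{Main obstacle.} I expect the crux to be case (b) of the progress lemma: simultaneously (i) extracting a low-conductance cut that respects the expander partition finely enough to halve the number of active $A_i$'s on $v$'s side, and (ii) keeping its conductance small enough, relative to the length needed at the next recursion level, that ``the walk has not crossed it'' remains non-negligibly likely — all without the per-level length growing super-polynomially. Controlling this interaction between the conductance threshold and the recursive length budget (plausibly via submodularity of cuts and a careful choice of which sub-union to commit to) is where the real work lies; everything else is bookkeeping and routine spectral estimates.
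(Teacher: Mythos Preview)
Your plan has the right overall shape --- a recursion of depth $O(\log s)$ losing a $\mathrm{poly}(C)$ factor per level --- but the mechanism you propose for the inductive step does not deliver this, and the paper's actual argument is structured rather differently.

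The difficulty is precisely the one you flag at the end, and it bites in two places. First, the halving claim in case~(b) is not true as stated: if $G[U]$ has a sparse cut, expansion of each $A_i$ does force each $A_i$ mostly to one side, but there is no reason $v$'s side $W$ should meet at most $\lceil k/2\rceil$ of them. (Think of a hub $A_1$ connected to leaves $A_2,\dots,A_k$ by increasingly sparse bridges, with $v\in A_1$: every sparse cut isolates a single leaf $A_j$, putting $v$ on the side meeting $k-1$ expanders.) Second, and more damagingly, even granting the halving, the conductance you can extract from ``$G[U]$ has not mixed in $L$ steps'' is only $\Phi(G[U])\lesssim\sqrt{(\log|V|)/L}$ via Cheeger, whereas keeping ``the walk has not crossed the cut'' non-negligible for the further $\ell_{\lceil k/2\rceil}$ steps needs $\Psi(W)\lesssim 1/\ell_{\lceil k/2\rceil}$. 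This forces $L\gtrsim \ell_{\lceil k/2\rceil}^{\,2}$, so the recursion squares the length at each of the $\log s$ levels and you end at $C^{O(s)}$, not $C^{O(\log s)}$. Submodularity of cuts does not repair this; the loss is coming from Cheeger's inequality, not from which cut you picked.

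The paper avoids both issues by a different route. It first reduces to a \emph{disjoint} cover. It then does not recurse on subsets $S$ at all; instead it proves globally that there is a super-cover $\{B_1,\dots,B_r\}\subseteq\{\bigcup_{i\in I}A_i:I\subseteq[s]\}$ in which each $B_j$ is simultaneously a $\lambda^{O(\log s)}$-expander and ``good'' (a $\lambda^{O(\log s)}$-fraction of its vertices already locally mix fast). Given such a super-cover, any start vertex escapes into a good region in one shot (\Cref{lem:good-mixing}). The $O(\log s)$ comes from the construction of this super-cover: the paper builds an auxiliary $\ell$-thick graph on $[s]$ and shows, via a Fibonacci-type growth argument (\Cref{lem:thick-comp-bound}), that the component of the largest part becomes ``redeemable'' after $\ell=O(\log s)$ thickenings. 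Crucially, it uses the \emph{higher-order} Cheeger inequality together with an eigenvalue lower bound for covered graphs (\Cref{lem:squaring}, \Cref{cor:fast-covered-mixing}) to sidestep the quadratic loss between combinatorial and spectral expansion that would otherwise accumulate. Your proposal has no analogue of this last ingredient, and that is the essential missing idea.
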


We say that the \emph{local mixing time} of a vertex $v \in V$ is precisely the minimum number of steps needed from a random walk starting at $v$ so that an event $E_v$ exists satisfying properties $(i)$ and $(ii)$. (Note that we defined this with respect to being uniform over a set $A_i$ rather than a set $\bigcup_{i \in S} A_i$, as the two are roughly equivalent once we introduce events $E_v$, up to $\dmax$ and $s$ factors.)

Before sketching the proof of \Cref{thm:local-mixing-intro}, we note that the framework of our local mixing problem bears a resemblance to Spielman and Teng's local clustering algorithm \cite{spielman2013local}. At a very high level, \cite{spielman2013local} observe that under the assumptions of \Cref{thm:local-mixing-intro}, if $G$ does not have too many edges connecting any two sets $A_i$ and $A_j$, then a random walk of moderate length starting from a random point in any $A_i$ will mix in $A_i$ with high probability. Stated more quantitatively, they observe that if $S \subseteq V(G)$ is such that $(S, \overline{S})$ forms a $\wt{O}(\lambda^2)$-sparse cut, then a random walk of length $\wt{O}(1/\lambda)$ starting from a random vertex $\bv \in S$ will stay in $S$ with high probability. On the other hand, if $G[S]$ is a $\lambda$-spectral expander
then a walk of length $\wt{O}(1/\lambda)$ should mix over $G[S]$ (cf. \Cref{sec:prelims}).

Our local mixing problem is different in two significant respects: $(a)$ the starting point of our random walk may be chosen adversarially, and $(b)$ we get no guarantee on the number of edges between expanders. At a very high level, our approach works by circumventing these differences so that we can apply the observation of Spielman and Teng.

To start, we discuss how to handle adversarial starting points, i.e.~$(a)$ above. Indeed, we'll argue that it is enough to show that for every $A_i$, \emph{most} points in $A_i$ have small local mixing time. Roughly speaking, this is done by arguing that if $R \subseteq V$ is the set of starting points that have large local mixing time, then there must be many edges leaving the set $R$ (see \Cref{sec:goodsetslocalmixing}). 
Here is a sketch of the argument:  First, note that $|R \cap A_j| \geq |R|/s$ for some $j$ since $A_1,\dots,A_s$ are a cover of the vertices. $R$ cannot include most of the vertices in $A_j$, since most vertices in $A_j$ have small local mixing time. But we can now use the fact that $A_j$ is an expander, to show that there are many edges in $A_j$ which leave $R$. It follows that a random walk starting in $R$ will quickly leave $R$ and arrive at a vertex which has low local mixing time. Continuing the random walk from that vertex, we get a combined walk that mixes to the stationary distribution over some $A_j$, as desired. With some effort, one can use an argument of this sort to show that it suffices to prove that for every $A_i$ we have that an $(s d_{\mathrm{max}} \log(|V|) \log(1/\eps)/\theta)^{-O(\log(s))}$-fraction of points have small local mixing time.

We now turn to handle $(b)$. The proof of \Cref{thm:local-mixing-intro} first shows that it suffices to handle the situation in which $A_1 \sqcup \cdots \sqcup A_s$ are a disjoint partition of the vertices of $V$ (see \Cref{sec:disjointification}). For simplicity, in the following high-level discussion we will consider the case where each $A_i$ has the same size. While this is not without loss of generality --- the general case where expanders may have different sizes is more involved --- this simpler setting will highlight many of the key ideas that we use.

At a high level, our goal will be to find a new set of $\alpha$-expanders $B_1, ..., B_{s'}$, where each $B_i = \bigcup_{j \in S_i} A_j$ for not necessarily disjoint sets
$S_i \subseteq [s]$, that cover the graph $G$. In light of our previous discussion about point $(a)$, it will suffice to show that every $B_i$ has at least a $(s d_{\mathrm{max}} \log(|V|) \log(1/\eps)/\theta)^{-O(\log(s))}$ fraction of points with small local mixing time.

To achieve this, we start by noting that if there are many edges between $A_i$ and $A_j$, then $A_i \cup A_j$ must have large expansion. In this case, we can prove the statement by recursing on 
\[A_1, \dots, A_{i-1}, A_{i+1}, \dots, A_{j-1}, A_{j+1}, \dots, A_s, A_i \cup A_j,\]
 which is a partition of $V$ into $s-1$ rather than $s$ pieces. We terminate when each expander corresponds to a sparse cut and we can apply the argument of \cite{spielman2013local} to all of them. Making this quantitative does yield a bound on the local mixing time, albeit naively we can only hope for a rather weak one: Note that by Cheeger's inequality each $A_i$ is a $O(\theta^2)$-spectral expander. If $(A_i, \overline{A_i})$ is a $\wt{\Omega}(\theta^2)$ sparse cut, then it follows that there exists a $j \in [s]$ such that there are at least $\wt{\Omega}(\theta^2 |A_i|/s)$ edges between $A_i$ and $A_j$. At best, we will only be able to get that $A_i \cup A_j$ has (combinatorial) expansion at least $\wt{\Omega}(\theta^2/ s)$. Thus, in each of the $s$ steps of the recursion, we may square the mixing time bound, which would yield a bound of the form $(s d_{\mathrm{max}} \log(|V|) \log(1/\eps)/\theta)^{2^{O(s)}}$.
 
To get to our desired quasi-polynomial bound, we make two improvements to the above argument. For the first improvement, we note that we are losing considerably by switching between combinatorial and spectral expansion. Indeed, each time we do this, we must square the expansion value. That, said because the graphs we are working with are covered by expanders, we can apply the higher order Cheeger inequality \cite{kwok2013improved, louis2012many, lee2014multiway} to avoid this repeated squaring. This brings the bound down to $(s d_{\mathrm{max}} \log(|V|) \log(1/\eps)/\theta)^{O(s)}$.

 To motivate the second improvement, consider the following example

\begin{example}
Suppose we have a graph $G$ that is covered by four expanders $A_1, A_2, A_3, A_4$ of the same size. There are $\theta^2 |A_1|$ edges from $A_1$ to $A_2$, $\theta^4 |A_1|$ edges from $A_1$ to $A_3$, and $\theta^8 |A_1|$ edges from $A_1$ to $A_4$, and these are  the only edges in the graph between different expanders. Moreover, suppose that $A_1 \cup A_2$ is a $\theta^2$-expander and $A_1 \cup A_2 \cup A_3$ is a $\theta^4$-expander.

Let us trace through the execution of our recursive scheme on this example. We would first merge $A_1$ and $A_2$ and then recurse on $A_1 \cup A_2, A_3, A_4$. We would then form $A_1 \cup A_2 \cup A_3$, and recurse on $A_1 \cup A_2 \cup A_3, A_4$. Finally, we would form $A_1 \cup A_2 \cup A_3 \cup A_4$ and terminate.

A closer analysis, though, shows that we can in fact output the expanders $B_1 = A_1 \cup A_2 \cup A_3, B_2 = A_4$. Indeed, since $(A_3, \overline{A_3})$ forms a $\theta^4$-sparse cut, it follows by \cite{spielman2013local} that most points in $A_3$ (say, at least half of the points) have a small local mixing time. Since all the expanders have the same size, at least $1/6$ of the points in $A_1 \cup A_2 \cup A_3$ have small local mixing time. Thus, there is no need for the last recursive call.
\end{example}

This motivates the following approach to use fewer than $s$ recursive calls. Recall that for every set $A_i$, either $(A_i, \overline{A_i})$ is a $\wt{O}(\theta^2)$-cut, or there exists a $j_i$ such that there are at least $\wt{\Omega}(\theta^2 |A_i|/s)$ edges between $A_i$ and $A_{j_i}$. In the first case, we have that most points in $A_i$ have small local mixing time by Spielman and Teng's observation. In this case, if in any later stage of the recursion we try to merge a set $\bigcup_{\ell \in S} A_\ell$ with $A_i$, then we have that the merged set contains ``enough'' points with a small local mixing time. Now we consider the second case, in which we must handle the sets $A_i$ that do not correspond to a sparse cut; for each such $i$, we merge $A_i$ and $A_{j_i}$ and recurse. (We remark that if $A_{j_1} = A_2$ and $A_{j_2} = A_3$, then we would merge all three of $A_1, A_2, A_3$ for the next iteration. In general, one can consider making a graph with vertices corresponding to the sets $A_1, ..., A_s$, and an edge between $A_i$ and $A_j$ iff there are at least $\theta^2 |A_i|$ edges between the pair. We then merge the connected components in this graph to form the new expanders in the next level of the recursion.\footnote{We note that in the general case, when the expanders can have different sizes, determining how to merge the expanders becomes somewhat more involved and one must look carefully at the sizes of the expanders. 
}) After one iteration of this process, we now have the guarantee that the number of expanders that don't contain a large number of vertices with small local mixing time has been halved. Thus, we only need to recurse $O(\log(s))$ times before terminating, which gives us a bound of $(s d_{\mathrm{max}} \log(|V|) \log(1/\eps)/\theta)^{O(\log(s))}$ as desired.

\subsubsection{``List Decoding'' of size-$s$ DNFs:  Sketch of \Cref{thm:list-decoding}}
\label{sec:list-decoding-DNFs}

We now turn to discuss how locally mixing random walks yield a list-decoding type guarantee for DNFs, i.e.~how \Cref{thm:local-mixing-intro} implies \Cref{thm:list-decoding}.
A proof achieving the claimed ${\frac 1 p} + (ns)^{O(\log(ns))}$ time bound of \Cref{thm:list-decoding} involves some modifications of intermediate technical results  in the proof of \Cref{thm:local-mixing-intro}, and is given in \Cref{sec:speedy-mixing}.
That said, a simple ``black-box'' application of \Cref{thm:local-mixing-intro} already gives a qualitatively similar result which is quantitatively slightly weaker, i.e.~an algorithm running in time ${\frac 1 p} + (ns)^{O(\log(s) \log(n))}$; we give that proof below.

\medskip

\noindent \emph{Proof of \Cref{thm:list-decoding} variant with a ${\frac 1 p} + (ns)^{O(\log(s) \log(n))}$ time bound.}
Recall that in the context of \Cref{thm:list-decoding}, the graph $G$ is the induced subgraph of the Boolean hypercube graph $\zo^n$ 
that is induced by the vertex set $f^{-1}(1)$.  
We have that $\dmax=O(n)$, and each $A_i$ is a subcube corresponding to the satisfying assignments of term $T_i$ in $f$.  
Recalling the well-known fact that an $r$-dimensional hypercube is a $1/r$-expander, we have that each $G[A_i]$ is a $\theta$-expander for $\theta=1/n.$
Taking $\eps=1/n$, it follows from \Cref{thm:local-mixing-intro} that for every $v \in f^{-1}(1)$ there exists a value $\ell = (ns)^{O(\log(s))}$, an event $E_v$, and a $j \in [s]$ such that $(i)$ a random walk $\bY_0, \dots, \bY_\ell$ satisfies $\Pr[E_v(\bY_0, \dots \bY_\ell)] \geq (ns)^{-O(\log s)}$ and $(ii)$ $\bY_\ell | E_v(\bY_0,\dots,\bY_\ell)$ has total variation distance $\frac{1}{n}$ from the uniform distribution over satisfying assignments of the term $T_j$.

Now, let $\mu$ denote the distribution of $\bY_\ell | E_v(\bY_0,\dots,\bY_\ell)$. We note that for any $i \in [n]$ such that neither $x_i$ nor $\overline{x}_i$ occurs in term $T_j$, we have that $\Pr_{\bx \sim \mu}[\bx_i = b] \geq 1/4$ for both values of $b \in \{0,1\}$. On the other hand, if $x_i$ or $\overline{x}_i$ is a literal in $T_j$ (say it is $x_i$), then  $\Pr_{\bx \sim \mu}[\bx_i = 1] \geq 1 - \frac{1}{n}$. It easily follows from this that the largest term satisfied by a set of $\Theta(\log n)$ many draws from $\mu$ will, with $1-o(1)$ probability, be precisely the term $T_j$, as in \cite{de2014learning}.

So, it suffices for us to get $\Theta(\log(n))$ draws that satisfy $E_v(\bY_0,\dots,\bY_\ell)$. We note that the probability that $\Theta(\log(n))$ independent random walks of length $\ell$ starting from $v$ all satisfy $E_v,$ is $(ns)^{-O(\log(s) \log(n))}$. Thus, we build our list $\calL$ as follows:  for each value $i=1,\dots,(ns)^{O(\log s)}$ (think of this as a guess for the correct value of $\ell$), for 
\[
(ns)^{O(\log(s) \log (n))}
\text{~many repetitions},
\]
we draw $\Theta(\log n)$ many $n$-bit strings which are the endpoints of independent length-$i$ random walks starting from $v$, and we take the largest term satisfied by all $\Theta(\log n)$ of those strings.
With overall probability $1-o(1)>0.99$, one of the terms in our list ${\cal L}$ will be precisely $T_j$.
\qed

\subsubsection{Learning Exact DNFs: Overview of the Proof of \Cref{thm:exact-k-learn-intro}}

We turn to giving an overview of the proof of \Cref{thm:exact-k-learn-intro}.
Throughout our discussion, we will assume that the length $k$ of each of the terms is known; this is without loss of generality since we can simply try each possible value $k=1,\dots,n.$

At a high level, our goal is to give a procedure which constructs a list $\calL$ of quasi-polynomial length that contains some term which (i) is satisfied by a non-trivial fraction (roughly $1/s$) of the weight that $\calD$ puts on positive assignments of $f$, and (ii) is almost never satisfied by draws from $\calD$ that are negative assignments of $f$ (see the notion of a ``weak term learner'' in \Cref{subsec:weak-term-learners}).   With such a procedure, as discussed in \Cref{subsec:weak-term-learners} it is not difficult to construct a high-accuracy hypothesis which is a DNF of size $O(s \log(1/\eps))$.

To construct such a list, our approach is roughly as follows.  Suppose that we have collected some set $\calL$ of at most $\quasipoly(n,s)$ many candidate terms (the initial such set is obtained from our list-decoding procedure).  We do a brute force sweep over all (at most quasipolynomially many) candidate terms that differ from a term in $\calL$ on at most $\polylog(ns)$ literals, to ``expand'' $\calL$ to include all nearby terms.  Then, using queries, we can ``prune'' terms from this expanded list by removing any term $T$ which does not ``behave like'' an implicant of $f$ on random assignments satisfying $T$ (i.e.~any term $T$ for which uniformly sampling points in $T^{-1}(1)$ reveals a point on which $f=0$).  It can be shown (see \Cref{subsec:expand-and-prune}) that this pruning procedure ensures that the set of terms that survive it can have at most quasi-polynomial size.  Taking $\calL$ to now be the set of terms that survived the pruning, we repeat.

Since the size of $\calL$ is bounded, this greedy approach of expanding $\calL$ must eventually get stuck. When this happens, one possibility is that our list $\calL$ already contains a term in $f$ which $\calD$ satisfies with sufficiently high probability; if this is the case, then we are done already. If this is not the case, then it must be the case that every term in $f$ that is not included in $\calL$ must have distance at least $\polylog(ns)$ from all of the terms in $\calL$.  Let $T_1$ be such a ``far'' term that has significant weight under $\calD$. It now follows that a random assignment $y$ satisfying such a term $T_1$ is $\polylog(ns)$ far from satisfying any term in $\calL$. We will call such a point in $\{0,1\}^n$ a \emph{far point}. (Note that for general DNFs, i.e. those in which terms do not have the same size, we can have a ``far term'' $T_1$ with no corresponding ``far point.'' In particular, if $T_1 = x_1 \land x_2 \land \dots \land x_{n/2}$ and $\calL = \{ \overline{x_1} \}$, then any satisfying assignment of $T_1$ is very close to satisfying $\overline{x_1}$, namely one just needs to flip the first bit. However, $T_1$ is very far from the term $\overline{x_1}$ as it includes many additional literals.)

It turns out (see \Cref{lem:far-point-learn}) that finding a far point will be sufficient for us to learn a new term from $f$ and thereby make progress. That said, finding a far point will take some work (see \Cref{sec:finding-far-points}). The idea of our approach to find far points is roughly as follows:  Let $y$ be a satisfying assignment of a ``far'' term $T_1$ that does not appear in the list $\calL$ (if such a point is hard to find, then it means that $\calL$ already contains terms that enable us to form a high accuracy hypothesis DNF, i.e.~we were in the first case mentioned above).
We begin by applying some random noise to the point $y$ (see \Cref{alg:warmup-estimator}). Oversimplifying a bit, this can be thought of as yielding a point $y'$ that with high probability will still satisfy $T_1$ but is far from satisfying any term $T \in \calL$ that has little overlap with $T_1$. Since all remaining terms have high overlap with $T_1$, it follows that there exists some literal $\ell$ that appears in a constant fraction of terms in $\calL$ that $y'$ is close to satisfying. We don't know, however, whether this literal $\ell$ is in $T_1$ or not. So, we make a guess as to whether or not it is in $T_1$, and branch on both outcomes of the guess. This is useful for the following reason:  if $\ell$ is indeed in $T_1$, then we have made progress towards learning $T_1$. Otherwise, we can flip the $i$th bit of $y'$, and this increases our distance from a constant fraction of terms in $\calL$, which constitutes significant progress towards finding a far point.

In line with the intuition that we are making progress in both cases, it can be shown that this guessing procedure has quasi-polynomial running time.  At the end of the process, we will have generated a far point, which enables us to find a new term in $f$. In summary, by repeatedly expanding $\calL$ and generating far points as described above, we can produce a list $\calL$ of quasi-polynomial length that contains some term in $f$ which $\calD$ satisfies with non-trivial probability. As mentioned earlier, with such a list of terms in hand it is not difficult to construct a high-accuracy hypothesis which is a DNF of size $O(s \log(1/\eps))$.

\part{Locally Mixing Random Walks}
\label{part:mix}

%!TEX root = ../Learning-DNFs.tex

\section{Preliminaries on Expanders and Random Walks}
\label{sec:prelims}

\subsection{Graphs and Expansion} 
\label{subsec:prelims-random-walks}

Throughout this paper, the underlying graph $G = (V,E)$ that we consider is an unweighted undirected graph which may have multi-edges and self-loops (in fact, we will assume that $G$ does have many self-loops; see \Cref{ass:nice} below).  Given a set $S \subseteq V$, we define its volume as
\[
	\vol_{G}(S) := \sum_{v \in S} \deg(v),
\]
where each self-loop at a vertex contributes one to its degree.
We write $\dmax$ to denote $\max_{v \in V} \deg(v),$ the maximum degree of any vertex in $G$. 

We define the \emph{one-sided conductance of a set} $S \subseteq V$ to be
\[
	\Psi_{G}(S) := \frac{|E(S,\overline{S})|}{\vol_{G}(S)}.
\]
The \emph{conductance of the set} $S$ is taken to be
\[
	\Phi_{G}(S) := \max \cbra{\Psi_{G}(S), \Psi_{G}(\overline{S})}. 
\]
Finally, the \emph{conductance of the graph} $G$ is given by 
\[
	\Phi(G) := \min_{\emptyset \neq S \subsetneq [n]} \Phi_{G}(S).
\]
We say that $G$ is a \emph{$\theta$-expander} if $\Phi(G) \geq \theta$.

We sometimes omit subscripts and simply write $\Vol(S), \Psi(S), \Phi(S)$ when it is clear from context that the relevant graph is $G$.

The quantities defined above are intimately connected to the behavior of \emph{random walks} on $G$. Throughout this paper, by a ``random walk'' we mean the standard random walk on $G$, which moves along a uniformly chosen random edge out of the current vertex at each step.  We make the following useful assumption:

\begin{assumption} \label{ass:nice}
We assume that each vertex $v$ in the graph $G$ has precisely as many self-loops as edges to other vertices. 
We further assume that the graph $G$ has no isolated vertices (later, in \Cref{rem:no-isolated-vertices}, we will see that this is without loss of generality for our main result, \Cref{thm:local-mixing}).
We say that a graph $G$ with these properties is \emph{nice}.
\end{assumption}

Thanks to this assumption, a standard random walk on the nice graph $G$ corresponds to a lazy random walk (which stays put at each time step with probability $1/2$) in the graph $G'$ that is obtained from $G$ by removing all self-loops.  

We will frequently work with induced subgraphs; for $T \subseteq V$, we write $G[T]$ to denote the subgraph of $G$ that is induced by $T$. 
We let $\deg_T(v)$ denote the degree of $v$ in $G[T]$. Similarly, for $S \subseteq T$ we define 
\[
\Vol_T(S) := \sum_{v \in S} \deg_T(v), \quad
\Psi_T(S) :=  \frac{|E(S ,\overline{S})|}{\vol_{T}(S)}, \quad
\Phi_T(S) := \max \cbra{\Psi_{T}(S), \Psi_{T}(\overline{S})}. 
\]

\begin{remark}
We will frequently consider random walks on $G$ conditioned on staying in a set $S \subseteq V$. Such a conditioned random walk is easily seen to correspond to a standard random walk on $G[S]$.  (Note that this no longer corresponds exactly to the usual notion of a ``lazy'' random walk on $G[S]'$, the graph obtained from $G[S]$ by removing self-loops, because $G[S]$ may not be nice since the number of self-loops of a vertex $v$ in $G[S]$ may be larger than the number of edges from $v$ to other vertices in $G[S]$.)
\end{remark}

\paragraph{Big-Oh Notation:} Throughout the paper we use $\wt{O}(f)$ notation to suppress both $(\log f)^{O(1)}$ as well as $(\log|V(G)|)^{O(1)}$ factors.  We remark that while our results about random walks hold for arbitrary graphs $G=(V,E)$ as described above, in our DNF learning applications

\begin{itemize}

\item the graph $G$ will have vertex set $V = f^{-1}(1) \subseteq \zo^n$ where $f$ is an $s$-term DNF, and 

\item $G$ will be the subgraph of the $n$-dimensional Boolean hypercube that is induced by $V$ (augmented with self-loops as described in \Cref{ass:nice}), so there is an edge between two vertices $u,v \in V$ iff $f(u)=f(v)=1$ and the Hamming distance between $u$ and $v$ is 1.  

\end{itemize}
In this context $\log |V(G)|$ may be as large as $\Theta(n)$, so $\wt{O}(f)$ notation may suppress factors that are as large as $\poly(n)$.

\subsection{Total Variation Distance}

We write $(\Omega, \calF)$ to denote a probability space over a finite set $\Omega$, i.e.~${\cal F}$ is a collection of possible events (subsets of $\Omega$). We recall the familiar definition of total variation distance:

\begin{definition}[TV distance]
\label{def:tv-definition}
	Let $p$ and $q$ be probability distributions over $(\Omega, \mathcal{F})$. The \emph{total variation distance} between $p$ and $q$ is defined as
	\[
		\dtv(p, q) := \sup_{E \in \calF} \abs{p(E) - q(E)}.
	\]
For finite $\Omega$, we equivalently have
	\[
		\dtv(p, q) = \frac{1}{2} \sum_{\omega\in\Omega} \abs{p(\omega) - q(\omega)}. 
	\]
\end{definition}

We will frequently use the following simple result, which puts a bound on how TV distance can change under conditioning: 

\begin{lemma}
\label{lem:tv-condition}
	Suppose $p$ and $q$ are probability distributions over $(\Omega, \calF)$, and let $E\in\calF$. Then 
	\[
		\dtv(p|E, q|E) \leq \frac{2\cdot \dtv(p,q)}{p(E)}. 
	\]	
\end{lemma}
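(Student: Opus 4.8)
The plan is to use the definition of total variation distance as a supremum over events. Fix an arbitrary event $F \in \calF$; since the conditional distributions $p|E$ and $q|E$ are supported on $E$, we have $(p|E)(F) = p(F \cap E)/p(E)$ and $(q|E)(F) = q(F\cap E)/q(E)$, so it suffices to bound $|(p|E)(F) - (q|E)(F)|$ uniformly in $F$ by $2\dtv(p,q)/p(E)$. We may assume $p(E) > 0$ (so that $p|E$ is even defined) and $q(E) > 0$; if $q(E) = 0$ then $\dtv(p,q) \geq |p(E) - q(E)| = p(E)$, so the claimed upper bound is at least $2$ and the inequality holds trivially since any TV distance is at most $1$.

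The main step is to put the two terms over a common denominator and split the numerator cleverly. Writing
\[
(p|E)(F) - (q|E)(F) = \frac{q(E)\, p(F\cap E) - p(E)\, q(F\cap E)}{p(E)\, q(E)},
\]
I would add and subtract $q(E)\, q(F \cap E)$ in the numerator to obtain
\[
q(E)\, p(F\cap E) - p(E)\, q(F\cap E) = q(E)\big(p(F\cap E) - q(F\cap E)\big) + q(F\cap E)\big(q(E) - p(E)\big).
\]
Now both $|p(F\cap E) - q(F\cap E)|$ and $|q(E) - p(E)|$ are at most $\dtv(p,q)$ by the definition of TV distance (\Cref{def:tv-definition}), and $q(F\cap E) \leq q(E)$, so the numerator is at most $2\, q(E)\, \dtv(p,q)$ in absolute value. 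Dividing by $p(E)\, q(E)$ gives $|(p|E)(F) - (q|E)(F)| \leq 2\dtv(p,q)/p(E)$, and taking the supremum over $F \in \calF$ completes the proof.

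I do not expect any real obstacle here: this is a routine algebraic manipulation, and the only point requiring care is the degenerate case $p(E) = 0$ or $q(E) = 0$, which is dispatched at the outset. As an alternative, one could carry out essentially the same per-term computation on the $\ell_1$ formula for TV distance from \Cref{def:tv-definition}, which in fact yields the slightly sharper constant $3/2$ in place of $2$; either route establishes the stated inequality.
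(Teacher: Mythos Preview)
Your proposal is correct and takes essentially the same approach as the paper: fix an arbitrary event $F$, split the difference $(p|E)(F)-(q|E)(F)$ into two pieces each controlled by $\dtv(p,q)$, and take the supremum. The paper phrases the split as a triangle inequality through the intermediate quantity $q(F\cap E)/p(E)$, whereas you put everything over the common denominator $p(E)q(E)$ and add--subtract $q(E)\,q(F\cap E)$ in the numerator; these are algebraically identical manipulations, and your explicit handling of the degenerate case $q(E)=0$ is a nice addition that the paper leaves implicit.
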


\begin{proof}
	Fix an event $F \in \mathcal{F}$. Note that by~\Cref{def:tv-definition}, we have
	\begin{equation} \label{eq:tv-def-cond-1}
		|p(F \land E) - q(F \land E) | \leq \dtv(p,q)
	\end{equation}
	and
	\begin{equation} \label{eq:tv-def-app}
		| p(E) - q(E)| \leq \dtv(p,q). 
	\end{equation}
	It follows from~\Cref{eq:tv-def-cond-1} that 
	\[
		\left| p(F|E) - \frac{q(F \land E)}{p(E)} \right| = \left| \frac{p(F \land E)}{p(E)} - \frac{q(F \land E)}{p(E)} \right| \leq \frac{\dtv(p,q)}{p(E)},  
	\]
	and also 
	\begin{align*}
		\left| \frac{q(F \land E)}{p(E)} - q(F|E) \right| 
		& = \left| \frac{q(F \land E)}{p(E)} - \frac{q(F \land E)}{q(E)} \right|\\
		& = \left| \frac{q(E) q(F \land E) - p(E) q(F \land E)}{p(E) q(E)} \right| \\
		& \leq \frac{\dtv(p,q) q(F \land E)}{p(E) q(E)} \\
		& \leq \frac{\dtv(p,q)}{p(E)}.
	\end{align*}
	It then follows from the triangle inequality that 
	\[
		\abs{p(F|E) - q(F|E)} \leq \frac{2\cdot \dtv(p,q)}{p(E)},
	\]
	completing the proof. 
\end{proof}

We will also need the following result:

\begin{lemma}
\label{lem:tv-refine}
Let $p,q$ be probability distributions over $(\Omega, \mathcal{F})$, and let $p', q'$ be corresponding probability distributions over a ``coarser'' probability space $(\Omega, \mathcal{F}')$ (i.e. $\calF' \subseteq \calF$).  
Then
	\[\dtv(p',q') \leq \dtv(p,q).\]
\end{lemma}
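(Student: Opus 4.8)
The plan is to prove this directly from the supremum formulation of total variation distance in \Cref{def:tv-definition}. First I would recall what ``corresponding'' means: since $\calF' \subseteq \calF$, every event $E \in \calF'$ is in particular an event in $\calF$, and the distribution $p'$ induced on the coarser space is simply $p$ restricted to $\calF'$, i.e.\ $p'(E) = p(E)$ for all $E \in \calF'$, and likewise $q'(E) = q(E)$. This is the only ``content'' in the setup and it should be stated explicitly to make the argument airtight.

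Given that, the inequality is immediate:
\[
\dtv(p',q') = \sup_{E \in \calF'} \abs{p'(E) - q'(E)} = \sup_{E \in \calF'} \abs{p(E) - q(E)} \leq \sup_{E \in \calF} \abs{p(E) - q(E)} = \dtv(p,q),
\]
where the inequality holds because the supremum is taken over a smaller collection of events on the left-hand side (a supremum over a subset cannot exceed the supremum over the full set). Each equality is either the definition of TV distance or the observation from the previous paragraph.

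There is essentially no obstacle here; the ``hard part,'' if any, is purely bookkeeping — making sure the reader understands that $p',q'$ are not arbitrary distributions on $(\Omega,\calF')$ but are the pushforwards/restrictions of $p,q$, so that the values $p'(E), q'(E)$ literally coincide with $p(E), q(E)$ on events in $\calF'$. Once that is pinned down, the proof is a one-line consequence of monotonicity of the supremum. One could alternatively phrase it via couplings or via a data-processing inequality (coarsening the $\sigma$-algebra is a deterministic ``channel''), but the supremum argument is the cleanest and self-contained given what has already been set up in the excerpt.
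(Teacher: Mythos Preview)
Your proof is correct and is essentially identical to the paper's own proof: both use the supremum definition of $\dtv$, identify $p'(E)=p(E)$ and $q'(E)=q(E)$ for $E\in\calF'$, and conclude via monotonicity of the supremum over the larger collection $\calF$.
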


\begin{proof}
We have 
\[
	\dtv(p', q') = \sup_{E\in \calF'} |p'(E) - q'(E)| = \sup_{E\in\calF'} |p(E) - q(E)| \leq \sup_{E\in\calF} |p(E) - q(E)| = \dtv(p, q). \qedhere
\]
\end{proof}

\subsection{Random Walks on Graphs}

We briefly recall some basics of random walks on graphs; see \cite{LevinPeres17} for a detailed treatment.
Let $A$ denote the adjacency matrix of the graph $G = (V, E)$ with $|V| = N$. 
Let $\wt{A}$ denote the normalized adjacency matrix 
\[
	\wt{A} := D^{-1/2} A D^{-1/2}
\]	 
where $D$ is the $N \times N$ diagonal matrix $\mathrm{diag}(\deg(v)_{v\in V})$. 

Let $x\in(\R_{\geq 0})^V$ with $\|x\|_1 = 1$ be a distribution on the vertex set $V$. One step of a standard random walk, starting from $x$, results in the distribution $Wx,$ where $W := AD^{-1}$. 
Consequently, $t$ steps of the random walk results in the distribution
\[
	W^t x = D^{1/2}\pbra{\wt{A}}^t D^{-1/2} x. 
\]

It is easy to check that $W$ has an eigenvector $\pi \in \R^V$ with $\pi_i = \deg(i)\cdot(2|E|)^{-1}$, and more generally $W$ has eigenvalues that are at most $1$ in absolute value. 
Moreover, if $G$ is connected then there is a unique eigenvector with eigenvalue $1$ and it has all positive entries; this eigenvector is the stationary distribution of the random walk, which we denote by $\pi$.

Closely related to the random walk matrix is the \emph{Laplacian} matrix of $G$, defined as $L := D - A$, as well as the \emph{normalized} Laplacian, defined as $\wt{L} := D^{-1/2} L D^{-1/2}$. Note that $(I - \wt{L}) = \wt{A}$, where $I$ is the $N \times N$ identity matrix. 

It is well known that the convergence of random walks is closely related to the second largest eigenvalue of $\wt{A}$, which corresponds to the second smallest eigenvalue of $\wt{L}$, denoted $\lambda_2$. This in turn is related to the combinatorial expansion of the graph by Cheeger's inequality:

\begin{theorem}[Cheeger's Inequality]
$\frac{1}{2} \lambda_2 \leq \Phi(G) \leq \sqrt{2 \lambda_2}.$
\end{theorem}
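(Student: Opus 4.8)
The plan is to establish the two bounds separately: the left inequality $\tfrac{1}{2}\lambda_2 \le \Phi(G)$ is the ``easy'' direction, proved by exhibiting a good test vector, and the right inequality $\Phi(G) \le \sqrt{2\lambda_2}$ is the ``hard'' direction, proved by a rounding (``sweep cut'') argument. Both rely on the variational characterization of $\lambda_2$: since $\wt L = I - \wt A = D^{-1/2} L D^{-1/2}$ with $L = D - A$, substituting $y = D^{-1/2}x$ into the Courant--Fischer theorem gives
\[
\lambda_2 \;=\; \min_{\substack{y \in \R^V,\ y \neq 0 \\ \sum_{v}\deg(v)\,y_v = 0}} \ \frac{\sum_{\{u,v\}\in E,\,u\neq v}(y_u-y_v)^2}{\sum_{v\in V}\deg(v)\,y_v^2},
\]
where the numerator ranges over non-loop edges (self-loops contribute $0$). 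If $G$ is disconnected both sides are $0$, so assume $G$ connected. I will also use two elementary facts: the mediant inequality $\frac{a+c}{b+d}\ge\min\{\frac{a}{b},\frac{c}{d}\}$ for $b,d>0$, and the averaging principle that if $X,Y\ge 0$ with $\mathbb{E}[Y]>0$ and $\mathbb{E}[X]\le c\,\mathbb{E}[Y]$, then some outcome satisfies $X\le cY$ with $Y>0$.

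\textbf{Easy direction.} Let $S$ attain $\Phi(G)=\Phi_G(S)$, and by swapping $S\leftrightarrow\overline S$ assume $\vol_G(S)\le\vol_G(\overline S)$, so that $\Phi(G)=\Psi_G(S)=|E(S,\overline S)|/\vol_G(S)$. Plug the test vector $y_v = 1/\vol_G(S)$ for $v\in S$ and $y_v=-1/\vol_G(\overline S)$ for $v\notin S$ into the displayed formula. One checks $\sum_v\deg(v)\,y_v=0$; only cut edges contribute to the numerator, which equals $|E(S,\overline S)|\,\big(\tfrac{1}{\vol_G(S)}+\tfrac{1}{\vol_G(\overline S)}\big)^2$, while the denominator equals $\tfrac{1}{\vol_G(S)}+\tfrac{1}{\vol_G(\overline S)}$. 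Hence the quotient equals $|E(S,\overline S)|\,\big(\tfrac{1}{\vol_G(S)}+\tfrac{1}{\vol_G(\overline S)}\big)\le \tfrac{2|E(S,\overline S)|}{\vol_G(S)} = 2\Phi(G)$, and since this quotient upper bounds $\lambda_2$ we get $\lambda_2\le 2\Phi(G)$.

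\textbf{Hard direction.} Let $y$ be an eigenvector attaining $\lambda_2$, so $\sum_{\{u,v\},u\ne v}(y_u-y_v)^2=\lambda_2\sum_v\deg(v)y_v^2$ and $\sum_v\deg(v)y_v=0$. Choose $c$ to be a volume-weighted median of $y$, so that $\{v:y_v>c\}$ and $\{v:y_v<c\}$ each have volume at most $\vol_G(V)/2$, and set $x:=y-c\mathbf{1}$. Shifting leaves the numerator unchanged and only increases the denominator ($\sum_v\deg(v)x_v^2=\sum_v\deg(v)y_v^2+c^2\vol_G(V)$), so the Rayleigh quotient of $x$ is still at most $\lambda_2$. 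Writing $x=x^+-x^-$ for the positive and negative parts and using the edgewise inequality $(a-b)^2\ge(a^+-b^+)^2+(a^--b^-)^2$ together with $\sum_v\deg(v)x_v^2=\sum_v\deg(v)(x^+_v)^2+\sum_v\deg(v)(x^-_v)^2$, the mediant inequality shows that one of $x^+,x^-$ — call it $g$ — is a nonzero nonnegative vector with $\vol_G(\mathrm{supp}(g))\le\vol_G(V)/2$ and
\[
\frac{\sum_{\{u,v\}\in E,\,u\ne v}(g_u-g_v)^2}{\sum_{v\in V}\deg(v)\,g_v^2}\ \le\ \lambda_2 .
\]
Now do the sweep cut: pick a threshold $\tau$ uniformly in $[0,\max_v g_v^2]$ and set $S_\tau:=\{v:g_v^2>\tau\}$. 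Since $S_\tau\subseteq\mathrm{supp}(g)$ we always have $\vol_G(S_\tau)\le\vol_G(V)/2$, hence $\Phi_G(S_\tau)=\Psi_G(S_\tau)=|E(S_\tau,\overline{S_\tau})|/\vol_G(S_\tau)$. A direct computation gives $\mathbb{E}\big[|E(S_\tau,\overline{S_\tau})|\big]=\frac{1}{\max_v g_v^2}\sum_{\{u,v\},u\ne v}|g_u^2-g_v^2|$ and $\mathbb{E}[\vol_G(S_\tau)]=\frac{1}{\max_v g_v^2}\sum_v\deg(v)g_v^2$. Writing $|g_u^2-g_v^2|=|g_u-g_v|(g_u+g_v)$ and applying Cauchy--Schwarz over non-loop edges,
\[
\sum_{\{u,v\}}|g_u^2-g_v^2|\ \le\ \sqrt{\sum_{\{u,v\}}(g_u-g_v)^2}\cdot\sqrt{\sum_{\{u,v\}}(g_u+g_v)^2}\ \le\ \sqrt{2\lambda_2}\ \sum_{v}\deg(v)\,g_v^2,
\]
using $(g_u+g_v)^2\le 2(g_u^2+g_v^2)$, the bound $\sum_{\{u,v\},u\ne v}(g_u^2+g_v^2)\le\sum_v\deg(v)g_v^2$ (each vertex is in at most $\deg(v)$ non-loop edges), and the previous display. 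Therefore $\mathbb{E}[|E(S_\tau,\overline{S_\tau})|]\le\sqrt{2\lambda_2}\cdot\mathbb{E}[\vol_G(S_\tau)]$, so by the averaging principle some threshold yields a set $S$ with $\Phi(G)\le\Phi_G(S)\le\sqrt{2\lambda_2}$.

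\textbf{Main obstacle.} The substantive part is the hard direction, and within it two points need care: (i) reducing from the signed eigenvector $y$ to a \emph{nonnegative} test vector supported on at most half the total volume without inflating the Rayleigh quotient — this is exactly where the weighted-median shift and the positive/negative-part decomposition are used, and one must also verify that the chosen part $g$ is nonzero; and (ii) the level-set rounding, where the bound $|g_u^2-g_v^2|\le|g_u-g_v|\sqrt{2(g_u^2+g_v^2)}$ followed by Cauchy--Schwarz is precisely what converts the quotient bound $\lambda_2$ into the square-root loss $\sqrt{2\lambda_2}$. A minor but real bookkeeping point, given the paper's convention that each self-loop contributes one to a vertex's degree, is that self-loops must be dropped from all numerator sums (they never cross a cut and contribute $0$ to $\sum(y_u-y_v)^2$), after which the inequality $\sum_{\{u,v\},u\ne v}(g_u^2+g_v^2)\le\sum_v\deg(v)g_v^2$ holds simply because each vertex has at most $\deg(v)$ incident non-loop edges.
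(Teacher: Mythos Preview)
Your proof is a correct and standard presentation of Cheeger's inequality, using the variational test-vector argument for the easy direction and the weighted-median shift plus sweep-cut rounding for the hard direction. However, the paper does not actually prove this theorem: it is stated in the preliminaries as a well-known result and used as a black box, so there is no ``paper's own proof'' to compare against.
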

We say that a graph with $\lambda_2 \geq \lambda$ is a \emph{$\lambda$-spectral expander}.

We recall the standard definition of the \emph{mixing time} of a random walk:
\begin{definition}[Mixing Time]
	The mixing time of a random walk on $G$ is the smallest number of steps $t_{\mix}$ such that for all starting distributions $x \in (\R_{\geq 0})^V$, we have $\|W^{t_{\mix}} x - \pi\|_1 \leq 1/4$.
\end{definition}
(Note that the constant $\frac{1}{4}$ is essentially arbitrary; indeed, to achieve a total variation distance bound of at most $\eps$, it suffices to walk for $t_{\mix} \log(1/\eps)$ steps.) 

The following fundamental result upper bounds the mixing time in terms of the second smallest eigenvalue:
\begin{theorem}[\cite{LevinPeres17}] \label{thm:mixing-time}
Let $\lambda_2$ denote the second smallest eigenvalue of the normalized Laplacian of $G$. The mixing time of a random walk on $G$ is then at most $O \left( \frac{\log N }{\lambda_2} \right)$. 
\end{theorem}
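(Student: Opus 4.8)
The plan is to run the standard spectral argument for mixing, using the normalized adjacency matrix $\wt A = I - \wt L$ to control how fast the walk distribution contracts to $\pi$. First I would diagonalize $\wt A$ in an orthonormal eigenbasis $v_1, \dots, v_N$ with eigenvalues $\beta_i = 1 - \mu_i$, where $0 = \mu_1 \le \mu_2 = \lambda_2 \le \cdots \le \mu_N$ are the eigenvalues of $\wt L$; the top eigenvector is $v_1 = D^{1/2}\mathbf 1 / \sqrt{2|E|}$, and applying $D^{1/2}$ to the $v_1$-component recovers exactly the stationary distribution $\pi$ (whose entries are $\deg(v)/(2|E|)$). The key structural point is that since $G$ is \emph{nice} (\Cref{ass:nice}) the walk is lazy, so $\wt A \succeq 0$, hence $\mu_N \le 1$ and every eigenvalue of $\wt A$ on $v_1^\perp$ has absolute value at most $1 - \lambda_2$.

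Next I would expand the initial distribution $x$ (with $x \ge 0$ and $\|x\|_1 = 1$) in this basis: $D^{-1/2} x = c_1 v_1 + w$ with $w \perp v_1$, where $c_1 = \langle D^{-1/2}x, v_1\rangle = \langle x, \mathbf 1\rangle / \sqrt{2|E|} = 1/\sqrt{2|E|}$, chosen precisely so that $D^{1/2}(c_1 v_1) = \pi$. Since $W^t x = D^{1/2} \wt A^t D^{-1/2} x$, this yields $W^t x - \pi = D^{1/2} \wt A^t w$. Now $\|\wt A^t w\|_2 \le (1-\lambda_2)^t \|w\|_2 \le (1-\lambda_2)^t \|D^{-1/2}x\|_2 \le (1-\lambda_2)^t$, where the last inequality uses $\|D^{-1/2}x\|_2^2 = \sum_v x_v^2/\deg(v) \le \sum_v x_v^2 \le (\sum_v x_v)^2 = 1$ (every degree is at least $1$, and $x$ is a nonnegative vector summing to $1$).

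Finally I would pass from the $\ell_2$ bound back to $\ell_1$, using $\|y\|_1 \le \sqrt{N}\,\|y\|_2$ and $\|D^{1/2}y\|_2 \le \sqrt{\dmax}\,\|y\|_2$:
\[
\|W^t x - \pi\|_1 \;\le\; \sqrt{N}\,\|D^{1/2}\wt A^t w\|_2 \;\le\; \sqrt{N \dmax}\,(1-\lambda_2)^t \;\le\; N\, e^{-\lambda_2 t}.
\]
Taking $t = \big\lceil \ln(4N)/\lambda_2 \big\rceil = O(\log N / \lambda_2)$ makes the right-hand side at most $1/4$ for every starting distribution $x$, which is exactly the definition of $t_{\mix}$.

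There is no genuine obstacle here — this is the textbook spectral mixing bound (see \cite{LevinPeres17}) — so the write-up is mostly bookkeeping of the $D^{\pm 1/2}$ conjugations. The one place that actually uses a hypothesis is the claim that the contraction factor on $v_1^\perp$ equals $1-\lambda_2$ rather than some quantity close to $1$ coming from an eigenvalue near $-1$; this is precisely where the laziness guaranteed by \Cref{ass:nice} is invoked. The other thing to keep an eye on is that the crude polynomial factors discarded along the way ($\sqrt{N}$, $\sqrt{\dmax}$, and the fact that $\pi_{\min}^{-1} \le 2|E| \le N^2$) only cost an additive $O(\log N)$ in the final choice of $t$, so the stated bound $O(\log N/\lambda_2)$ is unaffected.
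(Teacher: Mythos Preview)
Your argument is the standard spectral proof and is correct; the paper does not give its own proof of this statement at all, it simply cites \cite{LevinPeres17}. One small quibble: in the last display you replace $\sqrt{N\dmax}$ by $N$, which implicitly uses $\dmax \le N$; this can fail for multigraphs, and the honest bound is $O(\log(N\dmax)/\lambda_2)$ (equivalently $O(\log(1/\pi_{\min})/\lambda_2)$). In the paper's setting $\dmax = O(n) = O(\log N)$, so this makes no difference to the stated $O(\log N/\lambda_2)$, but it is worth being explicit about where that step uses something about the graph.
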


We record the following useful corollary of \Cref{thm:mixing-time}:
\begin{corollary} \label{cor:thetasquared}
If $G$ is a $\theta$-expander then a random walk on $G$ mixes in $\wt{O}(\theta^{-2})$ steps.
\end{corollary}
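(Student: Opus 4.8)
The plan is to simply chain together Cheeger's inequality and \Cref{thm:mixing-time}. Writing $N = |V(G)|$, the hypothesis that $G$ is a $\theta$-expander means $\Phi(G) \geq \theta$. The first step is to convert this combinatorial expansion bound into a spectral bound on $\lambda_2$, the second smallest eigenvalue of the normalized Laplacian of $G$.

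For that step I would invoke the right-hand inequality in Cheeger's inequality, namely $\Phi(G) \leq \sqrt{2\lambda_2}$. Rearranging, this gives $\lambda_2 \geq \Phi(G)^2/2 \geq \theta^2/2$. (Note we do not need the left-hand inequality here.) Then I would plug this into \Cref{thm:mixing-time}, which states that the mixing time of a random walk on $G$ is at most $O(\log N / \lambda_2)$; combined with $\lambda_2 \geq \theta^2/2$ this yields a mixing time bound of $O(\log N / (\theta^2/2)) = O(\log N \cdot \theta^{-2})$.

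Finally, I would note that since the $\wt{O}(\cdot)$ notation introduced in the preliminaries suppresses factors of $(\log |V(G)|)^{O(1)} = (\log N)^{O(1)}$, the bound $O(\log N \cdot \theta^{-2})$ is exactly $\wt{O}(\theta^{-2})$, completing the proof.

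There is no real obstacle here — the corollary is a routine two-line consequence of results already established in the excerpt, and the only thing to be careful about is using the correct direction of Cheeger's inequality (the one that lower-bounds $\lambda_2$ in terms of $\Phi(G)^2$) and remembering that the $\wt{O}$ convention absorbs the $\log N$ factor.
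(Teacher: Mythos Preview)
Your proposal is correct and is exactly the intended argument: the paper states this as an immediate corollary of \Cref{thm:mixing-time} without giving a proof, and the only missing step is precisely the application of Cheeger's inequality ($\lambda_2 \geq \Phi(G)^2/2 \geq \theta^2/2$) that you supply.
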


\subsection{Higher Order Cheeger Inequalities}
Unfortunately, the quadratic loss from Cheeger's inequality will not be sufficient for our purposes. Instead, we will use the following strengthening of Cheeger's inequality:

\begin{theorem}[Kwok et al. \cite{kwok2013improved}]
\label{thm:improved-cheeger}
Let $G = (V,E)$ be a graph and $\lambda_i$ denote the $i$th smallest eigenvalue of its normalized Laplacian. Then
	\[\Phi(G) \leq O(k) \frac{\lambda_2}{\sqrt{\lambda_k}}\]
\end{theorem}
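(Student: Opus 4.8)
The plan is to follow the argument of Kwok, Lau, Lee, Oveis~Gharan and Trevisan. The case $k=2$ is exactly Cheeger's inequality, so assume $k\ge 3$; in general it suffices to produce a single set $S$ with $\Phi(S)\le O(k)\lambda_2/\sqrt{\lambda_k}$. As in the usual proof of Cheeger's inequality, take an eigenvector of the normalized Laplacian $\widetilde{L}$ with eigenvalue $\lambda_2$ and rescale it by $D^{-1/2}$ to obtain $g$, whose edge Rayleigh quotient $R(g):=\mathcal{E}(g)/\|g\|_D^2$ equals $\lambda_2$, where $\mathcal{E}(g):=\sum_{uv\in E}(g(u)-g(v))^2$ and $\|g\|_D^2:=\sum_v\deg(v)g(v)^2$. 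By the standard truncation step (subtract the $D$-weighted median, keep the positive or the negative part, rescale) one may assume $g\ge 0$, $\mathrm{vol}(\mathrm{supp}\,g)\le\mathrm{vol}(V)/2$, $R(g)\le\lambda_2$, and $\|g\|_D=1$. Then every sweep set $L_t:=\{v:g(v)>t\}$ has volume $\le\mathrm{vol}(V)/2$, hence $\Phi(L_t)=\Psi(L_t)$, so it is enough to find a threshold $t$ with $\Psi(L_t)\le O(k)\lambda_2/\sqrt{\lambda_k}$.

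Plain Cheeger rounding of $g$ --- integrate the co-area identity $\int_0^\infty |E(L_t,\overline{L_t})|\,dt=\sum_{uv\in E}|g(u)^2-g(v)^2|$ and apply Cauchy--Schwarz --- only yields $\min_t\Psi(L_t)\le\sqrt{2\lambda_2}$, which is weaker than the target (note $\lambda_2/\sqrt{\lambda_k}\le\sqrt{\lambda_2}$ always, with a genuine gain once $\lambda_k\gg k^2\lambda_2$). The improvement leverages the spectral gap at $\lambda_k$. Decompose $g$ dyadically, $g=\sum_{i\ge 0}g_i$ with $g_i:=\max\{0,\min(g,2^{-i}\|g\|_\infty)-2^{-(i+1)}\|g\|_\infty\}$; only $O(\log|V|)$ bands are relevant after discarding negligibly small values, and one checks $\sum_i\mathcal{E}(g_i)=O(\mathcal{E}(g))$. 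The $g_i$ are linearly independent, and --- using that each edge interacts with only $O(1)$ consecutive bands, after regrouping scales as needed --- the Rayleigh quotient on the span of any $k$ of them is $O(\max_i R(g_i))$; so by the variational characterization $\lambda_k=\min_{\dim W=k}\max_{0\ne h\in W}R(h)$, at most $k-1$ of the bands can have $R(g_i)$ much below $\lambda_k$. Combined with $\sum_i\mathcal{E}(g_i)=O(\mathcal{E}(g))=O(\lambda_2)$, this forces all but a $1-O(\lambda_2/\lambda_k)$ fraction of $\|g\|_D^2$ to lie in those $\le k-1$ ``cheap'' bands (if $\lambda_2/\lambda_k$ is not small, the target bound is no better than classical Cheeger and there is nothing to prove). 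Thus $g$ is, up to lower-order mass, concentrated on $O(k)$ dyadic scales; the final step is an improved sweep-cut lemma stating (roughly) that such a concentrated nonnegative function admits a level-set cut $S$ with $\Psi(S)\le O(k)\,R(g)/\sqrt{\lambda_k}$, which with $R(g)=\lambda_2$ is the claimed bound.

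I expect the improved sweep-cut lemma to be the main obstacle; it is the technical heart of Kwok et al., and the step in which the factor $\lambda_k^{-1/2}$ genuinely appears rather than merely being carried along. The point is to re-run the co-area/Cauchy--Schwarz rounding band-by-band so that the accumulated loss is controlled by the number of bands carrying the mass ($O(k)$) instead of the number of scales ($\Theta(\log|V|)$), while carefully bookkeeping $\ell_1$- versus $\ell_2$-norms of $g$ over those few bands and extracting the extra $\lambda_k^{-1/2}$ from the interaction between the bands and the spectral gap. A related subtlety to be handled with care is that the dyadic bands $g_i$ have nested rather than disjoint supports and a single edge can ``cross'' many scales, so both the energy-additivity estimate $\sum_i\mathcal{E}(g_i)=O(\mathcal{E}(g))$ and, more importantly, the Courant--Fischer step require regrouping the scales (e.g.\ taking a bounded number of consecutive scales together) so that each edge meets only $O(1)$ bands --- which is part of where the explicit factor $O(k)$ in the final bound comes from. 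Everything else --- the reduction to a single level set, the dyadic decomposition, and the eigenvalue count itself --- is routine.
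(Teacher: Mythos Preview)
The paper does not prove this theorem; it is stated as a black-box result cited from Kwok, Lau, Lee, Oveis~Gharan and Trevisan~\cite{kwok2013improved}, with no proof or sketch given. So there is no ``paper's own proof'' to compare against.

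Your proposal is a reasonable high-level outline of the original Kwok et al.\ argument, and you are right that the technical core is the improved sweep-cut lemma that replaces the Cauchy--Schwarz loss $\sqrt{\lambda_2}$ by $\lambda_2/\sqrt{\lambda_k}$ using a $k$-step approximation of the eigenvector. However, your sketch of that core step is not quite the argument in~\cite{kwok2013improved}: there the key idea is to approximate $g$ by a $(k{-}1)$-step function $h$ (not a dyadic decomposition) and bound $\|g-h\|_D^2$ in terms of $\lambda_k$ via Courant--Fischer, then run the co-area rounding on $g$ while using the step approximation to control the Cauchy--Schwarz loss term-by-term. The dyadic-band picture you describe, with the claim that ``at most $k-1$ bands can have $R(g_i)$ much below $\lambda_k$,'' is not how the $O(k)$ factor actually arises, and the bookkeeping you propose (regrouping scales so each edge meets $O(1)$ bands, etc.)\ would not straightforwardly yield the bound. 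Since the paper only uses the theorem as a black box, none of this matters for the present application; but if you intend to actually supply a proof, you should follow the step-function approximation argument rather than the dyadic one.
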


In order to use \Cref{thm:improved-cheeger}, we require a lower bound on $\lambda_k$.  A combinatorial lower bound on $\lambda_k$ is given by the following result of Louis et al.:

\begin{theorem}[Louis et al. \cite{louis2012many}]
\label{thm:many-cuts}
	Let $G = (V,E)$ be a graph and $\lambda_i$ denote the $i$th smallest eigenvalue of its normalized Laplacian. For every $k \leq n$, there exist $ck$ disjoint sets  sets $S_1, ..., S_{ck}$  such that
	\[\max_i \Phi(S_i) \leq C \sqrt{\lambda_k \log k},\]
where $c<1,C>0$ are suitable absolute constants.
\end{theorem}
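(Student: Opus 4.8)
The plan is to derive \Cref{thm:many-cuts} from the standard ``many sparse cuts'' machinery underlying the higher-order Cheeger inequality, whose two ingredients are \emph{(a)} a low-energy spectral embedding built from the bottom $k$ eigenfunctions of $\wt{L}$, and \emph{(b)} a \emph{localization} step that turns this embedding into $\Omega(k)$ disjointly supported test functions, each of small Rayleigh quotient. Concretely, let $\phi_1,\dots,\phi_k$ be eigenfunctions of the normalized Laplacian realizing $\lambda_1\le\cdots\le\lambda_k$, orthonormal in the degree inner product $\langle f,h\rangle_d:=\sum_{v}\deg(v)f(v)h(v)$, and let $\mathcal{W}=\mathrm{span}(\phi_1,\dots,\phi_k)$, so that every $f\in\mathcal{W}$ has Rayleigh quotient $\mathcal{R}(f):=\frac{\sum_{(u,v)\in E}(f(u)-f(v))^2}{\sum_v\deg(v)f(v)^2}\le\lambda_k$. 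Define the spectral embedding $F\colon V\to\R^k$ by $F(v)=(\phi_1(v),\dots,\phi_k(v))$. Orthonormality makes $F$ \emph{isotropic}, i.e.\ $\sum_v\deg(v)\langle F(v),w\rangle^2=\|w\|^2$ for all $w\in\R^k$, so that $\sum_v\deg(v)\|F(v)\|^2=k$ while $\sum_{(u,v)\in E}\|F(u)-F(v)\|^2=\sum_{i\le k}\lambda_i\le k\lambda_k$.

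The heart of the argument is the localization lemma: from $F$ one extracts functions $h_1,\dots,h_{ck}\colon V\to\R_{\ge 0}$ with pairwise disjoint supports such that $\mathcal{R}(h_i)\le O(\log k)\cdot\lambda_k$ for every $i$. I would carry this out via a random metric partition of the embedded point set: pass to the directional embedding $v\mapsto F(v)/\|F(v)\|$ equipped with the mass $\mu(v):=\deg(v)\|F(v)\|^2$ (total mass $k$), and carve the ambient space into clusters using a padded random partition at a carefully tuned scale $r$. Isotropy guarantees the mass is spread out, so no ball of radius $r$ captures more than an $O(1)$ fraction of the total mass; hence a constant fraction of the mass lies in clusters that are simultaneously heavy enough to carry a nonzero Lipschitz ``tent'' function $\rho_{C_i}\cdot\|F(\cdot)\|$ and small enough in diameter that the partition severs edges of total stretch only an $O(\log k)$ factor above $k\lambda_k$ --- the $\log k$ being the standard loss of a random Euclidean partition adapted to an isotropic point set. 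Taking $h_i$ to be such a tent function supported in the $i$-th surviving cluster and running an averaging argument yields $ck$ of them, each with Rayleigh quotient $O(\log k)\lambda_k$. (Alternatively one can follow the Gaussian-projection route of \cite{louis2012many}: project $F$ onto a random Gaussian to obtain a real-valued function of small Rayleigh quotient, then iteratively peel off $\Omega(k)$ well-separated threshold intervals, again losing only a single $\log k$.)

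Finally, I would apply the easy (rounding) direction of Cheeger's inequality to each $h_i$ in turn: sweeping over the superlevel sets of the nonnegative function $h_i$ produces a nonempty set $S_i\subseteq\mathrm{supp}(h_i)$ with $\Phi(S_i)\le\sqrt{2\,\mathcal{R}(h_i)}\le O(\sqrt{\lambda_k\log k})$ (one must take the sweep cut with respect to conductance in $G$ rather than inside the support, but since $S_i\subseteq\mathrm{supp}(h_i)$ the usual estimate goes through unchanged). Since the supports $\mathrm{supp}(h_i)$ are pairwise disjoint, so are the $S_i$, and we obtain $ck$ disjoint sets with $\max_i\Phi(S_i)\le C\sqrt{\lambda_k\log k}$, as claimed.

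I expect the localization lemma --- achieving only a $\log k$, rather than $\poly(k)$, overhead --- to be the main obstacle. A naive clustering (e.g.\ grouping vertices by their dominant eigenvector coordinate, or by nearest embedded ``center'') loses polynomially in $k$; the point is to exploit isotropy so that $\Omega(k)$ genuinely distinct heavy regions are forced to exist, and to invoke a (dimension-insensitive, via Johnson--Lindenstrauss) random partition so that the edge-cutting loss stays logarithmic. Everything else --- the variational characterization of $\lambda_k$, the Cheeger rounding, and the disjointness bookkeeping --- is routine.
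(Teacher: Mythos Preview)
The paper does not prove \Cref{thm:many-cuts}: it is quoted as a black-box result of Louis, Raghavendra, Tetali, and Vempala \cite{louis2012many} (closely related to Lee--Oveis Gharan--Trevisan \cite{lee2014multiway}) in the preliminaries section, and is used as an off-the-shelf tool to lower-bound $\lambda_k$ in \Cref{lem:squaring}. So there is no ``paper's own proof'' to compare against.

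That said, your sketch is a faithful outline of how the original argument goes: the spectral embedding $v\mapsto(\phi_1(v),\dots,\phi_k(v))$, a localization/partitioning step that produces $\Omega(k)$ disjointly-supported nonnegative functions with Rayleigh quotient $O(\lambda_k\log k)$, followed by a Cheeger sweep on each. You correctly identify the localization step (achieving only a $\log k$ rather than $\poly(k)$ loss) as the nontrivial part, and the two routes you mention---padded random partitions of the normalized embedding as in \cite{lee2014multiway}, or the random-projection approach of \cite{louis2012many}---are indeed the standard ones. For the purposes of this paper, however, one simply cites the result.
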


\subsection{Escape Probabilities}
How hard is it for a random walk in $G=(V,E)$ to escape from a given set $S \subseteq V$? In this subsection we seek to understand this quantity via the one-sided conductance of $S$. Towards this end, for a set $S \subseteq V$ we define the vector $\pi_S$ to be the stationary distribution restricted to the set $S$, i.e. $\pi_i = 0$ if $i \not \in S$ and $\pi_i = \deg(i)/\vol(S)$ otherwise. 

\begin{lemma}[Spielman and Teng \cite{spielman2013local}]
\label{lem:escape-lb}
For any $S \subseteq V$, a random walk starting from a vertex $\bX \sim \pi_S$ remains in $S$ for $t$ steps with probability at least $1 - t \Psi(S)$.
\end{lemma}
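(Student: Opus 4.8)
The plan is to track the ``surviving mass'' of the walk inside $S$ and show it is dominated coordinatewise by $\pi_S$ at every step, after which the escape bound follows from a one-step computation plus telescoping. Concretely, I would let $\bY_0, \bY_1, \dots$ be the random walk with $\bY_0 = \bX \sim \pi_S$, and for $j \geq 0$ define the sub-distribution $p_j \in (\R_{\geq 0})^{S}$ by $p_j(i) := \Pr[\bY_0, \dots, \bY_j \in S \text{ and } \bY_j = i]$. Then $p_0 = \pi_S$, the quantity $q_j := \sum_{i \in S} p_j(i)$ is exactly the probability of remaining in $S$ through step $j$ (so $q_0 = 1$, and the goal is $q_t \geq 1 - t\Psi(S)$), and $p_{j+1}$ is obtained from $p_j$ by applying the random walk matrix $W = AD^{-1}$ and then discarding the mass that landed outside $S$. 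In particular the one-step escape probability is
\[
q_j - q_{j+1} \;=\; \sum_{i \in S} p_j(i)\cdot \frac{|E(i,\overline S)|}{\deg(i)}.
\]

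The heart of the argument is the claim that $p_j(i) \leq \pi_S(i) = \deg(i)/\vol(S)$ for every $i \in S$ and every $j \geq 0$. To prove it I would pass to the normalized density $r_j(i) := p_j(i)/\pi_i = p_j(i)\cdot 2|E|/\deg(i)$, which at $j = 0$ is constant and equal to $2|E|/\vol(S)$ on $S$. Using $p_{j+1}(i) = \sum_{k\in S}(A_{ik}/\deg(k))\,p_j(k)$ one gets the clean recursion
\[
r_{j+1}(i) \;=\; \frac{1}{\deg(i)}\sum_{k \in S} A_{ik}\, r_j(k),
\]
which exhibits $r_{j+1}(i)$ as a weighted average of the $r_j(k)$'s with total weight $\big(\sum_{k \in S}A_{ik}\big)/\deg(i) \leq 1$. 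Hence $\max_{i \in S} r_{j+1}(i) \leq \max_{i \in S} r_j(i)$, and by induction $\max_{i\in S} r_j(i) \leq 2|E|/\vol(S)$ for all $j$, which is exactly the claimed bound $p_j \leq \pi_S$ (coordinatewise).

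Given the claim, I would finish by bounding each escape term using $\sum_{i\in S} |E(i,\overline S)| = |E(S,\overline S)|$:
\[
q_j - q_{j+1} \;\leq\; \sum_{i\in S}\frac{\deg(i)}{\vol(S)}\cdot\frac{|E(i,\overline S)|}{\deg(i)} \;=\; \frac{|E(S,\overline S)|}{\vol(S)} \;=\; \Psi(S),
\]
and then telescoping $q_t = q_0 - \sum_{j=0}^{t-1}(q_j - q_{j+1}) \geq 1 - t\Psi(S)$. The only genuinely non-obvious step is the monotonicity $\max_i r_{j+1}(i) \leq \max_i r_j(i)$, so I expect that averaging argument to be the main thing to get right; everything else is bookkeeping. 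Self-loops and multi-edges cause no trouble, since self-loops never contribute to $|E(i,\overline S)|$ and the recursion above holds verbatim with $A$ the (multi-)adjacency matrix.
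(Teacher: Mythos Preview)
Your proposal is correct and is precisely the standard Spielman--Teng argument the paper defers to (the paper does not give its own proof, it simply cites \cite{spielman2013local} and notes that replacing $\Phi(S)$ by $\Psi(S)$ in the last step yields the stated version). The key domination step $p_j \leq \pi_S$ via the $\max$-contraction of the normalized density $r_j$ is exactly the mechanism in the original, and your telescoping finish is the ``last line'' modification the paper alludes to.
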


We remark that stronger bounds are known for larger values of $t$ \cite{gharan2012approximating}, but we will not need them here. (Spielman and Teng technically prove a variant of \Cref{lem:escape-lb} with $\Phi(S)$ in place of $\Psi(S)$, but the same proof gives \Cref{lem:escape-lb} by simply modifying the last line of the proof.)

A converse to this statement follows from the fact that random walks in expanders mix quickly:

\begin{lemma}
\label{lem:escape-ub}
	Let $G = (V,E)$ and let $S \subseteq V$. Moreover, assume that for every $T \subseteq S$, we have that $\Psi(T) \geq \theta$. Then with probability at least $1/8$, a random walk starting from any distribution over vertices in $S$ leaves $S$ within $\wt{\Omega}(1/\theta^2)$ steps.
\end{lemma}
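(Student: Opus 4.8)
The plan is to bound directly the probability that a random walk started from an arbitrary distribution $x$ supported on $S$ stays inside $S$ for $t$ steps, and to show this is already below $1/8$ once $t = O\big(\theta^{-2}\log(|V|\dmax)\big) = \wt{O}(1/\theta^2)$. Writing $W = AD^{-1}$ for the walk matrix of $G$ and $W_S$ for its principal submatrix on $S$, the killed walk has the property that the (nonnegative) vector $W_S^t x$ is exactly its sub-distribution over $S$ after $t$ steps, so the survival probability equals $\mathbf 1_S^\top W_S^t x$. Using $W_S = D_S^{1/2}(\wt{A})_S D_S^{-1/2}$, where $(\wt A)_S$ is the symmetric principal submatrix of the normalized adjacency matrix, and the fact that $G$ nice (\Cref{ass:nice}) makes the walk lazy and hence $\wt A \succeq 0$, so that $(\wt A)_S \succeq 0$ and $\|(\wt A)_S^t\|_{\mathrm{op}} = \lambda_{\max}((\wt A)_S)^t$, Cauchy--Schwarz gives
\[
	\Pr[\bY_0,\dots,\bY_t \in S] \;=\; \mathbf{1}_S^\top W_S^t x \;=\; \big(D_S^{1/2}\mathbf{1}_S\big)^\top (\wt{A})_S^{t}\big(D_S^{-1/2}x\big) \;\leq\; \lambda_{\max}\big((\wt{A})_S\big)^t \big\|D_S^{1/2}\mathbf{1}_S\big\|_2\,\big\|D_S^{-1/2}x\big\|_2,
\]
and here $\|D_S^{1/2}\mathbf 1_S\|_2^2 = \vol_G(S) \le |V|\dmax$ while $\|D_S^{-1/2}x\|_2^2 \le \|x\|_2^2 \le \|x\|_1^2 = 1$.

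The heart of the matter is the spectral gap bound $\lambda_{\max}((\wt A)_S) \le 1 - \theta^2/2$, i.e. $\lambda_{\min}((\wt L)_S) \ge \theta^2/2$, where $(\wt L)_S = I - (\wt A)_S$ is the principal submatrix of the \emph{ambient} normalized Laplacian of $G$ on $S$ --- the ``Dirichlet Laplacian'' of $S$ --- whose smallest eigenvalue is the Dirichlet eigenvalue $\lambda_S = \min\{\sum_{(u,v)\in E}(f(u)-f(v))^2 / \sum_v \deg(v) f(v)^2 : \mathrm{supp}(f)\subseteq S\}$. A standard (one-sided) Dirichlet Cheeger inequality gives $\lambda_S \ge \tfrac12\big(\min_{\emptyset \ne T \subseteq S}\Psi_G(T)\big)^2$: the sweep-cut argument applied to the nonnegative Dirichlet eigenfunction $f$ yields a level set $T \subseteq \mathrm{supp}(f) \subseteq S$ with $\Psi_G(T) \le \sqrt{2\lambda_S}$. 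By hypothesis every $T\subseteq S$ has $\Psi_G(T)\ge\theta$, so $\lambda_S \ge \theta^2/2$. Substituting $\lambda_{\max}((\wt A)_S)^t \le (1-\theta^2/2)^t \le e^{-\theta^2 t/2}$ into the display above shows the survival probability is at most $e^{-\theta^2 t/2}\sqrt{|V|\dmax}$, which is below $1/8$ as soon as $t \ge \tfrac{2}{\theta^2}\ln(8\sqrt{|V|\dmax}) = O(\theta^{-2}\log(|V|\dmax)) = \wt{O}(1/\theta^2)$; hence for that value of $t$ the walk exits $S$ with probability at least $7/8$, and crucially this holds for every starting distribution on $S$ at once.

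I expect the main obstacle to be stating and verifying the Dirichlet Cheeger inequality $\lambda_S \ge \theta^2/2$ cleanly, being careful that $(\wt L)_S$ is the submatrix of the normalized Laplacian of $G$ (so an edge from $v\in S$ to $\overline S$ contributes the ``boundary'' term and not an internal difference), which is exactly why the hypothesis is phrased via $\Psi_G(T)$ for subsets $T \subseteq S$ rather than via the conductance of $G[S]$: the induced subgraph $G[S]$ itself need not be an expander, since a set $T\subseteq S$ may send almost all of its boundary into $\overline S$ rather than into $S\setminus T$. This is also why the naive shortcuts --- collapsing $\overline S$ to a single vertex, or turning exit-edges into self-loops, and then invoking \Cref{cor:thetasquared} --- do not work, and why this lemma is genuinely the ``converse'' to \Cref{lem:escape-lb}. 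A secondary point to get right is the nonnegativity $(\wt A)_S \succeq 0$, needed so that the $t$-th power is not dominated by a near-$(-1)$ eigenvalue; this follows from \Cref{ass:nice}, since niceness makes $W = \tfrac12(I + W')$ for the non-lazy walk matrix $W'$, forcing all eigenvalues of $\wt A$ into $[0,1]$.
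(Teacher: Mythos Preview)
Your proof is correct but takes a different route from the paper. You bound the survival probability of the killed walk directly via the Dirichlet Laplacian $(\wt L)_S$ and the Dirichlet Cheeger inequality $\lambda_S \ge \tfrac12\big(\min_{\emptyset\ne T\subseteq S}\Psi_G(T)\big)^2$, obtaining exponential decay and in fact an exit probability of $7/8$, stronger than the stated $1/8$. The paper, however, uses precisely the ``collapse $\overline S$ to a single vertex'' idea you dismiss: it builds $G'$ from $G[S]$ by adding a new vertex $s$, redirecting each edge in $E(S,\overline S)$ to $s$, and padding $s$ with $\vol_G(S)$ self-loops. For any $T\subseteq S$ one then has $\vol_{G'}(T)=\vol_G(T)$ and $|E_{G'}(T,\overline T)|=|E_G(T,\overline T)|$ (the boundary into $\overline S$ is not lost---it goes to $s$), so $\Phi_{G'}(T)\ge\Psi_G(T)\ge\theta$ and hence $\Phi(G')\ge\theta$; the extra self-loops ensure $\pi_{G'}(s)\ge 1/4$, and one application of \Cref{cor:thetasquared} finishes. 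Your objection that ``$G[S]$ need not be an expander'' is correct but does not touch this construction; it applies only to the ``turn exit-edges into self-loops at $v\in S$'' variant, where the boundary into $\overline S$ really is thrown away. Your spectral argument is more direct and yields the sharper constant, at the price of importing (or reproving) the Dirichlet Cheeger inequality; the paper's argument is entirely self-contained given the preliminaries already in place.
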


\begin{proof}
	We build a new graph $G'$ by taking $G[S]$ and adding a new vertex $s$, so the vertex set of $G'$ is $S \cup \{s\}$. For each vertex $v \in S$, we add $|\{e \ni v: e \in E(S,\overline{S})\}|$ edges from $v$ to $s$. We then add $\vol_{G}(S)$ self-loops from $s$ to itself. (Note that the graph $G'$ is not ``nice'' because node $s$ has more self-loops than outgoing edges to other vertices, but all other vertices have the same number of self-loops as outgoing edges.)
	
	We now claim that $G'$ has $\Phi(G') \geq \theta$. Indeed, fix any cut $(T, \overline{T})$ of $V(G')$. Without loss of generality, assume that $s \not \in T$. We then have that
		\[\Phi_{G'}(T) \geq \frac{|E_{G'}(T, \overline{T})|}{\vol_{G'}(T)} = \frac{|E_{G}(T, \overline{T})|}{\vol_{G}(T)} = \Psi_G(T)\geq \theta\] 
	by assumption. 
	
	Now note that the stationary distribution of the random walk over $G'$ satisfies $\pi(s) = \frac{\deg_{G'}(s)}{\vol_{G'}(G')} \geq \frac{1}{4}$.
	It now follows from \Cref{cor:thetasquared} that if we take a random walk over $G'$ starting from any distribution over the vertices in $S$ for $\wt{\Omega}(1/\theta^2)$ steps, so that the total variation distance to the stationary distribution of the random walk over $G'$ is at most $1/8$, then the final distribution places mass at least $\frac{1}{8}$ on $s$. Hence, with probability at least $\frac{1}{8}$, such a walk leaves $S$.
\end{proof}

%!TEX root = ../Learning-DNFs.tex

\section{Graph Covers}
Recall that we are interested in graphs with a local structure. Towards this end, we make the following definition:
\begin{definition}[Cover of a Graph]
	Given a graph $G = (V,E)$, we say $\mathcal{C} := \{A_1, \dots A_s\}$ forms a \emph{cover} of $G$ if $V = \bigcup_i A_i$. We say that $\mathcal{C}$ forms a \emph{$\theta$-cover} if $\Phi(G[A_i]) \geq \theta$ for all $i$. Finally, we say $\mathcal{C}$ is a \emph{disjoint} $\theta$-cover if $A_1, ..., A_s$ are disjoint and form a $\theta$-cover.
\end{definition}

Given a cover and a set $I \subseteq [s]$, it will often be convenient to write $A_I := \bigcup_{i \in I} A_i$

If a graph $G = (V,E)$ has a $\theta$-cover $A_1, \dots, A_s$, then we should expect it to be very well behaved. We start by noting that unions of the sets $A_i$ approximate the worst cut in covered graphs.

\begin{lemma}
	\label{lem:cut-approx}
	Let $G = (V,E)$ be a graph and $\mathcal{C} := \{A_1, ..., A_s\}$ a $\theta$-cover. There exists a set $I \subseteq [s]$ such that
	
	\[\Phi \left( \bigcup_{i \in I} A_i \right) \leq 4 sd^2_{\max} \Phi(G)/\theta \]
	(recall that trivially we have $\Phi(G) \leq \Phi \left( \bigcup_{i \in I} A_i \right)$).
\end{lemma}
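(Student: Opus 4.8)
The plan is to take a minimum-conductance cut $(S,\overline{S})$ of $G$, so $\Phi_G(S) = \Phi(G)$, and ``round'' it to a union of the sets $A_i$. First normalize so that $\Psi_G(S) \leq \Psi_G(\overline{S})$; since both one-sided conductances share the numerator $|E(S,\overline{S})|$, this just says $\vol_G(\overline{S}) \leq \vol_G(S)$, and it also yields the identity $\Phi(G) = \Psi_G(\overline{S}) = |E(S,\overline{S})|/\vol_G(\overline{S})$, i.e. $|E(S,\overline{S})| = \Phi(G)\cdot \vol_G(\overline{S})$. For each $i$ let $\sigma_i = 1$ if, inside the induced graph $G[A_i]$, the side $\overline{S}\cap A_i$ has strictly larger volume than $S \cap A_i$ (and $\sigma_i = 0$ otherwise), and set $I := \{i : \sigma_i = 1\}$ and $T := A_I = \bigcup_{i \in I} A_i$. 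Intuitively $T$ is the ``$\overline S$-side rounding'' of the cut, and it will be the desired witness $A_I$.

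The heart of the argument is a bound on $\vol_G(S \triangle T)$. I would first show that $S \triangle T \subseteq \bigcup_i M_i$, where $M_i$ denotes whichever of $S \cap A_i$, $\overline{S}\cap A_i$ has the smaller $G[A_i]$-volume (ties broken appropriately). Indeed, a vertex $v \in T \setminus \overline{S}$ lies in some $A_i$ with $\sigma_i = 1$, and then $v \in S \cap A_i$, which by choice of $\sigma_i$ is the minority side $M_i$; and a vertex $v \in \overline{S}\setminus T$ lies in some $A_i$ at all (the cover property), necessarily with $\sigma_i = 0$, and then $v \in \overline{S}\cap A_i = M_i$. Now, since $G[A_i]$ is a $\theta$-expander, the cut $(S\cap A_i, \overline{S}\cap A_i)$ inside $G[A_i]$ has at least $\theta \cdot \vol_{A_i}(M_i)$ edges; and since $G$ is nice, every vertex has at least one self-loop, which survives in any induced subgraph, so $\vol_{A_i}(M_i) \geq |M_i|$. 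Summing over $i$ and using that each edge of $E_G(S,\overline{S})$ lies inside at most $s$ of the graphs $G[A_i]$, we get $|S \triangle T| \leq \sum_i |M_i| \leq \frac{1}{\theta}\sum_i |E_{G[A_i]}(S\cap A_i,\overline{S}\cap A_i)| \leq \frac{s}{\theta}|E_G(S,\overline{S})|$, hence $\vol_G(S\triangle T) \leq \dmax \cdot |S\triangle T| \leq \frac{s\dmax}{\theta}|E_G(S,\overline{S})| = \frac{s\dmax}{\theta}\Phi(G)\vol_G(\overline{S})$.

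With this in hand I would restrict to the only meaningful regime, $\frac{s\dmax}{\theta}\Phi(G) \leq \frac12$ — if instead $\frac{s\dmax}{\theta}\Phi(G) > \frac12$ then $4s\dmax^2\Phi(G)/\theta > 1 \geq \Phi_G(A_J)$ for every nontrivial set $A_J \notin \{\emptyset, V\}$, and such an $A_J$ exists unless every $A_i = V$ (a degenerate case where $G$ itself is a trivial witness), so the claim holds for free. In the main regime $\vol_G(S \triangle T) \leq \frac12 \vol_G(\overline{S})$, so $\vol_G(T) \geq \vol_G(\overline{S}\cap T) \geq \vol_G(\overline{S}) - \vol_G(S\triangle T) \geq \frac12\vol_G(\overline{S}) > 0$, and, using $\vol_G(S) \geq \vol_G(\overline{S})$, $\vol_G(\overline{T}) \geq \vol_G(S \cap \overline{T}) \geq \vol_G(S) - \vol_G(S\triangle T) \geq \frac12\vol_G(\overline{S}) > 0$; thus $T$ is a nontrivial cut with $\min(\vol_G(T),\vol_G(\overline{T})) \geq \frac12\vol_G(\overline{S})$. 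For the numerator, any edge cut by $T$ but not by $S$ is incident to $S\triangle T$, so $|E_G(T,\overline{T})| \leq |E_G(S,\overline{S})| + \vol_G(S\triangle T) \leq (1 + \frac{s\dmax}{\theta})\Phi(G)\vol_G(\overline{S}) \leq \frac{2s\dmax}{\theta}\Phi(G)\vol_G(\overline{S})$, using $\frac{s\dmax}{\theta} \geq 1$ (valid since $\theta \leq 1 \leq s\dmax$). Dividing, $\Phi_G(T) = |E_G(T,\overline{T})|/\min(\vol_G(T),\vol_G(\overline{T})) \leq 4s\dmax\Phi(G)/\theta \leq 4s\dmax^2\Phi(G)/\theta$, which is the claimed bound (in fact with $\dmax$ in place of $\dmax^2$).

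Essentially every step is routine and the argument is self-contained (it does not need Cheeger or the higher-order Cheeger inequalities); the one genuine design decision — and the only place I anticipate any subtlety — is to round towards the \emph{smaller} side $\overline S$ and to exploit the identity $|E_G(S,\overline{S})| = \Phi(G)\vol_G(\overline{S})$, so that the rounding error $\vol_G(S\triangle T)$ is measured against $\vol_G(\overline{S})$, precisely the quantity that also lower-bounds $\min(\vol_G(T),\vol_G(\overline{T}))$; rounding toward $S$ instead would not obviously prevent $T$ from collapsing to $\emptyset$ or $V$.
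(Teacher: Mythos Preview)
Your approach is essentially the paper's: take the min-conductance cut, round it to a union of $A_i$'s based on which side of the cut each $A_i$ mostly lies on, and bound the rounding error via the expansion of each $G[A_i]$. There is one persistent notational slip to fix: you write $S \triangle T$ throughout, but the set your containment argument actually controls (via the pieces $T \setminus \overline{S}$ and $\overline{S} \setminus T$) is $T \triangle \overline{S}$, not $S \triangle T$. Since $T$ is your rounding of $\overline{S}$, that is the right object; with the literal reading the chain $\vol_G(\overline{S}\cap T) \geq \vol_G(\overline{S}) - \vol_G(S\triangle T) \geq \tfrac{1}{2}\vol_G(\overline{S})$ breaks down, because the bound you proved is on $\vol_G(T \triangle \overline{S})$, whereas $S \triangle T$ is essentially all of $V$ when $T \approx \overline{S}$. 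Replace $S\triangle T$ by $T \triangle \overline{S}$ everywhere and the proof goes through, in fact with the sharper constant $4s\dmax/\theta$ that you note.
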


\begin{proof}
	Throughout the proof we assume that $\Phi(G) \leq \theta/(4s \dmax ^2)$, as otherwise the statement is trivially true. Let $S$ be the sparsest cut in $G$, so $\Phi(S) = \Phi(G)$, and supposed without loss of generality that $\vol(S) \leq \frac{1}{2} \vol(G)$. 
	
First, note that for each $i \in [s]$, we either have that $(i)$ $|S \cap A_i| \leq \frac{\dmax  \Phi(G) |S|}{\theta}$ or $(ii)$ $|A_i \setminus S| \leq \frac{\dmax  \Phi(G) |S|}{\theta}$. To see this, suppose that $\vol_{A_i}(S \cap A_i) \leq \frac{1}{2} \vol_{A_i}(A_i)$, then note that
	\[\Phi(G) = \Phi(S) = \frac{|E(S,\overline{S})|}{\vol(S)} \geq \frac{|E_{A_i}(S,\overline{S})|}{\vol(S)} \geq \frac{\theta \cdot |S \cap A_i|}{\dmax  |S|}. \] 
	Rearranging then gives the desired result. The case of $\Vol_{A_i}(S \cap A_i) > \frac{1}{2} \Vol_{A_i}(A_i)$ follows by considering $\overline{S}$ in the previous analysis.
	
	We now take $I = \left\{i \in [s]: |A_i \setminus S| \leq \frac{\dmax  \Phi(G) |S|}{\theta} \right\}$. Set $T = \bigcup_{i \in I} A_i$ and compute as desired that (see below for further elaboration on \Cref{eq:peach} and \Cref{eq:plum}):
	\begin{align}
		\Phi \left( T \right) =  \frac{|E(T,\overline{T})|}{\min(\vol(T), \vol(\overline{T}))} &\leq \frac{|E(S,\overline{S})| + |S \Delta T| \dmax }{\vol(S) - |S \Delta T| \dmax } \label{eq:peach}\\
		&\leq \frac{|E(S,\overline{S})| +  (s \dmax ^2 \Phi(G)/\theta) \cdot \vol(S)}{(1 - s \dmax ^2 \Phi(G) /\theta) \cdot \vol(S)} \label{eq:plum} \\
		&\leq 2 \cdot \Phi(S) + 2sd^2_{\mathrm{max}} \Phi(G)/\theta \nonumber \\
		&\leq 4 sd^2_{\mathrm{max}} \Phi(G)/\theta. \nonumber
	\end{align}
\Cref{eq:peach} above uses $\vol(T),\vol(\bar{T}) \geq \vol(S) - |S \Delta T| \dmax $. The first of these inequalities is obvious; for the second, recall first that $\vol(T)+\vol(\bar{T})=\vol(G)$.  If $\vol(T) \leq {\frac 1 2} \vol(G)$ then $\vol(\bar{T}) \geq \vol(T)$ and we are done; so suppose ${\frac 1 2} \vol(G) \leq \vol(T)$, and observe that $\vol(T) \leq \vol(S) + |S \Delta T| \dmax $. Then 
\begin{align*}
		\vol(\bar{T}) &= \vol(G) - \vol(T)\\
		&\geq \vol(G) - \vol(S) - |S \Delta T| \dmax \\
		&\geq {\frac 1 2} \vol(G) - |S \Delta T| \dmax\\  
		&\geq \vol(S) - |S \Delta T|\dmax ,
\end{align*}
		where the last two inequalities used $\vol(S) \leq {\frac 1 2} \vol(G).$
		
		\Cref{eq:plum} above uses $|S \Delta T| \leq (s \dmax  \Phi(G)/\theta)\vol(S).$
		To see this, note first that the definition of $I$ implies $|T \setminus S| \geq (s \dmax \Phi(G)/\theta)\cdot|S|$. To bound $|S \setminus T|$, observe that $|S \cap A_i| \leq \dmax  \Phi(G) |S| /\theta$ for any $i \notin I$ (since either $(i)$ or $(ii)$ above holds); only the sets $A_i$ with $i \notin I$ contribute to $S \setminus T$ (since the $A_i$'s form a cover), and hence the total contribution of all of these to $|S \setminus T|$ is at most $(s-|I|)\dmax \Phi(G)|S|/\theta$. Since $\Vol(S) \geq 2|S|$ (this is an easy consequence of the fact that $G$ has no isolated vertices),
		we get that $|S \Delta T| \leq (s \dmax  \Phi(G)/\theta)\vol(S)$ as claimed.
\end{proof}

The next lemma gives a lower bound on higher-order eigenvalues of any graph with a $\theta$-cover; this will be useful in conjunction with the higher order Cheeger inequality mentioned in the preliminaries.

\begin{lemma} \label{lem:squaring}
	Let $G = (V,E)$ be a graph and $\mathcal{C} := \{A_1, ..., A_s\}$ be a $\theta$-cover. Moreover, let $\lambda_k$ be the the $k$th	smallest eigenvalue of the normalized Laplacian of $G$. For $k = \Omega(s)$, we then have that
	\[\lambda_k \geq \Omega \left ( \frac{\theta^2}{s^2 \dmax ^2  \log k} \right). \]
\end{lemma}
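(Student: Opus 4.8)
I will prove the contrapositive, using the ``many sparse cuts'' theorem of Louis et al.\ (\Cref{thm:many-cuts}): if $\lambda_k$ were substantially below $\theta^2/(s^2\dmax^2\log k)$, then $G$ would contain $\Omega(k)=\Omega(s)$ pairwise-disjoint sets of tiny conductance, and I will show that a $\theta$-cover with only $s$ parts cannot support that many disjoint low-conductance sets. This mirrors, at the level of a single cut, the dichotomy used in the proof of \Cref{lem:cut-approx}, but extracts a counting statement rather than an approximation statement.

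Concretely, set $\phi := C\sqrt{\lambda_k\log k}$, so that \Cref{thm:many-cuts} yields $ck$ disjoint nonempty sets $S_1,\dots,S_{ck}$ with $\Phi_G(S_j)\le\phi$ for all $j$. First I would discard the at most one $S_j$ with $\vol_G(S_j)>\tfrac12\vol_G(V)$, so that henceforth $m_j:=\vol_G(S_j)=\min(\vol_G(S_j),\vol_G(\overline{S_j}))$ and hence $|E_G(S_j,\overline{S_j})|\le\phi\, m_j$. Then, for each cover part $A_i$ and each surviving $j$, I consider $S_j\cap A_i$ as a subset of the $\theta$-expander $G[A_i]$: if $\vol_{A_i}(S_j\cap A_i)\le\tfrac12\vol_{A_i}(A_i)$ then $\Psi_{A_i}(S_j\cap A_i)\ge\theta$, so $\theta\,\vol_{A_i}(S_j\cap A_i)\le|E_{A_i}(S_j,\overline{S_j})|\le|E_G(S_j,\overline{S_j})|\le\phi\, m_j$; symmetrically, if $\vol_{A_i}(S_j\cap A_i)>\tfrac12\vol_{A_i}(A_i)$ then $\vol_{A_i}(A_i\setminus S_j)\le\phi\, m_j/\theta$. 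Let $I_j:=\{\,i\in[s]:\vol_{A_i}(S_j\cap A_i)>\tfrac12\vol_{A_i}(A_i)\,\}$ be the set of ``inside'' indices for $S_j$.

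The heart of the argument is two claims. First, every $I_j$ is nonempty, provided $\phi<\theta/(s\dmax)$: if $I_j=\emptyset$, then $\vol_{A_i}(S_j\cap A_i)\le\phi\, m_j/\theta$ for every $i$, and since the $A_i$'s cover $V$ and $\vol_G(Y)\le\dmax\,|Y|\le\dmax\,\vol_{A_i}(Y)$ for $Y\subseteq A_i$, we get $m_j=\vol_G(S_j)\le\sum_i\vol_G(S_j\cap A_i)\le s\dmax\phi\, m_j/\theta$, which forces $\phi\ge\theta/(s\dmax)$. Second, no index $i$ lies in two distinct $I_j$'s: if $i\in I_j\cap I_{j'}$ with $j\ne j'$, then $S_j\cap A_i$ and $S_{j'}\cap A_i$ are disjoint subsets of $A_i$ (by disjointness of $S_j,S_{j'}$), each of $G[A_i]$-volume strictly more than $\tfrac12\vol_{A_i}(A_i)$, so their volumes sum to more than $\vol_{A_i}(A_i)$ --- impossible. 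Combining the two claims, once $\phi<\theta/(s\dmax)$ we have $(\text{number of surviving }S_j)\le\sum_j|I_j|\le s$. Choosing the implied constant in $k=\Omega(s)$ large enough that more than $s$ of the $S_j$ survive the discard, we reach a contradiction; unwinding, $C\sqrt{\lambda_k\log k}=\phi\ge\theta/(s\dmax)$, i.e.\ $\lambda_k\ge\Omega(\theta^2/(s^2\dmax^2\log k))$, as claimed.

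The step I expect to be the crux --- and the reason a naive approach gives only $k=\Omega(2^s)$ --- is the second claim above. The obvious move is to record, for each sparse set $S_j$, the subset $I_j\subseteq[s]$ of cover parts it (roughly) contains, and then pigeonhole over the $2^s$ possible subsets; but that only precludes $2^s$ disjoint sparse sets. Replacing this with a direct volume argument --- two disjoint sets cannot both occupy a strict majority of the same expander $A_i$ --- shows that $I_1,I_2,\dots$ are in fact pairwise disjoint, which is exactly what brings the count down to the sharp $O(s)$. The remaining work is routine bookkeeping of the cover-versus-partition slack (the $\dmax$ factors relating $\vol_G$ to $\vol_{A_i}$, just as in \Cref{lem:cut-approx}), together with the minor observation that $G[A_i]$, being a nontrivial $\theta$-expander, has no isolated vertices, so that $\vol_{A_i}(Y)\ge|Y|$ for $Y\subseteq A_i$.
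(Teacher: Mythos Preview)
Your proof is correct and is essentially the same argument as the paper's: both invoke the Louis et al.\ many-cuts theorem and the pigeonhole observation that among $\Omega(s)$ disjoint sets, at most $O(s)$ can each occupy more than half of $\vol_{A_i}(A_i)$ for some cover part $A_i$, so one of them is a minority in every $A_i$ and hence has conductance $\ge \theta/(s\,\dmax)$. The paper's presentation is marginally more direct --- it singles out that one ``everywhere-minority'' set $S_{i^\ast}$ and lower-bounds $\Phi(S_{i^\ast})$ immediately, which also spares it your step of discarding the large-volume set (since $\Phi(S)\ge\Psi(S)$ always).
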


\begin{proof}
	Let $S_1,\dots,S_{3s}$ be any $3s$ disjoint subsets of $V$.
	There must be some $i^\ast \in [3s]$ such that for every $j \in [s],$ we have $\Vol_{A_j}(S_{i^\ast} \cap A_j) \leq {\frac 1 2} \Vol_{A_j}(A_j)$.
(For each $j \in [s]$, there are at most two values of $i \in [3s]$ such that 	
$\Vol_{A_{j}}(S_{i} \cap A_{j}) \geq  {\frac 1 2} \Vol_{A_{j}}(A_{j})$, so there must be some value $i^\ast \in [3s]$ for which no $j$ satisfies $\Vol_{A_{j}}(S_{i} \cap A_{j}) \geq  {\frac 1 2} \Vol_{A_{j}}(A_{j})$.)
Since $A_1,\dots,A_s$ is a cover, some $j \in [s]$ has $|A_j \cap S_{i^\ast}| \geq |S_{i^\ast}|/s.$
	We then note that
	\[|E_{A_j}(S_{i^\ast}, \overline{S_{i^\ast}})| \geq \theta \vol_{A_{i^\ast}}(S_{i^\ast} \cap A_j) \geq \theta \cdot \frac{|S_{i^\ast}|}{s} \geq \theta \frac{\vol(S_{i^\ast})}{s \dmax },\]
	which implies that
	\[\Phi(S_{i^\ast}) \geq \frac{\theta}{s \dmax }\]

Finally, taking $S_1,\dots,S_{3s}$ to be the $3s$ disjoint sets from \Cref{thm:many-cuts},
we get that 
\[
{\frac \theta {s \dmax}} \leq \max_i \Phi(S_i) \leq
C\sqrt{\lambda_k \log k},
\] which rearranges to give the lemma.
\end{proof}

Combining this \Cref{lem:squaring} with \Cref{thm:mixing-time} and \Cref{thm:improved-cheeger}, we then get that

\begin{corollary}
	\label{cor:fast-covered-mixing}
	Let $G = (V,E)$ be a graph and $\mathcal{C} = \{A_1, ..., A_s\}$ a $\theta$-cover. Then the mixing time of a random walk in $G$ is $\wt{O} \left( \frac{s^2 \dmax }{\theta \Phi(G)} \right)$.
\end{corollary}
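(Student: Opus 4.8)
The plan is to combine the three ingredients just assembled: Theorem~\ref{thm:mixing-time} (which bounds the mixing time by $O(\log N/\lambda_2)$), the higher-order Cheeger inequality of Theorem~\ref{thm:improved-cheeger} (which lower-bounds $\lambda_2$ in terms of $\Phi(G)$ and $\lambda_k$), and Lemma~\ref{lem:squaring} (which lower-bounds $\lambda_k$ for $k = \Omega(s)$). The whole argument is a chain of substitutions, with a small amount of care needed in choosing the parameter $k$.

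First I would fix $k$ to be a suitable constant multiple of $s$ — specifically, the smallest value for which Lemma~\ref{lem:squaring} applies (so that $k = \Omega(s)$); I may also assume $s \leq |V|$, since otherwise the claimed bound $\wt O(s^2 \dmax/(\theta\Phi(G)))$ dwarfs the trivial polynomial-in-$|V|$ mixing-time bound and there is nothing to prove. With this choice, Lemma~\ref{lem:squaring} gives $\lambda_k \geq \Omega\!\left(\theta^2/(s^2 \dmax^2 \log s)\right)$, hence $\sqrt{\lambda_k} \geq \Omega\!\left(\theta/(s \dmax \sqrt{\log s})\right)$.

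Next I would feed this into Theorem~\ref{thm:improved-cheeger}, rearranged as $\lambda_2 \geq \Omega\!\left(\Phi(G)\sqrt{\lambda_k}/s\right)$ (using that $k = \Theta(s)$ so the $O(k)$ factor is $O(s)$). Substituting the lower bound on $\sqrt{\lambda_k}$ yields
\[
\lambda_2 \;\geq\; \Omega\!\left(\frac{\Phi(G)\,\theta}{s^2 \dmax \sqrt{\log s}}\right).
\]
Finally, Theorem~\ref{thm:mixing-time} gives that the mixing time is at most $O(\log N/\lambda_2) = O\!\left(\dfrac{\log N \cdot s^2 \dmax \sqrt{\log s}}{\Phi(G)\,\theta}\right)$, and since the $\wt O(\cdot)$ notation suppresses factors polynomial in $\log |V(G)| = \log N$ and in $\log s$ (the latter being at most $\log$ of the quantity being bounded), this is exactly $\wt O\!\left(s^2 \dmax/(\theta\Phi(G))\right)$, as desired.

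I do not expect a genuine obstacle here; the only points requiring attention are (a) verifying that the constant $k = \Theta(s)$ simultaneously satisfies the $k = \Omega(s)$ hypothesis of Lemma~\ref{lem:squaring} and is at most $|V|$ (handled by the $s \leq |V|$ reduction above), and (b) checking that the $\sqrt{\log s}$ and $\log N$ factors picked up along the way are legitimately absorbed by the $\wt O$ convention. Both are routine.
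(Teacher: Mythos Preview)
Your proposal is correct and matches the paper's approach exactly: the paper states this as an immediate corollary of combining \Cref{lem:squaring}, \Cref{thm:improved-cheeger}, and \Cref{thm:mixing-time}, which is precisely the chain of substitutions you carry out. Your attention to the edge case $k \leq |V|$ and to the absorption of the $\log N$ and $\sqrt{\log s}$ factors into $\wt O$ is, if anything, more explicit than the paper itself.
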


\section{Local Mixing}
\label{sec:local-mixing-intro}

We now come to a definition that is central to our work:
\begin{definition}[$(p, \eps)$-Local Mixing Time]
	Given a vertex $v \in V$ and values $p,\eps \in [0,1]$, we say that the \emph{$(p, \eps)$-local mixing time of a random walk with respect to the cover $\mathcal{C} = \{A_1, ..., A_s\}$ starting at $v$}, denoted $T_{\mix}^{p,\eps}(v, \mathcal{C})$, is the smallest number $t$ such that there exists an $i \in [s]$ and an event $E$ such that a random walk for $t$ steps $v:= \bX_0, ..., \bX_t$ and independent, uniformly random $\br \sim [0,1]$ together satisfy:

\begin{itemize}
\item [(i)] $\quad \Pr[E(\bX_0, ..., \bX_t, \br)] \geq p$; and
		
\item [(ii)] $\dtv(\bX_t|E, \mathcal{U}_{A_i}) \leq \eps$, where $\mathcal{U}_{A_i}$ denotes the uniform distribution on $A_i$. 
\end{itemize}
	\flushleft For a graph $G$, the \emph{$(p,\eps)$-local mixing time} is $T_{\mix}^{p,\eps}(G, \mathcal{C}) := \max_{v \in V} T_{\mix}^{p, \eps}(v, \mathcal{C})$.\footnote{Throughout the paper, for clarity of notation we use capital letters (i.e., $T_{\mix}$) for local mixing times, and use lower case letters (i.e., $t_{\mix}$) for regular mixing times.} 
We may occasionally drop the $p, \eps$ and $\mathcal{C}$ when they are clear from context.
\end{definition}

Before proceeding, we make two remarks about this definition. First, we aim for $\bX_t$ to be distributed close to uniform on $A_i$, even if the nodes in $A_i$ do not all have the same degrees. (Stationary distributions of graphs would give higher weight to higher-degree nodes.) We will make use of the ancillary random variable $\br$ to achieve this when proving that certain graphs have small local mixing times: the event $E$ will (among other things) use $\br$ to reject higher-degree nodes more often, in order to make $\bX_t|E$ close to uniform. Second, unlike with the normal notion of mixing time, it is no longer clear that the choice of $\eps$ is arbitrary (up to constant factors). That said, our techniques give a bound with a reasonable dependence on $\eps$. 

The main result which we will prove about local mixing is the following:
	\begin{theorem} [Quasipolynomial bound on local mixing time]
	\label{thm:local-mixing}
		Let $G = (V,E)$ be a graph and $\mathcal{C} := \{A_1, ..., A_s\}$ be a $\theta$-cover. For $0 < \eps < 1$, we have that
			\[T_{\mix}^{p,\eps} (G, \mathcal{C}) \leq \left( \frac{s \dmax  \log(|V|) \log(1/\eps)}{\theta} \right)^{O(\log(s))}\]
		for some $p = \left( \frac{s \dmax  \log(|V|) \log(1/\eps)}{\theta} \right)^{-\Omega(\log(s))}$.
	\end{theorem}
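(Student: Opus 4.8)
The plan is to follow the recursive merging strategy sketched in the technical overview, organizing the proof around a potential-type argument that the number of "bad" expanders (those not already containing a non-negligible fraction of points with small local mixing time) halves at each level of recursion. First I would isolate two base mechanisms. The first is a "sparse cut $\Rightarrow$ most points mix locally" lemma: if $A_I = \bigcup_{i \in I} A_i$ is such that $G[A_I]$ is a $\theta'$-expander (via \Cref{lem:squaring} and \Cref{thm:improved-cheeger}, avoiding Cheeger squaring) and $\Psi_G(A_I)$ is $\wt{O}(\theta'^2)$-small, then combining \Cref{lem:escape-lb} (a $\pi_{A_I}$-random start stays inside for $\wt{O}(1/\theta'^2)$ steps with probability $\Omega(1)$) with \Cref{cor:fast-covered-mixing} applied to $G[A_I]$ (which mixes internally in $\wt{O}(s^2 \dmax/\theta'^2)$ steps), one gets that a $\Omega(1)$-fraction of points $v \in A_I$ satisfy: a walk of length $\ell = \wt{O}(s^2\dmax/\theta'^2)$ from $v$, conditioned on the event $E_v = \{$walk stays in $A_I\}$ (intersected with a rejection event using the auxiliary $\br$ to flatten degrees to exactly uniform on $A_I$), is $\eps$-close to $\mathcal{U}_{A_I}$. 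The second mechanism is the "adversarial start" reduction: given that a $q$-fraction of points in \emph{each} expander $B_j$ of a cover have small local mixing time, show that \emph{every} vertex does, with $p$ degrading by a $(s\dmax/\theta)^{O(1)}$-type factor — I would prove this by letting $R$ be the set of large-local-mixing-time vertices, noting $|R \cap B_j| \geq |R|/s'$ for some $j$, using that $R$ misses a $q$-fraction of $B_j$ together with $B_j$'s expansion to lower bound $|E(R \cap B_j, \overline{R})| $, hence lower bounding the chance a walk from $R$ exits $R$ in one step to a low-local-mixing-time vertex, and concatenating walks (the event $E_v$ for $v \in R$ is "exit $R$ at the first step into vertex $u$, then run $u$'s good event").

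Next I would handle the reduction to a \emph{disjoint} cover (the overview cites this as \Cref{sec:disjointification}): replace the cover $\{A_i\}$ by an arbitrary assignment of each vertex to one covering set, obtaining a partition $A_1' \sqcup \cdots \sqcup A_s'$; one must check that each $G[A_i']$ retains expansion $\Omega(\theta/\dmax)$ or so — this uses that removing vertices from an expander, where the removed part is itself covered, does not destroy expansion too badly (an argument in the spirit of \Cref{lem:cut-approx}). Then I would set up the main recursion on a disjoint $\theta$-cover of size $s$. Build the auxiliary "merge graph" $H$ on vertex set $\{A_1,\dots,A_s\}$ with an edge $\{A_i, A_j\}$ whenever $|E_G(A_i, A_j)| \geq \wt{\Omega}(\theta^2 \max(|A_i|,|A_j|)/s)$ — the threshold chosen so that, by the dichotomy "either $(A_i,\overline{A_i})$ is a $\wt{O}(\theta^2)$-sparse cut, or $A_i$ has $\wt{\Omega}(\theta^2|A_i|/s)$ edges to some $A_{j}$" (itself following from Cheeger $\Rightarrow$ $A_i$ is $\Omega(\theta^2)$-spectrally-expanding and the cover structure), every non-isolated vertex of $H$ sits in a merge. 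For each connected component $C$ of $H$, form $B_C := \bigcup_{A_i \in C} A_i$; for isolated $A_i$, keep $B := A_i$ but \emph{flag} it as "good" since $(A_i,\overline{A_i})$ is sparse and the base mechanism already gives many small-local-mixing-time points in it. The key quantitative claims: (1) each merged $B_C$ gains expansion — at least $\wt{\Omega}(\theta^2/s)$ — because the merge-graph edges certify internal cuts are not sparse, so after $O(\log s)$ rounds the expansion is still $(s\dmax\log|V|/\theta)^{-O(\log s)}$, not doubly-exponentially small; and (2) the number of \emph{un-flagged} sets strictly decreases, in fact halves, each round, because any set that remains un-flagged after a round was part of a genuine multi-set merge, so $s \to \le s/2$ un-flagged pieces. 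After $O(\log s)$ rounds every piece is flagged, i.e. every final $B$ has a $(s\dmax\log|V|\log(1/\eps)/\theta)^{-O(\log s)}$-fraction of small-local-mixing-time points; then the adversarial-start reduction upgrades this to all of $V$, and since the $B$'s cover $V$ and each final $B = \bigcup_{i \in S} A_i$, property (ii) asks for closeness to $\mathcal{U}_{A_j}$ for a \emph{single} $A_j$ — so I would additionally observe (as the footnote in the overview notes) that being uniform over a union $\bigcup_{i \in S} A_i$ is equivalent, up to an extra $1/(s\dmax)$ factor in $p$ via a conditioning event that rejects onto one part, to being uniform over a single $A_j$.

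The bookkeeping I expect to be routine-but-tedious: at each level of recursion the ambient "graph" is really $G$ restricted-and-conditioned, so the walk lengths and success probabilities multiply, giving a length bound that is a product of $O(\log s)$ factors each of the form $(s\dmax\log|V|\log(1/\eps)/\theta)^{O(1)}$ — hence $(s\dmax\log|V|\log(1/\eps)/\theta)^{O(\log s)}$ overall — and similarly $p$ is a product of $O(\log s)$ reciprocal factors, matching the claimed bound. The $\log(1/\eps)$ appears only through \Cref{thm:mixing-time}'s "walk $t_{\mix}\log(1/\eps)$ steps for TV $\eps$" and is carried along multiplicatively; $\log|V|$ appears through the $\log N$ in \Cref{thm:mixing-time} at each level.

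\textbf{The main obstacle} I anticipate is the non-uniform-size case, which the overview explicitly flags as "more involved." When the $A_i$ have wildly different sizes, a merge $B_C = \bigcup_{A_i \in C} A_i$ that is dominated in volume by one huge $A_{i_0}$ gains essentially nothing in expansion from absorbing tiny neighbors, and worse, the "flagged" base guarantee — that a sparse-cut piece $A_i$ contributes many small-local-mixing-time points to \emph{any future merge containing it} — is only useful if $A_i$ is not a vanishing fraction of that merge's volume. So the merge rule in $H$ must be size-sensitive: roughly, one should only merge $A_i$ into $A_j$ when their sizes are comparable (within a $(s\dmax/\theta)^{O(1)}$ factor), and handle a small $A_i$ adjacent to a much larger $A_j$ by a separate argument showing $A_j$'s own structure already dominates. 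Getting the thresholds in $H$, the potential function (probably "number of un-flagged pieces, weighted or bucketed by size scale"), and the expansion-after-merge bound to all line up so that the recursion still terminates in $O(\log s)$ rounds with only quasipolynomial blowup is where the real work lies; I would first write the equal-size case cleanly to fix the skeleton, then retrofit a size-bucketing layer (partition scales into $O(\log|V|/\log(s\dmax/\theta))$ buckets and argue within/across buckets) to recover the general statement.
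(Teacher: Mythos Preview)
Your disjointification step contains a genuine gap. Assigning each vertex to a single covering set and hoping $G[A_i']$ stays a $\Omega(\theta/\dmax)$-expander is false in general: take $A_1=\{a_1,\dots,a_n\}$ a clique and $A_2=\{a_1,\dots,a_n,b_1,\dots,b_n\}$ where the only additional edges are $a_ib_i$. Both $G[A_1]$ and $G[A_2]$ are $\Theta(1/n)$-expanders, but if you assign all the $a_i$'s to $A_1'$, then $A_2'=\{b_1,\dots,b_n\}$ induces a graph with no non-loop edges at all, so $\Phi(G[A_2'])=0$. The appeal to ``an argument in the spirit of \Cref{lem:cut-approx}'' does not help: that lemma says unions of cover sets approximate the sparsest cut, not that deleting a covered subset from an expander preserves expansion. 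The paper's disjointification is completely different: it builds a new graph $H$ with $s$ copies of every vertex of $G$ and distributes the copies of $v$ evenly among the $B_i$ for which $v\in A_i$; this guarantees each $B_i$ still contains (copies of) every vertex of $A_i$, and a direct volume comparison shows $\Phi(H[B_i])\ge \theta/s^2$. One then couples a walk in $G$ with a walk in $H$ to transfer the local-mixing bound back.

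For the disjoint case, your merge-and-recurse scheme matches the technical overview's intuition but not the paper's actual proof, and you have not controlled the expansion loss through the recursion. You assert ``each merged $B_C$ gains expansion at least $\wt\Omega(\theta^2/s)$, so after $O(\log s)$ rounds the expansion is still quasipolynomially small,'' but if at each round the new cover has expansion $\theta_{k+1}\approx \theta_k^2/\poly$ (which is what the naive merge-edge-threshold argument gives once you recurse with the \emph{new} pieces as the cover), this is doubly exponential. The fix the paper uses is not to recurse on merged pieces at all: it keeps the original $\{A_i\}$ throughout and instead introduces a single parameter $\ell$ and the $\ell$-thick graph $H_\ell$ (edge $(i,j)$ iff $|E(A_i,A_j)|\ge \lambda^\ell|A_1|$), proving directly that $\ell$-thick components have expansion $\ge\lambda^{\ell+0.01}$ and that if $A_1$ is not ``$\ell$-redeemable'' then its $\ell$-thick component has size at least the $\ell$-th Fibonacci number, forcing $\ell=O(\log s)$. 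This simultaneously handles the unequal-size case (the threshold is always $\lambda^\ell|A_1|$, anchored to the largest set) and avoids compounding expansion losses, without the size-bucketing layer you anticipate needing.
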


 \begin{remark} \label{rem:no-isolated-vertices}
Recall the assumption (cf.~\Cref{ass:nice}) that $G$ has no isolated vertices.  This is without loss of generality vis-a-vis \Cref{thm:local-mixing} for the following reason: If $G$ has an isolated vertex $v$, then since each $A_i$ is a $\theta$-expander, $v$ must be its own part in the $\theta$-cover, i.e., there is some $A_i$ which consists solely of $v$. But then, starting from $v$, a random walk trivially locally mixes, and there is no other way of reaching it starting from a different vertex, so we can remove $v$ and consider the problem on the remainder of the graph.
 \end{remark}

\subsection{Local Mixing in Super Covers}
The goal of the next two sections will be to prove \Cref{thm:local-mixing}. Our proof strategy will crucially rely on the fact, which we prove in \Cref{lem:super-cover-mixing}, that refining a cover can only make local mixing easier (up to a small quantitative loss in parameters).

\begin{definition}[Power Set of a Cover]
	Let $G = (V,E)$ be a graph and $\mathcal{C} = \{A_1, ..., A_s\}$ be a cover of $G$. We define the \emph{power set of the cover} to be 
	 \[ \mathcal{P}(\mathcal{C}) := \left \{\bigcup_{j \in J} A_j: J \subseteq [s] \right \}.\]
\end{definition}

\begin{definition}[Super Cover]
	Let $G = (V,E)$ be a graph and $\mathcal{C} = \{A_1, ..., A_s\}$ be a cover of $G$. We say that a cover $\mathcal{C}' = \{B_1, .., B_r\}$ is a (disjoint $\alpha$) \emph{super cover of $G$} if $\mathcal{C}' \subseteq \mathcal{P}(\mathcal{C})$ and ${\cal C}'$ is a (disjoint $\alpha$) cover of $G$. 
\end{definition}

Note that if $\calC'$ is a super cover of $\calC$, then $\calC$ can be viewed as a refinement of the super cover $\calC'$.

\begin{lemma}
\label{lem:super-cover-mixing}
Let $G = (V,E)$ be a graph and $\mathcal{C} := \{A_1, ..., A_s\}$ be a cover of $G$. If $\mathcal{C}' := \{B_1, ..., B_r\}$ is a super cover of $\mathcal{C}$, then for all $v \in V$ and every $p, \eps \in [0,1]$, we have
\[T_{\mix}^{p, \eps}(v, \mathcal{C}') \geq T_{\mix}^{p/(2s), 2\eps s}(v, \mathcal{C}). \] 	
\end{lemma}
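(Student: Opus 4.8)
The plan is to unwind the definitions: it suffices to show that at the walk-length $t := T_{\mix}^{p,\eps}(v,\mathcal{C}')$ there already exists a witness for $(p/(2s),2\eps s)$-local mixing with respect to the finer cover $\mathcal{C}$, since $T_{\mix}^{p/(2s),2\eps s}(v,\mathcal{C})$ is defined as the smallest such walk-length. So let $(i,E)$ be the $\mathcal{C}'$-witness at length $t$, write $B_i = A_J := \bigcup_{j\in J}A_j$ with $J\subseteq[s]$ (here we use that $\mathcal{C}'$ is a \emph{super} cover, so $B_i$ is genuinely a union of parts of $\mathcal{C}$), and let $\mu$ be the law of $\bX_t \mid E$, so that $\Pr[E]\ge p$ and $\dtv(\mu,\mathcal{U}_{B_i})\le\eps$.

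The construction is to condition further on landing in the heaviest part of $B_i$. Since $\dtv(\mu,\mathcal{U}_{B_i})\le\eps$ we have $\mu(B_i)\ge 1-\eps$ (test against the event $B_i$), and as the at most $s$ sets $A_j$ ($j\in J$) cover $B_i$, pigeonhole gives some $j^\ast\in J$ with $\mu(A_{j^\ast})\ge(1-\eps)/s$. Set $E' := E\wedge\{\bX_t\in A_{j^\ast}\}$, a function of $(\bX_0,\dots,\bX_t,\br)$, and use the index $j^\ast\in[s]$. Property (i) is then immediate: $\Pr[E']=\Pr[E]\cdot\mu(A_{j^\ast})\ge p(1-\eps)/s\ge p/(2s)$ whenever $\eps\le 1/2$. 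For property (ii), note that $\bX_t\mid E'$ is just $\mu$ conditioned on $\{\bX_t\in A_{j^\ast}\}$, while — crucially because $A_{j^\ast}\subseteq B_i$ — the distribution $\mathcal{U}_{B_i}$ conditioned on that same event is exactly $\mathcal{U}_{A_{j^\ast}}$. Applying \Cref{lem:tv-condition} with $p\mapsto\mu$, $q\mapsto\mathcal{U}_{B_i}$ and conditioning event $\{\,\cdot\in A_{j^\ast}\}$ gives $\dtv(\bX_t\mid E',\,\mathcal{U}_{A_{j^\ast}})\le 2\dtv(\mu,\mathcal{U}_{B_i})/\mu(A_{j^\ast})\le 2\eps s/(1-\eps)$, which is $O(\eps s)$ for $\eps\le 1/2$.

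There is no real obstacle here; the only work is matching the stated constants exactly. To get precisely $p/(2s)$ and $2\eps s$ rather than $O(\cdot)$, split on whether $2\eps s\ge 1$: in that degenerate regime property (ii) is vacuous and one can take any $j^\ast\in[s]$ with $\mu(A_{j^\ast})\ge 1/s$ (such a $j^\ast$ exists because $\bigcup_j A_j = V$), so (i) holds with room to spare; in the complementary regime $\eps$ is small enough (e.g. $\eps\le 1/4$ once $s\ge 2$; the case $s=1$ is trivial since then $\mathcal{C}'=\mathcal{C}=\{V\}$ and shrinking $p$ / growing $\eps$ only helps) that $p(1-\eps)/s$ and the $\dtv$ bound collapse to $p/(2s)$ and $2\eps s$, using a marginally sharpened version of \Cref{lem:tv-condition} that exploits that $\mathcal{U}_{B_i}$ is literally uniform on $A_{j^\ast}$. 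The one conceptually load-bearing point is the super-cover hypothesis: it is exactly the inclusion $A_{j^\ast}\subseteq B_i$ that makes ``$\mathcal{U}_{B_i}$ conditioned on $A_{j^\ast}$'' coincide with $\mathcal{U}_{A_{j^\ast}}$, and the lemma would be false for a cover that is not a super cover.
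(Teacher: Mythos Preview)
Your approach is essentially the paper's: take the $\mathcal{C}'$-witness $(B_j,F_v)$, write $B_j=\bigcup_{i\in I}A_i$, and condition further on $\{\bX_t\in A_i\}$ for a well-chosen $i\in I$. The one difference is \emph{which} $i$ you choose, and that difference is exactly why you end up hand-waving about a ``marginally sharpened'' version of \Cref{lem:tv-condition}.

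You pick $j^\ast$ maximizing $\mu(A_{j^\ast})$, which controls $\Pr[E']$ directly but leaves only $\mu(A_{j^\ast})\ge(1-\eps)/s$ in the denominator of the TV bound, giving $2\eps s/(1-\eps)$ rather than $2\eps s$. The paper instead picks $i\in I$ maximizing $|A_i|$, so that $\mathcal{U}_{B_j}(A_i)=|A_i|/|B_j|\ge 1/s$; then \Cref{lem:tv-condition} is applied with the roles of $p$ and $q$ swapped (which is legitimate by symmetry of $\dtv$), putting $\mathcal{U}_{B_j}(A_i)\ge 1/s$ in the denominator and yielding exactly $2\eps s$. The probability bound then comes from $\mu(A_i)\ge\mathcal{U}_{B_j}(A_i)-\eps\ge 1/s-1/(2s)=1/(2s)$ once one assumes $\eps\le 1/(2s)$ (the complementary case being trivial, as you noted). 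With this tweak there is no need for any sharpening of the TV lemma or for the secondary case analysis you sketch; otherwise your argument is correct.
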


\begin{proof}
	Assume that $\eps \leq 1/2s$, as otherwise the statement is trivial. Let $F_v$ denote the event from the definition of locally mixing with respect to $\mathcal{C}'$. Moreover, let $B_j \in \mathcal{C}'$ denote the set such that $\dtv(\bX_t|F_v, \mathcal{U}_{B_j}) \leq \eps$, where $v := \bX_0, ..., \bX_t$ is a random walk in $G$ starting from $v$. By assumption $B_j = \bigcup_{i \in I} A_i$ for some set $I \subseteq [s]$. Take $i \in I$ to be the index that maximizes $|A_i|$, so $|A_i|/|B_j| \geq 1/s$. Let $E_v$ be the event that $F_v$ occurs and $\bX_t \in A_i$. It now follows from \Cref{lem:tv-condition} that
	\[\dtv(\bX_t|E_v, \mathcal{U}_{A_i}) \leq \frac{2 \dtv(\bX_t|F_v, \mathcal{U}_{B_j})}{|A_i|/|B_j|} \leq 2s \cdot \dtv(\bX_t|F_v, \mathcal{U}_{B_j}) \leq 2 \eps s.\]
	Moreover, since $\Pr_{\bX \sim \mathcal{U}_{B_j}}[\bX \in A_i] - \Pr[\bX_t \in A_i | F_v] \leq \eps \leq \frac{1}{2s}$ we have that 
		\[
		\Pr[E_v] =  \Pr[F_v] \cdot \Pr[\bX_t \in A_i | F_v] \geq \frac{p}{2s},
		\]
		establishing the lemma.
\end{proof}

\section{Reduction to Disjoint Expanders} \label{sec:disjointification}

Intuitively, it should only be easier to achieve local mixing in the case when the expanders which make up the cover overlap with each other. In this section, we formalize this intuition and show that it suffices to prove \Cref{thm:local-mixing} in the case of a disjoint $\theta$-cover. 

More precisely, we consider the following version of \Cref{thm:local-mixing} for disjoint covers:

\begin{theorem}
	\label{thm:disjoint-local-mixing}
	Let $G = (V,E)$ be a graph and $\mathcal{C} := \{A_1, ..., A_s\}$ be a disjoint $\theta$-cover of $G$. We then have that
	\[T_{\mix}^{p,\eps} (G, \mathcal{C}) \leq \left( \frac{s \dmax  \log(|V|) \log(1/\eps)}{\theta} \right)^{O(\log(s))}\]
	where $p = \left( \frac{s \dmax  \log(|V|) \log(1/\eps)}{\theta} \right)^{-\Omega(\log(s))}$
\end{theorem}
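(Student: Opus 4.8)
The plan is to prove \Cref{thm:disjoint-local-mixing} by a recursion of depth $O(\log s)$. Write $\Lambda := s\dmax\log(|V|)\log(1/\eps)/\theta$ for the ``master parameter''; the target is a bound $T_{\mix}^{p,\eps}(G,\mathcal{C}) \leq \Lambda^{O(\log s)}$ with $p \geq \Lambda^{-O(\log s)}$. The recursion maintains a disjoint cover of $G$ whose parts are \emph{unions of the original parts} $A_i$, starting from $\mathcal{C}^{(0)}=\mathcal{C}$ and coarsening at each level. The crucial point is that every graph arising in the recursion is still covered by the original $\theta$-expanders $A_i$, so \Cref{lem:squaring} can be applied \emph{with the original parameter $\theta$} at every level; this is exactly what keeps the conductance parameter from being repeatedly squared --- with only the ordinary Cheeger inequality the per-level recurrence would be $B_k = \wt{O}(\cdots)\cdot B_{k-1}^2$, giving a final bound of the form $\Lambda^{2^{O(s)}}$, whereas \Cref{thm:improved-cheeger} together with \Cref{lem:squaring} makes it $B_k = \wt{O}(s^3\dmax^3/\theta^2)\cdot B_{k-1}$.

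\emph{The per-level dichotomy.} Given a disjoint cover $\mathcal{C}^{(k)} = \{B_1,\dots,B_r\}$ of $G$, I would classify each part $B_\ell$ according to a threshold $\tau_k \approx \theta/(s\dmax^2 B_{k-1})$. If $\Psi_G(B_\ell) \leq \tau_k$ then $B_\ell$ is \emph{resolved}: since $G[B_\ell]$ is covered by the original $\theta$-expanders it mixes internally in $\wt{O}(s^2\dmax/(\theta\,\Phi(G[B_\ell])))$ steps by \Cref{cor:fast-covered-mixing}, while \Cref{lem:escape-lb} guarantees that a walk started from $\pi_{B_\ell}$ stays in $B_\ell$ for that long with probability $\geq 1/2$; intersecting these two events, and using the auxiliary variable $\br$ to reject higher-degree vertices so as to correct the $O(\dmax)$ distortion between $\pi_{B_\ell}$ and $\mathcal{U}_{B_\ell}$, shows that volume-most vertices of $B_\ell$ have local mixing time $\leq B_{k-1}$ (if $\Phi(G[B_\ell])$ is itself too small for this internal mixing bound, \Cref{lem:cut-approx} refines $G[B_\ell]$ into a sparse sub-union of original parts, which we peel off and recurse on separately). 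If instead $\Psi_G(B_\ell) > \tau_k$ then $B_\ell$ sends more than $\tau_k\,\mathrm{Vol}_G(B_\ell)$ edges to the other parts, so by averaging some $B_{\ell'}$ receives $\wt{\Omega}(\tau_k|B_\ell|/r)$ of them, and I record a ``heavy edge'' $\{\ell,\ell'\}$. I then form $\mathcal{C}^{(k+1)}$ by merging the connected components of the heavy-edge graph. Every non-resolved part has heavy-degree $\geq 1$, hence lies in a merged component of size $\geq 2$, so the number of non-resolved parts at least halves, and after $O(\log s)$ levels every part is resolved. One also checks that each merged part $B=\bigcup_j A_j$ again induces an expander --- any cut of $G[B]$ either splits some $A_j$ (paying $\theta$) or is aligned with the $A_j$'s and cuts a heavy edge (paying $\wt{\Omega}(\tau_k/\dmax)$) --- and, as emphasized above, is still covered by the original $\theta$-expanders.

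\emph{Stitching walks and handling adversarial starts.} When a part is \emph{not} resolved we recurse, and to convert a bound for $\mathcal{C}^{(k+1)}$ into one for $\mathcal{C}^{(k)}$ I would use that $\tau_k$ is small enough that, by \Cref{lem:escape-lb}, a walk started anywhere in the relevant part stays inside it for $B_{k-1}$ steps with probability $\geq 1/2$; a walk conditioned to stay inside a set is exactly a walk in the induced subgraph (cf.\ the remark in \Cref{subsec:prelims-random-walks}), so the inductive guarantee applies there and we intersect the two events. To pass from ``volume-most points of each part are good'' to ``\emph{every} point is good'' is the good-sets step: for a disjoint $\vartheta$-cover, if $R$ denotes the set of starting vertices with large local mixing time, then for \emph{every} $T\subseteq R$ and every part $A$, the set $T\cap A$ is a volume-minority of the $\vartheta$-expander $G[A]$, so $|E_G(T\cap A, A\setminus T)| \geq \vartheta\,\mathrm{Vol}_A(T\cap A)$; summing over the parts and comparing volumes gives $\Psi_G(T)\geq \vartheta/\dmax$ for all $T\subseteq R$, whence \Cref{lem:escape-ub} shows a walk from any vertex leaves $R$ within $\wt{O}(\dmax^2/\vartheta^2)$ steps with probability $\geq 1/8$, landing on a good vertex from which we continue. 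Applied at the final (resolved) cover, whose parts induce $\vartheta$-expanders with $\vartheta \geq \Lambda^{-O(\log s)}$, this costs a $\Lambda^{-O(\log s)}$ factor in probability and a $\Lambda^{O(\log s)}$ additive term; transferring the resulting bound on the final merged cover back to $\mathcal{C}$ via \Cref{lem:super-cover-mixing} finishes the proof.

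\emph{Where the difficulty lies.} The main obstacle is the parameter bookkeeping across the $\Theta(\log s)$ levels. The threshold $\tau_k$ must simultaneously be small enough that the sparse cuts produced by \Cref{lem:cut-approx} are ``leaky enough'' for the recursive sub-walk to run to completion inside them, yet large enough that a non-resolved part genuinely has a heavy edge to merge along; and the recurrence $B_k = \wt{O}(s^3\dmax^3/\theta^2)\cdot B_{k-1}$ together with $B_0 = \wt{O}(1/\theta^2)$ must close up to $B_{O(\log s)} = \Lambda^{O(\log s)}$. A secondary subtlety (flagged in \Cref{sec:technical-overview}) is that when the original parts have very different sizes, the heavy-edge threshold and the merging rule must be chosen with more care --- comparing $|E(A_i,A_j)|$ against the smaller of the two volumes, and possibly merging greedily by size --- so that the merged parts remain good expanders and the ``halving'' still holds. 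A third, lower-level, point is stitching the recursive sub-walks onto the escape walks so that all the pieces share a common total length; this is handled by the flexibility in the definition of the local mixing time (it is a minimum over walk lengths and over events) together with the self-loops guaranteed by \Cref{ass:nice}.
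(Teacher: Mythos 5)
Your plan reproduces the paper's informal overview faithfully (Spielman--Teng escape bound plus the higher-order Cheeger machinery of \Cref{lem:squaring} and \Cref{cor:fast-covered-mixing} to avoid repeated squaring, a merge-and-halve recursion of depth $O(\log s)$, the escape-from-$R$ argument to handle adversarial starts, and transfer back via \Cref{lem:super-cover-mixing}), and those ingredients are indeed the ones the paper uses. But the step you defer as a ``secondary subtlety'' --- unequal part sizes --- is exactly where your per-level scheme breaks, and it is the central difficulty the paper's formal proof is organized around. Concretely: if heaviness of an edge between parts is measured relative to a part's \emph{own} volume (your threshold $\tau_k\,\mathrm{Vol}_G(B_\ell)$), then a merged connected component of the heavy-edge graph need not be an expander. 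Take $C_1,C_3$ of size $N$ and $C_2$ of size $m\ll N$, with $C_2$ heavy to both $C_1$ and $C_3$ via only $\approx \tau_k m/s$ edges each; the aligned cut $(C_1,\,C_2\cup C_3)$ of the merged component crosses only $\approx \tau_k m/s$ edges while both sides have volume $\Theta(N)$, so its conductance is $\approx \tau_k m/(sN)$, which is not $\wt{\Omega}(\tau_k/\dmax)$. If instead you measure heaviness relative to the largest part (as the paper does: thick edges have weight $\geq \lambda^{\ell}|A_1|$, which is what makes \Cref{lem:thick-graph-exp} work, since $|A_1|\gtrsim \vol(G)/(s\dmax)$ dominates any cut volume), then small non-resolved parts may have \emph{no} heavy edges at all, and your claim that ``every non-resolved part merges, so the number of non-resolved parts halves per level'' fails. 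No single normalization makes both halves of your dichotomy go through, which is why the paper replaces the naive halving by the thick-graph/revealed-conductance/redeemable-set machinery, the Fibonacci growth argument on $M_G^{\ell}(A_1)$ (\Cref{lem:big-new-elem}, \Cref{lem:big-sets-big-components}, \Cref{lem:thick-comp-bound}), the peeling of subgraphs $G[A_{\geq i}]$ to handle the smaller sets (\Cref{lem:ladder-rungs}), and \Cref{lem:pyramid-expansion} with its near-sorted-size hypothesis to control the expansion of the resulting chains.

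Two smaller points. First, your good-sets step asserts that $T\cap A$ is a volume-minority of each part for every $T\subseteq R$; this only holds when the goodness fraction $g$ exceeds $1/2$, whereas after merging, goodness is diluted by size ratios and the paper must carry $g=\lambda^{O(\log s)}$, handling the volume-majority case separately in \Cref{lem:good-mixing} via $|\overline{T}\cap B_i|\geq g|B_i|$. Second, your fallback ``if $\Phi(G[B_\ell])$ is too small, refine via \Cref{lem:cut-approx} and recurse separately'' introduces an additional recursion whose depth you do not control; in the paper this is absorbed into the definition of an $\ell$-redeemable set (the certifying union $A_I$ need only be connected to $A_i$ in $H_\ell$, with conditions (ii)--(iii) replacing any further refinement). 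So the proposal is a correct account of the strategy but not yet a proof: the missing content is precisely the size-heterogeneity bookkeeping that occupies \Cref{sec:largestsetgood} and \Cref{sec:allsetsingood}.
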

\noindent Our goal in this section is to prove that \Cref{thm:disjoint-local-mixing} implies the more general \Cref{thm:local-mixing}. Afterwards, in \Cref{sec:localmixingboundproof} below, we will prove \Cref{thm:disjoint-local-mixing}, and thus conclude the proof of \Cref{thm:local-mixing}.

The key tool in this section is the following transformation.

\begin{definition}[Disjointification]
	Given a graph $G$ and a $\theta$-cover $\mathcal{C} = \{A_1, ..., A_s\}$, we say that the \emph{disjointification} is a graph $H$ defined as follows: For each vertex $v \in V(G)$, we make $s$ copies of $v$ in $H$. Given two distinct vertices $u,v$ in $G$, there is an edge between copies of $u$ and $v$ in $H$ if there is an edge between $u$ and $v$ in $G$.\footnote{Note that we do not have edges in $H$ corresponding to the self-loops from $G$.} We also add $s \cdot \deg_G(v)$ self-loops from each copy of $v$ in $H$ to itself.
	
	The disjointification also has a corresponding disjoint cover of $H$, consisting of $s$ sets, which we denote by $\mathcal{C}' := \{B_1,..., B_s\}$. We define it as follows: for each $v \in V(G)$, evenly partition the copies of $v$ among all the sets $B_i$ for which $v \in A_i$. (If the number of copies of $v$ is not divisible by the number of $i$ for which $v \in A_i$, then partition them as evenly as possible so that the number of copies put into any two such sets $B_i, B_{i'}$ differs by at most 1.) 
\end{definition}

We first note that each part in the disjoint cover of the disjointification is still a good expander:

\begin{lemma}
 	Let $G$ be a graph, $\mathcal{C} = \{A_1, ..., A_s\}$ be a $\theta$-cover of $G$, and $(H,\mathcal{C}')$ be the disjointification of $(G, \mathcal{C})$. Then $\mathcal{C}'$ is a disjoint $\theta/s^2$-cover of $H$.
\end{lemma}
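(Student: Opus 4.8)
The plan is to show that each set $B_i$ in the disjointification's cover induces a good expander in $H$. Fix a set $B_i$ and an arbitrary cut $(S, \overline{S})$ of $B_i$ (where the complement is taken within $B_i$); our goal is to lower bound $\Psi_{B_i}(S) = |E_{B_i}(S, \overline{S})|/\vol_{B_i}(S)$ by $\theta/s^2$. First I would project $S$ down to $G$: let $S^\ast := \{v \in V(G) : \text{some copy of } v \text{ in } B_i \text{ lies in } S\}$, and similarly define $\overline{S}^\ast$ for the copies in $\overline{S}$. Note that $B_i$ consists of copies of vertices in some $A_j$ (every vertex whose copies land in $B_i$ belongs to $A_j$), so $S^\ast, \overline{S}^\ast \subseteq A_j$.

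The key structural observation is that each vertex $v \in A_j$ contributes roughly $|A_j|$-many... more precisely, $v$ contributes between $\lfloor s/|\{i : v \in A_i\}| \rfloor$ and $\lceil s/|\{i : v \in A_i\}| \rceil$ copies to $B_i$, hence at least $1$ and at most $s$ copies; so the number of copies of $v$ in $B_i$ is in $[1, s]$. Using this, I would relate the volumes and cut sizes in $H[B_i]$ to those in $G[A_j]$. On one hand, $\vol_{B_i}(S) \leq \sum_{v \in S^\ast} (\text{copies of } v \text{ in } B_i) \cdot \deg_H(\text{copy}) \leq s \cdot 2s \cdot \vol_{A_j}(S^\ast)$, using that each copy in $H$ has $s \deg_G(v)$ self-loops plus at most $\deg_G(v)$ real edges, so degree at most $2s \deg_G(v)$, and at most $s$ copies per vertex. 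On the other hand, I need a lower bound on $|E_{B_i}(S, \overline{S})|$. Here one has to be careful: an edge of $H$ between a copy of $u$ and a copy of $v$ crossing the cut requires $u \in S^\ast$ and $v \in \overline{S}^\ast$ (or vice versa), but a given edge $\{u,v\}$ of $G$ with $u,v \in A_j$ may or may not yield a crossing edge depending on which copies land where. The cleanest route is: if both $S^\ast$ and $\overline{S}^\ast$ are nonempty (the cut is nontrivial in $G$), then WLOG $\Psi_{A_j}(S^\ast) \geq \theta$ since $G[A_j]$ is a $\theta$-expander; this gives $|E_{A_j}(S^\ast, \overline{S}^\ast)| \geq \theta \vol_{A_j}(S^\ast)$, and I would argue each such $G$-edge yields at least one $H$-edge crossing $(S,\overline S)$ — but this needs the "evenly partition" structure to guarantee that if $v$ has copies in $B_i$, at least one such copy is in $\overline{S}$, which fails in general.

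So the main obstacle, and the step I expect to require the most care, is handling the case where a vertex $v$ has copies in $B_i$ but all of them lie in $S$ (none in $\overline S$), so that $G$-edges incident to $v$ need not cross the cut in $H$ even though $v \in S^\ast \cap \overline{S}^\ast$ is impossible. The fix is to instead directly analyze the cut within $H[B_i]$ using the expansion of $G[A_j]$ more carefully, possibly by considering, for each vertex $v$ of $A_j$, how its (at most $s$) copies are split between $S$ and $\overline S$, and summing edge contributions $\min(\#\text{copies in }S,\ \#\text{copies in }\overline S)$ over adjacent pairs — but since each copy of $v$ has $s\deg_G(v)$ self-loops, $\vol_{B_i}(S)$ is dominated by self-loops, which weakens the denominator and should make the $1/s^2$ bound achievable. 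Concretely, I would show $\vol_{B_i}(S) \leq s \cdot 2s \cdot \vol_{A_j}(S^\ast) = 2s^2 \vol_{A_j}(S^\ast)$ while $|E_{B_i}(S,\overline S)| \geq |E_{A_j}(S^\ast, \overline S^\ast)|/s \cdot (\text{something})$; getting the right power of $s$ here is the crux. I would also separately dispatch the trivial subcase $\overline S^\ast = \emptyset$, which forces $S = B_i$ (so it is not a genuine cut of $B_i$), and note $\vol(S) \le \frac12 \vol(B_i)$ is the case to focus on by symmetry. Assembling these bounds yields $\Psi_{B_i}(S) \geq \theta/s^2$ for every nontrivial cut, hence $\Phi(H[B_i]) \geq \theta/s^2$, and disjointness of the $B_i$ is immediate from the construction, completing the proof that $\mathcal{C}'$ is a disjoint $\theta/s^2$-cover.
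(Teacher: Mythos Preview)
Your overall plan matches the paper's: project the cut $(S,\overline S)$ in $H[B_i]$ down to $G[A_i]$, invoke the $\theta$-expansion there, and compare volumes. Your volume estimate $\vol_{H[B_i]}(S)\le 2s^2\,\vol_{G[A_i]}(S^\ast)$ is essentially what the paper uses. The gap is entirely in the cut-size lower bound, and it comes from working with the wrong set on the $G$-side.

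You take $\overline{S}^\ast$ to be the projection of $\overline S$ and then try to treat $(S^\ast,\overline{S}^\ast)$ as a cut of $A_i$. But $S^\ast$ and $\overline{S}^\ast$ overlap, so expansion of $G[A_i]$ says nothing directly about $|E_{A_i}(S^\ast,\overline{S}^\ast)|$, and your inequality $|E_{A_i}(S^\ast,\overline{S}^\ast)|\ge\theta\,\vol_{A_i}(S^\ast)$ is unjustified. (Your stated worry that ``$v\in\overline{S}^\ast$ might have no copy in $\overline S$'' is separately misplaced---by your own definition it does---but that is not the real obstruction.) The split-copy counting you then reach for is unnecessary.

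The clean fix, which is exactly the paper's argument, is to use the \emph{genuine} complement $\overline T:=A_i\setminus S^\ast$ in place of $\overline{S}^\ast$. By definition of $S^\ast$, every vertex $v\in\overline T$ has \emph{no} copy in $S$, hence \emph{all} of its copies in $B_i$ lie in $\overline S$. Thus for any $G[A_i]$-edge $(u,v)$ with $u\in S^\ast$ and $v\in\overline T$, picking any copy of $u$ in $S$ and any copy of $v$ in $B_i$ (automatically in $\overline S$) yields an $H[B_i]$-edge crossing $(S,\overline S)$; distinct $G$-edges give distinct $H$-edges. Hence $|E_{H[B_i]}(S,\overline S)|\ge|E_{G[A_i]}(S^\ast,\overline T)|$. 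Since $(S^\ast,\overline T)$ \emph{is} a genuine cut of $A_i$, expansion now applies and gives $|E_{G[A_i]}(S^\ast,\overline T)|\ge\theta\,\vol_{G[A_i]}(S^\ast)$ once you arrange (by swapping $S\leftrightarrow\overline S$ if necessary) that $S^\ast$ is the small-volume side. Combined with the volume bound this gives $\Phi_{H[B_i]}(S)\ge\theta/s^2$ in one line.
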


\begin{proof}
Let $\mathcal{C}' = \{B_1, ..., B_s\}$. Consider a cut $(S, \overline{S})$ in $H[B_i]$. Let $T$ denote the vertices in $G$ corresponding to the vertices in $S$. Without loss of generality, we assume that $\vol_{G[A_i]}(T) \leq \frac{1}{2} \vol_{G[A_i]}(A_i)$. We then have that
	\[|E_{H[B_i]}(S, \overline{S})| \geq |E_{G[A_i]}(T, \overline{T})| \geq \theta \vol_{G[A_i]}(T).\]
We now note
	\[\vol_{H[B_i]}(S) \leq s^2 \vol_{G[A_i]}(T), \]
as $H[B_i]$ contains at most $s$ copies of each vertex in $T$ and each satisfies $\deg_{H[B_i]}(v) \leq s \deg_{G[A_i]}(v)$. Combining these two facts then gives us as desired that $\Phi_{H[B_i]}(S) \geq \theta/s^2$.
\end{proof}

We now use the disjointification operation to prove that we may assume our cover is disjoint without loss of generality.

\begin{proof}[Proof of \Cref{thm:local-mixing} assuming \Cref{thm:disjoint-local-mixing}]
Let $(H, \mathcal{C}')$ be the disjointification of $(G,\mathcal{C})$, and fix a vertex $v \in V(G)$. We will argue that $T_{\mix}^{p,\eps}(v, \mathcal{C})$ is small.

Fix a copy of $v$ in $H$, which we denote by $u \in V(H)$. Let 
$T = T_{\mix}^{p, \eps/2s}(u, \mathcal{C}')$ and let $F_u$ be the event in the definition of this local mixing, i.e., for a random walk $u := \bY_0, ..., \bY_t$
	\[\dtv(\bY_t|F_u, \mathcal{U}_{B_i}) \leq \eps/2s\]
for some $B_i \in \mathcal{C}'$. 

We will also take a random walk of length $T$ starting from $v$, which we write as $v := \bX_0, ..., \bX_T$. We note that we can naturally couple this walk and a random walk $u := \bY_0, ..., \bY_T$ in $H$. Namely, for each $j \in \{1, ..., T\}$, we set $\bY_j$ to be a uniformly and independently random copy of the vertex $\bX_j$ in $H$, except that if the random walk in $G$ traverses a self-loop, then we have the random walk in $H$ stay in place at the same copy as the previous step. 

We now define the appropriate event for local mixing of the random walk in $G$. For vertex $w \in V(G)$, let $B(w)$ denote the event that a Bernoulli random variable, which is $1$ with probability $1/(\text{\# copies of $w$ in $B_i$})$, takes value 1.
 Our event is then $E_v:=\bF_u(\bY_0, ..., \bY_T, \br) \land B(\bX_T)$. 
 
Notice that the walk in $H$ can be viewed as a refinement of the walk in $G$. Thus by \Cref{lem:tv-refine},
 	\[\dtv(\bX_T|F_u, \mathcal{U}^\star_{B_i}) \leq {\frac \eps {2s}} \]
where $\mathcal{U}^\star_{B_i}$ denotes the corresponding event in $G$. Now observe that $\mathcal{U}^\star_{B_i}|B = \mathcal{U}_{A_i}$, so using \Cref{lem:tv-condition} yields
	\[\dtv(\bX_T|E_v, \mathcal{U}_{A_i}) \leq \eps. \]
Finally, we note that $\Pr[E_v(\bX_1, ..., \bX_T,\br)] \geq \Pr[F_v(\bY_1, ..., \bY_T, \br)] \cdot \frac{1}{s} = p/s$. So we conclude that
\begin{equation}
\label{eq:see-you-later}
	T_{\mix}^{p/s, \eps} (v, \mathcal{C}) \leq T_{\mix}^{p, \eps/(2s)}(u, \mathcal{C}').
\end{equation}
Using \Cref{thm:disjoint-local-mixing} to bound $T_{\mix}^{p, \eps/(2s)}(u, \mathcal{C}')$ then gives the desired result.
\end{proof}

\section{Proof of Local Mixing Bound (\Cref{thm:disjoint-local-mixing})} \label{sec:localmixingboundproof}
The goal of this section is to prove \Cref{thm:disjoint-local-mixing}. At a high level, our proof will proceed as follows.

Let $G$ be a graph and $\mathcal{C} = \{A_1, ..., A_s\}$ a disjoint $\theta$-cover. We will first observe that if for every $i \in [s]$ there exists a subset $I \subseteq [s]$ that contains $i$ and has $\Phi(G[A_I]) \gg \Psi(A_I)$, then we can bound the local mixing time of $G$. As such, the remainder of the proof attempts to prove that such sets $I$ in fact exist.

To construct the sets $I$, we first show that such a set $I$ exists for the biggest set, which we assume is $A_1$ without loss of generality. Afterwards, we will roughly apply this argument recursively, e.g., we will first apply it to $G[\bigcup_{i > 1} A_i]$ in order to show the same thing for $A_2$, 
and so on.

Before embarking on the proof, we will set several parameters. First, we fix an $\eps \in [0,1/2)$. Next we set 
\begin{equation} \label{eq:lambda-def}
\lambda = (1000 s\dmax \theta^{-1} \log(|G|)\log(1/\eps))^{-1000}.
\end{equation}
 We assume that $\theta$-covers are sorted by size, i.e.~$|A_1| \geq |A_2| \geq \dots \geq |A_s|$ unless otherwise specified. Finally, we recall (cf.~\Cref{rem:no-isolated-vertices}) that we may assume that $G$ has no isolated vertices.

\subsection{Good Sets and Local Mixing}  \label{sec:goodsetslocalmixing}

To start, we show that a family of sets with large expansion and low conductance locally mixes quickly. To capture precisely what we aim to prove about such sets, we make the following definition.

\begin{definition}[$(\Delta, g)$-Good Set]
	Given a graph $G = (V,E)$ and a cover $\mathcal{C} := \{A_1, ..., A_s\}$, we say that a set $S \subseteq V$ is \emph{$(\Delta, g)$-good with respect to $\mathcal{C}$} if 
		\[\left| \{ v \in S: T_{\mix}^{1/8\dmax ,\eps}(v, \mathcal{\mathcal{P}(\mathcal{C}})) \leq \Delta \} \right| \geq g |S|.\]
\end{definition}

To begin, we show that if a set's expansion is much larger than its conductance, then most points in the set locally mix quickly.

\begin{lemma}
\label{lem:low-cond-good}
Suppose that $G = (V,E)$ is a graph with $\theta$-cover $\mathcal{C} := \{A_1, ..., A_s\}$. If $I \subseteq [s]$ is such that 
	\[\Psi(A_I) \leq \wt{O} \left(\frac{\theta \Phi(G[A_I])}{s^2 d^2_{\mathrm{max}} \log(1/\eps)} \right) \]
then $A_I$ is $(t, \frac{1}{2})$-good (with respect to $\mathcal{C}$) for $t = \wt{\Theta}(\frac{s^2 \dmax  \log(1/\eps)}{\theta \Phi(G[A_I])})$.
\end{lemma}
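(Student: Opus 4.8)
The plan is to show that a random walk starting from a uniformly random vertex of $A_I$ first mixes inside $G[A_I]$ with good probability, and then to upgrade this from ``most starting points'' to a local mixing time bound via a rejection-sampling event.

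\textbf{Step 1: Mixing inside $G[A_I]$.} Since $\mathcal{C}$ restricted to $I$ is still a $\theta$-cover of $G[A_I]$ (each $G[A_i]$ with $i \in I$ is a $\theta$-expander), \Cref{cor:fast-covered-mixing} tells us that the \emph{internal} random walk on $G[A_I]$ mixes in time $t_0 = \wt{O}(s^2 \dmax / (\theta \Phi(G[A_I])))$. Running a random walk in $G$ conditioned on staying in $A_I$ is exactly a random walk in $G[A_I]$ (see the remark after \Cref{ass:nice}), so conditioning on the event ``the walk stays in $A_I$ for $t$ steps'' gives a distribution within $\eps'$ of the stationary distribution $\pi_{A_I}$ of $G[A_I]$ after $t = t_0 \log(1/\eps')$ steps, for a suitable $\eps'$ to be fixed below.

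\textbf{Step 2: The walk stays in $A_I$ with non-negligible probability.} By \Cref{lem:escape-lb}, a random walk started from $\bX \sim \pi_{A_I}$ stays in $A_I$ for $t$ steps with probability at least $1 - t\,\Psi(A_I)$. The hypothesis $\Psi(A_I) \le \wt O(\theta \Phi(G[A_I]) / (s^2 \dmax^2 \log(1/\eps)))$ is exactly what is needed so that $t \cdot \Psi(A_I)$ is bounded below $1$ (say below $1/2$) when $t = \wt\Theta(s^2 \dmax \log(1/\eps) / (\theta \Phi(G[A_I])))$: the extra $\dmax \log(1/\eps)$ slack in the conductance bound absorbs the $\log(1/\eps')$ factor from Step 1 (taking $\eps' \approx \eps/\dmax$) and the degree-correction factor from Step 3. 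Thus for $\bX \sim \pi_{A_I}$, with probability $\ge 1/2$ the $t$-step walk never leaves $A_I$, and hence its endpoint is $\eps'$-close to $\pi_{A_I}$.

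\textbf{Step 3: From $\pi_{A_I}$ to $\mathcal{U}_{A_I}$, and from a random start to most starts.} To get uniformity over $A_I$ rather than $\pi_{A_I}$, introduce the rejection event using the ancillary randomness $\br$: accept the endpoint $\bX_t$ with probability $(\min_{u} \deg_{A_I}(u)) / \deg_{A_I}(\bX_t) \ge 1/\dmax$; conditioned on acceptance, a sample from $\pi_{A_I}$ becomes a sample from $\mathcal{U}_{A_I}$ (and the $\eps'$ TV-error becomes $\le \eps$ after conditioning, using \Cref{lem:tv-condition} with the $1/\dmax$ acceptance probability and choosing $\eps'$ accordingly). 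Combining: a uniformly random start $\bv \sim \mathcal{U}_{A_I}$ — which is within a $\dmax$ factor of $\pi_{A_I}$ pointwise — followed by a $t$-step walk, conditioned on the event $E$ = ``walk stays in $A_I$ for $t$ steps $\wedge$ $\br$ accepts'', has endpoint $\eps$-close to $\mathcal{U}_{A_I}$, and $\Pr[E] \ge \Omega(1/\dmax)$ when started from $\pi_{A_I}$, hence $\ge \Omega(1/\dmax^2)$ from most points of $\mathcal{U}_{A_I}$. A Markov/averaging argument then shows that for at least a $1/2$-fraction of $v \in A_I$, the event $E$ conditioned on starting at $v$ still has probability $\ge 1/(8\dmax)$ while retaining the $\eps$ TV bound; for each such $v$ this exhibits a valid witness for $T_{\mix}^{1/8\dmax,\eps}(v,\mathcal{P}(\mathcal{C})) \le t$, since $A_I \in \mathcal{P}(\mathcal{C})$. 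That is exactly the assertion that $A_I$ is $(t,\tfrac12)$-good.

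\textbf{Main obstacle.} The delicate part is the bookkeeping of all the small factors — the $\log(1/\eps')$ blow-up from boosting the TV bound in Step 1, the $1/\dmax$ loss from the degree-correcting rejection in Step 3, and the $1/\dmax$ loss from replacing a $\pi_{A_I}$-random start by a $\mathcal{U}_{A_I}$-random start and then by ``most starts'' — and verifying that the $\wt O(\cdot)$ slack in the hypothesized upper bound on $\Psi(A_I)$ is precisely enough to keep $t\,\Psi(A_I)$ bounded away from $1$ so that \Cref{lem:escape-lb} still gives a constant escape-avoidance probability. The mixing and escape lemmas themselves are black boxes; the real work is checking that the parameters fit together and that the averaging step yields the clean $(t,\tfrac12)$-good conclusion rather than something weaker.
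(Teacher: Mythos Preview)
Your proposal is correct and follows essentially the same approach as the paper's proof: mix inside $G[A_I]$ via \Cref{cor:fast-covered-mixing}, bound the escape probability via \Cref{lem:escape-lb}, convert $\pi_{A_I}$ to $\mathcal{U}_{A_I}$ by the $\br$-rejection, and then average to get the $(t,\tfrac12)$-good conclusion. One small clean-up: your intermediate ``$\Omega(1/\dmax^2)$ from most points'' line is not quite the right way to run the averaging---the paper instead applies Markov directly to the \emph{escape} probability (average $\le 1/(32\dmax)$ under $\pi_{A_I}$), which together with the pointwise bound $\pi_{A_I}(v)\ge 1/\vol_{A_I}(A_I)$ shows that at most $|A_I|/2$ vertices can have escape probability $\ge 1/4$; for the remaining half the full event $E_v$ then has probability $\ge 1/(8\dmax)$ as you state.
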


\begin{proof}
For any $v \in A_I$, we consider a random walk $\bX_0, ..., \bX_t$ starting at $v$. The key behind this proof will be showing that there are many $v \in A_I$ for which, with decent probability, this random walk will never leave $A_I$. Let us first show why this suffices to imply local mixing, and then we will return to proving it.

With this in mind, we define the event needed in the definition of local mixing. Let $\br \sim [0,1]$ be drawn uniformly at random. In proving local mixing, we will condition on the event $E_v$ that we never leave the set $A_I$ during the random walk $\bX_0, ..., \bX_t$ and that $\br \leq \frac{1}{\deg_{A_I}(\bX_t)}$.

We now consider the random walk conditioned on never leaving $A_I$ (but not yet on the event $E_v$). Note that this is equivalent to taking a random walk in $G[A_I]$. By \Cref{cor:fast-covered-mixing}, such a random walk has total variation distance at most $\eps/4\dmax $ to the stationary distribution, $\mu^{A_I}$, after $\wt{O}(\frac{s^2 \dmax  \log(4\dmax /\eps)}{\theta \Phi(G[A_I])}) \leq t$ steps.

Observe that the stationary distribution $\mu^{A_I}$ is given by $\mu^{A_I}(v) = \frac{\deg_{A_I}(v)}{\vol_{A_I}(A_I)}$ for $v \in A_I$, and $\mu^{A_I}(v)= 0$ for $v \notin A_I$. Furthermore, by the definition of the event $E_v$, we see that $\mu_{A_I}|E_v$ is the same as $\calU_{A_I}$. (Indeed, the condition on $\br$ in the definition of $E_v$ is designed to cancel out the numerator in $\mu^{A_I}(v)$.)

It now follows by \Cref{lem:tv-condition} that
	\[\dtv(\bX_t|E_v, \mathcal{U}_{A_I}) \leq \frac{2\eps}{4 \dmax } \cdot \left( \Pr_{\bX \sim \mu^{A_I}, \br} \left[\br \leq \frac{1}{\deg_{A_I}(\bX_t)} \right] \right)^{-1} \leq \eps.\]
	
It now suffices to show that 
$\Pr[$the random walk never leaves $A_I$ in $\bX_0,\dots,\bX_t]\geq {\frac 1 4}$ 
for many choices of $v \in A_I$. By \Cref{lem:escape-lb}, we have that a random walk starting from a random vertex $\bX \sim \pi_{A_I}$ leaves $A_I$ after $t$ steps with probability at most
		\[t \cdot \Psi(A_I) \leq \frac{1}{32 \dmax }.\]
On the other hand, any set of $|A_I|/2$ vertices in $A_I$ has mass under $\pi_{A_I}$ at least
		\[\frac{|A_I|/2}{\vol_{A_I}(A_I)} \geq \frac{1}{4 \dmax }.\]
Thus, the number of vertices that escape with probability at least $\frac{1}{4}$ is at most $|A_I|/2$, completing the proof.
\end{proof}

Next, to demonstrate how we will use good sets, we prove in \Cref{lem:good-mixing} that if all the sets in a super cover are good with respect to an original cover, then the local mixing time of the graph with respect to the original cover is small.  Given \Cref{lem:good-mixing}, the rest of our effort towards proving \Cref{thm:disjoint-local-mixing} will be to establish the existence of a super cover with the properties described in the lemma statement.

\begin{lemma}
\label{lem:good-mixing}
Suppose that $G = (V,E)$ is a graph and $\mathcal{C} := \{A_1, ..., A_s\}$ is a $\theta$-cover. Moreover, let $\{B_1, ..., B_s\}$  be an $\alpha$-super cover, such that $B_1, ..., B_s$ are all $(\Delta, g)$-good with respect to $\mathcal{C}$. (Note that the $\{B_i\}$ do not necessarily need to be disjoint.) 
We then have that 
		\[ T_{\mix}^{p, 4 \eps s}(G, \mathcal{C}) \leq \wt{O} \left(\Delta + \frac{s^2 d^4_{\mathrm{max}}}{g^2 \alpha^2} \right) \]
	for some $p = \Omega \left( \frac{\alpha^2 g^2}{s^4 \dmax ^5 \Delta} \right)$.
\end{lemma}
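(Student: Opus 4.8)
\textbf{Proof sketch for \Cref{lem:good-mixing}.} The plan is to show that a random walk starting from an arbitrary $v \in V$ will, with non-negligible probability, quickly reach a ``nice'' vertex $w$ which has small local mixing time, and then continue from $w$ to locally mix. The key point is that the set of ``bad'' vertices (those with large local mixing time) cannot be too large within any $B_i$ (since each $B_i$ is $(\Delta,g)$-good), and since the $B_i$'s cover $V$ and each $G[B_i]$ is an $\alpha$-expander, the bad set must have many edges leaving it into the good set. Hence a random walk starting anywhere will escape the bad set quickly with decent probability.

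First I would make precise the ``bad set'' argument. Let $R := \{v \in V : T_{\mix}^{1/8\dmax,\eps}(v,\mathcal{P}(\mathcal{C})) > \Delta\}$ be the set of vertices with large local mixing time. Since $B_1,\dots,B_s$ cover $V$, there is some index $j$ with $|R \cap B_j| \geq |R|/s$. Because $B_j$ is $(\Delta,g)$-good, at most $(1-g)|B_j|$ vertices of $B_j$ lie in $R$, so $R \cap B_j$ is a ``$g/s$-ish fraction'' subset of $B_j$ that cannot include more than a $(1-g)$ fraction. Using that $G[B_j]$ is an $\alpha$-expander applied to the cut $(R\cap B_j, B_j \setminus (R \cap B_j))$ (after checking which side has smaller volume — if $R\cap B_j$ is the larger side one uses its complement, and in either case one gets a lower bound on $|E(R \cap B_j, \overline{R\cap B_j})|$ in terms of $\alpha$, $\dmax$, $g$, $s$, and $|R|$), one concludes $|E(R,\overline{R})| \geq \widetilde{\Omega}(\alpha g |R| / (s \dmax))$ — i.e. $\Psi_G(R)$ is bounded below by roughly $\widetilde{\Omega}(\alpha g/(s \dmax^2))$. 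I would then invoke \Cref{lem:escape-ub} (applied to every subset of $R$, or more carefully to the relevant structure — one may need the bound on $\Psi$ for all subsets $T \subseteq R$, which follows by the same covering-plus-expansion argument applied to $T$ in place of $R$) to conclude that a random walk started anywhere in $R$ leaves $R$ within $\widetilde{O}(s^2\dmax^4/(\alpha^2 g^2))$ steps with probability at least $1/8$.

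Next I would assemble the two-phase walk. Start a random walk $v =: \bX_0, \bX_1, \dots$ from arbitrary $v$. Phase one runs for $t_1 := \widetilde{O}(s^2 \dmax^4/(\alpha^2 g^2))$ steps; with probability $\geq 1/8$ the walk is at some vertex $w = \bX_{t_1} \notin R$ (if $v \notin R$ to begin with we are already fine). Conditioned on this, $w$ has local mixing time $\leq \Delta$ with respect to $\mathcal{P}(\mathcal{C})$, so there is an event $E_w$ on the next $\Delta$ steps (and an auxiliary $\br$) with $\Pr[E_w] \geq 1/(8\dmax)$ under which $\bX_{t_1 + \Delta}$ is within $\eps$ of $\mathcal{U}_{A_i}$ for some $i$. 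The subtlety is that $w$ is itself a random vertex, so I would define the composite event $E_v$ by: (a) $\bX_{t_1} \notin R$, and (b) the event $E_{\bX_{t_1}}$ holds for the continuation — this requires noting that the local-mixing event depends only on $w$ and the future of the walk, and that we can bundle the choice of target set index $i$ (which may depend on $w$) by a union/averaging argument over the at-most-$s$ possible values, losing a factor of $s$ in the success probability and contributing the $s$ in the final $p$. Multiplying the probabilities gives $\Pr[E_v] \geq \widetilde{\Omega}\left(\frac{1}{8} \cdot \frac{1}{8\dmax} \cdot \frac{1}{s} \cdot \dots\right) = \Omega\left(\frac{\alpha^2 g^2}{s^4 \dmax^5 \Delta}\right)$ as claimed (the extra $\Delta$ and $\dmax$ powers come from the various normalizations and from needing the walk length to be a fixed value rather than ``at most $t$''), and the TV bound becomes $\eps$ scaled up to $4\eps s$ through the conditioning losses of \Cref{lem:tv-condition} combined with the super-cover-to-cover passage (\Cref{lem:super-cover-mixing}-style reasoning, since the local mixing event is stated with respect to $\mathcal{P}(\mathcal{C})$, not $\mathcal{C}$ directly). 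The total walk length is $t_1 + \Delta = \widetilde{O}(\Delta + s^2\dmax^4/(\alpha^2 g^2))$.

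The main obstacle I anticipate is the bookkeeping around \emph{which} set the walk mixes to and \emph{where} the conditioning is applied: the vertex $w$ reached after phase one is random, the target index $i$ depends on $w$, and the local-mixing guarantee for $w$ is phrased relative to $\mathcal{P}(\mathcal{C})$ rather than $\mathcal{C}$ — so converting everything into a single clean event $E_v$ satisfying properties (i) and (ii) of the local-mixing-time definition with respect to $\mathcal{C}$, while correctly tracking the polynomial factors in $s, \dmax, 1/g, 1/\alpha, \Delta$ through repeated applications of \Cref{lem:tv-condition}, is where the real care is needed. A secondary technical point is verifying the lower bound on $\Psi_G(T)$ for \emph{all} subsets $T \subseteq R$ (as needed by \Cref{lem:escape-ub}), not just for $R$ itself — but this follows from re-running the covering + expander argument with $T$ in place of $R$, since any subset of $R$ is still disjoint from the good vertices in each $B_j$.
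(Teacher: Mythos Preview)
Your proposal is correct and follows essentially the same approach as the paper: define the bad set $R$, show every $T \subseteq R$ has one-sided conductance $\Omega(\alpha g/(s\dmax))$ via the cover-plus-expander-plus-goodness argument, escape $R$ via \Cref{lem:escape-ub}, then continue to locally mix and pass from $\mathcal{P}(\mathcal{C})$ to $\mathcal{C}$ via \Cref{lem:super-cover-mixing}. The paper handles the bookkeeping you flagged by fixing not just the escape time but also the specific mixing-time value $b \leq \Delta$ and target index $i$ of the intermediate vertex (this is where the $1/\Delta$ and $1/s$ factors enter $p$), and then observes that $\bX_{a+b}\mid E_v$ is a convex combination of distributions each $4s\eps$-close to $\mathcal{U}_{A_i}$ --- exactly the ``which set, where is the conditioning'' issue you anticipated.
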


\begin{proof}
We will actually prove that, if we furthermore assume $g \leq 1/2\dmax $, then we get the stronger statement that
		\[ T_{\mix}^{p, 4 \eps s}(G, \mathcal{C}) \leq \wt{O} \left(\Delta + \frac{s^2 d^2_{\max}}{g^2 \alpha^2} \right) \]
for some $p = \Omega \left( \frac{\alpha^2 g^2}{s^4 \dmax ^3 \Delta} \right)$. The desired result will then follow if this additional assumption does not hold, and $1/2\dmax < g \leq 1$, by simply replacing $g$ with the smaller value $g = 1/2\dmax$, and substituting into this stronger result.

Let $S$ denote the set of starting points from which a random walk locally mixes in time at most $\Delta$ and take $R := V \setminus S$. We start by showing that a random walk escapes $R$ with good probability.

Toward this goal, we first prove that every $T \subseteq R$ has one-sided conductance at least $\frac{\alpha g}{\dmax s}$. To see this, first note there must exist some $i$ such that
	\[|T \cap B_i| \geq \frac{|T|}{s}. \]
Now if $\Vol(T \cap B_i) \leq \frac{1}{2}\Vol_{B_i}(B_i)$, then 
	\[|E(T,\overline{T})| \geq \alpha \Vol_{B_i} (T \cap B_i) \geq \alpha \frac{|T|}{s} \geq \alpha \frac{\vol_G(T)}{s \dmax }.\]
So it follows that $\Psi(T) \geq \alpha/(s \dmax )  > \alpha g / (\dmax s)$.

On the other hand, suppose that $\Vol(T \cap B_i) \geq \frac{1}{2} \Vol_{B_i}(B_i)$. Since $B_i$ is $(\Delta, g)$-good, it follows that
	\[|\overline{T} \cap B_i| \geq |\overline{R} \cap B_i| \geq g |B_i|.\]
Thus, we have that
	\[|E(T,\overline{T})| \geq \alpha g |B_i| \geq \alpha g \frac{|T|}{s} \geq \alpha g \cdot \frac{\vol(T)}{\dmax  s}.\]
Thus, $\Psi(T) \geq \frac{\alpha g}{\dmax s}$, as desired.
	
In the remainder of the proof, we will show that a random walk leaves $R$ after at most $\wt{O}(s^2d^2_{\mathrm{max}}/g^2 \alpha^2)$ time steps with good probability by \Cref{lem:escape-ub}, and that once we've left $R$, we locally mix with respect to some set $A_i$ by \Cref{lem:super-cover-mixing}, which will complete the proof.

To formally prove this, note first that for every $v \in S$, we have by \Cref{lem:super-cover-mixing} that
	\[T_{\mix}^{1/16s\dmax , 4 s \eps}(v, \mathcal{C}) \leq T_{\mix}^{1/8\dmax , \eps}(v, \mathcal{P}(\mathcal{C})) \leq \Delta.\] 	

We now conclude that for any vertex $v \in R$, there exist an integer $a = \wt{O}(s^2d^2_{\mathrm{max}}/g^2 \alpha^2)$, an integer $b \leq \Delta$, and $i \in [s]$, such that a random walk for $a$ steps $\bX_0, ..., \bX_a$ leaves $R$, reaches a vertex $u \in S$ with $T_{\mix}^{1/16s\dmax , 4s\eps}(u, \mathcal{C}) = b$, and ``looks like'' $\mathcal{U}_{A_i}$ (in the sense of having total variation distance at most $4s\eps$)  after conditioning on the event $E_u$ (from the local mixing of $u$). We let $E_v$ denote the event that the three of the previous items occur. Since the walk is Markovian, it follows from \Cref{lem:escape-ub} that we can choose $a,b,i$ such that the probability that $E_v$ occurs is at least $\Omega \left( \frac{\alpha^2 g^2}{s^4 \dmax ^3 \Delta}\right)$.

We next aim to show that $\bX_{a+b}|E_v$ is close to $\mathcal{U}_i$ in total variation distance. To prove this, we note that the distribution $\bX_{a+b}|E_v$ is a convex combination of distributions with total variation distance at most $\eps$ to $\mathcal{U}_i$. Indeed, for a suitable normalization constant $\gamma$, we have that 
\begin{align*}
	\Pr[\bX_{a+b} = w|E_v] &= \gamma \sum_{u \in S: T_{\mix}(u) = b} \Pr[\bX_a = u \land \bX_{a+b} = w \land E_u] \\
	&= \gamma \sum_{u \in S: T_{\mix}(u) = b} \Pr[\bX_a = u] \Pr[E_u|\bX_a = u] \cdot \Pr[\bX_{a+b} = w | \bX_a = u \land E_u].
\end{align*}

Noting that
	\[\gamma^{-1} = \sum_{u \in S: T_{\mix}(u) = b} \Pr[\bX_a = u] \Pr[E_u|\bX_a = u]\]
shows that $\bX_{a+b}|E_v$ is indeed the convex combination of such distributions. Thus, by Jensen's inequality, $\bX_{a+b}|E_v$ also has total variation distance at most $4s\eps$ to $\mathcal{U}_i$. We have thus proved $T_{\mix}^{p, 4s\eps}(v) \leq
\wt{O} \left(\Delta + \frac{s^2 d^2_{\max}}{g^2 \alpha^2} \right)$, 
as desired.
\end{proof}

\subsection{The Largest Set is Good}
\label{sec:largestsetgood}

In this section, we prove that $A_1$ (the biggest set in the cover) is contained in a set $A_I$ that is good and such that $G[A_I]$ has large expansion. To prove this, we roughly show that if no such set exists with expansion $\lambda^\ell$, then the largest set $I \subseteq [s]$ that contains $1$ for which $A_I$ has expansion $\lambda^\ell$ satisfies $|I| \geq \exp(\ell)$. For large enough $\ell \geq \Omega(\log s)$, this is impossible, so some such set must exist with expansion $\lambda^{O(\log s)}$.

In order to prove the statement, we need the following two definitions.

\begin{definition}[$\ell$-thick graph]
	Let $G$ be a graph and $\mathcal{C} = \{A_1, ..., A_s\}$ be a disjoint $\theta$-cover. We define the \emph{$\ell$-thick graph}, denoted $H_\ell$, as follows: $H_\ell$ is an edge-weighted undirected graph with $s$ vertices, where each vertex in $V(H_\ell)$ corresponds to a set $A_i$. We will refer to these vertices interchangeably as both $A_i$ and $i$. There is an edge between $A_i$ and $A_j$ if $|E(A_i, A_j)| \geq \lambda^{\ell} |A_1|$. The edge has weight $|E(A_i, A_j)|$.
\end{definition}

\begin{definition}[$\ell$-thick component]
	Let $G$ be a graph and $\mathcal{C} = \{A_1, ..., A_s\}$ be an disjoint $\theta$-cover. The \emph{$\ell$-thick component of $A_i$} is the subset $M_G^\ell(A_i) \subseteq [s]$ corresponding to the connected component of $A_i$ in $H_\ell$.
\end{definition}

These definitions may seem somewhat strange at first. Indeed, the most natural way of arguing that the largest $\lambda^\ell$-expanding set $I$ must be large would be to cluster the $A_i$'s into $\lambda^{\ell-1}$ expanders. Since these are bad, \Cref{lem:low-cond-good} implies that this a high conductance cut between two clusters. We can then merge them and continue. Unfortunately, however, we cannot afford to merge components in this way. Indeed, recall that we improved the mixing time bounds to avoid having to square the conductance each time and this is vital to getting a quasi-polynomial bound. Looking at \Cref{lem:cut-approx}  the expansion of fused sets is only promised to be $\Phi(G[A_i])^2$ if we only use that there are many edges between these two clusters of sets. As such one needs more care to circumvent this. Moreover, one must be very careful when combining sets as they can have very different sizes, which can cause problems. We essentially define the $\ell$-thick graph and component to handle these problems. 

The picture to have in mind with $\ell$-thick graphs is to initially start with the empty graph. Afterwards, we slowly increase $\ell$ and edges start to to appear. As such, a desirable property for a set to have, for us, is that it has expansion much larger than its conductance in $\ell$-thick graph \emph{and} no matter what edges appear in the $(\ell')$-thick graph for $\ell' > \ell$, the expansion of the component will still be bigger than the conductance. We will formally define this notion, which we call a ``redeemable'' set, shortly. But before this, we need to provide a definition describing how the $\ell$-thick graph controls the conductance.

\begin{definition}[Revealed Conductance] \label{def:revealed-conductance}
Let $G = (V,E)$ be a graph and $\mathcal{C} = \{A_1, ..., A_s\}$. The \emph{$\ell$-revealed conductance} of a set $I \subseteq [s]$ is equal to 
	\[
	\nu^{G,{\cal C}}_\ell(I)  := \frac{w(E_{H_\ell}(I, \overline{I})) }{|A_I|} + \frac{\lambda^{\ell} |A_1| \cdot |\{(i,j): i \in I, j \not \in I, (i,j) \not \in E(H_\ell)\}| }{|A_I|}.  \]
	When $G$ and ${\cal C}$ are clear from context, we may omit them and simply write $\nu_\ell(I)$ instead of $\nu^{G,{\cal C}}_\ell(I)$.
\end{definition}

Intuitively, $\nu_\ell(I)$ is an upper bound on the one-sided conductance, $\Psi(A_I)$, of the set $A_I$, conditioned on only knowing the weights of the $\ell$-thick edges:

\begin{lemma}
\label{lem:rev-conductance-upper-bounds}
	Let $G = (V,E)$ be a graph and $\mathcal{C} = \{A_1, ..., A_s\}$. For any $\ell$ and any $I \subseteq [s]$ we have \[ \Psi(A_I) \leq \nu^{G,{\cal C}}_\ell(I). \]
\end{lemma}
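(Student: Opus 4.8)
The plan is to directly bound the numerator $|E(A_I, \overline{A_I})|$ of $\Psi(A_I)$ by splitting the edges leaving $A_I$ according to which pairs $(i,j)$ with $i \in I$, $j \notin I$ they belong to, and then comparing each contribution against the corresponding term in the definition of $\nu_\ell^{G,\mathcal C}(I)$. Concretely, every edge counted by $|E(A_I,\overline{A_I})|$ lies in $E(A_i, A_j)$ for some unique pair $i \in I$, $j \notin I$ (this uses that $\mathcal C = \{A_1,\dots,A_s\}$ is a disjoint cover, so the $A_i$'s partition $V$ and each such edge has a well-defined pair of endpoints' parts). Thus $|E(A_I,\overline{A_I})| = \sum_{i \in I, \, j \notin I} |E(A_i,A_j)|$.

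First I would split this sum into the pairs $(i,j)$ that are edges of the $\ell$-thick graph $H_\ell$ and those that are not. For pairs $(i,j) \in E(H_\ell)$, the total contribution is exactly $\sum_{(i,j) \in E_{H_\ell}(I,\overline I)} |E(A_i,A_j)| = w(E_{H_\ell}(I,\overline I))$ by the definition of the edge weights in $H_\ell$; dividing by $|A_I|$ gives the first term of $\nu_\ell(I)$ exactly. For pairs $(i,j) \notin E(H_\ell)$ with $i \in I$, $j \notin I$, the definition of $H_\ell$ says precisely that $|E(A_i,A_j)| < \lambda^\ell |A_1|$, so the contribution of each such pair is at most $\lambda^\ell |A_1|$, and summing over all such pairs gives at most $\lambda^\ell |A_1| \cdot |\{(i,j): i \in I, j \notin I, (i,j) \notin E(H_\ell)\}|$; dividing by $|A_I|$ gives exactly the second term of $\nu_\ell(I)$.

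Putting these together yields $|E(A_I,\overline{A_I})| \le w(E_{H_\ell}(I,\overline I)) + \lambda^\ell |A_1| \cdot |\{(i,j): i \in I, j \notin I, (i,j)\notin E(H_\ell)\}|$. Finally, since $\Psi(A_I) = |E(A_I,\overline{A_I})| / \vol(A_I)$ and $\vol(A_I) \ge |A_I|$ (each vertex has degree at least one — indeed at least two, since $G$ has no isolated vertices and \Cref{ass:nice} adds self-loops), dividing through by $\vol(A_I) \geq |A_I|$ gives $\Psi(A_I) \le \nu_\ell^{G,\mathcal C}(I)$, as claimed. There is no real obstacle here; the only thing to be careful about is that this is stated for a general cover $\mathcal C$ while the clean "each edge has a unique pair of parts" step wants disjointness — but the lemma is only ever invoked in the disjoint-cover setting of Section~\ref{sec:localmixingboundproof}, and in any case if $\mathcal C$ overlaps one can assign each boundary edge to one pair $(i,j)$ and the same inequalities go through with only the direction $\le$ needed. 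The volume-versus-cardinality comparison $\vol(A_I) \ge |A_I|$ is the other place to state explicitly, but it is immediate from niceness.
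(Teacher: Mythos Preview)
Your proof is correct and is essentially the same as the paper's: decompose $|E(A_I,\overline{A_I})|$ over pairs $(i,j)$ with $i\in I$, $j\notin I$, handle the thick and non-thick pairs separately via the definition of $H_\ell$, and divide through by $|A_I|$. You are in fact slightly more careful than the paper in making explicit the step $\vol(A_I)\ge |A_I|$ and in flagging the (silent) disjointness assumption on $\mathcal C$.
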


\begin{proof}
	Recall that
	\begin{align*}
		\Psi(A_I) = \sum_{i \in I} \sum_{j \in \overline{I}} \frac{|E(A_i, A_j)|}{|A_I|}.
	\end{align*}
	By definition of the $\ell$-thick graph, we know that for $(i,j) \in E(H_\ell)$ we have that $|E(A_i, A_j)| = w(E_{H_\ell}(i,j))$, and for $(i,j) \notin E(H_\ell)$ we have that $|E(A_i, A_j)| < \lambda^\ell \cdot |A_1|$.
\end{proof}

We next show that increasing $\ell$, and thus intuitively revealing more information about edges, can only decrease the revealed conductance:

\begin{lemma} \label{lem:prayer}
For any set $I \subseteq [s]$ and any integer $\ell$, we have $\nu_{\ell + 1}(I) \leq \nu_{\ell}(I)$.
\end{lemma}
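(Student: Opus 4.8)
\textbf{Proof plan for \Cref{lem:prayer}.} The claim is monotonicity of the revealed conductance $\nu_\ell(I)$ as $\ell$ increases. The plan is to compare the two expressions $\nu_\ell(I)$ and $\nu_{\ell+1}(I)$ term by term, tracking how the partition of the ``boundary'' pairs $(i,j)$ with $i \in I$, $j \notin I$ into \emph{thick} pairs (those in $E(H_\ell)$) and \emph{thin} pairs changes when we pass from threshold $\lambda^\ell$ to the strictly smaller threshold $\lambda^{\ell+1}$. Since $\lambda < 1$, every edge of $H_\ell$ is also an edge of $H_{\ell+1}$ (the condition $|E(A_i,A_j)| \geq \lambda^\ell |A_1|$ implies $|E(A_i,A_j)| \geq \lambda^{\ell+1}|A_1|$), so the only change is that some pairs $(i,j)$ that were \emph{thin} at level $\ell$ may become \emph{thick} at level $\ell+1$; no pair ever moves in the other direction.

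The key computation is then the following. Write $P$ for the set of boundary pairs $(i,j)$, $i \in I$, $j \notin I$. Partition $P = P_{\mathrm{thick}}^{\ell} \sqcup P_{\mathrm{thin}}^{\ell}$ according to membership in $E(H_\ell)$, and similarly for $\ell+1$. By the previous paragraph, $P_{\mathrm{thick}}^{\ell} \subseteq P_{\mathrm{thick}}^{\ell+1}$, and a pair $(i,j) \in P_{\mathrm{thin}}^{\ell} \cap P_{\mathrm{thick}}^{\ell+1}$ satisfies $\lambda^{\ell+1}|A_1| \leq |E(A_i,A_j)| < \lambda^\ell |A_1|$. Multiplying through by $|A_I|^{-1}$ and summing, I would show:
\begin{itemize}
\item the contribution of a pair that is thick at both levels is unchanged (it contributes $|E(A_i,A_j)|/|A_I|$ in both $\nu_\ell$ and $\nu_{\ell+1}$);
\item the contribution of a pair that is thin at both levels decreases, since it contributes $\lambda^\ell |A_1|/|A_I|$ to $\nu_\ell$ and only $\lambda^{\ell+1}|A_1|/|A_I| = \lambda \cdot \lambda^\ell|A_1|/|A_I|$ to $\nu_{\ell+1}$, and $\lambda < 1$;
\item the contribution of a pair that is thin at level $\ell$ but thick at level $\ell+1$ also decreases (or stays equal): it contributes $\lambda^\ell|A_1|/|A_I|$ to $\nu_\ell$ and $|E(A_i,A_j)|/|A_I|$ to $\nu_{\ell+1}$, and $|E(A_i,A_j)| < \lambda^\ell|A_1|$ by the definition of not being a level-$\ell$ edge.
\end{itemize}
Summing these three cases over all pairs in $P$ gives $\nu_{\ell+1}(I) \leq \nu_\ell(I)$, since the first term $w(E_{H_\ell}(I,\overline I))/|A_I| = \sum_{(i,j)\in P_{\mathrm{thick}}^\ell} |E(A_i,A_j)|/|A_I|$ together with the weighted-thin-pair term exactly reassembles into this term-by-term comparison.

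\textbf{Main obstacle.} There is no serious obstacle here; this is a routine bookkeeping argument. The one thing to be careful about is the implicit hypothesis that $\lambda \leq 1$ (indeed $\lambda < 1$, from \Cref{eq:lambda-def}), which is what makes edges of $H_\ell$ persist into $H_{\ell+1}$ and makes the thin-pair weights shrink — without $\lambda < 1$ the statement would be false. A second minor point is that one should state the comparison in a way that does not accidentally double-count a pair that switches from thin to thick; organizing the sum as a single sum over all of $P$ with a case split on the status of each pair (rather than as a difference of the two sums-of-two-terms) avoids this. I would write the proof as the one-line display $\nu_\ell(I) - \nu_{\ell+1}(I) = \frac{1}{|A_I|}\sum_{(i,j) \in P}\big(c_\ell(i,j) - c_{\ell+1}(i,j)\big) \geq 0$, where $c_m(i,j) = |E(A_i,A_j)|$ if $(i,j) \in E(H_m)$ and $c_m(i,j) = \lambda^m|A_1|$ otherwise, and verify $c_\ell(i,j) \geq c_{\ell+1}(i,j)$ in each of the three cases above.
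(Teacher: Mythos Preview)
Your proposal is correct and takes essentially the same approach as the paper. The paper's proof also tracks the set $F := E_{H_{\ell+1}}(I,\overline I) \setminus E_{H_\ell}(I,\overline I)$ of pairs that switch from thin to thick, bounds $w(F) \leq |F|\lambda^\ell|A_1|$, and then manipulates the two sums algebraically; your per-pair function $c_m(i,j)$ with the three-case pointwise comparison is just a cleaner way to organize the same bookkeeping.
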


\begin{proof}
Let $F := E_{H_{\ell+1}}(I, \overline{I}) \setminus E_{H_{\ell}}(I, \overline{I})$. Since the edges in $F$ are not in $H_\ell$, we have that
	\[
w(F) \leq |F| \lambda^{\ell} |A_1|.\]
Thus,
\begin{align*}
	\nu_{\ell+1}(I) &= \frac{w(E_{H_{\ell+1}}(I, \overline{I})) }{|A_I|} + \frac{\lambda^{\ell+1} |A_1| \cdot |\{(i,j): i \in I, j \not \in I, (i,j) \not \in E(H_{\ell+1})\}| }{|A_I|} \\
	&= \frac{w(E_{H_{\ell}}(I, \overline{I})) + w(F) }{|A_I|} + \frac{\lambda^{\ell+1} |A_1| \cdot \left(  |\{(i,j): i \in I, j \not \in I, (i,j) \not \in E(H_{\ell})\}| - |F| \right) }{|A_I|} \\
	&\leq \frac{w(E_{H_{\ell}}(I, \overline{I})) }{|A_I|} + \frac{\lambda^{\ell+1} |A_1| \cdot \left(  |\{(i,j): i \in I, j \not \in I, (i,j) \not \in E(H_{\ell})\}| - |F|  \right) }{|A_I|} + 
	\frac{ |F| \lambda^\ell |A_1|}{|A_I| } \\
	&\leq \frac{w(E_{H_{\ell}}(I, \overline{I})) }{|A_I|} + \frac{\lambda^{\ell} |A_1| \cdot \left(  |\{(i,j): i \in I, j \not \in I, (i,j) \not \in E(H_{\ell})\}|  \right) }{|A_I|} \\
	&= \nu_\ell(I). \qedhere
\end{align*}
\end{proof}

As hinted at earlier, we won't work directly with good sets, but rather the following surrogate.

\begin{definition}[$\ell$-Redeemable Sets]
	Let $G$ be a graph and $\mathcal{C} = \{A_1, ..., A_s\}$ be a disjoint $\theta$-cover. We say that a set $A_i$ is \emph{$\ell$-redeemable} if there exists a set $I \subseteq [s]$ which satisfies the following three conditions:
	\begin{enumerate}[label=(\roman*)]
		\item there is a path in $H_\ell$ from $i$ to some $j \in I$,
		\item \label{item2} $\Phi(G[A_I]) \geq \lambda^{-1/3} \nu_\ell(A_I)$, and
		\item $H_\ell[I]$ is a connected graph.
	\end{enumerate}
\end{definition}

Intuitively, proving that $A_i$ is redeemable will help to show that a random walk starting from a vertex in $A_i$ will locally mix. Condition (i) says that the random walk has a good chance of getting into the set $A_I$, and conditions (ii) and (iii) say that, once the walk get into $A_I$, it has a good shot at staying in $A_I$ and mixing.

It's not hard to verify that the connected component of $A_i$, i.e., $M_G^\ell(A_i)$, contains a good set (namely $A_I$).
Moreover, since revealed conductance can only decrease as $\ell$ increases, it follows that if a set is $\ell$-redeemable, then it's also $\ell+1$-redeemable. We also observe that the induced subgraph of $G$ corresponding to an $\ell$-thick component in $H_\ell$ is a good expander:

\begin{lemma}
\label{lem:thick-graph-exp}
Let $G = (V,E)$ be a graph and $\mathcal{C} := \{A_1, \dots , A_s\}$ be a $\theta$-cover. If $I$ is an $\ell$-thick component, then $\Phi(G[A_I]) \geq \lambda^{\ell + 0.01}$.
\end{lemma}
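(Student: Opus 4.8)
The plan is to bound the one-sided conductance of $A_I$ directly. Since $\Phi_{A_I}(S)=\Phi_{A_I}(A_I\setminus S)$, and $\Phi_{A_I}(S)=\Psi_{A_I}(S)$ whenever $\vol_{A_I}(S)\le \tfrac12\vol_{A_I}(A_I)$, it suffices to show $\Psi_{A_I}(S)\ge \lambda^{\ell+0.01}$ for every nonempty $S\subsetneq A_I$ of at most half the volume. Here I use that $\mathcal{C}$ is in fact a disjoint $\theta$-cover (which is what the notion of an $\ell$-thick component presupposes), that covers are sorted so $|A_1|$ is the largest part size, and that $\lambda=(1000\,s\dmax\theta^{-1}\log(|V|)\log(1/\eps))^{-1000}$, so that the slack $\lambda^{0.01}$ is a large enough power to swallow any $\mathrm{poly}(s,\dmax,\log|V|,\log(1/\eps))$ loss.

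The first ingredient is a ``niceness'' observation (this is the one place \Cref{ass:nice} is used): for a vertex $v\in A_i$, since $v$ has exactly as many self-loops as ordinary edges in $G$, the number of edges from $v$ to $A_I\setminus A_i$ is at most $\deg_{A_i}(v)$, whence $\deg_{A_I}(v)\le 2\deg_{A_i}(v)$. Summing over $S$ this gives $\vol_{A_I}(S)\le 2\sum_{i\in I}\vol_{A_i}(S\cap A_i)$, and the same bound lets one control, for any piece $A_i$, the number of inter-piece edges incident to $S\cap A_i$ (resp. $A_i\setminus S$) by $\vol_{A_i}(S\cap A_i)$ (resp. $\vol_{A_i}(A_i\setminus S)$). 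Next I would split $I=I_{\mathrm{in}}\sqcup I_{\mathrm{out}}$ according to whether $\vol_{A_i}(S\cap A_i)>\tfrac12\vol_{A_i}(A_i)$. If $I_{\mathrm{in}}=\emptyset$, then for every $i$ the expander bound in $G[A_i]$ gives $|E_{A_i}(S\cap A_i,A_i\setminus S)|\ge\theta\vol_{A_i}(S\cap A_i)$; summing and dividing by the niceness bound on $\vol_{A_I}(S)$ yields $\Psi_{A_I}(S)\ge\theta/2\ge\lambda^{\ell+0.01}$. The symmetric case $I_{\mathrm{out}}=\emptyset$ reduces to this one by passing to $A_I\setminus S$ (using $\vol_{A_I}(S)\le\vol_{A_I}(A_I\setminus S)$).

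The substantive case is when $I_{\mathrm{in}}$ and $I_{\mathrm{out}}$ are both nonempty. Here I would take a spanning tree of the connected graph $H_\ell[I]$ and locate a tree edge $\{i^\ast,j^\ast\}$ with $i^\ast\in I_{\mathrm{in}}$, $j^\ast\in I_{\mathrm{out}}$; by definition of $H_\ell$ there are at least $\lambda^\ell|A_1|$ parallel $G$-edges between $A_{i^\ast}$ and $A_{j^\ast}$. Classifying these by their endpoints and applying the niceness bound, all but at most $\vol_{A_{i^\ast}}(A_{i^\ast}\setminus S)+\vol_{A_{j^\ast}}(S\cap A_{j^\ast})$ of them run from $S$ to $A_I\setminus S$. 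Combining this with the expander bounds $\theta\vol_{A_{i^\ast}}(A_{i^\ast}\setminus S)$ from $G[A_{i^\ast}]$ (as $i^\ast\in I_{\mathrm{in}}$) and $\theta\vol_{A_{j^\ast}}(S\cap A_{j^\ast})$ from $G[A_{j^\ast}]$ (as $j^\ast\in I_{\mathrm{out}}$), which count disjoint sets of crossing edges, a balancing argument gives $|E(S,A_I\setminus S)|\ge\tfrac{\theta}{2}\lambda^\ell|A_1|$. Dividing by the crude bound $\vol_{A_I}(S)\le\vol_{A_I}(A_I)\le\dmax\cdot s\cdot|A_1|$ gives $\Psi_{A_I}(S)\ge \theta\lambda^\ell/(2\dmax s)$, which is at least $\lambda^{\ell+0.01}$ by the choice of $\lambda$.

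I expect the main obstacle to be exactly this last case, and it is the reason the $\ell$-thick graph was defined by normalizing edge multiplicities against $|A_1|$: when $S$ swallows almost all of a small piece $A_i$, the expansion of $G[A_i]$ alone supplies only $\theta\vol_{A_i}(A_i\setminus S)$ crossing edges, which can be negligible, so the needed crossing edges must come from the thick inter-piece edges — and one must argue carefully that ``most'' of those $\lambda^\ell|A_1|$ edges genuinely cross the cut rather than lying entirely inside $S$ or inside its complement. That bookkeeping (together with handling parts of very different sizes, which is why the crude $\dmax s|A_1|$ volume bound is the right thing to divide by) is the step I would write most carefully.
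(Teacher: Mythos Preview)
Your proof is correct and self-contained, but it takes a genuinely different route from the paper's. The paper first observes the easy fact that any ``block cut'' $A_J$ with $\emptyset\neq J\subsetneq I$ has conductance at least $\lambda^\ell/(s\dmax)$ in $G[A_I]$ (using only connectivity of $H_\ell[I]$ and the crude volume bound $\vol_{A_I}(A_J)\le s\dmax|A_1|$), and then invokes \Cref{lem:cut-approx} --- which says that in a $\theta$-covered graph the sparsest cut is within a $4s\dmax^2/\theta$ factor of the best block cut --- to pass to arbitrary cuts. This makes the argument essentially three lines, modularly reusing an existing lemma. Your approach instead handles an arbitrary cut $S$ directly: you split $I$ according to which side of $S$ each piece leans toward, use the internal expansion of each $A_i$ when all pieces lean the same way, and in the mixed case combine the $\lambda^\ell|A_1|$ inter-piece edges with the internal expansion of the two straddling pieces via the balancing argument you describe. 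Your route is more elementary in that it avoids the black-box dependence on \Cref{lem:cut-approx}, at the cost of essentially re-deriving that lemma's content in situ; the ``substantive case'' you flag as the main obstacle is exactly where the work of \Cref{lem:cut-approx} reappears.
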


\begin{proof}
	Fix the subset $J$ of $I$ that minimizes $\Phi_{G[A_I]}(A_J)$. Since $H_\ell[I]$ is connected, there must be edges between $I \setminus J$ and $J$ in $H_\ell$, and so there must exist an $i \in I \setminus J$ and a $j \in J$ such that \[|E(A_i, A_j)| \geq \lambda^\ell |A_1|.\]
	
	Note furthermore that 
		\[\Vol_{A_I}(A_J) \leq \Vol(G) \leq s \cdot d_{\max} \cdot |A_1|\]
	since there are at most $s\cdot|A_1|$ vertices in $G$, and each vertex has maximum degree at most $d_{\max}$.
	
	It thus follows that
		\[\frac{|E_{A_{I}}(A_J, \overline{A_J})| }{\vol_{A_I}(A_J)} \geq \frac{|E(A_i, A_j)|}{s \dmax |A_1|} \geq \frac{\lambda^{\ell}}{s \dmax }.\]
	The result now follows by \Cref{lem:cut-approx} by virtue of the choice of $\lambda$ as being so small compared to $\theta/(4s\dmax^2)$. 
\end{proof}

We now prove that if $A_1$ is not $(\ell+1)$-redeemable, then the $(\ell+1)$-thick component of $A_1$ must include a new set  $A_j$ that is fairly large.

\begin{lemma}
\label{lem:big-new-elem}
Suppose that $A_1$ is not $(\ell+1)$-redeemable, then there exists a $j \in M_G^{\ell + 1}(A_1) \setminus M_G^{\ell}(A_1)$ such that $|A_j| \geq \lambda |A_1|$. 
\end{lemma}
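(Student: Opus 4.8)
The plan is to argue by contradiction: suppose $A_1$ is not $(\ell+1)$-redeemable, and also that every $j \in M_G^{\ell+1}(A_1) \setminus M_G^\ell(A_1)$ satisfies $|A_j| < \lambda|A_1|$; we want to derive a contradiction, ideally by exhibiting a witness set $I$ that certifies $(\ell+1)$-redeemability of $A_1$. The natural candidate for $I$ is the $(\ell+1)$-thick component of $A_1$, i.e.\ $I := M_G^{\ell+1}(A_1)$. By construction this set automatically satisfies conditions (i) (trivially, since $1 \in I$) and (iii) (connectedness of $H_{\ell+1}[I]$) in the definition of $\ell$-redeemable — the only condition that can fail is (ii), namely $\Phi(G[A_I]) \geq \lambda^{-1/3}\nu_{\ell+1}(A_I)$. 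So the whole argument reduces to showing that, under the stated size hypothesis, condition (ii) holds for this $I$.

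To control the two sides of (ii): for the lower bound on $\Phi(G[A_I])$, I would invoke \Cref{lem:thick-graph-exp}, which gives $\Phi(G[A_I]) \geq \lambda^{\ell+1+0.01}$ for any $(\ell+1)$-thick component $I$. For the upper bound on $\nu_{\ell+1}(A_I)$ (more precisely $\nu_{\ell+1}(I)$), I would bound the revealed conductance of $I = M_G^{\ell+1}(A_1)$ directly from \Cref{def:revealed-conductance}. Since $I$ is a full connected component of $H_{\ell+1}$, there are \emph{no} $H_{\ell+1}$-edges crossing out of $I$, so the first term $w(E_{H_{\ell+1}}(I,\overline I))/|A_I|$ vanishes, and we are left only with the ``light edges'' term $\lambda^{\ell+1}|A_1|\cdot|\{(i,j): i\in I, j\notin I\}|/|A_I|$. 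The numerator is at most $\lambda^{\ell+1}|A_1|\cdot s^2$, and the denominator $|A_I| \geq |A_1|$ (as $1 \in I$), so $\nu_{\ell+1}(I) \leq \lambda^{\ell+1} s^2$. Comparing: we need $\lambda^{\ell+1.01} \geq \lambda^{-1/3}\cdot \lambda^{\ell+1}s^2$, i.e.\ $\lambda^{-1/3} \geq \lambda^{-0.01}s^2$ — but wait, this goes the wrong way, since $\lambda$ is tiny and $\lambda^{-1/3} \gg \lambda^{-0.01}$; in fact $\lambda^{-0.01}s^2 \leq \lambda^{-1/3}$ holds comfortably for the chosen $\lambda$ (which is polynomially small in $s,\dmax,\ldots$), so condition (ii) \emph{is} satisfied and $A_1$ \emph{is} $(\ell+1)$-redeemable, contradicting our assumption. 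This shows the size hypothesis must fail, but it isn't yet what we want — we need to extract the \emph{specific} conclusion that the new element is large.

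Therefore the real argument must be more refined: I should not pass to the entire component $I = M_G^{\ell+1}(A_1)$, but rather track how the component \emph{grows} as $\ell$ increases by one. Let $J := M_G^\ell(A_1)$ and $I := M_G^{\ell+1}(A_1) \supseteq J$. Since $A_1$ is not $(\ell+1)$-redeemable, in particular $J$ (which is connected in $H_\ell \subseteq H_{\ell+1}$ and contains $1$) fails condition (ii) at level $\ell+1$: $\Phi(G[A_J]) < \lambda^{-1/3}\nu_{\ell+1}(J)$. By \Cref{lem:thick-graph-exp} applied at level $\ell$, $\Phi(G[A_J]) \geq \lambda^{\ell+1.01}$, so $\nu_{\ell+1}(J) > \lambda^{\ell+1.01+1/3} = \lambda^{\ell+4/3+\text{(tiny)}}$ — i.e.\ the revealed conductance of $J$ at level $\ell+1$ is ``large.'' Now decompose $\nu_{\ell+1}(J)$: the light-edge term contributes at most $\lambda^{\ell+1}s^2/1 \cdot (|A_1|/|A_J|) \leq \lambda^{\ell+1}s^2$, which is \emph{smaller} than the lower bound $\lambda^{\ell+4/3}$ we just derived (again by the choice of $\lambda$, since $s^2 \ll \lambda^{1/3 - 1}$... check sign: $\lambda^{\ell+1}s^2$ vs $\lambda^{\ell+4/3}$ — we need $\lambda^{\ell+1}s^2 < \lambda^{\ell+4/3}$, i.e.\ $s^2 < \lambda^{1/3}$, which is \emph{false} since $\lambda$ is tiny). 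So the light-edge term alone could account for it; this means I instead have to use that there must be a \emph{new heavy edge} — an edge of $H_{\ell+1}$ leaving $J$ that was not in $H_\ell$ — and the endpoint $A_j$ of that edge, lying in $M_G^{\ell+1}(A_1)\setminus M_G^{\ell}(A_1)$, must be large. Concretely: the $H_{\ell+1}$-edges leaving $J$ carry total weight $w(E_{H_{\ell+1}}(J,\overline J))$, and dividing by $|A_J|$ this contributes to $\nu_{\ell+1}(J)$; since the light term is bounded by $\lambda^{\ell+1}s^2 \cdot |A_1|/|A_J|$, the heavy term must be at least $\tfrac12\nu_{\ell+1}(J)\cdot|A_J| \gtrsim \lambda^{\ell+4/3}|A_J| \geq \lambda^{\ell+4/3}|A_1|$ in absolute weight (for appropriately chosen constants, absorbing the light term). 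A single heavy edge $(i,j)$ with $i\in J$, $j\notin J$ has weight $|E(A_i,A_j)| \leq \min(\vol(A_i),\vol(A_j)) \leq \dmax\min(|A_i|,|A_j|) \leq \dmax|A_j|$, and since $A_j \notin M_G^\ell(A_1)$ but $A_j \in M_G^{\ell+1}(A_1)$ — wait, is $j$ necessarily in the component $M_G^{\ell+1}(A_1)$? Yes: $i\in J = M_G^\ell(A_1)\subseteq M_G^{\ell+1}(A_1)$ and $(i,j)\in E(H_{\ell+1})$, so $j\in M_G^{\ell+1}(A_1)$, and $j\notin J$ means $j\notin M_G^\ell(A_1)$. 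There are at most $s^2$ such heavy crossing edges (really at most $s$ distinct endpoints $j$, each with at most $s$ edges, or just crudely $\leq s^2$ edges total), so some single heavy edge $(i,j)$ has weight $\geq \lambda^{\ell+4/3}|A_1|/s^2$, whence $|A_j| \geq \lambda^{\ell+4/3}|A_1|/(s^2\dmax)$.

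\textbf{The main obstacle} is the arithmetic of matching exponents of $\lambda$ against $\poly(s,\dmax,\log|V|,\log(1/\eps))$ factors: one needs to verify that the slack $\lambda^{1/3}$ (from condition (ii)) and $\lambda^{0.01}$ (from \Cref{lem:thick-graph-exp}) are enough to dominate the $s^2\dmax$-type losses incurred when (a) bounding the light-edge term, (b) splitting the weight among $\leq s^2$ heavy edges, and (c) converting edge-weight to vertex-count via $\dmax$ — and in the end to conclude the clean bound $|A_j|\geq \lambda|A_1|$ as stated (so the final exponent on $\lambda$ needs to come out to exactly $1$, or better, meaning $\lambda^{\ell+4/3}/(s^2\dmax) \geq \lambda$ must hold, i.e.\ $\ell+4/3 \leq 1 + \log_{1/\lambda}(s^2\dmax)$, which constrains the range of $\ell$ for which the statement is used, or forces a reindexing). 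This is precisely the kind of bookkeeping that \Cref{eq:lambda-def}'s enormous exponent ($-1000$) is designed to absorb, so I expect it to go through, but it requires care to set up the inequalities in the right direction and to keep the component-growth indexing ($J$ vs.\ $I$, level $\ell$ vs.\ $\ell+1$) straight throughout. Everything else — conditions (i) and (iii) being automatic for thick components, the expansion bound from \Cref{lem:thick-graph-exp}, the monotonicity of revealed conductance from \Cref{lem:prayer} — is essentially bookkeeping around these two quantitative comparisons.
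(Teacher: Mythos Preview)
Your refined approach has a genuine gap that the choice of $\lambda$ cannot rescue. The bound you extract on the new endpoint is $|A_j|\gtrsim \lambda^{\ell+c}|A_1|/(s^2\dmax)$ for some constant $c$, and you need this to be at least $\lambda|A_1|$. Rearranging gives $\lambda^{\ell+c-1}\geq s^2\dmax$; but $\lambda<1$, so for any $\ell\geq 1$ the left side is at most $\lambda^{c}<1<s^2\dmax$. No amount of shrinking $\lambda$ fixes this: the $\ell$-dependence in the exponent is fatal, because the lemma must hold uniformly for all $\ell$ up to $\Theta(\log s)$. (Relatedly, your warm-up paragraph contains a sign slip: the inequality you need is $\lambda^{0.01}\geq \lambda^{-1/3}s^2$, which is false, so $I=M_G^{\ell+1}(A_1)$ does \emph{not} automatically satisfy condition (ii).) The heavy-edge-implies-large-endpoint idea is simply too weak here: any edge of $H_{\ell+1}$ only certifies $|A_j|\geq \lambda^{\ell+1}|A_1|/\dmax$, and the conductance argument cannot beat the threshold defining $H_{\ell+1}$ by more than a $\poly(s,\dmax)$ factor.

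The paper's proof takes a completely different route. It argues by contradiction, assuming every new $j\in M_G^{\ell+1}(A_1)\setminus M_G^\ell(A_1)$ has $|A_j|<\lambda|A_1|$, and then runs a greedy procedure: start from $J_0=M_G^\ell(A_1)$ and, as long as some $j\notin J_k$ has $|E(A_{J_k},A_j)|\geq s^3\dmax\lambda^{\ell+1}|A_1|$, add it. The key is that the smallness hypothesis keeps $\vol(A_{J_k\setminus J_0})\leq s\dmax\lambda|A_1|$ negligible compared to $|A_1|$, so one can maintain $\Phi(G[A_{J_k}])\geq \lambda^\ell\theta/(4s^2\dmax^3)$ throughout. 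Since each $J_k$ is connected in $H_{\ell+1}$ and contains $1$, the non-redeemability of $A_1$ forces $\nu_{\ell+1}(J_k)\geq \lambda^{1/3}\Phi(G[A_{J_k}])$, which (after subtracting the light-edge contribution) yields $\Psi_G(A_{J_k})\geq s^4\dmax\lambda^{\ell+1}$ and hence a new $j$ to add. The procedure therefore runs for $s$ steps, producing $|J_s|>s$, a contradiction. The point is that the argument never tries to lower-bound any single $|A_j|$; it uses the assumed \emph{upper} bound on all of them to keep the greedy loop alive indefinitely.
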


\begin{proof}
	Towards a contradiction, suppose no such set $A_j$ exists. We will derive a contradiction by examining how many rounds the loop will run in the following procedure:
	
\begin{center}
\fbox{\parbox{0.8\textwidth}{
	\begin{enumerate}
		\item Set $J_0 = M_G^\ell(A_1)$ and $k = 0$. 
	
		\item While there exists a $j \in [s] \setminus J_{k}$ such that $|E(A_{J_k}, A_j)| \geq s^3 \dmax  \lambda^{\ell + 1} |A_1|$:
		\begin{enumerate}
			\item Let $j$ be the choice which maximizes $|E(A_{J_k}, A_j)|$.
	
			\item Update $J_{k+1} \gets \{j\} \cup J_k$ and $k \gets k+1$.
		\end{enumerate}
	\end{enumerate}
}}
\end{center}
	
	We claim that the loop will run for $s$ rounds, which is a contradiction as we would have $|J_{s}| \geq s + 1$ even though $J_{s} \subseteq [s]$. We'll proceed by induction on $k$ to show that in each of the first $s$ rounds the protocol will find a $j$ in the next round such that $|E(A_{J_k}, A_j)| \geq s^3\dmax  \lambda^{\ell+1} |A_1|$.
	
	Suppose that the procedure has run for $k$ steps so far, and let $j_1, ..., j_k$ denote the choices of $j$ during the first $k$ rounds in order. We will first show that \[\Phi(G[A_{J_k}]) \geq \frac{\lambda^\ell \theta}{4s^2 d^3_{\mathrm{max}}}.\] In the case when $|J_k| = 1$, this follows by definition of the cover $\mathcal{C}$, so we will prove this in the case when $|J_k| > 1$.
	
Consider a nonempty set $S \subseteq J_k$ with $\vol_{A_{J_k}}(A_S) \leq \frac{1}{2} \vol_{A_{J_k}}(A_{J_k})$. Such a set must exist since $|J_k| > 1$, by taking any strict subset of $J_k$ or its complement. We now consider two cases, corresponding to whether or not $S \cap M_G^\ell(A_1)$ is empty, in order to show that $A_S$ has large conductance in $G[A_{J_k}]$.
	
	First, suppose that $S \cap M_G^\ell(A_1) \not = \emptyset$. In this case, we have that $\vol_{A_{J_k}}(A_{J_k \setminus M_G^\ell(A_1)}) \leq k \dmax  \lambda |A_1| \leq s \dmax  \lambda |A_1| < |A_1| \leq \vol(A_1)$, and hence we have that $M_G^\ell(A_1) \not \subseteq S$, since if $S$ contained $M^\ell_G(A_1)$ then this would violate the assumption that $\Vol_{A_{J_k}}(A_S) \leq {\frac 1 2} \Vol_{A_{J_k}}(A_{J_k})$. Since $M_G^\ell(A_1)$ is connected in $H_\ell$, it follows that there exists an $i \in M_G^\ell(A_1) \setminus S$ and a $j \in S$ such that
			\[|E(A_i, A_j)| \geq \lambda^\ell |A_1|.\]  
		Thus, we conclude that
			\[\Phi_{G[A_{J_k}]}(A_S) \geq \frac{|E(A_i, A_j)|}{\vol(A_S)} \geq \frac{\lambda^\ell |A_1|}{s \dmax  |A_1|} \geq \frac{\lambda^{\ell}}{s \dmax }.\]
			
	Second, suppose that $S \cap M^\ell_G(A_1) = \emptyset$. In this case, let $m$ be the smallest integer such that $A_{j_m} \in S$. By construction, $|E(A_{j_m}, A_{J_{m-1}})| \geq s^3 \dmax  \lambda^{\ell +1} |A_{1}|$. By the minimality of $m$, these edges all cross the cut $(S, \overline{S})$, so
		\[\Phi_{G[A_{J_k}]}(A_S) \geq \frac{|E(A_{j_m}, A_{J_{m-1}})|}{\vol(A_S)} \geq \frac{s^3 \dmax  \lambda^{\ell+1} |A_1|}{k \lambda |A_1| \dmax } \geq s^2 \lambda^{\ell}. \]  
	
	This concludes the case analysis; either way we have $\Phi_{G[A_{J_k}]}(A_S) \geq \frac{\lambda^{\ell}}{s \dmax }$, so we conclude by \Cref{lem:cut-approx} as desired that 
		\[\Phi(G[A_{J_k}]) \geq \frac{\lambda^\ell \theta}{4s^2 d^3_{\mathrm{max}}}.\]
	
	We now use this to conclude the proof. Since $A_1$ is not $(\ell+1)$-redeemable, 
	but $1 \in J_k$
	and $J_k$ is connected in $H_{\ell + 1}$, it follows that condition \ref{item2} in the definition of a redeemable set does not hold, i.e.,
		\[\nu_{\ell+1}(A_{J_k}) \geq \lambda^{1/3} \Phi(G[A_{J_k}]). \]
	
	The definition of revealed conductance gives that 
	\[\nu_{\ell+1}(A_{J_k})\leq \dmax \cdot \Psi_{G}(A_{J_k}) +  \frac{s^2 \lambda^{\ell + 1} |A_1|}{|A_{J_k}|},
	\]
	and so we can combine the above three inequalities to bound
		\[\Psi_{G}(A_{J_k}) \geq \frac{\lambda^{1/3}}{\dmax }  \Phi(G[A_{J_k}]) - \frac{s^2 \lambda^{\ell + 1} |A_1|}{|A_{J_k}|} \geq \lambda^{1/3} \frac{\lambda^\ell \theta}{4s^2 d^4_{\mathrm{max}}} - s^2 \lambda^{\ell + 1} \geq \lambda^{\ell + 1/2} \geq s^4\dmax  \lambda^{\ell+1}, \]
	where in the last two steps, we used the definition $\lambda = (1000 s\dmax \theta^{-1} \log(|G|)\log(1/\eps))^{-1000}$.
		
	It then follows that there exists a $j$ such that
		\[E(A_{J_k}, A_j) \geq s^3 \dmax  \lambda^{\ell + 1} |A_1|,\]
	which completes the inductive step. Thus the procedure runs would run for at least $s$ steps, a contradiction.
\end{proof}

Finally, we will use this to prove that, if $A_1$ is not $\ell$-redeemable, then $M_G^\ell(A_1)$ cannot be too small, and in fact must have size exponentially large in $\ell$. In other words, we aim to prove an exponential lower bound on the following quantity.

\begin{definition}[Minimum $\ell$ Component Size] \label{def:min-component-size}
We define the minimum $\ell$-component size as
		\[f(\ell) := \min |M_G^\ell(A_1)|, \]
	where the minimum is taken over all possible graphs $G$ and disjoint $\theta$-covers $\mathcal{C}$ of size $s$ of $G$, such that $A_1$ is not $\ell$-redeemable. If no such graph and cover exists, we define $f(\ell) := \infty$.
\end{definition}

\begin{lemma}
\label{lem:big-sets-big-components}
Let $G$ be a graph, $\mathcal{C} = \{A_1, ..., A_s\}$ a disjoint $\theta$-cover of $G$, and $j \in [s]$ be such that 
	$A_j$ is the largest element of its connected component $M_G^\ell(A_j)$. If $|A_j| \geq \lambda |A_1|$ and $A_j$ is not $\ell$-redeemable, then $|M_G^\ell(A_j)| \geq f(\ell - 1)$.
\end{lemma}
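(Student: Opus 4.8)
The plan is to consider the graph $G' := G[A_{M_G^\ell(A_j)}]$ and the cover $\mathcal{C}'$ obtained by restricting $\mathcal{C}$ to those sets $A_i$ with $i \in M_G^\ell(A_j)$; note that since $A_j$ is the largest element of its component, $A_j$ plays the role of ``$A_1$'' (the largest set) for the pair $(G', \mathcal{C}')$. The key observation is that $\ell$-redeemability of $A_j$ is essentially an intrinsic property of its own component: all the edges referenced in conditions (i)--(iii), and in the revealed conductance $\nu_\ell$, involve only sets inside $M_G^\ell(A_j)$, so $A_j$ is not $\ell$-redeemable with respect to $(G,\mathcal{C})$ if and only if it is not $\ell$-redeemable with respect to $(G',\mathcal{C}')$. (Here I need to double-check the normalization in $\nu_\ell$: the terms $\lambda^\ell |A_1|$ use the \emph{global} largest set, whereas within $(G',\mathcal{C}')$ the normalization would be $\lambda^\ell |A_j| \leq \lambda^\ell |A_1|$; since $|A_j| \geq \lambda |A_1|$, these differ by only a $\lambda^{-1}$ factor, which is where the hypothesis $|A_j| \geq \lambda|A_1|$ gets used. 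The definition of $\Phi(G[A_I])$ is unchanged since it only refers to induced subgraphs. I expect the redeemability notions to either coincide exactly or differ by a factor absorbed into the slack $\lambda^{-1/3}$ vs.\ $\lambda^{-1}$ in condition~(ii), and verifying this is the main technical point.)

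Given this, the argument is short. Since $A_j$ is not $\ell$-redeemable with respect to $(G', \mathcal{C}')$, and $(G', \mathcal{C}')$ is a disjoint $\theta$-cover of size $s' := |M_G^\ell(A_j)| \leq s$ in which $A_j$ is the largest set, the definition of the minimum $\ell$-component size $f$ applies: $A_j$'s component in the $\ell$-thick graph \emph{of $(G',\mathcal{C}')$} has size at least $f(\ell)$ — but wait, that would give $f(\ell)$, not $f(\ell-1)$, and also the component of $A_j$ in $H_\ell(G',\mathcal{C}')$ is all of $[s']$ by construction. So the right move must instead invoke \Cref{lem:big-new-elem}: since $A_j$ is not $(\ell)$-redeemable, hence not viewing it at level $\ell$ as the output of a level-$(\ell-1)$ situation... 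Let me restructure: the intended chain is surely that \emph{if} $A_j$ were not $(\ell-1)$-redeemable we could recurse, but in fact $A_j$ not being $\ell$-redeemable, combined with \Cref{lem:big-new-elem} applied at level $\ell-1$ (with $A_j$ as the largest set), forces $M_G^{\ell}(A_j) \supsetneq M_G^{\ell-1}(A_j)$ with a large new element, and meanwhile $M_G^{\ell-1}(A_j)$ itself has size $\geq f(\ell-1)$ because $A_j$ not $\ell$-redeemable implies $A_j$ not $(\ell-1)$-redeemable (redeemability is monotone in $\ell$, as noted in the text). Thus $|M_G^\ell(A_j)| \geq |M_G^{\ell-1}(A_j)| \geq f(\ell-1)$.

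So the clean skeleton is: (1) observe redeemability is monotone in $\ell$, so $A_j$ not $\ell$-redeemable $\implies$ $A_j$ not $(\ell-1)$-redeemable; (2) pass to the component $G' := G[A_{M^{\ell-1}_G(A_j)}]$ with its restricted cover, in which $A_j$ is the largest set — here is where $|A_j| \geq \lambda|A_1|$ is needed to control the change in $\nu$'s normalization and confirm non-redeemability transfers to $(G', \mathcal{C}')$; (3) apply the definition of $f$ to $(G', \mathcal{C}')$ to conclude $|M^{\ell-1}_{G'}(A_j)| = |M^{\ell-1}_G(A_j)| \geq f(\ell-1)$; (4) finally $M^\ell_G(A_j) \supseteq M^{\ell-1}_G(A_j)$ since $H_{\ell-1}$ is a subgraph of $H_\ell$ (more edges appear as $\ell$ grows), giving $|M^\ell_G(A_j)| \geq f(\ell-1)$. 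The main obstacle I anticipate is step (2): carefully checking that the slightly different normalizing constant $|A_j|$ versus $|A_1|$ in the revealed-conductance definition does not break the transfer of non-redeemability, which is exactly what the hypothesis $|A_j| \geq \lambda |A_1|$ is there to guarantee.
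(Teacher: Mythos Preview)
Your overall strategy is right---restrict to a subgraph where $A_j$ is the largest set, show non-redeemability transfers there, and invoke the definition of $f$---and you correctly identify that $|A_j|\geq \lambda|A_1|$ is what makes the threshold comparison work. But the specific skeleton has a real gap in step~(2). You want to pass from ``$A_j$ not $(\ell{-}1)$-redeemable in $G$'' to ``$A_j$ not $(\ell{-}1)$-redeemable in $G'=G[A_{M_G^{\ell-1}(A_j)}]$''. Contrapositively, given $J$ certifying $(\ell{-}1)$-redeemability in $G'$, you need (i), (ii), (iii) at level $\ell{-}1$ in $G$. But the threshold of $H_{\ell-1}(G')$ is $\lambda^{\ell-1}|A_j|\leq \lambda^{\ell-1}|A_1|$, the threshold of $H_{\ell-1}(G)$; so $H_{\ell-1}(G')$ has \emph{more} edges, and connectivity in $H_{\ell-1}(G')$ need not transfer to $H_{\ell-1}(G)$. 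Even if you instead jump directly to level $\ell$ in $G$ (so that thresholds go the right way), the $\nu$-comparison still breaks: for $b\in M_G^{\ell}(A_j)\setminus M_G^{\ell-1}(A_j)$, the contribution in $G'$ (where $b$ is padded out as an empty set) is $\lambda^{\ell-1}|A_j|$, whereas in $G$ it is $\max(|E(A_a,A_b)|,\lambda^\ell|A_1|)$, and all you know is $|E(A_a,A_b)|<\lambda^{\ell-1}|A_1|$, which can exceed $\lambda^{\ell-1}|A_j|$. So the per-pair inequality $q_0\geq q_1$ fails exactly on the ``new'' sets in $M_G^\ell(A_j)\setminus M_G^{\ell-1}(A_j)$---and such sets exist (this is precisely the content of \Cref{lem:big-new-elem}).

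The paper avoids this by restricting to $G[A_{\geq j}]$ (all sets no larger than $A_j$, padded with empties for indices $<j$) rather than to the $(\ell{-}1)$-component, and by comparing level $\ell{-}1$ in the subgraph directly with level $\ell$ in $G$. Then the empty padding slots correspond to indices $b<j$, which by the hypothesis that $A_j$ is largest in $M_G^\ell(A_j)$ lie \emph{outside} $M_G^\ell(A_j)$; for those, $(a,b)\notin E(H_\ell^G)$ and $q_1(a,b)=\lambda^\ell|A_1|\leq \lambda^{\ell-1}|A_j|$, so the termwise comparison goes through. Your initial instinct to restrict to $G[A_{M_G^\ell(A_j)}]$ would also have worked for the same reason; the mistake was pivoting to $M_G^{\ell-1}(A_j)$.
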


\begin{proof}
Define $A_{\geq j}:= \bigcup_{j' \geq j} A_{j'}$. Consider the induced graph $G[A_{\geq j}]$ and the cover of this graph $\{A_j, ..., A_{s},$
$B_{s+1}, ..., B_{s+j-1}\}$, where for $i\geq 1$ we define $B_{s+i} = \emptyset$. 

We first show that it suffices to show that $j$ is not $(\ell-1)$-redeemable in $G[A_{\geq j}]$. To see this we first set up some notation. Let $H_{\ell - 1}^{\geq j}$ denote the $(\ell-1)$-thick graph of $G[A_{\geq j}]$. Since by assumption $j$ is not $(\ell-1)$-redeemable in $G[A_{\geq j}]$, it follows that $|M_{G[A_{\geq j}]}^{\ell-1}(A_j)| \geq f(\ell -1)$. On the other hand, there is an edge between two components $A_a$ and $A_b$ in $H_{\ell-1}^{\geq j}$ if and only if $|E_{G[A_{\geq j}]} (A_a, A_b)| \geq \lambda^{\ell-1} |A_j| \geq \lambda^\ell |A_1|$. Thus $M_{G[A_{\geq j}]}^{\ell-1}(A_j) \subseteq M_G^\ell(A_j)$, which proves the result. 

It now remains to show that $j$ is not $(\ell-1)$-redeemable in $G[A_{\geq j}]$. We will prove this by contradiction, and thus assume to the contrary that it is $(\ell-1)$-redeemable, i.e., that there exists a set $J \subseteq M_{G[A_{\geq j}]}^{\ell-1}(A_j)$ such that 
\begin{enumerate}[label=(\roman*)]
	\item there is a path in $H_{\ell-1}^{\geq {j}}$ from $j$ to a some $i \in J$,
	\item $\Phi(G[A_J]) \geq \lambda^{-1/3} \nu_{\ell-1}^{H_{\ell-1}^{\geq j}}(J)$, and 
	\item $H_{\ell-1}^{\geq j}[J]$ is a connected graph.
\end{enumerate} Since every edge in $H_{\ell-1}^{\geq j}$ corresponds to an edge in $H_\ell$, we have that (i) and (iii) hold in $H_\ell$ as well. 
We will next prove that $\nu_{\ell-1}^{H^{\geq j}_{\ell-1}}(J) \geq \nu_{\ell}^{H_{\ell}}(J)$. From this, it follows that (ii) also holds in $H_\ell$, and hence that $j$ is $\ell$-redeemable in $G$, a contradiction to the initial assumption that $A_j$ is not $\ell$-redeemable, completing the proof.

We finally prove the desired inequality, that $\nu_{\ell-1}^{H^{\geq j}_{\ell-1}}(J) \geq \nu_{\ell}^{H_{\ell}}(J)$. Recall that these quantities are defined as:

$$|A_J| \cdot \nu_{\ell-1}^{H^{\geq j}_{\ell-1}}(J) = w(E_{H^{\geq j}_{\ell-1}}(J,\overline{J})) + \lambda^{\ell-1} |A_j| \cdot |\{ (a,b) : a \in J, b \notin J, (a,b) \notin E(H^{\geq j}_{\ell-1}) \}|\},$$

$$|A_J| \cdot \nu_{\ell}^{H_{\ell}}(J) = + \lambda^{\ell} |A_1| \cdot |\{ (a,b) : a \in J, b \notin J, (a,b) \notin E(H_{\ell}) \}|\}.$$

We introduce notation to slightly rewrite these sums. For $a \in J$ and $b \notin J$, define the quantities
$$q_0(a,b) := \begin{cases}
	w(E_{H^{\geq j}_{\ell-1}}(\{a\},\{b\})), & \text{if $(a,b) \in E(H^{\geq j}_{\ell-1})$}\\
	\lambda^{\ell-1} |A_j|, & \text{otherwise,}
\end{cases}$$

$$q_1(a,b) := \begin{cases}
	w(E_{H_{\ell}}(\{a\},\{b\})), & \text{if $(a,b) \in E(H_{\ell})$}\\
	\lambda^{\ell} |A_1|, & \text{otherwise.}
\end{cases}$$

We can then rewrite the above as

$$|A_J| \cdot \nu_{\ell-1}^{H^{\geq j}_{\ell-1}}(J) = \sum_{a \in J, b \notin J} q_0(a,b),$$

$$|A_J| \cdot \nu_{\ell}^{H_{\ell}}(J) = \sum_{a \in J, b \notin J} q_1(a,b).$$

We will show that, for all $a \in J$ and $b \notin J$, we have $q_0(a,b) \geq q_1(a,b)$, which will complete the proof. We will show this by casework, depending on whether or not $(a,b) \in E(H^{\geq j}_{\ell-1})$.

If $(a,b) \in E(H^{\geq j}_{\ell-1})$, then we have $$w_{H^{\geq j}_{\ell-1}}(\{a\},\{b\}) \geq \lambda^{\ell-1}|A_j| \geq \lambda^\ell |A_1|,$$ where the first inequality is by definition of the $(\ell-1)$-thick graph, and the second is because $|A_j| \geq \lambda |A_1|$. Thus, $(a,b) \in E(H_{\ell})$ as well, and so we have $q_0(a,b) = q_1(a,b) = w(E_{H_{\ell}}(\{a\},\{b\}))$.

If $(a,b) \notin E(H^{\geq j}_{\ell-1})$, then $q_0(a,b) = \lambda^{\ell-1} |A_j|$. If $q_1(a,b) = \lambda^\ell |A_1|$ then we have $q_0(a,b) \geq q_1(a,b)$ because $|A_j| \geq \lambda |A_1|$. Otherwise, we have $$q_1(a,b) = w(E_{H_\ell} (\{a\},\{b\})) \leq \lambda^{\ell-1}|A_j| = q_0(a,b),$$
where the inequality follows from $(a,b) \notin E(H^{\geq j}_{\ell-1})$ by definition of the $(\ell-1)$-thick graph. This completes the proof.
\end{proof}

\begin{lemma}
\label{lem:thick-comp-bound}
$f(\ell) \geq F_\ell$, where $F_\ell$ is the $\ell$th Fibonacci number.
\end{lemma}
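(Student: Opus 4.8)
The plan is to prove the lower bound $f(\ell) \geq F_\ell$ by strong induction on $\ell$, using the recurrence structure that the preceding lemmas have been building toward. The base cases $f(0) \geq F_0$ and $f(1) \geq F_1$ should be immediate: for small $\ell$ the Fibonacci numbers are tiny (at most $1$), and any $\ell$-thick component $M_G^\ell(A_1)$ contains at least $A_1$ itself, so $|M_G^\ell(A_1)| \geq 1 \geq F_\ell$. (One should double-check the exact indexing convention for $F_0, F_1$ being used, but with $F_0 = F_1 = 1$ or $F_1 = F_2 = 1$ this is trivial either way.)

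\textbf{Inductive step.} Fix $\ell \geq 2$ and suppose $f(\ell-1) \geq F_{\ell-1}$ and $f(\ell-2) \geq F_{\ell-2}$. Let $G$ and $\calC = \{A_1,\dots,A_s\}$ be any graph and disjoint $\theta$-cover witnessing $f(\ell)$, so that $A_1$ is \emph{not} $\ell$-redeemable; I need $|M_G^\ell(A_1)| \geq F_\ell = F_{\ell-1} + F_{\ell-2}$. Since being $\ell$-redeemable is implied by being $(\ell-1)$-redeemable (revealed conductance only decreases as $\ell$ increases, by \Cref{lem:prayer}, so condition \ref{item2} only gets easier — wait, that's backwards; I should instead note that $A_1$ not $\ell$-redeemable implies $A_1$ not $(\ell-1)$-redeemable, because $H_{\ell} \subseteq H_{\ell-1}$ as edge sets and $\nu_{\ell}(I) \le \nu_{\ell-1}(I)$). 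Hence $A_1$ is also not $(\ell-1)$-redeemable, so $|M_G^{\ell-1}(A_1)| \geq f(\ell-1) \geq F_{\ell-1}$. This gives the first $F_{\ell-1}$ of the count, since $M_G^{\ell-1}(A_1) \subseteq M_G^\ell(A_1)$ (more thick edges can only grow the component).

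\textbf{The hard part} will be extracting the additional $F_{\ell-2}$ worth of vertices from the ``new'' part of the component, i.e.\ from $M_G^\ell(A_1) \setminus M_G^{\ell-1}(A_1)$. This is exactly what \Cref{lem:big-new-elem} and \Cref{lem:big-sets-big-components} were set up to deliver: since $A_1$ is not $\ell$-redeemable (hence not $\ell$-redeemable, applying \Cref{lem:big-new-elem} with $\ell-1$ in place of $\ell$), there is some $j \in M_G^\ell(A_1) \setminus M_G^{\ell-1}(A_1)$ with $|A_j| \geq \lambda |A_1|$. I would then want to apply \Cref{lem:big-sets-big-components} to $A_j$: if $A_j$ is the largest element of its component $M_G^\ell(A_j)$ and $A_j$ is not $\ell$-redeemable, then $|M_G^\ell(A_j)| \geq f(\ell-1) \geq F_{\ell-1}$ — hmm, but that overcounts relative to what I need and also I haven't verified $A_j$ is largest in its component or non-redeemable. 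The cleaner route: note $M_G^\ell(A_j) \subseteq M_G^\ell(A_1)$ since $j$ is connected to $1$ in $H_\ell$, and $A_j$ is not $\ell$-redeemable (if it were, condition (i) of redeemability would let the redeeming set $I$ also serve to redeem $A_1$ via the $H_\ell$-path from $1$ to $j$, contradiction). To invoke \Cref{lem:big-sets-big-components} I must have $A_j$ be the largest in its component; if not, I can pass to the actual largest set $A_{j'}$ in $M_G^\ell(A_j)$, which still satisfies $|A_{j'}| \geq |A_j| \geq \lambda|A_1|$ and is still non-$\ell$-redeemable and still lies in $M_G^\ell(A_1)$. Then \Cref{lem:big-sets-big-components} yields $|M_G^\ell(A_{j'})| \geq f(\ell-1) \geq F_{\ell-1}$ — again this is $F_{\ell-1}$, not $F_{\ell-2}$, which is even better than needed but I must be careful about double-counting with the earlier $M_G^{\ell-1}(A_1)$ block.

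\textbf{Reconciling the counts.} The subtle point — and the main obstacle — is ensuring the two contributions are (essentially) disjoint. The first block $M_G^{\ell-1}(A_1)$ has size $\geq F_{\ell-1}$; the component $M_G^\ell(A_{j'})$ has size $\geq f(\ell-1) \ge F_{\ell-1}$, but it may overlap $M_G^{\ell-1}(A_1)$. The right framing is probably a \emph{recursive} one matching the Fibonacci recurrence: after removing $M_G^{\ell-1}(A_1)$ and restricting to $G[A_{\geq \text{something}}]$ as in the proof of \Cref{lem:big-sets-big-components}, the set $A_{j'}$ lives in a sub-instance where it is non-$(\ell-1)$-redeemable, giving $|M^{\ell-1}(A_{j'})| \geq f(\ell-1)$ in that sub-instance — but I want $f(\ell-2)$ here to close the recurrence $F_\ell = F_{\ell-1}+F_{\ell-2}$. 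So more likely the intended argument peels off \emph{one} new large set at a time: $A_1$ not $\ell$-redeemable forces a new large set $A_j$ into the $\ell$-component (via \Cref{lem:big-new-elem}), and that new set's component, computed in the restricted graph $G[A_{\geq j}]$ at level $\ell - 1$, contributes $\geq f(\ell-1)$; but combined with $M_G^{\ell-1}(A_1) \geq f(\ell-1)$ and disjointness of the two restricted regions, one gets $f(\ell) \geq f(\ell-1) + f(\ell-1) \geq 2F_{\ell-1} \geq F_\ell$ — which would actually prove an exponential bound directly. I suspect the actual proof is closer to: $|M_G^\ell(A_1)| \geq |M_G^{\ell-1}(A_1)| + |M^{\ell-1}_{G[A_{\geq j}]}(A_j)| \geq f(\ell-1) + f(\ell-2)$, where the $f(\ell-2)$ rather than $f(\ell-1)$ arises because $A_j$ may not be largest in its restricted component so one more level of the recursion is spent locating the largest element (as in the proof of \Cref{lem:big-sets-big-components}, which costs the drop from $\ell$ to $\ell-1$, and here applied to $A_1$'s new-element $A_j$ costs the drop from $\ell-1$ to $\ell-2$). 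Making the disjointness of $M_G^{\ell-1}(A_1)$ and this second block rigorous — they sit in $G[A_{\geq j}]$ whose vertex set excludes the larger sets, and $A_1$ is among those excluded — is the calculation I would carry out carefully, and it is where an off-by-one in the level bookkeeping could bite.
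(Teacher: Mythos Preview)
Your approach is essentially the paper's, and you correctly identify all the ingredients: the base cases, the implication ``not $\ell$-redeemable $\Rightarrow$ not $(\ell-1)$-redeemable'', the use of \Cref{lem:big-new-elem} to produce a new large $A_j$, and the redeemability transfer from $A_j$ back to $A_1$ via the $H_\ell$-path. You also arrive at the correct target inequality $f(\ell) \geq f(\ell-1) + f(\ell-2)$.

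The one genuine slip is the level at which you examine $A_j$'s component. You write ``pass to the actual largest set $A_{j'}$ in $M_G^\ell(A_j)$'' and then try to invoke \Cref{lem:big-sets-big-components} on $M_G^\ell(A_{j'})$. But $j \in M_G^\ell(A_1)$, so $M_G^\ell(A_j) = M_G^\ell(A_1)$, and its largest element is $A_1$ itself --- this is circular. The fix, which is what the paper does, is to drop one level: replace $j$ by the largest element of $M_G^{\ell-1}(A_j)$ (this new $j$ still lies in $M_G^\ell(A_1)\setminus M_G^{\ell-1}(A_1)$ and still satisfies $|A_j|\geq \lambda|A_1|$), observe that $A_j$ is not $(\ell-1)$-redeemable (since it is not $\ell$-redeemable), and apply \Cref{lem:big-sets-big-components} at level $\ell-1$ to obtain $|M_G^{\ell-1}(A_j)| \geq f(\ell-2)$.

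Once you work at level $\ell-1$, the disjointness you struggle with is automatic and requires no restriction to $G[A_{\geq j}]$: the sets $M_G^{\ell-1}(A_1)$ and $M_G^{\ell-1}(A_j)$ are distinct connected components of $H_{\ell-1}$ (precisely because $j\notin M_G^{\ell-1}(A_1)$), hence disjoint, and both sit inside $M_G^\ell(A_1)$. The passage to $G[A_{\geq j}]$ lives inside the proof of \Cref{lem:big-sets-big-components} and can be used as a black box here. Finally, your explanation for why the second term is $f(\ell-2)$ rather than $f(\ell-1)$ is not quite right: relabeling to the largest element of the $(\ell-1)$-component is free; the drop occurs simply because \Cref{lem:big-sets-big-components}, applied at level $\ell-1$, outputs $f((\ell-1)-1)$.
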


\begin{proof}
Let $G = (V,E)$ be a graph and $\mathcal{C} = \{A_1, ..., A_s\}$ a disjoint $\theta$-cover such that $A_1$ is not $\ell$-redeemable and $M_G^\ell(A_1) = f(\ell)$. Note that if no such graph and cover exist, then $f(\ell) = \infty$ by definition and so the statement holds. 

We now argue that $f(\ell) \geq f(\ell-1) + f(\ell - 2)$. First, by \Cref{lem:big-new-elem}, there exists a $j \in M_G^\ell(A_1) \setminus M_G^{\ell-1}(A_1)$ with $|A_j| \geq \lambda |A_1|$. Without loss of generality, let $A_j$ denote the largest element in $M_G^{\ell-1}(A_j)$.

Let us now show that $A_j$ is not $\ell$-redeemable. Assume to the contrary, and let $A_I$ be the set certifying that $A_j$ is $\ell$-redeemable. We use this to prove that $A_1$ is $\ell$-redeemable as well. Since $1$ and $j$ are in the same connected component in $H_\ell$, it follows that condition (i) in the definition of an $\ell$-redeemable set holds for $A_1$. Then, since conditions (ii) and (iii) only depend on $\ell$ and $A_I$, they are also satisfied. Thus $A_1$ is $\ell$-redeemable, a contradiction. 

We may thus assume $A_j$ is not $\ell$-redeemable. By \Cref{lem:big-sets-big-components}, we thus have $|M_G^{\ell-1}(A_j)| \geq f(\ell-2)$. Note also that, since $A_1$ is not $\ell$-redeemable, it follows that it is not $(\ell-1)$-redeemable, and so $|M^{\ell-1}_G(A_1)| \geq f(\ell-1)$ by definition of $f$. Finally, since  $j \in M_G^\ell(A_1) \setminus M_G^{\ell-1}(A_1)$ by definition, we can bound
	\[f(\ell) = |M_G^\ell(A_1)| \geq |M_G^{\ell-1}(A_1)| + |M_G^{\ell-1}(A_j)| \geq f(\ell -1) + f(\ell-2).\]
The result now follows by induction and noting that $f(1) \geq 1$ and $f(2) \geq 1$.
\end{proof}

As an immediate corollary of \Cref{def:min-component-size} and \Cref{lem:thick-comp-bound}, we get:

\begin{corollary}
\label{cor:A1-redeemable}
Let $G = (V,E)$ be a graph and
$\calC = \{A_1,\dots,A_s\}$ be a disjoint $\theta$-cover.
Then $A_1$ (the biggest set in the cover) is $O(\log(s))$-redeemable.
\end{corollary}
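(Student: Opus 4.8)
The plan is to combine the exponential growth of the Fibonacci numbers, established in \Cref{lem:thick-comp-bound}, with the trivial fact that an $\ell$-thick component of $A_1$ is a subset of $[s]$ and hence has size at most $s$. Concretely, \Cref{lem:thick-comp-bound} gives $f(\ell) \ge F_\ell$, and since $F_\ell = \Theta(\varphi^\ell)$ with $\varphi = (1+\sqrt{5})/2 > 1$, there is an absolute constant $C$ such that $F_\ell > s$ whenever $\ell \ge C\log s$. I would fix $\ell := \lceil C \log s \rceil = O(\log s)$ and argue by contradiction.

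Suppose $A_1$ is not $\ell$-redeemable for this choice of $\ell$. Then the pair $(G,\calC)$ is one of the instances over which the minimum in \Cref{def:min-component-size} is taken, so $f(\ell) \le |M_G^\ell(A_1)|$. On the other hand $M_G^\ell(A_1) \subseteq [s]$ by the definition of the $\ell$-thick component, so $|M_G^\ell(A_1)| \le s$. Chaining these inequalities with \Cref{lem:thick-comp-bound} yields $F_\ell \le f(\ell) \le |M_G^\ell(A_1)| \le s < F_\ell$, a contradiction. Hence $A_1$ must be $\ell$-redeemable for some $\ell = O(\log s)$, which is exactly the statement of the corollary.

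I do not expect a genuine obstacle here — this really is an immediate corollary, and the only points requiring minor care are bookkeeping ones: checking that the base cases $F_1 = F_2 = 1$ are consistent with the induction in \Cref{lem:thick-comp-bound} so that the bound $f(\ell)\ge F_\ell$ is non-vacuous for $\ell$ in the relevant range, and confirming that the constant $C$ can be made absolute (independent of $\theta, \dmax, n$), which follows since $s < F_\ell$ depends only on $s$ and $\ell$. If one wanted a cleaner downstream statement, one could additionally invoke the earlier observation that $\ell$-redeemability is monotone in $\ell$ to conclude that $A_1$ is in fact $\ell$-redeemable for \emph{every} $\ell \ge C\log s$, but this strengthening is not needed for the corollary as stated.
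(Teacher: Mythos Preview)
Your proposal is correct and matches the paper's approach exactly: the paper states this as an immediate corollary of \Cref{def:min-component-size} and \Cref{lem:thick-comp-bound}, and your contradiction argument via $F_\ell \le f(\ell) \le |M_G^\ell(A_1)| \le s$ is precisely the intended one-line deduction.
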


Finally, we see that this also implies that $A_1$ is contained in a good set with reasonable expansion as promised:

\begin{corollary}
\label{cor:A1-good}
Let $G$ be a graph and $\calC = \{A_1,\dots,A_s\}$ be a disjoint $\theta$-cover. Then, there exists a set $S \subseteq [s]$  with $1 \in S$ such that $A_S$ is a $\lambda^{O(\log(s))}$ expander and $A_S$ is $(\lambda^{-O(\log(s))},\lambda^{O(\log(s))})$ good.
\end{corollary}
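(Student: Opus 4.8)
The plan is to take $S := M_G^{\ell}(A_1)$, the $\ell$-thick component of $A_1$, where $\ell = O(\log s)$ is the value for which \Cref{cor:A1-redeemable} guarantees that $A_1$ is $\ell$-redeemable. Trivially $1 \in S$. Since $S$ is the vertex set of a connected subgraph of $H_\ell$, \Cref{lem:thick-graph-exp} gives $\Phi(G[A_S]) \geq \lambda^{\ell + 0.01} = \lambda^{O(\log s)}$, so $A_S$ is a $\lambda^{O(\log s)}$-expander as required; moreover the very same proof shows, more generally, that \emph{every} $I \subseteq [s]$ with $H_\ell[I]$ connected has $\Phi(G[A_I]) \geq \lambda^{\ell + 0.01}$, a fact I will reuse below. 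Throughout I use that $\lambda^{-1} = (1000 s\dmax\theta^{-1}\log(|G|)\log(1/\eps))^{1000}$ dwarfs any fixed polynomial in the relevant parameters, which is what makes the various $\lambda^{O(1)}$-slack estimates go through.

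It remains to show that $A_S$ is $(\lambda^{-O(\log s)}, \lambda^{O(\log s)})$-good with respect to $\mathcal{C}$. Fix a witness $I \subseteq [s]$ for the $\ell$-redeemability of $A_1$: there is a path $1 = i_0, i_1, \dots, i_m = j$ in $H_\ell$ ending at some $j \in I$, the graph $H_\ell[I]$ is connected, and $\Phi(G[A_I]) \geq \lambda^{-1/3}\nu_\ell(I)$. The path together with the connectivity of $H_\ell[I]$ places all of $I$ in the $\ell$-thick component of $A_1$, so $I \subseteq S$ and hence $A_I \subseteq A_S$. Combining \Cref{lem:rev-conductance-upper-bounds} with redeemability, $\Psi(A_I) \leq \nu_\ell(I) \leq \lambda^{1/3}\Phi(G[A_I])$, and since $\lambda^{1/3}$ lies far below $\wt O\!\big(\theta/(s^2\dmax^2\log(1/\eps))\big)$, the hypothesis of \Cref{lem:low-cond-good} holds. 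That lemma yields that $A_I$ is $(t, \tfrac12)$-good with respect to $\mathcal{C}$ for $t = \wt\Theta\!\big(s^2\dmax\log(1/\eps)/(\theta\,\Phi(G[A_I]))\big)$; plugging in $\Phi(G[A_I]) \geq \lambda^{\ell + 0.01}$ (the general form of \Cref{lem:thick-graph-exp} noted above, using that $H_\ell[I]$ is connected) together with $\ell = O(\log s)$ gives $t = \lambda^{-O(\log s)}$. So $A_S \supseteq A_I$ contains at least $|A_I|/2$ vertices $v$ with $T_{\mix}^{1/8\dmax,\eps}(v, \mathcal{P}(\mathcal{C})) \leq \lambda^{-O(\log s)}$.

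To finish, I would lower-bound $|A_I|$ relative to $|A_S|$. Each edge $(i_{k-1}, i_k)$ of $H_\ell$ certifies $|E(A_{i_{k-1}}, A_{i_k})| \geq \lambda^\ell|A_1|$; applying this to the last edge of the path (or, if $m = 0$, simply noting $A_j = A_1$) and using that $A_j$ has at most $\dmax|A_j|$ incident edges gives $|A_j| \geq \lambda^\ell|A_1|/\dmax$, hence $|A_I| \geq |A_j| \geq \lambda^\ell|A_1|/\dmax$. On the other hand $|A_S| \leq s|A_1|$ since $S \subseteq [s]$ and $A_1$ is the largest set in the cover. Therefore the fraction of vertices of $A_S$ with local mixing time at most $\lambda^{-O(\log s)}$ is at least $\tfrac{|A_I|/2}{|A_S|} \geq \tfrac{\lambda^\ell}{2s\dmax} = \lambda^{O(\log s)}$, so $A_S$ is $(\lambda^{-O(\log s)}, \lambda^{O(\log s)})$-good, which together with the expansion bound completes the proof.

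The only step that is not pure bookkeeping is this last size comparison: a priori the redeemability witness $I$ could occupy a negligible fraction of the thick component $S$, which would break the passage from ``$A_I$ is good'' to ``$A_S$ is good.'' The resolution is the simple but essential observation that every edge of $H_\ell$ forces both of its endpoint sets to contain at least $\lambda^\ell|A_1|/\dmax$ vertices, so $A_I$ — which contains the path endpoint $A_j$ — can never be too small. Everything else is just assembling \Cref{cor:A1-redeemable}, \Cref{lem:thick-graph-exp}, \Cref{lem:rev-conductance-upper-bounds}, and \Cref{lem:low-cond-good}, using the smallness of $\lambda$ to discharge the slack in each inequality.
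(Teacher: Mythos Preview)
Your proof is correct and follows essentially the same route as the paper: take $S = M_G^\ell(A_1)$, use redeemability to extract the witness $I \subseteq S$, apply \Cref{lem:low-cond-good} to $A_I$, and then pass the goodness up to $A_S$ via a size comparison $|A_I| \geq \lambda^\ell|A_1|/\dmax$ against $|A_S| \leq s|A_1|$. The only cosmetic difference is in the size lower bound: the paper splits on whether $1 \in I$ and otherwise uses ``some $H_\ell$-edge incident to $I$'', whereas you use the path endpoint $j$ directly; both arguments land at the same estimate.
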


\begin{proof}
	Let $\ell = O(\log(s))$ as per \Cref{cor:A1-redeemable}. We'll show that it suffices to take $S = M^\ell_G(A_1)$. Clearly, we have that $1 \in S$ and $A_S$ has expansion $\lambda^{O(\log(s))}$ by \Cref{lem:thick-graph-exp}.
	
	Finally, to show that $A_S$ is good, note that as $A_1$ is $\ell$-redeemable, there exists a set $I \subseteq [s]$ such that:
	\begin{enumerate}[label=(\roman*)]
		\item there is a path in $H_\ell$ from $1$ to some $j \in I$,
		\item $\Phi(G[A_I]) \geq \lambda^{-1/3} \nu_\ell(A_I)$, and
		\item $H_\ell[I]$ is a connected graph.
	\end{enumerate}
	
	Conditions (i) and (iii) imply that $I \subseteq S$. Using (ii) and \Cref{lem:rev-conductance-upper-bounds}, then yields that
		\[\Psi(A_I) \leq \nu_\ell(A_I) \leq \lambda^{1/3} \Phi(G[A_I]),\]
	and so applying \Cref{lem:low-cond-good} then yields that $A_I$ is $(\lambda^{-O(\log(s))}, \frac{1}{2})$ good. 
	
	Now, if $1 \not \in I$, then there must be some edge incident to $I$ in $H_\ell$ and thus at least $\lambda^{\ell} |A_1|$ edges incident to $A_I$. This gives us that
		\[|A_I| \geq \frac{\lambda^{\ell} |A_1|}{d_{\mathrm{max}}}.\]
	On the other hand, if $1 \in I$, then we trivially have that 
	\[|A_I| \geq |A_1| \geq \frac{\lambda^{\ell} |A_1|}{d_{\mathrm{max}}}.\]
	As $|A_S| \leq s |A_1|$, we immediately have that $A_S$ is $(\lambda^{-O(\log(s))}, \lambda^{O(\log(s))})$-good as desired.
\end{proof}

\subsection{All Sets in the Cover are Contained in Good Expander}
\label{sec:allsetsingood}
The previous section proved that the largest set $A_1$ is always contained a good set with reasonable expansion.
 We will now leverage this to conclude that the smaller sets $A_2, ..., A_s$ are also contained in good, expansive sets.

To give an overview of the argument we briefly describe it for $A_2$. We first consider restricting to $G[A_{\geq 2}]$. In this graph we note that $A_{2}$ is $O(\log(s))$ redeemable and hence good. If there are many edges to $A_1$, then we are done, as we can combine these components and the resulting component will be a reasonable expander and a good set. If there are no edges, then we conclude that $A_2$ itself is good.

Note that unpacking the argument for $A_3$ and so on, we end up potentially forming a path to the good set containing $A_1$. Because of this, we cannot iteratively apply \Cref{lem:cut-approx}, as the expansion lower bound would suffer too much. Fortunately, the sets will roughly decrease in size, so we can do better in the following lemma.

\begin{lemma}
\label{lem:pyramid-expansion}
	Let $G$ be a graph and $\mathcal{C} := \{B_1, ..., B_s\}$ be a disjoint-$\theta$ cover. (Note that we do not assume that the $B_i$ are sorted by size.) Assume that $|B_i| \geq \frac{|B_{j}|}{s}$ for all $i < j$. Moreover, suppose that $|E(B_i, B_{i+1})| \geq \alpha |B_{i+1}|$. It then follows that
		\[\Phi(G) \geq \lambda^{0.01} \alpha \theta. \]
\end{lemma}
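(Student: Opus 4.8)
The plan is to analyze an arbitrary cut $(S,\overline{S})$ in $G$ and show it has conductance at least $\lambda^{0.01}\alpha\theta$. Since $\{B_1,\dots,B_s\}$ is a disjoint cover, $S$ decomposes as $S = \bigsqcup_i (S\cap B_i)$. The key dichotomy for each index $i$ is whether $S$ captures a "small" or "large" fraction of the volume of $B_i$ inside $G[B_i]$: say $i$ is \emph{heavy} if $\vol_{B_i}(S\cap B_i) \geq \frac12\vol_{B_i}(B_i)$ and \emph{light} otherwise. For light indices, the expansion of $G[B_i]$ (which is a $\theta$-expander) gives $|E_{B_i}(S\cap B_i, B_i\setminus S)| \geq \theta\,\vol_{B_i}(S\cap B_i)$, so each light index already contributes $\theta$ times its share of $\vol(S)$ to $|E(S,\overline{S})|$. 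Symmetrically, for heavy indices the same applies to $\overline{S}\cap B_i$ inside $B_i$. So the only danger is the case where the light indices account for only a tiny fraction of $\vol(S)$ (equivalently, $S$ is "mostly made of whole blocks $B_i$").

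**Next I would** handle that remaining case using the chain structure $|E(B_i,B_{i+1})| \geq \alpha|B_{i+1}|$ together with the size-monotonicity hypothesis $|B_i| \geq |B_j|/s$ for $i<j$. Suppose $S$ is essentially a union of whole blocks; let $\mathcal{I} \subseteq [s]$ be the set of indices with $B_i \subseteq S$ (approximately). Since $\mathcal{I}$ is not all of $[s]$ and not empty, there must exist a consecutive pair, i.e. an index $i$ with exactly one of $\{i,i+1\}$ in $\mathcal{I}$; then the $\geq \alpha|B_{i+1}|$ edges of $E(B_i,B_{i+1})$ all cross the cut $(S,\overline{S})$. I want to lower bound $|E(S,\overline{S})|/\vol(S)$: the numerator is at least $\alpha|B_{i+1}|$, and I need an upper bound on $\vol(S)$ in terms of $|B_{i+1}|$. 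This is exactly where size-monotonicity is used — but note the hypothesis $|B_i|\geq|B_j|/s$ for $i<j$ says \emph{earlier} blocks are not much smaller than later ones, so if $S$ could contain many large early blocks while the crossing pair is near the end, $\vol(S)$ could be much larger than $|B_{i+1}|$. The fix is that we should pick $i$ to be the \emph{smallest} index witnessing a "boundary" of $\mathcal{I}$ within $S$ — more precisely, if $S$ (mostly) consists of blocks, look at the block in $S$ of largest size; call its index $i_0$, so $\vol(S) \leq s\cdot d_{\max}\cdot|B_{i_0}|$ roughly. Then I find a crossing pair $(j,j+1)$ with $j \geq i_0$ or nearby, and use $|B_{j+1}| \geq |B_{i_0}|/s$ (valid since $i_0 \leq j+1$... wait, need $i_0 < j+1$) to get $\alpha|B_{j+1}| \geq \alpha|B_{i_0}|/s$, hence conductance $\gtrsim \alpha/(s^2 d_{\max})$, which is $\geq \lambda^{0.01}\alpha\theta$ by the choice of $\lambda$ being polynomially tiny.

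**I would then combine the two cases.** Let $p = \vol(S\setminus \bigcup_{i\text{ heavy/whole}} B_i)$ capture the "fringe" volume coming from light indices (and the symmetric heavy contribution handled on the $\overline{S}$ side). If this fringe is at least, say, a $\lambda^{0.005}$-fraction of $\vol(S)$, the expander argument gives $|E(S,\overline{S})| \geq \theta\lambda^{0.005}\vol(S)$, done. Otherwise $S$ is within a $\lambda^{0.005}$-fraction (in volume) of being a union of whole blocks, and the chain argument applies with $\vol(S)$ dominated by $s\,d_{\max}$ times the largest block in it; since we can find a crossing pair involving a block of comparable or larger index, we get $|E(S,\overline{S})|/\vol(S) \gtrsim \alpha/(s^2 d_{\max})$. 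In both cases the bound exceeds $\lambda^{0.01}\alpha\theta$, using that $\lambda = (1000 s d_{\max}\theta^{-1}\log|G|\log(1/\eps))^{-1000}$ makes $\lambda^{-0.005}$ easily beat any $\poly(s,d_{\max},1/\theta)$ factor.

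**The main obstacle** I expect is making the "$S$ is mostly whole blocks" case fully rigorous while controlling $\vol(S)$ against the size of the crossing block: one has to be careful that the largest block inside $S$ does not sit at an index smaller than where the cut boundary occurs in a way that breaks the $|B_i|\geq|B_j|/s$ comparison (the hypothesis only bounds how much blocks can \emph{grow} with index, not shrink). The resolution is to order the argument around the largest block in $S$ and argue that the nearest "boundary" of the whole-block structure occurs at an index at least that of the largest block — or, if not, that block itself is not whole, contradicting the fringe being small. This bookkeeping, together with correctly defining the "fringe" so that the heavy/light and $S$/$\overline{S}$ symmetric contributions are not double-counted, is the delicate part; the inequalities themselves are routine once the case split is set up.
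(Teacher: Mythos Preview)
Your overall strategy is workable, but it is considerably more complicated than the paper's proof, and your resolution of the ``main obstacle'' has the key inequality backwards.

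The paper does not analyze arbitrary vertex cuts at all. Instead it invokes \Cref{lem:cut-approx}, which says that the sparsest cut in a $\theta$-covered graph is approximated (up to a $4s d_{\max}^2/\theta$ factor) by a cut of the form $B_I$ for some $I\subseteq[s]$. So it suffices to lower bound $\Phi(B_I)$ for index sets $I$. Assuming WLOG $1\notin I$, let $i+1=\min I$; then $(i,i+1)$ is a crossing pair, $|E(B_I,\overline{B_I})|\ge \alpha|B_{i+1}|$, and since every $j\in I$ has $j\ge i+1$, the hypothesis $|B_{i+1}|\ge |B_j|/s$ gives $\vol(B_I)\le s^2 d_{\max}|B_{i+1}|$. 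This yields $\Phi(B_I)\ge \alpha/(s^2 d_{\max})$, and then \Cref{lem:cut-approx} and the choice of $\lambda$ finish. Your light/heavy case split is essentially a re-derivation of \Cref{lem:cut-approx} in this special case; it can be made to work but is unnecessary.

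The actual gap in your write-up is in the ``whole blocks'' case. You take $i_0$ to be the index of the largest block contained in $S$ and look for a crossing pair $(j,j+1)$ with $j\ge i_0$, planning to use $|B_{j+1}|\ge |B_{i_0}|/s$ ``since $i_0\le j+1$.'' But the hypothesis $|B_a|\ge |B_b|/s$ holds for $a<b$, so $|B_{j+1}|\ge |B_{i_0}|/s$ requires $j+1\le i_0$, not $j+1\ge i_0$; and a crossing pair with $j\ge i_0$ need not exist (e.g.\ $\mathcal I=\{3,4,5\}$ with $s=5$ and $i_0=5$). The correct move is the paper's: after WLOG $1\notin\mathcal I$, take $m=\min\mathcal I$, so every $j\in\mathcal I$ has $j\ge m$ and hence $|B_j|\le s|B_m|$, giving $\vol(S)\lesssim s^2 d_{\max}|B_m|$, while the crossing pair $(m-1,m)$ contributes $\ge\alpha|B_m|$ cut edges. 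With this fix your direct argument goes through, but invoking \Cref{lem:cut-approx} is cleaner.
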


\begin{proof}
Consider some set $S \subseteq [s]$. We will lower bound $\Phi(B_S)$. Without loss of generality, assume $1 \not \in S$. Let $i$ denote the smallest index such that $i \not \in S$ and $i+1 \in S$. We thus have
\[|E(B_S, \overline{B_S})| \geq |E(B_{i+1}, B_i)| \geq \alpha |B_{i+1}|.\]
We then conclude that
	\[\Phi(B_S) \geq \frac{\alpha |B_{i+1}|}{\vol(B_S)} \geq \frac{\alpha |B_{i+1}|}{s^2 \dmax  |B_{i+1}|} \geq \frac{\alpha}{s^2 \dmax },\]
where the bound $\vol(B_S) \leq s^2 \dmax  |B_{i+1}|$ used in the second step follows because, by assumption, $|B_j| \leq s \cdot |B_{i+1}|$ for all $j \in S$.
By \Cref{lem:cut-approx}, it then follows that
	\[\Phi(G) \geq \frac{\alpha \theta}{4s^3\dmax ^3} \geq \lambda^{0.01} \alpha \theta,\]
by the definition $\lambda = (1000 s\dmax \theta^{-1} \log(|G|)\log(1/\eps))^{-1000}$, as desired.
\end{proof}

We now turn to formalizing the proof sketch presented for this subsection.

\begin{lemma}
\label{lem:ladder-rungs}
Let $G$ be a graph and $\calC = \{A_1,\dots,A_s\}$ be a disjoint $\theta$-cover. Then there exist disjoint sets $A_{S_1}, \dots, A_{S_m}$ (i.e.,~$S_1,\dots,S_m$ are disjoint subsets of $[s]$) that form a $\lambda^{O(\log(s))}$-cover of $G$ and satisfy $|A_{S_i}| \geq \frac{1}{s} |A_{S_j}|$ for any $i < j$. Moreover, for every $i$ either $A_{S_i}$ is $(\lambda^{-O(\log(s))}, \lambda^{O(\log(s))})$-good or $|E(A_{S_i}, A_{S_j})| \geq \lambda^{O(\log(s))} \cdot |A_{S_i}|$ for some $j < i$.
\end{lemma}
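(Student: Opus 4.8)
The plan is to build the sets $A_{S_1},\dots,A_{S_m}$ greedily, in decreasing order of size, using \Cref{cor:A1-good} as the engine at each step. First I would apply \Cref{cor:A1-good} to $G$ itself: this gives a set $S_1 \subseteq [s]$ with $1 \in S_1$ (so $A_{S_1}$ contains the globally largest part $A_1$, hence is at least as large as every other part and in particular satisfies the size-monotonicity requirement vacuously for $i=1$), such that $A_{S_1}$ is a $\lambda^{O(\log s)}$-expander and is $(\lambda^{-O(\log s)},\lambda^{O(\log s)})$-good. Now remove the indices in $S_1$ from $[s]$, pass to the induced graph $G[A_{[s]\setminus S_1}]$ with its induced disjoint $\theta$-cover $\{A_i : i \notin S_1\}$, and repeat: apply \Cref{cor:A1-good} to this smaller graph to obtain $S_2$ (contained in $[s]\setminus S_1$, containing the largest remaining part), then recurse on $G[A_{[s]\setminus(S_1\cup S_2)}]$, and so on until all of $[s]$ is exhausted. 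This produces disjoint $S_1,\dots,S_m$ whose union is $[s]$, so the $A_{S_i}$ form a cover of $G$; and since each $A_{S_i}$ contains the largest part among those not yet used, and $|A_{S_i}|\le s\cdot(\text{size of that largest unused part})$, the size-monotonicity $|A_{S_i}|\ge \tfrac1s|A_{S_j}|$ for $i<j$ follows (each $A_{S_j}$ with $j>i$ is a union of parts all of which were available when $A_{S_i}$ was chosen, hence each has size at most the max unused part at step $i$, which is at most $|A_{S_i}|$; and $A_{S_j}$ is a union of at most $s$ of them).

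Next I need to verify the two structural claims. For the expansion lower bound on the whole cover: \Cref{cor:A1-good} tells us each individual $A_{S_i}$ is a $\lambda^{O(\log s)}$-expander \emph{in the graph it was extracted from}, namely $G[A_{[s]\setminus(S_1\cup\dots\cup S_{i-1})}]$. Since $A_1,\dots,A_s$ is a $\theta$-cover of $G$ and hence (restricting) a $\theta$-cover of every such induced subgraph, I can invoke \Cref{lem:cut-approx} to transfer: any sparse cut of $G$ induces, via a union of the covering sets, a comparably sparse cut, and the covering sets here are exactly the $A_i$'s, which refine the $A_{S_i}$'s. Combined with \Cref{lem:pyramid-expansion} (whose hypotheses — disjoint cover, size-monotonicity, and many edges down the ladder — match exactly the alternative case below), this gives $\Phi(G)\ge \lambda^{O(\log s)}$ as long as the ``ladder-edge'' structure holds; more carefully, I would argue directly that the $A_{S_i}$'s form a $\lambda^{O(\log s)}$-cover by noting each is an expander in its stage and the stages are nested, and that expansion in a stage plus the dichotomy below is enough.

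For the final dichotomy — each $A_{S_i}$ is either good or has $\lambda^{O(\log s)}|A_{S_i}|$ edges to some earlier $A_{S_j}$, $j<i$ — I would reason as follows. When $A_{S_i}$ was extracted from $G[A_{U_i}]$ where $U_i = [s]\setminus(S_1\cup\dots\cup S_{i-1})$, \Cref{cor:A1-good} guarantees it is $(\lambda^{-O(\log s)},\lambda^{O(\log s)})$-good \emph{with respect to the cover of that induced subgraph}. Goodness is a statement about local mixing times; a random walk that locally mixes within $A_{U_i}$ does not immediately lift to one that locally mixes in $G$, because it could leak through edges from $A_{S_i}$ to earlier sets. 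So I distinguish: if the total number of edges from $A_{S_i}$ to $A_{[s]\setminus U_i} = A_{S_1}\cup\dots\cup A_{S_{i-1}}$ is small — less than $\lambda^{O(\log s)}|A_{S_i}|$ in aggregate — then these edges contribute negligibly to escape probability, a random walk confined to $A_{U_i}$ is a good approximation to one in $G$, and goodness with respect to $\mathcal{P}(\mathcal{C})$ (which includes $\mathcal{P}$ of the sub-cover) carries over, so $A_{S_i}$ is $(\lambda^{-O(\log s)},\lambda^{O(\log s)})$-good in $G$. Otherwise, the edge count exceeds $\lambda^{O(\log s)}|A_{S_i}|$, and by pigeonhole over the $\le s$ earlier sets $A_{S_j}$ there is a single $j<i$ with $|E(A_{S_i},A_{S_j})|\ge \lambda^{O(\log s)}|A_{S_i}|/s \ge \lambda^{O(\log s)}|A_{S_i}|$ (absorbing the $1/s$ into the exponent), which is the second alternative.

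The main obstacle I expect is the transfer of goodness from the induced subgraph $G[A_{U_i}]$ to the full graph $G$ in the ``few edges'' case: I need to argue that a locally-mixing event $E_v$ for a walk confined to $A_{U_i}$ can be upgraded to a locally-mixing event for a walk in $G$, paying only a $\lambda^{O(\log s)}$ factor. This should follow by taking $E_v$ to include the sub-event ``the walk never uses an edge leaving $A_{U_i}$'', whose probability is $1 - O(\ell \cdot \Psi_G(A_{U_i}))$ by an escape bound à la \Cref{lem:escape-lb}, and noting that conditioned on staying in $A_{U_i}$ the $G$-walk is exactly the $G[A_{U_i}]$-walk — but making the probability loss only $\lambda^{O(\log s)}$ rather than catastrophic requires that the escape probability over the relevant (quasipolynomial) number of steps be bounded, which is where the ``few edges'' quantitative threshold must be chosen with care, and is the delicate point of the argument.
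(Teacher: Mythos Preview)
Your high-level plan---iterate \Cref{cor:A1-good} on the graph induced by the not-yet-covered parts---is sound and in fact closely parallels the paper's construction (which iterates the $\ell$-thick-component extraction on $G[A_{\geq i}]$). One small point you overcomplicate: $\Phi(G[A_{S_i}])$ is intrinsic to the induced graph $G[A_{S_i}]$ and does not depend on whether you regard it as sitting inside $G$ or inside $G[A_{U_i}]$, so the expansion bound on each $A_{S_i}$ is immediate from \Cref{cor:A1-good}; no appeal to \Cref{lem:cut-approx} or \Cref{lem:pyramid-expansion} is needed here.

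The real gap is in your goodness-transfer dichotomy. You propose $E'_v = E_v \cap \{\text{walk stays in }A_{U_i}\}$, and then case-split on whether $A_{S_i}$ has few or many edges to earlier $A_{S_j}$'s. But if you use \Cref{cor:A1-good} as a black box, all you know is that many $v\in A_{S_i}$ have small local mixing time for walks \emph{in $G[A_{U_i}]$}; the certifying walk for such a $v$ may wander anywhere in $A_{U_i}$, not just in $A_{S_i}$. Consequently the relevant escape probability is governed by edges from \emph{all of $A_{U_i}$} to earlier sets, and your ``few edges from $A_{S_i}$'' hypothesis does not control it. Conversely, if you instead put the threshold on $\Psi_G(A_{U_i})$, then the ``many edges'' case gives many edges out of $A_{U_i}$, which need not land on $A_{S_i}$ at all. (A secondary issue: \Cref{lem:escape-lb} bounds escape from the stationary start, not from individual $v$; you need a Markov step as in \Cref{lem:low-cond-good} to pass to per-vertex bounds.)

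The fix---and this is exactly what the paper does---is to open up \Cref{cor:A1-good} and use the finer structure it provides: inside $A_{S_i}$ there is a subset $A_J$ (the set witnessing redeemability) with $\Psi_{G[A_{U_i}]}(A_J)\le \lambda^{1/3}\Phi(G[A_J])$ and $|A_J|\ge \lambda^{O(\log s)}|A_{S_i}|$. Now run the dichotomy on $\Psi_G(A_J)$. If $\Psi_G(A_J)$ is also small relative to $\Phi(G[A_J])$, then \Cref{lem:low-cond-good} applied \emph{in $G$} makes $A_J$ (hence $A_{S_i}$) good in $G$. If not, the difference $\Psi_G(A_J)-\Psi_{G[A_{U_i}]}(A_J)$ is large, and this difference counts precisely the edges from $A_J\subseteq A_{S_i}$ to the earlier sets $A_{S_1}\cup\cdots\cup A_{S_{i-1}}$; pigeonhole then gives a single $j<i$ with $|E(A_{S_i},A_{S_j})|\ge \lambda^{O(\log s)}|A_{S_i}|$. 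The paper's version works in $G[A_{\geq i}]$ rather than $G[A_{U_i}]$, which introduces an overlap/truncation case that directly supplies a thick edge to an earlier set; your cleaner $G[A_{U_i}]$ construction avoids that case but must compensate by running the conductance-difference argument every time.
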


\begin{proof}
Let $\ell$ be such that the $\ell$th Fibonacci number is strictly greater than $s$.

We build the sets via the following procedure: 

\begin{center}
\fbox{\parbox{0.95\textwidth}{
~~~~Let $S = \emptyset$ and $k = 1$. 
~~~~While $S \not= [s]$ do the following: 
		\begin{itemize}
\item Let $i$ be the smallest index such that $i \not \in S$.
 Consider the graph $G[A_{\geq i}]$ and cover $\{A_i, A_{i+1},\dots,A_s,B_{s+1}, ..., B_{s+i-1}\}$ where $B_{s+j} = \emptyset$ for all $j$. 
		
\item Set $S_k = M_{G[A_{\geq i}]}^\ell(A_i)$. 
		
\item If $S_k \cap S \not = \emptyset$, then update $S_k$ to be the connected component of $A_i$ in $H_\ell^{G[A_{\geq i}]}[[s] \setminus S]$.
		
\item Finally, update $S \gets S \cup S_k$ and $k \gets k +1$. 
\end{itemize}
}}
\end{center}

It is clear that the sets $A_{S_1},\dots,A_{S_m}$ are disjoint by construction and that the process will terminate. Moreover, by \Cref{lem:thick-graph-exp}, it follows that the graphs $S_j$ satisfy $\Phi(G[S_j]) \geq \lambda^{\ell + 0.01} \geq \lambda^{O(\log(s))}$. To see that $|A_{S_i}| \geq \frac{1}{s} |A_{S_j}|$ for any $i < j$, let $A_a$ be the largest uncovered element at the time of adding $S_i$. Since $|A_a| \geq |A_b|$ for all $b \geq a$ and $S_j$ only contains elements $A_b$ with $b > a$ when $i < j$, it follows that $|A_{S_i}| \geq |A_a| \geq \frac{1}{s} |A_{S_j}|$ for all $i < j$.

It now remains to show that each $S_k$ is good or has many edges to an earlier 
set. Fix some $k$, and let $A_i$ be the largest uncovered set before $S_k$ was added. We consider two cases, depending on whether $S_k = M_{G[A_{\geq i}]}^\ell(A_i)$.

Assume first that $S_k \not = M_{G[A_{\geq i}]}^\ell(A_i)$. In this case, it follows that $S_k$ is the connected component containing $i$ in $H_\ell^{G[A_{\geq i}]}[[s] \setminus S]$. However, in this case, there must exist some element $a \in S_k$ that had an edge to a vertex $b \in S$. But then by definition of the $\ell$-thick graph, $|E(A_{a}, A_{b})| \geq \lambda^\ell |A_i| \geq \lambda^{O(\log(s))} |A_{S_k}|$, so the edge condition is met.

For the remainder of the proof, we thus assume that $S_k = M_{G[A_{\geq i}]}^\ell(A_i)$. Now applying \Cref{lem:thick-comp-bound}, it follows that $A_i$ is $\ell$-redeemable in $G[A_{\geq i}]$. Let $J$ be the set certifying that $A_i$ is $\ell$-redeemable in $G[A_{\geq i}]$. Moreover, note that $|A_J| \geq \frac{1}{\dmax } \lambda^{\ell} |A_i| \geq \lambda^{\ell+1} |A_i|$ since it is connected to $A_i$ in the $\ell$-thick graph.
	
	If $A_J$ is $(\lambda^{-O(\log(s))}, \frac{1}{2})$ good in $G$, then $M_{G[A_{\geq i}]}^\ell(A_i)$ is $(\lambda^{-O(\log(s))}, \lambda^{O(\log(s))})$-good, satisfying the goodness condition. So, we assume $A_J$ is not $(\lambda^{-O(\log(s))}, \frac{1}{2})$ good in $G$. Applying the contrapositive of \Cref{lem:low-cond-good} then gives us that
		\[\Psi_G(A_J) \geq  \lambda^{0.01} \Phi(G[A_J]). \]

	On the other hand, since $J$ certifies $A_i$ being $\ell$-redeemable in $G[A_{\geq i}]$, we have that
		\[\Phi(G[A_J]) \geq \lambda^{-1/3} \nu_{\ell}^{H_{\ell}^{\geq i}}(J) \geq \lambda^{-1/3} \Psi_{G[A_{\geq i}]}(A_J),\]
	where $H_\ell^{\geq i}$ is the $\ell$-thick graph of $G[A_{\geq i}]$. This implies that
		\[\Psi_G(A_J) - \Psi_{G[A_{\geq i}]}(A_j) \geq \left(\lambda^{0.01} - \lambda^{1/3} \right)\Phi(G[A_J]) \geq \frac{1}{2} \lambda^{0.01} \Phi(G[A_J]) . \]
	This then implies that there exists an $a < i$ and a $b \in A_J \subseteq S_k$ such that 
		\[|E(A_a, A_b)| \geq \frac{1}{2 s^2} \lambda^{0.01} \Phi(G[A_J]) |A_J| \geq \frac{1}{2} \lambda^{2\ell + 1.03} |A_i|, \]
	where the final inequality follows by \Cref{lem:thick-graph-exp}. Since $\ell = O(\log(s))$ the edge condition is satisfied, and this completes the proof. 
\end{proof}

Finally, we show that the above structural lemma implies that every set $A_i$ is contained in a good, expansive set.

\begin{lemma}
\label{lem:good-cover-exists}
Let $G = (V,E)$ and $\mathcal{C} = \{A_1, ..., A_s\}$ be a disjoint $\theta$-cover, then for every $i \in [s]$, there exists a set $I \subseteq [s]$ with $i \in I$ such that $\Phi(G[A_I]) \geq \lambda^{O(\log(s))}$
 and $A_I$ is $(\lambda^{-O(\log(s))}, \lambda^{O(\log(s))})$-good.
\end{lemma}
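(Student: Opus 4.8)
The plan is to build $I$ from the ladder decomposition of \Cref{lem:ladder-rungs} by following a chain of ``edge-rich'' links down to a good rung, and then to certify the expansion of the resulting union via \Cref{lem:pyramid-expansion}. First I would apply \Cref{lem:ladder-rungs} to obtain disjoint sets $A_{S_1},\dots,A_{S_m}$ that form a $\lambda^{O(\log s)}$-cover of $G$, are sorted so that $|A_{S_a}|\ge \tfrac1s|A_{S_b}|$ whenever $a<b$, and such that for every $k$ either $A_{S_k}$ is $(\lambda^{-O(\log s)},\lambda^{O(\log s)})$-good (with respect to $\calC$), or there is some $j<k$ with $|E(A_{S_k},A_{S_j})|\ge \lambda^{O(\log s)}|A_{S_k}|$. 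Since the $S_k$ partition $[s]$, the given index $i$ lies in a unique rung $S_{k_1}$.

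Next I would construct a strictly decreasing chain of rung indices $k_1>k_2>\cdots>k_r$ as follows: as long as $A_{S_{k_t}}$ is not good, \Cref{lem:ladder-rungs} supplies an index $k_{t+1}<k_t$ with $|E(A_{S_{k_t}},A_{S_{k_{t+1}}})|\ge \lambda^{O(\log s)}|A_{S_{k_t}}|$; we stop at the first rung $A_{S_{k_r}}$ that is good. This terminates after at most $s$ steps since the indices strictly decrease, and it cannot get stuck at a non-good rung with no predecessor, because the rung $S_1$ has no $j<1$ and hence $A_{S_1}$ is good by \Cref{lem:ladder-rungs}; so the chain always ends at a good rung. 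I then set $I:=S_{k_1}\cup S_{k_2}\cup\cdots\cup S_{k_r}\subseteq[s]$, which contains $i$ since $i\in S_{k_1}$. If $r=1$ then $A_I=A_{S_{k_1}}$ is already good and a $\lambda^{O(\log s)}$-expander by \Cref{lem:ladder-rungs}, and we are done; so assume $r\ge 2$.

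For goodness of $A_I$: since $k_r$ is the smallest chain index and rungs are size-ordered up to a factor $s$, we have $|A_{S_{k_r}}|\ge\tfrac1s|A_{S_{k_t}}|$ for every $t$, hence $|A_I|=\sum_t|A_{S_{k_t}}|\le rs\,|A_{S_{k_r}}|\le s^2|A_{S_{k_r}}|$. As $A_{S_{k_r}}$ is $(\lambda^{-O(\log s)},\lambda^{O(\log s)})$-good, at least a $\lambda^{O(\log s)}$-fraction of its vertices $v$ satisfy $T_{\mix}^{1/8\dmax,\eps}(v,\calP(\calC))\le\lambda^{-O(\log s)}$; since $A_{S_{k_r}}\subseteq A_I$, those same vertices witness that $A_I$ contains at least $\lambda^{O(\log s)}|A_{S_{k_r}}|\ge \lambda^{O(\log s)}|A_I|/s^2\ge\lambda^{O(\log s)}|A_I|$ vertices of small local mixing time, where the $s^2$ is absorbed into the exponent (legitimate since $\lambda^{-1}$ is polynomially large in $s$ by $\lambda=(1000 s\dmax\theta^{-1}\log(|G|)\log(1/\eps))^{-1000}$). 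Thus $A_I$ is $(\lambda^{-O(\log s)},\lambda^{O(\log s)})$-good. For the expansion $\Phi(G[A_I])\ge\lambda^{O(\log s)}$, I would apply \Cref{lem:pyramid-expansion} to the graph $G[A_I]$ with the cover $B_t:=A_{S_{k_{r+1-t}}}$, $t=1,\dots,r$: these are disjoint, cover $A_I$, each $G[B_t]=G[A_{S_{k_{r+1-t}}}]$ is a $\lambda^{O(\log s)}$-expander by \Cref{lem:ladder-rungs}, the size-ordering $|B_t|\ge\tfrac1s|B_{t'}|$ for $t<t'$ follows from the rung ordering, and the chain's edge condition translates exactly into $|E(B_{t-1},B_t)|\ge\lambda^{O(\log s)}|B_t|$ for each $t\ge 2$. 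Then \Cref{lem:pyramid-expansion} gives $\Phi(G[A_I])\ge\lambda^{0.01}\cdot\lambda^{O(\log s)}\cdot\lambda^{O(\log s)}=\lambda^{O(\log s)}$.

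\textbf{Main obstacle.} The delicate part is the bookkeeping of the constants hidden in the $O(\log s)$ exponents: one must check that following a chain of length up to $s$, merging up to $s$ rungs, and losing an $s^2$ factor in the goodness fraction never inflate the exponent beyond $O(\log s)$. This works out because each of these losses is only a $\poly(s)$ factor, which is swallowed by $\lambda^{\pm O(\log s)}$ since $\lambda^{-1}\ge s$. A secondary point to verify is that \Cref{lem:pyramid-expansion} is being invoked on the induced subgraph $G[A_I]$ rather than on $G$ itself (and with $r\le s$ parts, which only improves the bounds), so one should confirm the conductance computations in its proof still go through on $G[A_I]$.
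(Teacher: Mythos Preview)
Your proposal is correct and follows essentially the same route as the paper: apply \Cref{lem:ladder-rungs}, follow the ``edge-rich'' links from the rung containing $i$ down to a good rung, take $I$ to be the union of the rungs along this chain, deduce goodness of $A_I$ from the size-ordering (absorbing the $s^2$ loss into $\lambda^{O(\log s)}$), and deduce $\Phi(G[A_I])\ge\lambda^{O(\log s)}$ via \Cref{lem:pyramid-expansion}. Your explicit observation that $A_{S_1}$ must be good (having no predecessor) and your careful reindexing $B_t=A_{S_{k_{r+1-t}}}$ to match the hypotheses of \Cref{lem:pyramid-expansion} are both sound; the secondary concern you flag about applying \Cref{lem:pyramid-expansion} on $G[A_I]$ is harmless, since that lemma is stated for an arbitrary graph with a disjoint cover and the edge counts between $B_t$'s are the same in $G$ and in $G[A_I]$.
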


\begin{proof}
By \Cref{lem:ladder-rungs}, given an $i$ there exist some disjoint $S_{j_1}, ..., S_{j_k}$ with $j_1 < j_2 < \dots < j_k$ such that 

\begin{enumerate}[label=(\roman*)]
	\item $i \in S_{j_k}$, 
	
	\item $S_{j_1}$ is $(\lambda^{-O(\log(s))}, \lambda^{O(\log(s))})$-good,
	
	\item $|E(A_{S_{j_m}}, A_{S_{j_{m+1}}})| \geq \lambda^{O(\log(s))} \cdot |A_{S_{j_{m+1}}}|$ for all $m \in [k-1]$,
	
	\item $|A_{S_{j_a}}| \geq \frac{1}{s} |A_{S_{j_b}}|$ for all $a \leq b$, and
	
	\item For all $m \in [k]$, we have $\Phi(G[A_{S_{j_m}}]) \geq \lambda^{O(\log(s))}$.
\end{enumerate}

It now follow from \Cref{lem:pyramid-expansion} that $G \left[ \bigcup_{m = 1}^k S_{j_m} \right]$ is a $\lambda^{O(\log(s))}$-expander. Furthermore, since $S_{j_1}$ is $(\lambda^{-O(\log(s))}, \lambda^{O(\log(s))})$-good, it follows that $\bigcup_{m = 1}^k S_{j_m}$ is $(\lambda^{-O(\log(s))}, \frac{1}{s^2} \lambda^{O(\log(s))})$-good by property (iv). This concludes the proof.
\end{proof}

\subsection{Putting it All Together}

We can now reap the glorious benefits of our labor and prove \Cref{thm:disjoint-local-mixing}.

\begin{proof}[Proof of \Cref{thm:disjoint-local-mixing}]
	By \Cref{lem:good-cover-exists}, there exists a super cover of $\mathcal{C}$, which we write as $\mathcal{C}' = \{B_1, ..., B_s\}$, such that $\Phi(G[B_i]) \geq \lambda^{O(\log(s))}$ and each $B_i$ is $(\lambda^{-O(\log(s))}, \lambda^{O(\log(s))})$-good. It then follows by \Cref{lem:good-mixing} that
		\[T_{\mix}^{p,4\eps s}(G, \mathcal{C}) \leq \wt{O}(\lambda^{-O(\log(s))})\]
	for some $p = \lambda^{O(\log(s))}$. 
\end{proof}

\part{Applications to Learning DNF}
\label{part:dnf}

%!TEX root = ../Learning-DNFs.tex

\newcommand{\UTH}{{\large \red{\bf UP TO HERE}}}
\newcommand{\dterm}{\mathrm{d}_{\mathrm{term}}}
\newcommand{\dass}{\mathrm{d}_{\mathrm{sat}}}
\newcommand{\ol}[1]{\overline{#1}}
\newcommand{\simple}{\textsc{Simple-Learning}}

\section{Preliminaries}
\label{sec:notation-exact-k}

We start by setting up some useful notation and terminology. Given a string $y \in \zo^n$, we define the \emph{literal corresponding to $y_i$} as $x_i$ if $y_i = 1$ and $\overline{x}_i$ otherwise. We will frequently identify a term with the set of literals it contains; the size of a term $T$, written $|T|$, will thus be the size of the set of its literals. Given two terms $T_1, T_2$, we define the \emph{(term) distance between $T_1, T_2$} as 
\[
	\dterm(T_1, T_2) := \min\cbra{|T_1\setminus T_2|, |T_2 \setminus T_1|}
\]
where we identified the terms $T_1$ and $T_2$ with their set of literals. 

\begin{remark} \label{rem:term-dist-for-same-size}
	Note that when $|T_1| = |T_2|$ (which will certainly be the case for us in~\Cref{sec:learning-exact-dnf} when we learn exact DNF), we have $\dterm(T_1, T_2) = |T_1 \setminus T_2| = |T_2 \setminus T_1|$. 
\end{remark}

Given a string $y\in\zo^n$ and a set $S \sse [n]$, we will write $T_y(S)$ for the term induced by $y$ on the set $S$, i.e.
\[
	T_y(S) := \bigwedge_{\substack{i \in S \\ y_i = 1}} x_i \wedge \bigwedge_{\substack{j \in S \\ y_j = 0}} \overline{x}_j.
\]
Given a term $T$ and a set $S \sse [n]$, we write $T|_S$ for the term $T$ restricted to the set $S$, i.e. 
\[
	T|_S := \bigwedge_{\substack{i \in S \\ \ell_i \in T}} \ell_i
\]
where $\ell_i$ is a literal corresponding to the $i^{\text{th}}$ variable, i.e. $\ell_i = x_i$ or $\ol{x}_i$. 

We will identify a collection of terms with a DNF in the natural way, i.e. given a set of terms $\calL$, we view it as the DNF  
\[
	\calL(x) = \bigvee_{T\in\calL} T(x).
\]
Finally, given a point $x \in \zo^n$ and a collection of terms $\calL$, we define the \emph{(satisfying) distance between $x$ and $\calL$} as the Hamming distance between $x$ and the closest satisfying assignment of $\calL$, i.e. 
\[
	\dass(x, \calL) = \min_{\substack{ y \in \zo^n \\  \calL(y) = 1}} \|x-y\|_1.
\]

\subsection{Weak Term Learners}
\label{subsec:weak-term-learners}

\begin{definition}[Weak terms and weak term learners] \label{def:weak-term-learner}
Let $f=T_1 \vee \cdots \vee T_s$ be an unknown target $s$-term DNF over $\zo^n$ and let ${\cal D}$ be an arbitrary and unknown distribution.  
A term $T$ is said to be a \emph{$\gamma$-weak term} (with respect to $f$ and ${\cal D}$) if
\begin{itemize}
\item
[$(i)$] 
$\Pr_{\bx \sim {\cal D}}[T(\bx) = 1 \land f(\bx) = 0] \leq {\frac{\gamma}{s \log(1/\gamma)}}$,
and
\item
[$(ii)$] $\Pr_{\bx \sim {\cal D}}[T(\bx) = 1  \ | \ f(\bx) = 1] \geq \frac{1}{2s}$.
\end{itemize}

An algorithm which outputs a single term is said to be a \emph{$\gamma$-weak term learner} if with probability at least $1 - {\frac 1 {\omega(s  \log(1/\gamma))}}$ it outputs a $\gamma$-weak term.
\end{definition}

Intuitively, a weak term learner is an algorithm which can find a single term $T$ which simultaneously ``covers'' a non-trivial portion of the correct satisfying assignments of $f$ and only a very small portion of the non-satisfying assignments of $f$ with respect to ${\cal D}.$
We now show that given an efficient weak term learner, it is possible to efficiently construct a high-accuracy DNF hypothesis for the target DNF via a simple boosting-type procedure:

\begin{lemma}
\label{lem:term-booster}
Suppose that $\alg$ is a $\gamma$-weak term learner running in time $T$.
Then the procedure \DNFLearn~makes $O(s \log(1/\gamma))$ calls to \alg~
and with probability at least 99/100 outputs a DNF $h$, with $O(s \log(1/\gamma))$ terms, such that $\Pr_{\bx \sim {\cal D}}[h(\bx) \neq f(\bx)] \leq O(\gamma).$
The running time of \DNFLearn~is at most $\poly(s,1/\gamma,n)\cdot T.$
\end{lemma}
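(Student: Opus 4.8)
The plan is a standard greedy/boosting-style covering argument: I would invoke $\alg$ roughly $m = O(s\log(1/\gamma))$ times, each time on a reweighted version of $\calD$ chosen so that the returned weak term is forced to ``cover'' a constant fraction of the still-uncovered positive mass, and then output $h$ as the OR of all returned terms. First I would dispose of two trivial cases: if $p := \Pr_{\bx\sim\calD}[f(\bx)=1] \le \gamma$ then $h \equiv 0$ already has error at most $\gamma$, and if $p \ge 1-\gamma$ then $h\equiv 1$ has error at most $\gamma$; so we may assume $\gamma \le p \le 1-\gamma$, which in particular guarantees $\Pr_{\bx\sim\calD}[f(\bx)=0]\ge \gamma$.

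The routine \DNFLearn\ maintains the set $U_{t}$ of satisfying assignments of $f$ not yet satisfied by any of the terms $T_1,\dots,T_t$ found so far (with $U_0 = f^{-1}(1)$), together with its $\calD$-weight $w_t := \Pr_{\bx\sim\calD}[f(\bx)=1 \wedge (T_1\vee\cdots\vee T_t)(\bx)=0]$. In round $t$ I would run $\alg$ on the distribution
\[
\calD_t := \tfrac12\,\big(\calD \text{ conditioned on } f=0\big) \;+\; \tfrac12\,\big(\calD \text{ conditioned on } \bx \in U_{t-1}\big),
\]
which is simulable by rejection sampling: membership queries let us evaluate $f$, and we know $T_1,\dots,T_{t-1}$, so each of the two components can be sampled; since $\Pr_\calD[f=0]\ge\gamma$ and (until we halt) $w_{t-1}\ge\gamma$, each draw costs only $\poly(1/\gamma)$ samples in expectation, and we safely halt once rejection sampling for the $U_{t-1}$ component repeatedly fails, which happens only when $w_{t-1}$ is already $O(\gamma)$. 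The key bookkeeping for the returned $\gamma$-weak term $T_t$ is: (a) $\calD_t$ places exactly mass $\tfrac12$ on $f^{-1}(1)$ and, conditioned on $f=1$, equals $\calD$ conditioned on $U_{t-1}$, so property $(ii)$ of a $\gamma$-weak term gives $\Pr_{\bx\sim\calD}[\bx\in U_{t-1}\wedge T_t(\bx)=1]\ge w_{t-1}/(2s)$, hence $w_t\le w_{t-1}(1-\tfrac1{2s})$; and (b) on $f^{-1}(0)$ we have $\calD_t = \calD/(2\Pr_\calD[f=0])$, so property $(i)$ gives $\Pr_{\bx\sim\calD}[T_t(\bx)=1\wedge f(\bx)=0]\le 2\Pr_\calD[f=0]\cdot\tfrac{\gamma}{s\log(1/\gamma)}\le \tfrac{2\gamma}{s\log(1/\gamma)}$.

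Iterating (a), $w_t \le p(1-\tfrac1{2s})^t \le \gamma$ once $t \ge m := \lceil 2s\ln(1/\gamma)\rceil = O(s\log(1/\gamma))$. Running for $m$ rounds and outputting $h := T_1\vee\cdots\vee T_m$ then gives false-negative rate $\Pr_\calD[h(\bx)=0\wedge f(\bx)=1] = w_m \le \gamma$, while summing (b) over the $m$ terms gives false-positive rate $\Pr_\calD[h(\bx)=1\wedge f(\bx)=0]\le m\cdot\tfrac{2\gamma}{s\log(1/\gamma)} = O(\gamma)$; together $\Pr_\calD[h\neq f] = O(\gamma)$, and $h$ is a DNF with $m = O(s\log(1/\gamma))$ terms as required. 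For confidence, each of the $m$ calls to $\alg$ fails to return a genuine $\gamma$-weak term with probability only $\tfrac{1}{\omega(s\log(1/\gamma))}$, so a union bound over the $m$ calls (and over the $\poly(1/\gamma)$-sample estimates used to decide when to halt) leaves all of them correct with probability $1-o(1) \ge 99/100$, and the whole analysis above is conditioned on that event. The running time is $m = O(s\log(1/\gamma))$ calls to $\alg$, each preceded by at most $T$ rejection-sampling steps of cost $\poly(n,s,1/\gamma)$ (to evaluate $f$ and the current partial DNF), for a total of $\poly(s,1/\gamma,n)\cdot T$.

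The main point requiring care — and where I expect to spend the most effort — is the bridge between $\alg$'s guarantees, which hold with respect to the reweighted $\calD_t$, and the conclusions we need with respect to the original $\calD$: this is exactly why the two components of $\calD_t$ are mixed with equal weight (so the positives that remain to be covered keep $\Omega(1)$ mass for property $(ii)$ while the negatives keep $\Omega(1)$ mass for property $(i)$), and why the two trivial cases are needed up front (to keep $\Pr_\calD[f=0]$ bounded away from $0$). A secondary nuisance is that $w_{t-1}$ cannot be computed exactly, so the stopping rule must be driven by $\poly(1/\gamma)$-size empirical estimates; this is routine but should be stated so that the $O(\gamma)$ error bound and the $99/100$ confidence genuinely hold.
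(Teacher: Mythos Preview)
Your proposal is correct and follows essentially the same approach as the paper: the paper's \DNFLearn\ uses exactly your equal-weight mixture $\calD_t = \tfrac12(\calD\mid f=0)+\tfrac12(\calD\mid U_{t-1})$, proves the same geometric decay $w_t\le(1-\tfrac1{2s})w_{t-1}$ via property $(ii)$, bounds the per-term false-positive rate by $O(\gamma/(s\log(1/\gamma)))$ via property $(i)$, and handles the trivial cases and empirical stopping exactly as you outline. The only cosmetic difference is that the paper sets $\calD_0=\calD$ for the very first call before switching to the mixture, and its while-loop is driven by an empirical estimate of $w_t$ rather than a fixed round count---both variants you already anticipate.
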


\begin{proof}
To begin with some simplifying assumptions, we assume that $\textsf{ALG}$ succeeds in outputting a $\gamma$-weak term in each of the calls that \DNFLearn~makes to it; by a union bound this incurs a total failure probability of only $o(1)$.
We also assume that $2 \gamma \leq \Pr_{\bx \sim {\cal D}}[f(\bx) = 1]  \leq 1-\gamma$, as otherwise it is trivially easy to output a DNF $h$ (the constant-0 function or the constant-1 function) satisfying the required error bound.

The \DNFLearn~algorithm is given as \Cref{alg:DNFLearn}.
As another simplifying assumption, we assume that each of the empirical estimates of $\Pr_{\bx \sim \mathcal{D}}[\bigvee_{0 \leq j < i} T_j(\bx) =0 \land f(\bx) = 1]$ that are computed in Step~2 is accurate to within $\pm \gamma/4$ (this adds at most another $o(1)$ to the failure probability, as discussed below), so each time Step~2(a) is reached we have that the true value of $\Pr_{\bx \sim \mathcal{D}}[\bigvee_{0 \leq j < i} T_j(\bx) =0 \land f(\bx) = 1]$ is at least $\gamma/4.$

\begin{figure}
\begin{algorithm}[H]
\addtolength\linewidth{-2em}

\vspace{0.5em}

\textbf{Input:} Query access to $f: \zo^{n} \rightarrow \zo$, sample access to $\calD$, access to a $\gamma$-weak term learner $\alg$,  parameter  $\gamma \in (0,1/2]$ \\[0.25em]
\textbf{Output:} An $O(s \log(1/\gamma))$-term DNF hypothesis $h$

\

\DNFLearn($f, \calD, \alg, \gamma$):

\begin{enumerate}
	\item Set ${\cal D}_0 = {\cal D}$ and $i = 0$. 
	\item 
	While an empirical estimate of $\Pr_{\bx \sim \mathcal{D}}[\bigvee_{0 \leq j < i} T_j(\bx) =0 \land f(\bx) = 1]$ is at least $\gamma/2$:
		\begin{enumerate}
			\item Run $\alg$ on distribution ${\cal D}_i$ and let $T_i$ be the term it returns.
			\item Set $\mathcal{D}_{i+1}$ to be the distribution that is an equal mixture of $\mathcal{D} | f(\bx) = 0$ and $\mathcal{D} | ( \bigvee_{j \leq i} T_j(\bx) =0 \land f(\bx) = 1)$.
			\item $i \gets i+1$.  
		\end{enumerate}
	\item Output the DNF hypothesis $h(x) := \bigvee_i T_i(x)$.
\end{enumerate}
\caption{The \DNFLearn~algorithm for learning DNFs, given access to a weak term learner.}
\label{alg:DNFLearn}
\end{algorithm}
\end{figure}

Let us analyze the efficiency of \DNFLearn. It is clear that each empirical estimate referred to in Step~2 can be computed, with failure probability at most $\delta$, in time $\poly(1/\gamma,\log(1/\delta)).$ Moreover, recalling that $\Pr_{\bx \sim {\cal D}}[f(\bx) = 1] \leq 1-\gamma$, we see that at each execution of the while-loop it is possible to efficiently simulate draws from ${\cal D}_i$ given access to draws from ${\cal D}$ with at most an $O(1/\gamma)$ running time overhead.
Thus the claimed running time of \DNFLearn~follows from the fact that \DNFLearn~makes $O(s \log(1/\gamma))$ calls to \alg, which we now establish.

Let us show that under the assumptions given above, the \DNFLearn~algorithm terminates after at most $O(s \log(1/\gamma))$ iterations of the while-loop in Step~2.
Towards this end, we define
\[
h_i(x) := \bigvee_{j<i} T_j(x).
\]
The following claim shows that the fraction of $f$'s satisfying assignments which are incorrectly labeled 0 by the hypothesis decreases at each iteration of the while-loop:

	\begin{claim} \label{claim:lollipop}
		\[\Prx_{\bx \sim \mathcal{D}}[h_{i+1}(\bx) = 0 \ | \ f(\bx) = 1] \leq \left( 1 - \frac{1}{2s} \right) \cdot \Prx_{\bx \sim \mathcal{D}}[h_{i}(\bx) = 0 \ | \ f(\bx) = 1]. \]	
	\end{claim}
	
	\begin{proof}
		We have that
			\begin{align*}
				\frac{1}{2s} &\leq \Prx_{\bx \sim \mathcal{D}_{i+1}}[T_{i+1} = 1 \  | \ f(\bx) = 1] = \Prx_{\bx \sim \mathcal{D}}[T_{i+1} = 1 \ | \ h_i(\bx) = 0 \land f(\bx) = 1] 
			\end{align*}
where the first inequality is by \Cref{def:weak-term-learner} and the assumption that \alg~always outputs a $\gamma$-weak term, and the second is by construction of ${\cal D}_{i+1}$.
	So it follows that 
	\begin{align*}
		& \Prx_{\bx \sim \mathcal{D}}[h_{i+1}(\bx) = 0 \ | \ f(\bx) = 1]\\ 
		&= \Prx_{\bx \sim \mathcal{D}}[h_{i}(\bx) = 0 \ | \ f(\bx) = 1] - \Prx_{\bx \sim \mathcal{D}}[T_{i+1}(\bx) = 1 \land h_i(\bx) = 0 \ | \ f(\bx) = 1] \\		
		&= \left( 1  - \Prx_{\bx \sim \mathcal{D}}[T_{i+1}(\bx) = 1 \ | \ h_i(\bx) = 0 \land f(\bx) = 1] \right) \Prx_{\bx \sim \mathcal{D}}[h_i(\bx)=0 \ | \ f(\bx) =1] \\	
		&\leq \left( 1  - \frac{1}{2s} \right) \Prx_{\bx \sim \mathcal{D}}[h_i(\bx)=0 \ | \ f(\bx) =1]
	\end{align*}
			as desired, 
where the last inequality is by part ($ii$) of \Cref{def:weak-term-learner} and the construction of ${\cal D}_{i+1}$.
			\end{proof}

Recalling the condition that is checked in the while-loop in Step~2, an immediate consequence of \Cref{claim:lollipop} is that the \DNFLearn~algorithm halts after at most $m := O(s \log(1/\gamma))$ iterations of the while-loop. It remains to bound the error of the final hypothesis $h$. 
We first bound the probability of false negatives, which is easy: when the algorithm exits the while-loop in Step~2, we have 
		\[\Prx_{\bx \sim \mathcal{D}}[h_m(\bx)=0 \land f(\bx) = 1] \leq \gamma.\]
	To bound the probability of false positives, we use the following simple claim:
	
	\begin{claim} \label{claim:easy}
	For all $i$, we have
		\[\Prx_{\bx \sim \mathcal{D}} [T_i(\bx) = 1 \land f(\bx) =0] \leq O \left( \frac{\gamma}{s \log(1/\gamma)} \right).
		 \]
	\end{claim}
		\begin{proof}
We have		\begin{align*}
		\frac{\gamma}{s \log(1/\gamma)} &\geq \Prx_{\bx \sim \mathcal{D}_i} [T_i(\bx) = 1 \land f(\bx) =0] \\
		&= \frac{1}{2} \cdot \Prx_{\bx \sim \mathcal{D}} [T_{i}(\bx) = 1  \ | \  f(\bx) =0] \\
		&\geq \frac{1}{2} \cdot \Prx_{\bx \sim \mathcal{D}} [T_{i}(\bx) = 1 \land f(\bx) =0]
		\end{align*}
where the first line is by the assumption on $\alg$ and the second is by the definition of ${\cal D}_i$.
	\end{proof}

	Using the above claim and a union bound over $T_1,T_2,\dots,T_m$, we now have that
		\[\Prx_{\bx \sim \mathcal{D}} [h_m(\bx) = 1 \land f(\bx) = 0] \leq O(\gamma).\]
	Thus, it follows that $h$ has error $O(\gamma)$ as desired. 
This concludes the proof of \Cref{lem:term-booster}.
\end{proof}

\section{List-Decoding a Term in $(ns)^{O(\log (ns))}$ Time}
\label{sec:generating-new-terms}

Recall that  \Cref{sec:list-decoding-DNFs} gave a proof of \Cref{thm:list-decoding} with a slightly weaker quantitative runtime bound of
 ${\frac 1 p} \cdot (ns)^{O(\log(s) \log(n))}$. 
We briefly sketch how to achieve the ${\frac 1 p} \cdot n^{O(\log(ns))}$ running time claimed in \Cref{thm:list-decoding}. 
This is done using the \genlistofterms~algorithm which is described below. 

\begin{algorithm}
\addtolength\linewidth{-2em}

\vspace{0.5em}

\textbf{Input:} Query access to $f: \zo^n \rightarrow \zo$, $y \in f^{-1}(1)$ \\[0.25em]
\textbf{Output:} A list $\calL$ of terms

\

\genlistofterms($f, y$):
\begin{enumerate}
	\item $\calL \gets \emptyset$. 
	\item Let $M := (ns)^{O(\log s)}$ and let $y=\bY_0,\bY_1, ..., \bY_M$ denote a random walk starting from $y$.
	\item For $t \in [M]$ and $\ell \in [(ns)^{O(\log s)}]$:
		\begin{enumerate}
			\item Run $2 \log(n)$ independent random walks of length $\ell$ starting from $\bY_t$ to get $\bZ^1_t, \dots , \bZ^{2 \log(n)}_t$.
			\item Add the largest term $T$ that satisfies all of $\bY^{1}_t, \dots, \bY^{2 \log n}_t$ to $\calL$.
		\end{enumerate}
	\item Return $\calL$.
\end{enumerate}

\caption{Generating a list of terms using locally-mixing random walks.}
\label{alg:speedy-generate-term}
\end{algorithm}

We will rely on~\Cref{alg:speedy-generate-term} (as well as its performance guarantee thanks to~\Cref{thm:list-decoding}) in~\Cref{sec:learning-exact-dnf} to learn exact DNF; we defer the proof of~\Cref{thm:list-decoding} to~\Cref{sec:speedy-mixing}. 

\section{Learning Exact DNFs}
\label{sec:learning-exact-dnf}

We now turn to our second main result, which we recall below. 
Recall that
an \emph{exact-$k$ DNF} is a DNF formula in which every term has exactly $k$ distinct literals, and that an \emph{$s$-term exact-DNF} is an $s$-term DNF formula which is an exact-$k$ DNF for some value of $k$.

\begin{theorem} [Restatement of \Cref{thm:exact-k-learn-intro}]
\label{thm:exact-k-learn}
There is a distribution-free PAC+MQ algorithm which, given any $\eps > 0$, runs in time 
\[
\exp \left( \log^{O(1)}(ns) \right) \cdot \poly(\eps^{-1})
\] 
and with probability at least 9/10 learns any unknown $s$-term exact-DNF to accuracy $\eps$, using a hypothesis which is a DNF of size $O(s \log(1/\eps))$. 
\end{theorem}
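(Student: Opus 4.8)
The plan is to reduce \Cref{thm:exact-k-learn} to the construction of a \emph{weak term learner} (in the sense of \Cref{def:weak-term-learner}) for exact-$k$ DNF and then invoke the booster \DNFLearn~of \Cref{lem:term-booster}. Set $\gamma = \Theta(\eps)$; given a $\gamma$-weak term learner running in time $\exp(\log^{O(1)}(ns))\cdot\poly(1/\eps)$, \Cref{lem:term-booster} produces (with the success probability $9/10$ we need, after the standard confidence amplification mentioned there) a DNF hypothesis of size $O(s\log(1/\eps))$ and error $O(\eps)$ in the same time up to a $\poly(n,s,1/\eps)$ factor. I assume the common term length $k$ is known (otherwise loop over $k \in [n]$ and keep whichever hypothesis passes a final empirical accuracy test). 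One useful observation up front: any genuine term $T_i$ of $f$ is an implicant of $f$, so it automatically meets condition $(i)$ of \Cref{def:weak-term-learner}; and since $\sum_i \Pr_{\bx\sim\calD}[T_i(\bx)=1 \mid f(\bx)=1]\ge 1$, some term of $f$ also meets condition $(ii)$. So it suffices to \emph{locate a term $T$ that is an approximate implicant of $f$ and that $\calD$ satisfies with conditional probability at least $1/(2s)$}, and all the work goes into finding such a $T$.

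\textbf{A candidate list via Expand-and-Prune.} I maintain a list $\calL$ of length-$k$ candidate terms of size $\quasipoly(n,s)$, seeded by running \genlistofterms~from a point of $f^{-1}(1)$ obtained by sampling $\calD$ conditioned on $f(\bx)=1$; by \Cref{thm:list-decoding} this seed list contains a term of $f$ with high probability. I then iterate two steps. \emph{Expand:} add to $\calL$ every length-$k$ term within term-distance $\polylog(ns)$ of a term already present — a $\quasipoly(n,s)$ blow-up, using \Cref{rem:term-dist-for-same-size}. \emph{Prune:} using membership queries, delete any term $T$ for which a uniform sample from $T^{-1}(1)$ reveals a point with $f=0$; the survivors are, with high confidence, approximate implicants of $f$, and a counting argument tailored to exact-$k$ DNF (the content of \Cref{subsec:expand-and-prune}) shows the survivors number at most $\quasipoly(n,s)$, so $|\calL|$ stays bounded. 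After each Prune I empirically test whether some $T\in\calL$ passes both thresholds of \Cref{def:weak-term-learner}; if so, output it and halt.

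\textbf{Far terms and far points.} Since $|\calL|$ is bounded and Prune never enlarges it, the Expand-Prune loop must eventually get stuck, reaching a state where no freshly expanded term survives. If at that point $\calL$ contains no usable term, then every term of $f$ outside $\calL$ lies at term-distance $\ge \polylog(ns)$ from all of $\calL$; pick such a far term $T_1$ of $f$ with non-trivial weight under $\calD$. Here is exactly where exactness is used: because $|T_1| = k$ equals the length of every term of $\calL$, a uniform satisfying assignment of $T_1$ is, whp, at Hamming distance $\ge \polylog(ns)$ from satisfying any term of $\calL$ — a \emph{far point} — whereas for unequal-length DNF a far term need not produce a far point. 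I then run the far-point routine of \Cref{sec:finding-far-points}: starting from a satisfying assignment of a far term, apply random noise to reach a point $y'$ that still satisfies $T_1$ but is far from every term of $\calL$ with little overlap with $T_1$; the remaining close terms of $\calL$ then share a literal $\ell$ with constant density, and I branch on the guess ``$\ell \in T_1$''. If the guess is right, I have pinned down a literal of $T_1$; if it is wrong, flipping the corresponding coordinate of $y'$ increases the distance from a constant fraction of $\calL$. Both branches make measurable progress, so the guessing tree has $\quasipoly(n,s)$ size, and along the correct branch I reach a far point, from which \Cref{lem:far-point-learn} extracts a genuinely new term of $f$. Add it to $\calL$ and resume. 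Each far-point injection contributes one of the $\le s$ terms of $f$ that Expand could not otherwise reach, so the whole process halts after $\quasipoly(n,s)$ operations, at which point $\calL$ contains a term of $f$ of conditional weight $\ge 1/s$, and the threshold test outputs it.

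\textbf{Running time and the main obstacle.} Each ingredient — the initial \genlistofterms~call, Expand, Prune, and the far-point subroutine — runs in time $\exp(\log^{O(1)}(ns))$, there are $\quasipoly(n,s)$ rounds, and the booster adds a $\poly(n,s,1/\eps)$ factor, giving the stated bound. I expect the two main difficulties to be $(a)$ the combinatorial lemma bounding the pruned-list size for exact-$k$ DNF, which is what keeps $|\calL|$ under control, and $(b)$ the far-point machinery — formalizing the noise step so $y'$ genuinely decouples from low-overlap terms, making the ``progress in both branches'' argument precise enough to bound the guessing tree, and proving \Cref{lem:far-point-learn} that a single far point suffices to recover a new term of $f$.
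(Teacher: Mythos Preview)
Your proposal is correct and follows essentially the same approach as the paper: reduce to a $\gamma$-weak term learner via \Cref{lem:term-booster}, maintain a pruned-and-expanded candidate list $\calL$ of size $\quasipoly(n,s)$, and when the list stalls use the far-point machinery (noise plus the branching ``is $\ell \in T^\star$?'' argument) to inject a genuinely new term of $f$; you also correctly isolate the two hard pieces as the prune-size bound and the far-point subroutine. The only minor omissions are the corner cases $k \le \polylog(ns)$ and $k \ge n - \polylog(ns)$ (handled separately in \Cref{sec:small-or-large-k}) and the explicit $\Theta(1/\eps)\cdot n^{O(\log s)}$ iteration count coming from the fact that each round only makes progress with probability $\Omega(\eps)$, but these do not affect the structure of your argument.
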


As mentioned in the introduction, the $9/10$ success probability in~\Cref{thm:exact-k-learn} can be amplified to $1-\delta$ with an additional runtime factor of $O(\log(1/\delta))$ using standard techniques, see e.g.~\cite{HKL+:91}.

\paragraph{Preliminary Simplifications and Remarks.}  We assume that the value $k$ for which the target function is an exact-$k$ DNF is provided to the algorithm, as well as the number of terms $s$.  This assumption is without loss of generality since we can run interleaved executions of the algorithm with guessed values of $k=1,2,\dots$ and guessed values of $s=1,2,\dots$ (see \cite{HKL+:91} for the now-standard argument establishing this).

We remark that we make no attempt at optimizing the exponents in the various quasipolynomial factors that arise in our proof. 
Moreover, we assume throughout the rest of this section that 
\begin{equation}
\log^{O(1)}(ns) \leq k  \leq n - \log^{O(1)}(ns) \label{eq:covered}
\end{equation}
since, as is shown in \Cref{sec:small-or-large-k}, the case where $k \leq \log^{O(1)}(ns)$ is handled by a simple alternate algorithm, and the case where $k \geq n - \log^{O(1)}(ns)$ is handled by a simple reduction to the case where $k < n - \log^{O(1)}(ns)$.

Throughout this section, let $f\isazofunc$ be the target $s$-term exact-$k$ DNF that we are trying to learn under an unknown distribution $\calD$.
We will assume that there is an arbitrary but fixed realization of $f$ as an exact-$k$ DNF throughout this section; we will sometimes refer to the terms $T$ appearing in this realization as \emph{true} terms.

Finally, we also assume that both $n$ and $s$ are at least sufficiently large absolute constants, which is clearly without loss of generality for the purpose of proving \Cref{thm:exact-k-learn}.

\subsection{Expanding and Pruning a List of Terms}
\label{subsec:expand-and-prune}

All the terms we consider throughout this section (regardless of whether they are true terms appearing in $f$) will have size exactly $k$. 
We start by describing two simple procedures that we will rely on in our learning algorithm:
\begin{itemize}
	\item \prune: Given a candidate list of terms $\calL$, \prune$(f, \calL)$ removes terms in $\calL$ for which there is definitive evidence that the term is not in $f$.
	\item \expand: Given a candidate list of terms $\calL$, \expand$(\calL)$ outputs a list of terms containing all terms of length $k$ that are close to some term in $\calL$.
\end{itemize}

A precise description of $\prune$ is given in \Cref{alg:prune}.

\begin{algorithm}[h]
\addtolength\linewidth{-2em}

\vspace{0.5em}

\textbf{Input:} MQ access to $f: \zo^n \rightarrow \zo$, a list of terms $\calL$ \\[0.25em]

\textbf{Output:} A list of terms $\calL' \subseteq \calL$

\

\prune($f, \calL$):
\begin{enumerate}
	\item $\calL' \gets \calL$
		\item For each $T \in \calL$, if $|T| \neq k$ then update $\calL' \leftarrow \calL' \setminus \{T\}$.  If $|T| = k$, then repeat the following $100n$ times:
	\begin{enumerate}
		\item Sample a uniformly random $\bz$ satisfying $T(\bz) = 1$.  If $f(\bz) = 0$, then update $\calL' \leftarrow \calL' \setminus \{T\}$ and move on to the next $T \in {\cal L}.$

	\end{enumerate}
	\item Return $\calL'$.
\end{enumerate}

\caption{The~\prune~subroutine to remove implausible terms.} 
\label{alg:prune}
\end{algorithm}

It is easy to see that (w.h.p.) for every term $T \in \calL' := \prune(f, \calL)$, there exists a \emph{true} term~$T'\in f$ such that $\dterm(T, T') \leq O(\log(s))$:

\begin{lemma}
\label{lem:prune}
Fix a term $T \in \calL$ such that $\dterm(T, T') > 2\log(s)$
for all $T' \in f$. Then 
\[
	\Prx\sbra{T \in \prune(f, \calL)} \leq \frac{1}{10^n}
\]
where the probability is over the randomness of the algorithm~\prune. 
\end{lemma}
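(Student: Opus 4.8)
The plan is to show that if $T$ has $\dterm(T,T') > 2\log s$ for every true term $T'$ of $f$, then a uniformly random satisfying assignment of $T$ falsifies $f$ with probability at least $1-1/s$; since \prune\ draws $100n$ such samples independently and only keeps $T$ if all of them satisfy $f$, the claimed bound of $1/10^n$ will follow immediately.

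First I would dispose of the case $|T| \neq k$, where \prune\ deletes $T$ deterministically in Step~2, so $T \notin \prune(f,\calL)$ with probability $1$. So assume $|T| = k$ (and, as is standard, that $T$ consists of literals on distinct variables, so that $T^{-1}(1) \neq \emptyset$). The heart of the proof is the claim that $\Prx_{\bz \sim T^{-1}(1)}[f(\bz)=1] \le 1/s$. To see this, fix a true term $T'$ of $f$; since $|T| = |T'| = k$, \Cref{rem:term-dist-for-same-size} gives $\dterm(T,T') = |T'\setminus T| > 2\log s$. Now $T^{-1}(1)$ is a subcube: the $k$ variables occurring in $T$ are pinned to the values forced by $T$, and under $\bz\sim T^{-1}(1)$ the remaining $n-k$ variables are mutually independent and uniform. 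I would then walk through the literals of $T'\setminus T$ one by one. If some literal $\ell\in T'\setminus T$ is on a variable that occurs in $T$, then (since $\ell\notin T$) the term $T$ must contain $\overline\ell$, so every $\bz\in T^{-1}(1)$ falsifies $\ell$ and hence $T'$, giving $\Prx_{\bz}[T'(\bz)=1]=0$. Otherwise every literal of $T'\setminus T$ lies on a distinct free variable; these $|T'\setminus T|$ events are independent, each has probability $1/2$, so $\Prx_{\bz}[T'(\bz)=1] = 2^{-|T'\setminus T|}$. Either way $\Prx_{\bz}[T'(\bz)=1] \le 2^{-|T'\setminus T|} < 2^{-2\log s} = s^{-2}$, and a union bound over the (at most) $s$ true terms of $f$ proves the claim.

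With the claim in hand, each of the $100n$ iterations of Step~2 samples $\bz\sim T^{-1}(1)$ and finds $f(\bz)=0$ (thereby removing $T$) independently with probability at least $1-1/s$. Hence $T$ survives all $100n$ iterations with probability at most $(1/s)^{100n} \le 2^{-100n} \le 10^{-n}$, using $s\ge 2$ and $2^{100}\ge 10$. I do not anticipate any real obstacle; the only step that needs care is the literal-by-literal case analysis, specifically using the subcube structure of $T^{-1}(1)$ to argue that ``free-variable'' literals of $T'\setminus T$ contribute independent factors of $1/2$ while a single ``conflicting-variable'' literal forces the probability to $0$ --- and noting that a term contains at most one literal per variable, which is what makes the count $2^{-|T'\setminus T|}$ (and the independence) correct.
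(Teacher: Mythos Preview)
Your proof is correct and follows essentially the same approach as the paper: both bound $\Prx_{\bz\sim\calU_T}[T'(\bz)=1]\le 2^{-|T'\setminus T|}<1/s^2$, union-bound over the $s$ true terms to get $\Prx[f(\bz)=1]\le 1/s$, and then use the $100n$ independent samples in \prune. Your treatment is slightly more detailed (explicitly handling the $|T|\neq k$ case and spelling out the literal-by-literal analysis for the $2^{-|T'\setminus T|}$ bound), but the argument is the same.
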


In other words, if a term $T \in \calL$ is not removed by $\prune(f, \calL)$, then $T$ is $O(\log(s))$-close to a true term in $f$ with very high probability. 

\begin{proof}
	Fix a true term $T$ and let
	$\calU_T$ denote the uniform distribution over the satisfying assignments of $T$. We will establish the following:
	\begin{equation} \label{eq:term-checker}
		\text{If}~\Prx_{\bz \sim \calU_T} [f(\bz) = 0] \leq \frac{1}{4},~\text{then for some true $T' \in f$,}~|T'\setminus T| \leq 2\log(s). 
	\end{equation}
	Note that this immediately implies the lemma, since~\prune~samples $100n$ points $\bz\sim\calU_T$, and $|T| = |T'| = k$ (thanks to Step 2(b) of~\Cref{alg:prune}), because of which we have $\dterm(T, T') = |T'\setminus T|$. 
	
	We now turn to proving~\Cref{eq:term-checker}; we will do so by establishing the contrapositive. Suppose that $|T'\setminus T| > 2\log(s)$ for each true term $T'\in f$. Note that 
	\[
		\Prx_{\bz \sim \calU_T} [T'(\bz) = 1] \leq 2^{-|T' \setminus T|} \leq \frac{1}{s^2}. 
	\]
	Taking a union bound over all $s$ true terms in $f$ then implies that
	\[
		\Prx_{\bz \sim \calU_T} [f(\bz) = 1] \leq \frac{1}{s} < \frac{3}{4},
	\]
	which proves~\Cref{eq:term-checker}.
\end{proof}	

As a consequence of~\Cref{lem:prune}, note that we have that $|\prune(f,{\cal L})| \leq s \cdot n^{2 \log s}$ except with inverse exponential failure probability. 

We now turn to the~\expand~subroutine, which is formally presented in \Cref{alg:expand}. By inspection of Step~2 of the~\expand~subroutine, we immediately have the following lemma:

\begin{algorithm}[h]
	\addtolength\linewidth{-2em}
	
	\vspace{0.5em}
	
	\textbf{Input:} A list of width-$k$ terms $\calL$ \\[0.25em]
	
	\textbf{Output:} A list of terms $\calL' \supseteq \calL$.
	
	\
	
	\expand($\calL$):
	\begin{enumerate}
		\item Set $\calL' \gets \emptyset$.
		\item For each $T \in \calL$, add every width-$k$ term $T'$ such that $\dterm(T, T') \leq 2\log^{1000}(ns)$ to $\calL'$.
		\item Return $\calL'$.
	\end{enumerate}
	
	\caption{The~\expand~subroutine to add ``nearby'' terms.}
	\label{alg:expand}
\end{algorithm}	

\begin{lemma}
\label{lem:expand-clusters}
	Let $\calL$ be a collection of width-$k$ terms. Suppose there exists a term $T \in \calL$ and a true term $T'\in f\setminus \calL$ such that $\dterm(T, T') \leq 2\log^{1000}(ns)$. Then $T' \in \expand(\calL) \setminus \calL$. 
\end{lemma}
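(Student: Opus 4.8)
The plan is to simply unwind the definition of the \expand\ subroutine; this lemma is essentially a restatement of what Step~2 of \Cref{alg:expand} does. The one mild point worth isolating is that $T'$, being a true term of the target exact-$k$ DNF $f$, has width exactly $k$, and hence is eligible to be added by \expand\ in Step~2 (which only ever adds width-$k$ terms).

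Concretely, I would argue as follows. By hypothesis $T \in \calL$, so the loop in Step~2 of \expand\ processes $T$ at some point; when it does, it adds to $\calL'$ \emph{every} width-$k$ term $T''$ with $\dterm(T, T'') \le 2\log^{1000}(ns)$. Since $T'$ is a true term of the exact-$k$ DNF $f$ it has width $k$, and by assumption $\dterm(T, T') \le 2\log^{1000}(ns)$, so $T'$ is among the terms added; thus $T' \in \calL' = \expand(\calL)$. Finally, because $T' \in f \setminus \calL$ we have in particular $T' \notin \calL$, and therefore $T' \in \expand(\calL) \setminus \calL$, as claimed.

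There is no real obstacle here: the only thing to check is that $T'$ counts as a width-$k$ term so that it is a candidate for inclusion in Step~2, and this is immediate from the standing assumption in this section that $f$ is an exact-$k$ DNF (so all of its terms, and all terms we ever consider, have width exactly $k$). No quantitative estimate or nontrivial computation is required.
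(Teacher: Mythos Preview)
Your proposal is correct and matches the paper's approach: the paper simply states that the lemma follows ``by inspection of Step~2 of the \expand\ subroutine'' without giving a formal proof. Your unwinding of the definition, including the observation that $T'$ has width $k$ because $f$ is an exact-$k$ DNF, is exactly the intended justification.
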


\subsection{The Principal Challenge: Finding Far Points}
\label{subsec:far-points}

The~\prune~and~\expand~subroutines as well as the~\genlistofterms~procedure (from~\Cref{sec:generating-new-terms}) motivate a simple learning procedure which will later form the basis of our full learning algorithm. 
For this and future sections, it will be convenient to work with a slight modification of the~\genlistofterms~procedure: 

\begin{definition} \label{def:genterm}
	The procedure~\genterm~is identical to~\genlistofterms~(\Cref{alg:speedy-generate-term}) but with ``$\ell \in [(ns)^{O(\log (n) \log (s))}]$'' replacing ``$\ell \in [(ns)^{O(\log s)}]$'' in Step~3 of~\genlistofterms.\footnote{In particular, using~\genterm~instead of~\genlistofterms~makes the proof of~\Cref{lem:far-point-learn} cleaner. 
	} 
\end{definition}

\begin{algorithm}[h]
	\addtolength\linewidth{-2em}
	
	\vspace{0.5em}
	
	\textbf{Input:} MQ access to $f\isazofunc$, sample access to $\calD$, $\eps \in (0, 0.5]$ \\[0.25em]
	
	\textbf{Output:} An $\eps$-weak term.
	
	\
	
	\simple$(f, \calD, \eps)$:
	\begin{enumerate}
		\item Estimate $\Pr_{\by \sim \calD} [f(\by) = 1]$ to additive error $\eps/2$ and return $\emptyset$ if our estimate is $\leq \eps/2$.

		\item Set $\calL = \emptyset$ and $\ell = -1$. 
		\item While $|\calL| > \ell$:
			\begin{enumerate}
				\item Let $\ell = |\calL|$. 
				\item Sample $\by \sim \calD$ until a $\by$ such that $f(\by) = 1$ is obtained.
				
				\item $\calL \gets \calL \cup \prune(f,\genterm(f,\by))$.
				
				\item Set $\calL \leftarrow \calL \cup~\prune(f,\expand(\calL))$.
				
				\item If some $T\in\calL$ is an $\eps$-weak term, halt and return $T$. 
			\end{enumerate}
		\item Return FAIL. 
	\end{enumerate}
	
	\caption{A simple weak-term learning procedure.}
	\label{alg:simple-learning}
\end{algorithm}	

The simple learning procedure \simple~is given in \Cref{alg:simple-learning}.
The algorithm repeatedly (a) generates a long list of terms via random walks, and then (b) expands and prunes the list of terms $\calL$, for as long as the size of the list $\calL$ is increasing. 
Recall, however, that~\expand~only adds ``nearby'' terms to $\calL$. 
The difficult case therefore is that the algorithm reaches the following situation:
\[
	\text{The size of $\calL$ is no longer increasing, as every remaining term in $f$ is ``too far'' from every term in $\calL$}.	
\]
More formally, we make the following definition:

\begin{definition}[Far from $\calL$]
\label{def:far-from-L}
	Let $\calL$ be a set of width-$k$ terms and supposer that $\calR := f\setminus \calL$ is nonempty.  In other words, $\calR$ is the (non-empty) set of true terms that do not appear in $\calL$. 
	We say that $f$ is \emph{far from $\calL$} if for each term $T' \in \calR$, 
	\[
		\min_{T\in\calL} \dterm(T, T') \geq \log^{1000}(ns).
	\]
	We say that $f$ is \emph{close} to $\calL$ otherwise.
\end{definition}

If $\calL$ is close to $f$, then \simple~will add a new term to $\calL$ thanks to $\expand$.  So let us suppose that $\calL$ is far from $f$.   
In this case, one might hope for the next iteration of~\genterm~to output a list of terms, one of which is a true term that did not appear in $\calL$ in the previous iteration of the while loop.   
This brings us to our main technical challenge:
\[
	\text{When will \genterm{} output a \emph{new} true term of $f$?}
\]
We claim that it will suffice to run \genterm{} from a \emph{far point}, i.e. a satisfying assignment of $f$ that is sufficiently far from every satisfying assignment of $\calL$. Formally, we have the following:

\begin{definition}[Far points]
\label{def:far-point}
	Let $\calL$ be a list of terms viewed as a DNF. We say that a point $y\in\zo^n$ is a \emph{far point} w.r.t. $\calL$) if 
	\[
		f(y) = 1 
		\qquad\text{and}\qquad 
		\dass(y, \calL) \geq \log^3(ns).\footnote{See the discussion following~\Cref{rem:term-dist-for-same-size} for the definition of $\dass$.}
	\]
\end{definition} 

\medskip

The following example should provide some intuition for this definition vis-a-vis the definition of $f$ being far from a set of terms (\Cref{def:far-from-L}):  

\begin{example}
Even if $\calL$ contains only a single term $T$, it is possible for $f$ to be far from $\calL$ in the sense of~\Cref{def:far-from-L} but for a satisfying assignment of $f$ to be non-far from $\calL$ in the sense of~\Cref{def:far-point}. As an example, consider 
	\[
		T = \bigwedge_{i=1}^{\ceil{\frac{n}{2}}} x_i 
		\qquad\text{and}\qquad 
		f = \bigwedge_{i= \ceil{\frac{n}{2}} + 1}^n x_i.
	\]
We have $\dterm(T, f) = \Theta(n)$, but $T(1^n) = f(1^n) = 1$, so $\dass(1^n,\calL)=0.$ 
	(On the other hand, note that most satisfying assignments of $f$ are far points w.r.t. $\calL := \{T\}$; we will use this in~\Cref{sec:finding-far-points}.)
\end{example}

We now show that if $\calL$ is far from $f$ \emph{and} a satisfying assignment $y\in f^{-1}(1)$ is a far point w.r.t. $\calL$, then a random walk starting at $y$ will yield a new term with high probability:

\begin{lemma}
\label{lem:far-point-learn}
Let $\calL$ be a set of width-$k$ terms with $|\calL| \leq n^{O(\log(s))}$. 
	Suppose that $y \in \zo^n$ is a far point w.r.t.~$\calL$ and furthermore $f$ is far from $\mathcal{L}$. Thenwith probability $1 - o(1)$ \genterm($f,y$) contains a true term that does not appear in $\calL$.
\end{lemma}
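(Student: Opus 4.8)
The plan is to transfer the problem onto a modified induced subgraph on which \Cref{thm:local-mixing} can be applied so as to \emph{force} the term that the walk mixes to be a term of $f$ lying outside $\calL$, and then to recover that term by the \cite{de2014learning}-style ``longest common term'' trick already used in \Cref{sec:list-decoding-DNFs}.

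\textbf{Step 1 (the modified covered graph).} Write $\calR := f\setminus\calL$ (nonempty, since $f$ is far from $\calL$), let $G := G[f^{-1}(1)]$ with its cover $\{A_i := T_i^{-1}(1)\}$, and set
\[
	V' := \cbra{ z \in f^{-1}(1) : \dass(z,\calL) \geq \tfrac14\log^3(ns)}, \qquad G' := G[V'],
\]
with candidate cover $\calC' := \cbra{ A_i' := T_i^{-1}(1)\cap V' : T_i\in\calR }$. I would first check that $\calC'$ is a $\theta'$-cover of $G'$ for $\theta' = 1/\poly(ns)$ and that it covers $V'$. Covering is immediate: a point of $V'$ satisfies some true term, which cannot be in $\calL$ (else its $\dass$ to $\calL$ would be $0$), hence lies in $\calR$. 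For expansion, use that $f$ is far from $\calL$: for $T\in\calL$ and $T_i\in\calR$ the subcube $T^{-1}(1)\cap T_i^{-1}(1)$ is either empty or has codimension $\dterm(T,T_i)\geq \log^{1000}(ns)$ inside the $(n{-}k)$-dimensional cube $T_i^{-1}(1)$; therefore $A_i'$ is that cube with a Hamming-$\tfrac14\log^3(ns)$-neighborhood of these subcubes deleted, a set of relative size at most $|\calL|\cdot n^{\frac14\log^3(ns)}\cdot 2^{-\log^{1000}(ns)} = 2^{-\Omega(\log^{1000}(ns))}$. A routine edge-isoperimetric estimate then shows that a Boolean cube with such a negligible set removed is still an $\Omega(1/n)$-expander.

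\textbf{Step 2 (confinement; the main obstacle).} The key technical lemma I would prove is that a random walk in the \emph{full} graph $G$, started from any $v$ with $\dass(v,\calL)$ sufficiently far above $\tfrac14\log^3(ns)$ (in particular from the far point $y$, for which $\dass(y,\calL)\geq\log^3(ns)$, and from any vertex $\bY_t$ of a previous such walk), stays inside $V'$ for $(ns)^{O(\log n\log s)}$ steps with probability $1-o(1)$; the slack between $\log^3(ns)$, $\tfrac12\log^3(ns)$, $\tfrac14\log^3(ns)$ is used precisely so that the outer walk and all the inner walks launched from its vertices remain inside $V'$. The idea is a potential/gambler's-ruin argument: for each fixed $T\in\calL$ the quantity $\dass(z,T) = \abs{\{i\in\mathrm{vars}(T): z\text{ disagrees with }T\text{ at }i\}}$ changes by at most $1$ per step and, while it is small, has a strong upward drift, because $|T|=k$ is large by \eqref{eq:covered}, so flipping one of its few ``disagreeing'' coordinates is far less likely than flipping an ``agreeing'' one; one bounds the probability that $\dass(\cdot,T)$ ever drops from $\geq\tfrac12\log^3(ns)$ down to $\tfrac14\log^3(ns)$ within $(ns)^{O(\log n\log s)}$ steps by $2^{-\Omega(\log^3(ns))}$ and union-bounds over the $\leq n^{O(\log s)}$ terms of $\calL$. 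The delicacy is that the walk lives on an \emph{induced} subgraph of the cube, so some moves are forbidden and $\dass(\cdot,T)$ is not literally a birth--death chain; I would handle this by stochastically dominating its downward moves (forbidding a downward move only helps) and absorbing the possible shortfall of upward moves into the bounded degree $\dmax = O(n)$. This is the step I expect to require the most care.

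\textbf{Step 3 (local mixing plus term recovery).} Apply \Cref{thm:local-mixing} to $(G',\calC')$ with accuracy $\eps' := 2^{-\log^{10}(ns)}$: for every $v\in V(G')$ there are an event $E_v$ with $\Pr[E_v]\geq p = (ns)^{-O(\log s)}$, a length $\ell_v\leq(ns)^{O(\log s)}$, and an index $j(v)$ with $T_{j(v)}\in\calR$, such that a length-$\ell_v$ walk in $G'$ from $v$, conditioned on $E_v$, is within TV distance $\eps'$ of the uniform distribution on $A_{j(v)}'$. Fix $t\in[M]$ (recall $M = (ns)^{O(\log s)}$ in \genterm): conditioned on the outer walk $y=\bY_0,\dots,\bY_M$ staying in $V'$ (probability $1-o(1)$ by Step 2), $\bY_t\in V(G')$, and running \genterm's inner loop with $\ell = \ell_{\bY_t}$, each of the $\Theta(\log n)$ inner walks, viewed in $G$, stays in $V'$ with probability $1-o(1)$ (Step 2 again) and hence, as a walk in $G'$, lands $\eps'$-close to uniform on $A_{j(\bY_t)}'$ with probability $\geq p/4$, independently across the inner walks; so all of them do so with probability $\geq (p/4)^{\Theta(\log n)} = (ns)^{-O(\log s\log n)}$. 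Conditioned on that, the inner endpoints are independent $\eps'$-approximately-uniform draws from $A_{j(\bY_t)}' = T_{j(\bY_t)}^{-1}(1)\setminus(\text{negligible set})$, so exactly as in \cite{de2014learning} the longest term satisfied by all of them equals $T_{j(\bY_t)}$ with probability $1-o(1)$: each coordinate of $T_{j(\bY_t)}$ is pinned up to the $\eps'$ error, while each coordinate outside $T_{j(\bY_t)}$ is balanced to within $\eps'$ and hence gets split among $\Theta(\log n)$ draws. Thus for each $t$ some length $\ell$ makes \genterm output a term of $\calR$ with probability $(ns)^{-O(\log s\log n)}$. Finally, since \genterm makes $M\cdot(ns)^{O(\log n\log s)}$ iterations with fresh randomness (independent given the outer walk) --- and this is exactly why \genterm's enlarged range of $\ell$, rather than that of \genlistofterms, is used --- a standard amplification, together with the fact that once the walk has mixed a whole range of longer lengths $\ell$ also works (the endpoint stays $\eps'$-close to uniform), shows that the number of good independent trials exceeds the reciprocal of the per-trial success probability by a super-constant factor, so with probability $1-o(1)$ some iteration outputs a true term of $f$ not in $\calL$. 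I would verify the exponent bookkeeping here carefully --- in particular that the constant in the $\log^3(ns)$ distance threshold is large enough relative to the constant in \genterm's $\ell$-range for the quasipolynomial parameters to line up.
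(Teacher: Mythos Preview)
Your approach is plausible in spirit but both takes a harder route than the paper and has a genuine gap in Step~1.

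\textbf{How the paper does it.} The paper never modifies the graph. It applies \Cref{thm:local-mixing} directly to the full graph $G[f^{-1}(1)]$ with the cover indexed by \emph{all} terms of $f$, obtaining a time $t = (ns)^{O(\log s)}$, an event $E$ with $\Pr[E]\geq p=(ns)^{-O(\log s)}$, and a term $T^\star$ of $f$ such that $\bY_t\mid E$ is $1/n$-close to uniform on $T^\star$. The single new observation is that $T^\star\notin\calL$: if $T^\star\in\calL$ then $\Pr[T^\star(\bY_t)=1]\geq(1-\tfrac1n)p=(ns)^{-O(\log s)}$, whereas a separate confinement lemma (\Cref{lem:far-point-walk}, essentially your Step~2) shows that for every true term $T\in\calL$ one has $\Pr[T(\bY_t)=1]\leq t\cdot\exp(-\Omega(\log^3(ns)))$, which is far smaller. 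So local mixing on the unmodified graph is \emph{forced} to hand you a term of $\calR$, and the rest (the \cite{de2014learning} longest-common-term recovery, amplification over the enlarged $\ell$-range of \genterm) proceeds exactly as you describe in Step~3. Your Step~2 is the paper's \Cref{lem:far-point-walk}; the drift argument you sketch (use that the currently-satisfied term lies in $\calR$, hence is $\log^{1000}(ns)$-far from every $T\in\calL$, and stochastically dominate the downward moves) is precisely the paper's proof.

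\textbf{The gap in your Step~1.} The claim that each $A_i'=T_i^{-1}(1)\cap V'$ remains an $\Omega(1/n)$-expander is not ``routine.'' Take a vertex $v\in A_i'$ at the boundary, i.e.\ with $\dass(v,\calL)$ exactly $\tfrac14\log^3(ns)$. For each $T\in\calL$ achieving this distance, up to $\tfrac14\log^3(ns)$ of $v$'s cube-neighbors fall to distance $\tfrac14\log^3(ns)-1$ and drop out of $V'$. Summed over $|\calL|\leq n^{O(\log s)}$ terms, this is up to $n^{O(\log s)}\cdot\log^3(ns)$ removed neighbors. But $v$'s degree in the cube $T_i^{-1}(1)$ is only $n-k$, and \eqref{eq:covered} guarantees merely $n-k\geq\log^{O(1)}(ns)$; in the regime $n-k\approx\log^{1000}(ns)$ (where the subcubes $T^{-1}(1)\cap T_i^{-1}(1)$ are nonempty but $n-k$ is still tiny compared to $|\calL|$) the disagreement sets of the various $T$'s can cover all $n-k$ free coordinates, leaving $v$ isolated in $G'[A_i']$ and destroying expansion. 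You could try to repair this (e.g.\ argue a bound on how many $T$'s can simultaneously sit at exactly the threshold distance, or randomize the threshold), but it is real work---and entirely avoidable, since the confinement lemma you already have is exactly what lets you run local mixing on the full graph and conclude $T^\star\in\calR$ directly.
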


\Cref{lem:far-point-learn} is an easy consequence of the following:

\begin{lemma}
	\label{lem:far-point-walk}
	Suppose 
	$\calL$ is a set of terms
	such that $f$ is far from $\calL$. If $y \in f^{-1}(1)$ is a far point w.r.t.~$\calL$, and $y := \bY_0, \ldots , \bY_t$ is a lazy random walk over the satisfying assignments of $f$, then for any true term $T$ in $\calL$, 
	\[
		\Prx_{}\sbra{T(\bY_t) = 1} \leq t\cdot\exp\pbra{-\Omega\pbra{\log^3(ns)}}. 
	\]
	As an immediate consequence, 
	\[
		\Prx_{}\sbra{T(\bY_t) = 0~\text{for all}~T\in\calL} \geq 1 - t|\calL|\cdot\exp\pbra{-\Omega\pbra{\log^3(ns)}}. 
	\]	
\end{lemma}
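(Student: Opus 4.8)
The plan is to reduce the statement to a one-dimensional drift analysis. Fix the true term $T \in \calL$ in question, and set $m := \lfloor\log^3(ns)\rfloor$. Since $T$ is one of the disjuncts of the DNF $\calL$, every satisfying assignment of $T$ satisfies $\calL$, so $T^{-1}(1)\subseteq\calL^{-1}(1)$ and hence $\dH(y, T^{-1}(1)) \geq \dass(y,\calL) \geq \log^3(ns)$; equivalently, $y$ falsifies at least $m$ of the literals of $T$. For the lazy single-bit random walk $y = \bY_0,\dots,\bY_t$ on $G[f^{-1}(1)]$, let $d_i := \dH(\bY_i, T^{-1}(1))$ be the number of literals of $T$ falsified by $\bY_i$. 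Then $d_0 \geq m$, each step of the walk changes $d_i$ by at most $1$, and $\{T(\bY_t)=1\}=\{d_t=0\}$. So it suffices to bound the probability that this $\{0,1,2,\dots\}$-valued process, started at a value $\geq m$, ever reaches $0$ within the first $t$ steps.

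The heart of the argument is to show that $(d_i)$ has a strong \emph{upward} drift while it lies in $\{1,\dots,m\}$. I would establish the following local claim: for every $v \in f^{-1}(1)$ with $1 \leq d_v \leq m$, the number of neighbors of $v$ in $G[f^{-1}(1)]$ that strictly decrease $d$ is at most $d_v \leq m$ --- one moves toward $T^{-1}(1)$ only by flipping one of the (at most $d_v$) currently-falsified literals of $T$ --- while the number of neighbors that strictly increase $d$ is larger by a factor of at least $\mathrm{polylog}(ns)$. To lower bound the ``away'' moves, let $T''$ be a term of $f$ satisfied by $v$ (which exists and has width exactly $k$ since $f$ is exact-$k$). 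Flipping any coordinate outside $\mathrm{vars}(T'')$ keeps $v$ inside $T''^{-1}(1)\subseteq f^{-1}(1)$; among the coordinates in $\mathrm{vars}(T)\setminus\mathrm{vars}(T'')$, all but at most $d_v$ are correct for $T$, so flipping any of these is an ``away'' move that stays in $f^{-1}(1)$. It remains to argue that $|\mathrm{vars}(T)\setminus\mathrm{vars}(T'')|$ is large, which is where the hypothesis that $f$ is far from $\calL$ enters: when $T''\notin\calL$ we have $\dterm(T'',T)\geq\log^{1000}(ns)$, and since $v\in T''^{-1}(1)$ the number of coordinates shared by $T$ and $T''$ on which their literals disagree is at most $d_v \leq m$; hence $|\mathrm{vars}(T)\setminus\mathrm{vars}(T'')| \geq \log^{1000}(ns) - m \geq \frac{1}{2}\log^{1000}(ns)$. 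Since also $k \leq n - \log^{O(1)}(ns)$, there is plenty of room in the hypercube, and we get at least $\frac{1}{3}\log^{1000}(ns) \gg d_v\cdot\mathrm{polylog}(ns)$ ``away'' moves, as desired.

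Given this drift, I would finish with a standard gambler's-ruin / supermartingale estimate. Comparing $(d_i)$, watched only while it stays in $\{1,\dots,m-1\}$, to a biased nearest-neighbor chain with down-to-up probability ratio at most $1/\mathrm{polylog}(ns)$, the probability that the chain --- started at level $m$ --- reaches $0$ without first returning to level $m$ is at most $(1/\mathrm{polylog}(ns))^{m-1} = \exp(-\Omega(\log^3(ns)))$; laziness and the restriction to $t$ steps only decrease this. Finally, on any trajectory with $d_t=0$, consider the last time $\sigma < t$ at which $d_\sigma = m$ (such a $\sigma$ exists since $d_0 \geq m$ and $d$ moves in unit steps): after $\sigma$, the process descends from $m$ to $0$ without returning to $m$. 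Union bounding over the at most $t$ choices of $\sigma$ gives $\Pr[T(\bY_t)=1]\leq t\cdot\exp(-\Omega(\log^3(ns)))$. The ``immediate consequence'' display then follows by a union bound over the at most $|\calL| \leq n^{O(\log s)}$ true terms of $f$ lying in $\calL$.

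The step I expect to be the main obstacle is establishing the upward drift uniformly over all relevant $v$ --- specifically the case in which $v$ satisfies some term $T''\in\calL$ (so $v \in \calL^{-1}(1)$), which the ``$f$ far from $\calL$'' hypothesis does not control and where $T''$ could share all of $T$'s variables, leaving \emph{no} ``away'' move available at $v$. Handling this seems to require a more careful accounting: while the walk sits inside such a subcube $T''^{-1}(1)$ the value $d_i$ is frozen at $\dterm(T,T'') \geq 1$ and can only change when the walk exits into a new subcube (where the previous analysis applies), so one must additionally bound how many such subcube-to-subcube transitions are needed --- and how likely they are --- to drive $d$ all the way down to $0$.
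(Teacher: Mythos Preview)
Your approach is essentially the paper's: both track $d_i=\dass(\bY_i,T)$, establish a strong upward drift whenever the current vertex satisfies some true term $T''\in\mathcal{R}:=f\setminus\calL$ (using that $|T\setminus T''|\ge\log^{1000}(ns)$ while at most $d_i<\log^3(ns)$ of those literals can be conflict variables), and finish via a biased--random-walk/coupling estimate together with a union bound over the last time the process was at level $m$. The paper packages the drift step as an ``adversarial'' walk in which, at each step, an adversary selects a term $T^\star\in\mathcal{R}$ satisfied by the current vertex and the walk takes a lazy step inside that subcube; your ``let $T''$ be a term of $f$ satisfied by $v$'' is the same device.

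The obstacle you flag --- what if every true term satisfied by $v$ lies in $\calL$ --- is simpler to dispatch than you suggest, and the paper's restriction $T^\star\in\mathcal{R}$ is exactly this maneuver made implicit. The point is that this case cannot occur before the bad event itself. Let $\tau$ be the first time the walk enters $\bigcup_{T'\in f\cap\calL}T'^{-1}(1)$. Since $y$ is a far point, $\dass(y,T')\ge\log^3(ns)$ for every $T'\in\calL$, so $\tau\ge 1$; and for all $i<\tau$ the vertex $\bY_i$ satisfies only terms of $\mathcal{R}$, so your drift inequality applies at every step up to time $\tau$. As $\{T(\bY_t)=1\}\subseteq\{\tau\le t\}$, it suffices to bound $\Pr[\tau\le t]$; union-bounding over the at most $s$ true terms $T'\in f\cap\calL$ (and, for each, over the last time its distance was $m$) gives $s\cdot t\cdot\exp(-\Omega(\log^3(ns)))$, and the factor $s$ is absorbed by the exponential. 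So there is no need to analyze the walk after it has landed inside some $T''^{-1}(1)$ with $T''\in\calL$. (Incidentally, your remark that $d_i$ is ``frozen at $\dterm(T,T'')$'' inside $T''^{-1}(1)$ is only correct in the extreme case where $T$ and $T''$ share all $k$ variables; otherwise the coordinates of $T$ not fixed by $T''$ remain free and $d_i$ can still move.)
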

		
Note that since $|{\cal L}| \leq s \cdot n^{2 \log(s)}$ w.h.v.p. after running~\prune, this is a nontrivial bound. 
We first see how this yields~\Cref{lem:far-point-learn}:

\begin{proof}[Proof of~\Cref{lem:far-point-learn}]
	For convenience, set 
	\[
		\eps = \frac{1}{n} 
		\qquad\text{and}\qquad 
		p = (ns)^{-O(\log(s))}. 
	\]
	Consider the iteration of \genterm{} where $t$ is equal to the $(p, \eps)$-local mixing time of a random walk starting at $y$ in the graph defined by the satisfying assignments of $f$ with a cover corresponding to the set of terms. Note that such a local mixing time $t$ exists by \Cref{thm:local-mixing}, and moreover by \Cref{thm:local-mixing} we have that $t = (ns)^{O(\log s)}$. Moreover, let $E$ denote the event we condition on for local mixing and $T^\star$ denote the term of $f$ over which the walk is $\eps$-close to uniformly distributed.
	
We first note that $T^\star \notin \calL$. Indeed, if $T^\star \in \calL$, then as $\Pr[ E] \geq p = (ns)^{-O(\log(s)}$, we get that 
	\[
		\Pr\sbra{T^\star(\bY_t) = 1} \geq (1-\eps)p \geq (ns)^{-O(\log(s))}, 
	\]
	which would contradict~\Cref{lem:far-point-walk}.
	So we have $T^\star \notin \calL$. 

	Now, fix an index $\ell$ in Step 3 of the~\genterm~algorithm.\footnote{Recall from~\Cref{def:genterm} that $\ell \in [(ns)^{O(\log(s)\log(s))}]$.} 
	Let
	$\bZ_t^1, \ldots , \bZ_t^{2\log(n)}$ denote the end-points of the $2\log(n)$-many $t$-step random walks, and $E_j$ denote the event that $E$ holds for the $j^\text{th}$ walk. Note that 
	\[
		\Pr\sbra{T^\star(\bZ_t^j) = 1 \,|\, E_j } \geq 1 - \frac{1}{n}.  
	\]
	On the other hand, note that for any $i \in [n]$ without a corresponding literal in $T^\star$,
	\[
	\Pr\sbra{(\bZ_t^j)_i = 0~\text{for all}~j \,|\, E_1, \dots, E_{2 \log(n)}} \leq \left( \frac{1}{2} + {\frac 1 n}\right)^{2 \log(n)} < \frac{1}{n^{3/2}}, \]
		and an analogous statement holds with $(\bZ_t^j)_i = 1$. 
		
	Let $\bT$ be the term output in the $\ell^{\text{th}}$-iteration of Step~3 of~\genterm. Taking a union bound over all indices $i\in[n]$, we get 
	\[
	\Pr\sbra{\bT \setminus T^\star \not = \emptyset \,|\, E_1, \dots, E_{2 \log(n)}} \leq \frac{1}{\sqrt{n}}. 
	\]
		We now conclude that
		\begin{align*}
			\Pr\sbra{\bT \setminus T^\star = \emptyset \land T^\star \subseteq \bT \,|\, E_1, ..., E_{2 \log(n)}} &\geq 1 - \Pr\sbra{\bT \setminus T^\star \not = \emptyset\,|\, E_1, \dots, E_{2 \log(n)}} \\& \qquad\qquad -  \Pr\sbra{\exists~j~\text{s.t.}~T^\star(\bZ_t^j) = 0 \,|\, E_1, \dots , E_{2 \log(n)} } \\
			&\geq 1 - o(1).
		\end{align*}
		Thus the probability we output $T^\star$ in the $\ell^{\text{th}}$ iteration is
			\[\Pr[\bT = T^\star] \geq (1 - o(1)) (ns)^{-O(\log(s) \log(n))}.\]
		The lemma now follows as \genlistofterms{} iterates $(ns)^{-O(\log(s) \log(n))}$ times for each value of $t$.
\end{proof}	

We now turn to the proof of~\Cref{lem:far-point-walk}:

\begin{proof}[Proof of~\Cref{lem:far-point-walk}]
	Let $\mathcal{R}$ denote the set of terms in $f$ that do not appear in $\calL$.
	We will work in a slightly more adversarial setting: At each time step, an adversary observes $\mathbf{Y}_i$ and chooses a term $T^\star \in \mathcal{R}$ that satisfies $\bY_i$. We then generate $\bY_{i+1}$ by choosing a random coordinate $\bj$ such that neither $x_{\bj}$ nor $\ol{x}_{\bj}$ appear in $T^\star$ and randomly flipping the $j$th coordinate of $\bY_{i}$ with probability $1/2$. (This corresponds to taking a step in a lazy random walk.) 
	
	We will now show that under any strategy of the adversary, for any fixed true term $T \in \calL$,
	\[
	\Pr\sbra{T(\bY_t) = 1} \leq O(t) \exp\pbra{-\Omega\pbra{\log^3(ns)}}.\]
	The result will then follow by a union bound over the (at most $s$) true terms in $\calL$. 
	Fix $j \in [t]$ and let $E_j$ denote the event that $j$ is the last time at which $\dist(\bY_j, T) \geq \log^3(ns)$. We will bound $\Pr[E_j \land T(\bY_t) = 1]$ and then do a union bound over all possibilities for $j$.
	
	For convenience, let $A_i$ denote the $(j + i)^\text{th}$ term that the adversary plays, and define the random variables 
	\[
		\bZ_i = \dass(\bY_{j+i}, T).
	\]
	We will proceed by a coupling argument. Let $\bX_i$ denote independent random variables defined as\footnote{Note that here we are using the assumption that $k \leq n - n - \log^{O(1)}(ns)$, cf.~\Cref{eq:covered}.}
	\[
		\bX_i = 
		\begin{cases}
			-1 & \text{w.p.}~\frac{\log^{3}(ns)}{n-k}, \\
			+1 & \text{w.p.}~\frac{0.5 \log^{1000}(ns)}{n-k}, \\
			0 & \text{otherwise}.
		\end{cases}
	\]

		Since $f$ is far from $\calL$, it follows that $|T \setminus A_i| \geq \log^{1000}(ns)$. Note that any variable in $|T \setminus A_i|$ that is fixed by $A_i$, must be set to the wrong value in $\bY_{i+j}$. Thus, there can be at most $\log^{3}(ns)$ such variables. As such, we can safely conclude that there are at least $\frac{3}{4} \log^{1000}(ns)$ variables in $T$ that are not fixed by $A_i$. It now follows that
			\[\Pr[\bZ_{i+1} - \bZ_i = -1] \leq \Pr[\bX_{i+1} = -1] \qquad\text{and}\qquad 
			\Pr[\bZ_{i+1} - \bZ_i = 1] \geq \Pr[\bX_{i+1} = 1].\]
		Thus, we can produce a coupling of $\bZ_{i+1} - \bZ_i$ and $\bX_i$ where $\bZ_{i+1} - \bZ_i \geq \bX_{i+1}$. Thus,
			\begin{align}
				\Pr[\bZ_{t-j} - \bZ_0 \leq -\log^{3}(ns) ] &\leq \Pr \left [\sum_{i = 0}^{t-j-1} \bX_i \leq -\log^{3}(ns) \right ].  \label{eq:longshot}
			\end{align}
			
To upper bound the RHS of \Cref{eq:longshot}, condition on the number of $i$'s for which $\bX_i$ is either $-1$ or $+1$; call this value $m$.  We must have $m \geq \log^3(ns)$ or else it is impossible to have  $\sum_{i=0}^{t-j-1} \bX_i \leq -\log^3(ns)$ so suppose $m \geq \log^3(ns)$.  Under this conditioning each of the $m$ nonzero $\bX_i$'s is $-1$ with probability ${\frac {\log^3 (ns)}{0.5 \log^{1000}(ns) + \log^3(ns)}} < {\frac 2 {\log^{997}(ns)}}$, so the probability that at least $m/2$ of them are $-1$ (which needs to happen for $\sum_i \bX_i$ to be negative) is at most 
\[
{m \choose {m/2}} \cdot \pbra{ {\frac 2 {\log^{997}(ns)}}}^{m} \leq 
\pbra{ {\frac 4 {\log^{997}(ns)}}}^{m} \leq
\exp \left( - \Omega \left( \log^3(ns) \right) \right);
\]
so we have
\[
\Pr[\bZ_{t-j} - \bZ_0 \leq -\log^{3}(ns) ] \leq 
\exp \left( - \Omega \left( \log^3(ns) \right) \right).
\]			
Taking a union bound over all indices $j$ then gives the result for $T$. A union bound over all $T \in \calL$ completes the proof.
\end{proof}

To summarize, we have shown that assuming $f$ is far from $\calL$, running~\genterm~from $y\inf^{-1}(1)$ where $y$ is a far point w.r.t $\calL$ yields a new true term $T^\star$ that does not belong to $\calL$. 
The bulk of the remainder of this section will give an algorithm to find far points; we describe this in~\Cref{sec:finding-far-points} but state the main technical guarantee below:

\begin{lemma}
\label{lem:far-point-found}
	There exists an algorithm, \findfar~(\Cref{alg:gen-far}), with the following guarantee: Suppose $\calL$ is a collection of width-$k$ terms with $|\calL| \leq n^{O(\log(s))}$. Suppose $f$ is far from $\calL$ and $y\in\zo^n$ satisfies a term $T^\star \in f\setminus\calL$. Then with probability $1 - o(1)$ $\findfar(f, y, \calL)$ outputs a far point w.r.t. $\calL$.
	Furthermore, $\findfar(f, y, \calL)$ runs in $\exp\pbra{\log^{O(1)}(ns)}$-time.
\end{lemma}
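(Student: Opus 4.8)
The plan is to first apply a noising step to $y$ and then run an exhaustive branching procedure that, along one distinguished branch, turns the noised point into a far point. The running time will be governed by the (quasipolynomial) size of the branching tree, and correctness by the fact that all branches are explored.

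\textbf{Noising.} First I would run the \leap{} subroutine (\Cref{alg:warmup-estimator}) on $y$ to obtain a point $y'$. The guarantee we want is a dichotomy: with probability $1-o(1)$, $y'$ still satisfies $T^\star$ (so in particular $f(y')=1$), while simultaneously every term $T\in\calL$ that has \emph{small} overlap with $T^\star$ now has $\dass(y',T)\ge\log^3(ns)$ and is thus already ``killed''; hence the only terms of $\calL$ that $y'$ can still be close to are terms with \emph{large} overlap with $T^\star$. Quantitatively this should follow from a Chernoff bound over the coordinates touched by the noise together with a union bound over the $|\calL|\le n^{O(\log s)}$ terms, which is affordable because the relevant deviation threshold is $\log^{\Omega(1)}(ns)\gg\log(n)\log(s)$; here the hypothesis that $f$ is far from $\calL$ (so $\dterm(T,T^\star)\ge\log^{1000}(ns)$ for every $T\in\calL$) is what forces any term surviving the noising to overlap $T^\star$ heavily on the variables of $T^\star$.

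\textbf{Branching on literals of $T^\star$.} Starting from $z\gets y'$ and an empty commitment set $G$ (of literals guessed to belong to $T^\star$), I would recurse. If $\dass(z,\calL)\ge\log^3(ns)$ and $f(z)=1$, output $z$ and halt. Otherwise, let $\calC$ be the set of terms of $\calL$ within distance $\approx\log^{1000}(ns)$ of $z$; by the noising dichotomy every $T\in\calC$ agrees with $T^\star$ on all but a few of its literals, so by averaging some single literal $\ell$ on an as-yet-uncommitted variable occurs in an $\Omega(1)$-fraction of the terms of $\calC$. Since we do not know whether $\ell\in T^\star$, we branch on the two answers. If we guess $\ell\in T^\star$, we add $\ell$ to $G$: this is progress toward pinning down $T^\star$, and because $f$ is far from $\calL$ one should be able to argue that after $\log^{O(1)}(ns)$ correct such commitments no term of $\calL$ can remain close to $z$, bounding the depth of this move. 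If we guess $\ell\notin T^\star$, we flip the coordinate of $z$ corresponding to $\ell$; since $\ell$ occurs in an $\Omega(1)$-fraction of $\calC$, this flip raises $\dass(z,T)$ by one for an $\Omega(1)$-fraction of $T\in\calC$, which strictly decreases a potential such as $\sum_{T\in\calL:\ \dass(z,T)<\log^{1000}(ns)}\bigl(\log^{1000}(ns)-\dass(z,T)\bigr)$, bounding the depth of this move. Since every node strictly decreases a global potential that starts at $\exp(\log^{O(1)}(ns))$, the recursion tree has $\exp(\log^{O(1)}(ns))$ leaves and the procedure runs in $\exp(\log^{O(1)}(ns))$ time.

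\textbf{Correctness and the main obstacle.} Consider the branch in which the noising event held and every ``$\ell\in T^\star$?'' question is answered correctly. Along it the invariant ``$z$ satisfies $T^\star$ and $G\subseteq T^\star$'' is preserved, so when the recursion terminates it returns a $z$ with $f(z)=1$ and $\dass(z,\calL)\ge\log^3(ns)$ --- a far point; and since \findfar{} (\Cref{alg:gen-far}) exhaustively explores all branches and returns the first valid far point it encounters (checking $f(z)=1$ by a membership query), it succeeds whenever the noising event held, i.e.\ with probability $1-o(1)$. The delicate part --- and the main obstacle I expect --- is calibrating the noise rate and the two distance thresholds ($\log^{1000}(ns)$ for ``close'' and $\log^3(ns)$ for ``far'') so that all of the following hold simultaneously: (i) $y'$ retains $T^\star$ with probability $1-o(1)$; (ii) the noising dichotomy survives the union bound over all of $\calL$; and (iii) \emph{both} branches of every node provably decrease a single potential bounded by $\exp(\log^{O(1)}(ns))$, the subtlest point being the bound on the number of correct ``$\ell\in T^\star$'' commitments, which must be wrung out of the hypothesis that $f$ is far from $\calL$.
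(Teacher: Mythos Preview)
Your overall strategy---noise, then branch on whether a popular literal lies in $T^\star$---matches the paper's. But the runtime argument has a genuine gap, and the fix the paper uses is not the one you sketch.

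\textbf{The $S$-branch depth is $k$, not polylog.} When you guess ``$\ell\in T^\star$'' and add $\ell$ to $G$, you do not move $z$ at all; any term $T\in\calL$ that was close to $z$ remains exactly as close. Your claim that ``after $\log^{O(1)}(ns)$ correct such commitments no term of $\calL$ can remain close to $z$'' is false: if $T$ and $T^\star$ share $k-\log^{1000}(ns)$ literals (which is consistent with $f$ being far from $\calL$), you can commit to shared literals forever without changing $\dass(z,T)$. The only bound on the number of $S$-commitments is $|S|\le k$; the paper terminates by brute-force enumeration once $|S|\ge k-\log^{1000}(s)$.

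\textbf{With $S$-depth $k$, a single potential does not give a quasipoly tree.} A binary tree whose depth can be $k$ has $2^k$ leaves. The paper's way out is an \emph{asymmetric} tree-counting lemma: if along every root-to-leaf path there are at most $a$ left-children ($A$-branches) and at most $d$ right-children ($S$-branches), then the tree has at most $O((a+d+1)^{a+1})$ nodes. With $a=\mathrm{polylog}(ns)$ and $d=k$ this is $k^{\mathrm{polylog}(ns)}$, which is quasipolynomial. Your ``single global potential'' argument cannot produce this asymmetric bound; you need to bound the two branch types separately and then invoke such a counting lemma.

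Two secondary issues: (i) your $A$-branch potential $\sum_T(\log^{1000}(ns)-\dass(z,T))^+$ is not monotone---flipping $z_i$ \emph{decreases} $\dass(z,T)$ for any $T$ containing $\bar\ell$---so it does not bound the number of $A$-branches; the paper instead freezes each coordinate once it enters $A$ and measures distance only on the frozen $A$-coordinates, which does give a monotone potential and yields $|A|\le O(\log^3(ns)\log|\calL|)$. (ii) A single noising step at the start is not enough: after many $A$-flips the surviving close terms need not admit a popular coordinate, so the paper re-runs \leap{} (with frozen set $F=S\sqcup A$) whenever no popular coordinate exists, and separately bounds the number of such re-noisings along any branch by $O(\log|\calL|+\log k)$.
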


With~\Cref{lem:far-point-found} in hand, we can now prove~\Cref{thm:exact-k-learn}. 

\subsection{Learning Exact DNF: Proof of~\Cref{thm:exact-k-learn}}
\label{subsec:proof-of-exact-k}

We now give a simple modification of~\simple~(\Cref{alg:simple-learning}) which incorporates the~\findfar~procedure and will be used to prove~\Cref{thm:exact-k-learn}. The complete algorithm,~\exactlearn,~appears below as~\Cref{alg:exact-k-learn}.

\begin{algorithm}
\addtolength\linewidth{-2em}

\vspace{0.5em}

\textbf{Input:} MQ access to $f: \zo^{n} \rightarrow \zo$, sample access to $\calD$,  $\eps \in (0,1/2]$ \\[0.25em]
\textbf{Output:} An $\eps$-weak term

\

\exactlearn($f, \calD, \eps$):
\begin{enumerate}
	\item Estimate $\Pr_{\by \sim \calD} [f(\by) = 1]$ to additive error $\eps/2$ and return $\emptyset$ if the estimate is $\leq \eps/2$.
	\item Set $\calL \gets \emptyset$ and $i \gets 0$.
	\item For $i \in \sbra{\frac{\Theta(1)}{\eps}\cdot n^{O(\log(s))}}$:
		\begin{enumerate}
				\item Sample $\by \sim \calD$ until a $\by$ such that $f(\by) = 1$ is obtained.
			\item $\bz \gets \findfar(f, \by, \calL)$.
			\item $\calL \gets \calL \cup \prune(f,\genterm(f,\bz))$.
			\item $\calL \gets \calL \cup \prune(f,\expand(\calL))$.
			\item $i \gets i+1$. 
			\item If some $T\in\calL$ is an $\eps$-weak term, halt and return $T$. 
			\end{enumerate}
	\item Return FAIL. 
\end{enumerate}

\caption{A weak-term learner for exact-$k$ DNF.}
\label{alg:exact-k-learn}
\end{algorithm}

\begin{proof}[Proof of \Cref{thm:exact-k-learn}]

First, note that each iteration of the for-loop runs in quasi-polynomial time, and consequently so does~\Cref{alg:exact-k-learn}. 

Thanks to \Cref{lem:term-booster}, it suffices to show that 
\[
	\Pr\sbra{\exactlearn{}~\text{outputs FAIL}} \leq 
{\frac {c}{s \log(1/\epsilon)}}
\]
for a sufficiently small absolute constant $c>0$.
Thanks to~\Cref{lem:prune}, we can assume (at the additive expense of an inverse-exponentially small amount of failure probability) that $\prune{}$ ``succeeds,'' i.e. $\prune{}$ never accepts any terms with distance $\Omega(\log(s))$. We thus know that $|\calL| \leq n^{O(\log(s))}$ throughout the run of the algorithm (see the remark following the proof of~\Cref{lem:prune}).

Consider some iteration of the for-loop.  We consider two cases:
\begin{itemize}
	\item If $f$ is close to $\calL$, then Step~3(d) will grow the size of $\calL$ thanks to \Cref{lem:expand-clusters}. 
	\item If $f$ is far from $\calL$, then with probability at least $0.01\eps$, $\bz$ satisfies a term $T^\star \in f\notin \calL$ and so Step~3(c) will grow the size of $\calL$ thanks to~\Cref{lem:far-point-learn,lem:far-point-found}. 
\end{itemize}
Thus, each iteration increases the size of $\calL$ with probability at least $0.01\epsilon$. (Note that this is independent for each iteration.)

For notational convenience, let $\bI_j$ denote the indicator that $\calL$ increased in size during the $j^\text{th}$ iteration of the for-loop. 
From the above discussion, we have that for each $j$, regardless of the outcome of iterations $1,\dots,j-1$ of the for-loop, we have $\Pr[\bI_j=1] \geq 0.01\eps.$ 
Moreover, since $|\calL| \leq n^{O(\log(s))}$, we have that \exactlearn{}~can output FAIL only if $\sum_{j=1}^{\frac{\Theta(1)}{\eps} n^{\Omega(\log(s))}} \bI_j \leq n^{O(\log(s))}.$
Consequently, we have
 have
\begin{align*}
&\Pr\sbra{\exactlearn{}~\text{outputs FAIL} \,\bigg|\, \prune{} \text{ never fails} }\\
&=	\Pr \sbra{\text{the for-loop repeats for all~}\frac{\Theta(1)}{\eps} n^{\Omega(\log(s))} \text{~possible iterations} \,\bigg|\, \prune{} \text{ never fails} }\\
 &\leq 
	\Pr \left[\sum_{j=1}^{\frac{\Theta(1)}{\eps} n^{\Omega(\log(s))}} \bI_j \leq n^{O(\log(s))} \,\bigg|\,\prune{} \text{ never fails} \right] \\
	& \ll  \frac{1}{\omega(s  \log(1/\eps))}
\end{align*}
where the second inequality is by Azuma's inequality.
This concludes the proof of \Cref{thm:exact-k-learn}.
\end{proof}

\begin{remark}
We quickly describe why we believe that our approach should yield an exact learning algorithm. Throughout the algorithm, we maintain a list of candidate terms $\calL$ and label points according to the hypothesis $g = \bigvee_{T \in \calL} T$. When given a false positive (i.e. $g(y) = 1$ but $f(y) = 0$), we can prune our list $\calL$. If we are given a false negative (i.e. $g(y) = 0$ and $f(y) =1$), then we attempt to expand our list with this positive example as in the PAC+MQ model in lines $3(b), 3(c),$ and $3(d)$ of \exactlearn{}.
\end{remark}

%!TEX root = ../Learning-DNFs.tex

\subsection{Finding Far Points: Proof of~\Cref{lem:far-point-found}}
\label{sec:finding-far-points}

Suppose $\calL$ is a list that contains some but not all of the true terms of the DNF $f$, and suppose $f$ is far from $\calL$. (Recall from the discussion in~\Cref{subsec:far-points} that if $f$ is close to $\calL$ then~\expand~will output a new true term.) 
Then note that there must exist far points w.r.t.~$\calL$: we can simply choose a random assignment of any ``far'' term $T^\star \in f\setminus \calL$. However, we clearly cannot implement this since the term $T^\star$ is unknown. 

In this section, we will prove~\Cref{lem:far-point-found} by giving a brute-force procedure which will find far points. In more detail, we describe an algorithm~\findfar~(\Cref{alg:gen-far}) which, given a point $y \in \zo^n$ satisfying an unknown true term $T^\star \in \calR := f\setminus \calL$, outputs a point $z$ which is (a) continues to satisfy $T^\star$, and (b) is a far point w.r.t. $\calL$. 

We give a brief overview of how~\findfar~allows us to achieve this guarantee. Let $y \in \zo^n$ be a satisfying assignment of $f$ which satisfies some term $T^\star \in \calR$. If $y$ is a far point w.r.t.~$\calL$ (note that this is trivial to check), then we simply halt and return $y$. If not, then there must exist terms in $T\in \calL$ such that $\dass(y, T) \leq \log^{3}(ns)$; call such terms \emph{bad} terms. 
In this case we can then make the following ``win-win'' argument: Let $ i \in [n]$ be a coordinate such that the literal corresponding to $y_i$ appears in at least $1\%$ of the bad terms. (We will get to the issue of the existence of such ``popular'' coordinates shortly.)
\begin{enumerate}
	\item If either $x_i$ or $\ol{x}_i \in T^\star$, then we have made progress towards learning the term $T^\star$. 	(Note that if we explicitly knew the term $T^\star$ then we could efficiently find a far point $z$ that satisfies $T^\star$.)
	\item If not, then we can jump from $y$ to  $y^{\oplus i}$ while increasing the distance (measured by $\dass$) to at least $1\%$ of the bad terms. 
\end{enumerate}
 Of course, it is unclear at the moment how to obtain ``popular'' coordinates; this will be an easy consequence of a natural noising procedure which we next describe. 

\subsubsection{Noise and Popular Coordinates}

We first define the notion of a ``popular'' coordinate from the overview above: 

\begin{definition}[Popular coordinates]
\label{def:popular-coords}
	Let $\mathcal{L}$ be a collection of width-$k$ terms and $y \in \zo^n$. 
	We say that a coordinate $i \in [n]$ is \emph{$(y, \calL)$-popular} if the literal corresponding to $y_i$ appears in at least $0.005|\calL|$ terms of $\calL$. We say that it is \emph{$(y, \calL)$-super popular} if it appears in at least $0.01|\calL|$ terms of $\calL$. 
\end{definition}

It is easy to see that it may be the case that there are \emph{no} $(y, \calL)$-popular coordinates: 

\begin{example}
	Suppose $\calL = \{x_1\wedge x_2 \wedge x_3, x_4 \wedge x_5 \wedge x_6  \}$, and let $y \in \zo^n$ be such that $y_1 = y_2 = \dots = y_6 = 0$. Then there are no $(y, \calL)$-popular coordinates. 
\end{example}

In this example, though, applying a small amount of noise to $y$ will yield a new point $z$ such that w.h.p.~(a) $z$ will continue to satisfy the same term as $y$, and (b) we will (essentially) have many $(z,\calL)$-\emph{super} popular coordinates. Below we will show that applying noise is also useful in the general case.

We now turn to describe the~\leap~procedure; it will in fact be useful for us to pass a set of restricted coordinates $F\sse[n]$ as an argument to the procedure. Looking ahead, this set $F$ will be the set of coordinates we track in our recursive procedure towards either learning $T^\star$ or increasing the distance to $\calL$ as alluded to in the algorithm overview above.

\begin{algorithm}[h]
\addtolength\linewidth{-2em}

\vspace{0.5em}

\textbf{Input:} Query access to $f: \zo^n \rightarrow \zo$, $y \in f^{-1}(1)$, a set of ``frozen'' variables $F \subseteq [n]$ with $|F| \leq k - 1000 \log^{300}(ns)$ \\[0.25em]
\textbf{Output:} A point $\bz \in f^{-1}(1)$

\

\leap($f, y, F$):
\begin{enumerate}
	\item Set $\bz \gets y$. 
	\item For $i \in \sbra{\exp\pbra{(\log(ns))^{O(1)}}}$:
		\begin{enumerate}
			\item For each $i \in [n] \setminus F$: Independently set $\bz_i$ to be equal to $\ol{y}_i$ with probability $\frac{\log^{300}(ns)}{k - |F|}$ and $y_i$ otherwise.
			\item For each $i \in F$, set $\bz_i = y_i$.
			\item If $f(\bz) = 1$, return $\bz$.
		\end{enumerate}
	\item Return FAIL. 

\end{enumerate}

\caption{Applying noise to ensure popularity.}
\label{alg:warmup-estimator}
\end{algorithm}

We formally have the following guarantee on~\leap:

\begin{lemma}
\label{lem:leap-frog-guarantee}
Suppose $\calL$ is a list of width-$k$ terms with $|\calL| = n^{O(\log(s))}$ and let $F \sse [n]$ be a set of coordinates with $|F| \leq k - 1000\log^{300}(ns)$. 
Suppose $y$ is \emph{not} a far point and satisfies a term $T^\star \in f$ with 
\[
	|T^\star|_{\ol{F}}| \leq 2(k-|F|). 
\]
Then with probability at least $\exp\pbra{-(\log(ns))^{O(1)}}$, \leap($f,y,F$) outputs a point $\bz$ such that: 
\begin{enumerate}
	\item[(i)] $T^\star(\bz) = 1$, 
	\item[(ii)] $\dass(\bz, T) \geq \log^{150}(s)$ for all $T \in \calL$ such that $| T|_{\overline{F}} \setminus T^\star|_{\overline{F}} | \geq \frac{k - |F|}{\log(ns)}$, and 
	\item[(iii)] $\bz_F=y_F$. 
\end{enumerate}
\end{lemma}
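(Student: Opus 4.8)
The plan is to show that a \emph{single} iteration of the loop in Step~2 of \leap~already produces a point with all three properties with probability $\exp(-(\log(ns))^{O(1)})$, and argue this suffices: property $(iii)$ holds for \emph{every} iterate since \leap~never touches a coordinate of $F$, and property $(i)$ forces $f(\bz)=1$, so the first iteration that produces a good $\bz$ is exactly the iteration the algorithm returns. So I would fix the first round, write $p := \log^{300}(ns)/(k-|F|)$ for the per-coordinate flip probability (so $p \le 1/1000$ by the hypothesis $|F| \le k - 1000\log^{300}(ns)$), and let $\bz$ be the point this round samples; the whole argument is then about $\bz$.

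Property $(i)$ is the easy part: $T^\star(\bz)=1$ holds as long as none of the coordinates carrying a literal of $T^\star|_{\ol F}$ gets flipped (the literals of $T^\star$ on $F$ are left at their value in $y$, which already satisfies $T^\star$). By hypothesis there are at most $2(k-|F|)$ such coordinates, each flipped independently with probability $p$, so $\Pr[T^\star(\bz)=1] \ge (1-p)^{2(k-|F|)} \ge \exp(-4p(k-|F|)) = \exp(-O(\log^{300}(ns)))$, using $1-p \ge e^{-2p}$ for $p \le \tfrac12$.

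The heart of the argument, and what I expect to be the main obstacle, is showing that conditioned on $(i)$ property $(ii)$ fails with probability at most $1/2$. I would fix a term $T \in \calL$ with $|T|_{\ol F} \setminus T^\star|_{\ol F}| \ge \delta$, where $\delta := (k-|F|)/\log(ns) \ge 1000\log^{299}(ns)$, and recall that $\dass(\bz, T)$ equals the number of literals of $T$ that $\bz$ violates. The key device is to partition the $\ge \delta$ coordinates on which $T|_{\ol F}$ and $T^\star|_{\ol F}$ differ into \emph{type A} (where $T^\star$ carries the opposite literal) and \emph{type B} (where $T^\star$ carries no literal): conditioning on $(i)$ \emph{forces} $\bz$ to violate $T$'s literal on every type-A coordinate, so if there are $\ge \log^{150}(s)$ of them we are done deterministically, and otherwise there are $\ge \delta/2$ type-B coordinates. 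The point of the dichotomy is that a type-B coordinate lies in $\ol F$ and outside the support of $T^\star$, so its noise is independent of the event $(i)$; writing $a$ and $b$ for the numbers of type-B coordinates on which $y$ violates, respectively satisfies, $T$'s literal, the number of type-B coordinates violated by $\bz$ is distributed as $\mathrm{Bin}(a,1-p)+\mathrm{Bin}(b,p)$, with mean at least $p(a+b) \ge p\delta/2 = \tfrac12\log^{299}(ns)$ (using $p \le \tfrac12$). A Chernoff bound then gives $\Pr[\dass(\bz,T) < \log^{150}(s)\mid(i)] \le \exp(-\Omega(\log^{299}(ns)))$ since $\log^{150}(s) < \tfrac12\log^{299}(ns)$, and a union bound over the at most $|\calL| = n^{O(\log s)}$ terms (note $\log|\calL| = O(\log n\log s) \ll \log^{299}(ns)$) yields $\Pr[(ii)\mid(i)] \ge 1/2$.

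Combining the two steps gives $\Pr[(i)\wedge(ii)\wedge(iii)] = \Pr[(i)]\cdot\Pr[(ii)\mid(i)] \ge \tfrac12\exp(-O(\log^{300}(ns))) = \exp(-(\log(ns))^{O(1)})$, which is the claimed bound. The delicate points will all be inside step three: setting up the type-A/type-B split so that conditioning on $(i)$ is genuinely harmless on type-B coordinates, and checking the single arithmetic fact that makes everything go through — that $\delta\cdot p = \log^{299}(ns)$ simultaneously dominates the target distance $\log^{150}(s)$ and the $O(\log n\log s)$ union-bound cost over $\calL$; the gaps between the exponents $300$, $299$, $150$ appearing in \leap~and in \Cref{def:far-point} are exactly what provide this slack. (I note in passing that the ``$y$ is not a far point'' hypothesis is not actually needed for the conclusion above; it merely records the regime in which \findfar~invokes \leap.)
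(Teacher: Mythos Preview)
Your proposal is correct and follows essentially the same approach as the paper's proof: show $(iii)$ is immediate, bound $\Pr[(i)]$ by $(1-p)^{2(k-|F|)}$, then condition on $(i)$ and use a Chernoff bound plus union bound over $\calL$ for $(ii)$. Your type-A/type-B split is a more explicit justification of the conditioning step than the paper gives (the paper simply asserts that each index in $(T\setminus T^\star)|_{\overline F}$ is falsified with probability at least $p$ after conditioning, which your split makes rigorous), and your observation that the ``$y$ is not a far point'' hypothesis is unused is also correct.
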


\begin{proof}
Item~(iii) of the lemma is clear from the description of the algorithm. Note that we do not flip any variable in $T^\star$ (during Steps~2 (a) and (b) of~\leap) with probability 
	\[\left (1 - \frac{\log^{300}(ns)}{k - |F|} \right)^{2(k- |F|)} \geq \exp\pbra{-(\log(ns))^{O(1)}},\]
so item (i) holds with the claimed probability.  We now condition on (i) holding, i.e.~we condition on the random assignment $\bz$ satisfying $T^\star$.
	Fix a $T \in \calL$ such that 
	\[
		\dterm(T|_{\overline{F}}, T^\star|_{\overline{F}}) \geq \frac{k - |F|}{\log(ns)}. 
	\]
	Note that for each index $i$ such that either $x_i$ or $\ol{x}_i$ appears in $(T \setminus T^\star)|_{\overline{F}}$, there is at least a $(k-|F|)^{-1}\log^{300}(ns)$ probability that it is set so as to falsify the corresponding literal in $T$. A Chernoff bound then implies that the probability of there being at most $\log^{150}(s)$ coordinates of disagreement between $\bz$ and $T$ is at most $n^{-\log^2(s)}$. A union bound then yields item (ii).
\end{proof}

To summarize, after applying~\leap~to obtain $\bz$ from $y$, the satisfying assignment $\bz$ will be close to only those terms $T \in \calL$ that were ``close'' to $T^\star|_{\overline{F}}$ (in the sense that $|T|_{\overline{F}} \setminus T^\star|_{\overline{F}}|$ is small). 
Note that such ``close'' terms will look similar to $T^\star|_{\ol{F}}$ (which $\bz$ satisfies), and so we can guarantee the existence of many super-popular coordinates. More formally, we have the following: 

\begin{lemma}
\label{lem:super-pop}
Let $z \in \zo^n$ satisfy a term $T^\star$ of size $a$. Let $\calL$ be a set of terms such that  every $T\in \calL$ satisfies (a) $|T|\geq 0.99a$, and (b) $|T \setminus T^\star| \leq 2a/\log(ns)$. 
Then there exist at least $a/100$ many $(z,\calL)$-super popular coordinates corresponding to indices of literals in $T^\star$.
\end{lemma}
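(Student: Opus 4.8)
The plan is to do a simple counting argument over the literals in $T^\star$ and the terms of $\calL$. First I would observe that, since $z$ satisfies $T^\star$, the literal corresponding to $z_i$ is exactly the literal of $T^\star$ on variable $i$, for each of the $a$ variables appearing in $T^\star$. So a coordinate $i$ corresponding to a literal of $T^\star$ is $(z,\calL)$-super popular precisely when the literal of $T^\star$ on variable $i$ appears in at least $0.01|\calL|$ of the terms of $\calL$. Thus it suffices to show that at least $a/100$ of the $a$ literals of $T^\star$ each appear in at least a $0.01$ fraction of the terms of $\calL$.

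The key step is a double-counting of the quantity $N := \sum_{T \in \calL} |T \cap T^\star|$, the total number of (term, literal-of-$T^\star$) incidences. On one hand, assumption (b) gives $|T \setminus T^\star| \le 2a/\log(ns)$, and combining with assumption (a), $|T| \ge 0.99a$, we get $|T \cap T^\star| = |T| - |T\setminus T^\star| \ge 0.99a - 2a/\log(ns) \ge 0.98a$ (using that $n,s$ are at least large absolute constants); summing over $\calL$, $N \ge 0.98a\,|\calL|$. On the other hand, writing $c_\ell$ for the number of terms of $\calL$ containing literal $\ell$, we have $N = \sum_{\ell \in T^\star} c_\ell$, a sum of $a$ terms each at most $|\calL|$. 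If fewer than $a/100$ of the literals $\ell \in T^\star$ had $c_\ell \ge 0.01|\calL|$, then $N < (a/100)\cdot|\calL| + a \cdot 0.01|\calL| = 0.02a\,|\calL|$, contradicting $N \ge 0.98a\,|\calL|$. Hence at least $a/100$ literals of $T^\star$ are ``popular'' in the required sense, and the corresponding coordinates are exactly the $(z,\calL)$-super popular coordinates claimed.

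I do not expect any real obstacle here; the only thing to be careful about is the arithmetic with the constants (making sure $0.99a - 2a/\log(ns) \ge 0.98a$ holds for the relevant range of $n,s$, which it does since $2/\log(ns) < 0.01$ once $ns$ is a sufficiently large constant, as is assumed in \Cref{sec:learning-exact-dnf}), and correctly tracking that super-popularity is defined with the $0.01$ threshold while popularity uses $0.005$ — here we only need the $0.01$ version. One should also note that the bound is on coordinates \emph{corresponding to indices of literals in $T^\star$}, which is exactly what the incidence count $N$ ranges over, so no extra coordinates need to be considered.
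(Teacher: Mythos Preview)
Your proposal is correct and follows essentially the same double-counting argument as the paper's proof: both count incidences $N=\sum_{T\in\calL}|T\cap T^\star|$, lower-bound $N$ via $|T\cap T^\star|\ge 0.99a-2a/\log(ns)$, and derive a contradiction from assuming fewer than $a/100$ literals of $T^\star$ are super popular. Your explicit remark that $z$ satisfying $T^\star$ means the literal corresponding to $z_i$ coincides with the $T^\star$-literal on variable $i$ is a helpful clarification the paper leaves implicit.
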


\begin{proof}
The proof is by double counting $(\ell,T)$ pairs where $T \in \calL$ and $\ell$ is a literal in $T^\star \cap T$. Indeed suppose there are at most $a/100$ literals that appear in at least $|\calL|/100$ terms, then the number of pairs is at most $a|\calL|/50$. 

	On the other hand, any term $T \in \calL$ by assumption satisfies $|T \setminus T^\star| \leq \frac{2a}{\log(ns)}$. So it must have at least $0.99 a - {2a}{\log^{-1}(ns)}$ literals in common with $T^\star$. It then follow that the number of $(\ell, T)$ pairs is at least 
		\[ 
			\left(0.99 a - \frac{2a}{\log(ns)} \right) |\calL| > \frac{1}{2} a|\calL| > a|\calL|/50,
		\]
		a contradiction. 
\end{proof}

\subsubsection{A Brute-Force Procedure for Finding Far Points}
\label{subsec:genfar}

We now describe a recursive procedure~\findfar~(\Cref{alg:gen-far}) for finding a far point w.r.t.~$\calL$. 
The algorithm~\findfar~takes as input an MQ oracle to $f$, a satisfying assignment $y \in f^{-1}(1)$, a list of width-$k$ terms $\calL$, and three sets $\calW \sse \calL$, $S \sse[n]$, and $A \sse [n]$ s.t. $S\cap A = \emptyset$. Before giving a description of these sets, we note that when~\findfar~is called in Step~3(b) of~\exactlearn~(\Cref{alg:exact-k-learn}), we mean 
\[
	\findfar(f, \by, \calL) := \findfar(f, \by, \calL, \calL, \emptyset, \emptyset), 
\]
i.e.~we call $\findfar(f, \by, \calL, \calW, S, A)$ with $\calW = \calL$ and $S = A = \emptyset$. 

Next, we describe the semantics of the sets $\calW$, $S$, and $A$. Suppose $y$ satisfies a term $T^\star \in f \setminus \calL$. 
\begin{itemize}
	\item {The set $\calW \sse \calL$} is the set of \emph{bad} terms w.r.t.~$y$, i.e. terms in $\calL$ that the point $y$ is close to (in the sense of $\dass$). Note that if $\calW = \emptyset$, then $y$ will be a far point w.r.t.~$\calL$. (It may be helpful to think of $\calW$ as the set of terms which we will be \underline{\textbf{w}}orried about; we will simply ignore the terms in $\calL\setminus \calW$ for much of the arguments below.)
	\item {The set $S \sse [n]$}, together with the point $y$, describes a subset of the literals of the unknown term $T^\star$ that $y$ \underline{\textbf{s}}atisfies. Note that $|S| \leq k$ since $|T^\star| = k$; thus, if $|S|$ is sufficiently large, we can brute force enumerate and guess $T^\star$. (This is Step~1 of the algorithm). 
	\item {The set $A$} will be a set of $(y, \calW)$-popular coordinates that \emph{do not} appear in $T^\star$, i.e.~are \underline{\textbf{a}}bsent from $T^\star.$ Because of this, we can flip these co-ordinates in $y$ to (a) increase the $\dass$-distance to $\calW$, while (b) ensuring that we continue to satisfy the term $T^\star$. 
\end{itemize}
Note that if there are no $(y, \calW)$-popular coordinates, then in Step~5 we call~\leap~to jump to a new point which will w.h.p.~have many popular (in fact, super popular) coordinates. 

\medskip
 
\noindent \textbf{The Recursion Tree.}~We will view the algorithm as recursively building a tree (\Cref{fig:cowbell}); in particular, the first call $\findfar(f, y, \calL, \calL, \emptyset, \emptyset)$ will be the root of the tree with sub-calls defining its children. 
It may be the case that the satisfying assignment $y \in f^{-1}(1)$ satisfies multiple terms $T^\star \in f\setminus \calL$; in this case, we will show that there exists some root-to-leaf path in the recursion tree that either outputs FAIL or a far point satisfying some term $T^\star$ that was also satisfied by $y$. Looking ahead, we will call this path the ``main branch'' of the recursion tree. 

The nodes of the tree will carry a label $(y, \calW)$; the leaf nodes of the tree either return FAIL or are labelled by far points.\footnote{Note that the only possible behaviours of the algorithm are that it outputs FAIL, or it outputs a far point; see Steps~1(a), 4, and 7 of~\Cref{alg:gen-far}.}
The tree will have two flavors of nodes: 
\begin{itemize}
	\item \emph{Unhappy} nodes, i.e. nodes where there exist \emph{no} $(y,\calW)$-popular coordinates. In this case, we run~\leap~up to a quasi-polynomial number of times to obtain a new point $\bz$ that will (w.h.p.) have many $(\bz,\calW)$-super popular coordinates. We then recurse and call~$\findfar(f,\bz,\calL,\calW',S,A)$ where $\calW'$ is the set of terms that may be close to $\bz$. This is~Step 5 of the algorithm. 
	\item \emph{Happy} nodes, i.e. nodes where there exist $(y, \calW)$-popular coordinates. Such nodes have two possible outgoing branches:
	\begin{enumerate}
		\item $S$-branches, when~\findfar~is called in Step~7 of the algorithm. This happens when the $(y, \calW)$-popular coordinate is added to the set $S$, i.e.~when we believe that the literal $\ell$ corresponding to $y_i$ appears in $T^\star$. Note that Step~8 of the algorithm ensures that if this recursive call of~\findfar~fails, then indeed $\ell \in T^\star$. 
		\item $A$-branches, when~\findfar~is called in Step~9 of the algorithm. This happens when $i \in [n]$ is $(y,\calW)$-popular \emph{and} the literal corresponding to $y_i$ does not appear in $T^{\star}$. 
	\end{enumerate}
\end{itemize}

%!TEX root = ../Learning-DNFs.tex

\begin{figure}[t]

\centering

\begin{tikzpicture}

	\draw (0,0) to node[midway,fill=white]{$i \in A$} (-2,-3);
	\draw (0,0) to node[midway,fill=white]{$i \in S$} (2,-3);
	\draw (-2, -3) to node[midway,fill=white]{$j \in S$} (0, -6);
	\draw (-2, -3) to node[midway,fill=white]{$j \in A$} (-4, -6);
	\draw (-2,-10.1) -- (0,-7.1);
	\draw (-1,-10.1) -- (0,-7.1);
	\draw (0,-10.1) -- (0,-7.1);
	\draw (1,-10.1) -- (0,-7.1);
	\draw (2,-10.1) -- (0,-7.1);
	
	\node (1) at (-4, -6.25) {$\vdots$};
	\node (2) at (-0, -6.25) {$\vdots$};
	\node (2) at (2, -3.6) {$\vdots$};

	\filldraw[color=white, draw=black] (0,0) circle (0.4);
	\node (y) at (0,0) {$\tiny y$};
	
	\filldraw[color=white, draw=black] (2,-3) circle (0.4);
	\node (yS) at (2,-3) {$\tiny y$};
	
	\filldraw[color=white, draw=black] (-2,-3) circle (0.4);
	\node (yA) at (-2,-3) {$y^{\oplus i}$};
	
	\filldraw[color=white, draw=black] (0,-7.1) circle (0.4);
	\node (y') at (0,-7.1) {$y'$};
	
	\filldraw[color=white, draw=black] (-2,-10.1) circle (0.4);
	\node (z) at (-2,-10.1) {$\bz$};
	
	\node (i-pop) at (2.5, 0) { $i$ is $(y,\calW)$-popular~\raisebox{-0.25em}{\LARGE\smiley{}}};

	\node (j-pop) at (-4.75, -3) { $j$ is $(y^{\oplus i},\calW)$-popular~\raisebox{-0.25em}{\LARGE\smiley{}}};

	\node (j-pop) at (4.5, -3) {\phantom{$j$ is $(y,\calW)$-popular~\raisebox{-0.25em}{\LARGE\smiley{}}}}; % for centering purposes
	
	\node (no-pop) at (3.75, -7.1) { No $(y', \calW')$-popular coordinates~\raisebox{-0.25em}{\LARGE\frownie{}}};
	
	\draw[-latex] (-1.25, -8.2)  to[out=-45,in=225] (1.25, -8.2);
	
	\node (count) at (2.75, -8.2) {\small $\exp\pbra{\log^{O(1)}(ns)}$};

	\node (noise) at (-2, -8.2) {\small $\leap{}$};

	\node (aaa) at (-2, -10.75) {$\vdots$};
 		
\end{tikzpicture}
\caption{An illustration of the recursion tree of~\findfar~(\Cref{alg:gen-far}). For simplicity, the node labels only indicate the satisfying assignment $y$ and omit the set $\calW$.}	
\label{fig:cowbell}

\end{figure}
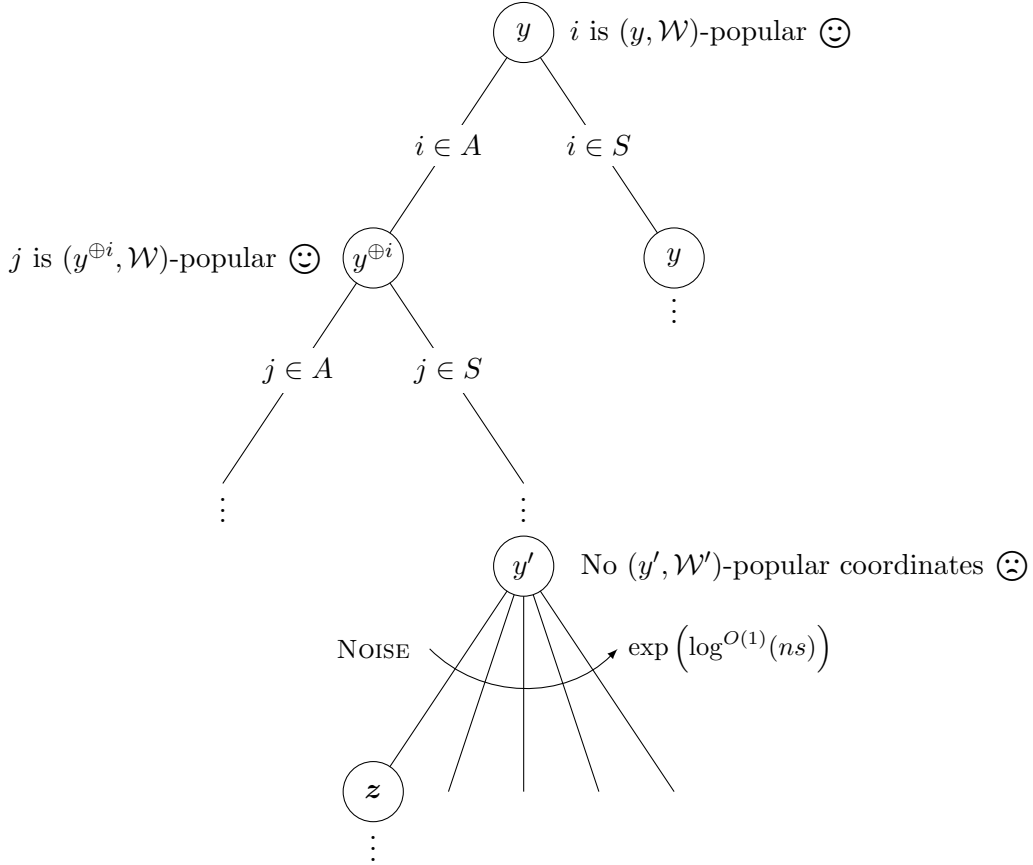

\begin{algorithm}[h!]
\addtolength\linewidth{-2em}

\vspace{0.5em}

\textbf{Input:} MQ access to $f: \zo^n \rightarrow \zo$, a point $y \in f^{-1}(1)$, a list of terms $\calL$, a set $\calW \subseteq \calL$, a set $S \subseteq [n]$, a set $A \subseteq [n]$ disjoint from $S$ \\[0.25em]
\textbf{Output:} A far point $\by$~w.r.t.~$\calL$

\

\findfar($f, y, \calL, \calW, S, A$):
\begin{enumerate}
	\item If $|S| \geq k - \log^{1000}(s)$: 
	\begin{enumerate}
		\item For each width-$k$ term $T$ s.t. $\dterm(T, T_{y}(S)) \leq \log^{1000}(s)$: Starting from $y$, randomly flip the coordinates not fixed by $T$ to get $\bz$. If $\bz$ is a far point with respect to $\calL$, return $\bz$.
		\item Return FAIL otherwise.
	\end{enumerate}
	\item Remove any term $T$ from $\calW$ such that $\dass(y, T) \geq  2\log^{100}(ns)$. 
	\item Remove any term $T$ from $\calW$ such that there exist {more than} $\log^3(s)$ indices $i \in A$ such that the literal corresponding to $\overline{y}_i$ appears in $T$. 
	\item If $\calW$ is empty, return $y$ if it is a far point with respect to $\calL$. Otherwise, return FAIL.
	\item Set $m = 0$. If there does not exist a popular coordinate in $[n] \setminus A \sqcup S$ with respect to $y$ for $\calW$:
		\begin{enumerate}
			\item  While $m \leq \exp\pbra{(\log(ns))^{O(1)}}$:
				\begin{enumerate}
					\item Set $\bz \gets \leap(f,y,S \sqcup A)$. 
					\item If $\bz =$  FAIL, set $m \leftarrow m + 1$ and go to Step~5(a).
					\item $\calW' \gets \calW$. 
					\item Remove any term $T$ from $\calW'$ for which $\dass(\bz, T) \geq 2\log^{100}(ns)$.
					\item If there exists fewer than $0.01 \cdot (k-|S|)$ coordinates that are $(\bz,\calW')$-super popular in $[n] \setminus (A \sqcup S)$, set $m \leftarrow m + 1$ and go to Step~5(a).
					\item Return \findfar$(f,\bz,\calL, \calW', S, A)$ if it doesn't output FAIL; if it outputs FAIL, set $m \leftarrow m + 1$ and go to Step~5(a).
				\end{enumerate}
				\item Return FAIL. 
		\end{enumerate}
	\item Let $i \in [n] \setminus (A \sqcup S)$ 
	be a popular coordinate for $\calW$.
	\item Run \findfar$(f,y,\calL, \calW, S \cup \{i\}, A)$. If it returns a far point w.r.t. $\calL$, halt and return that point. 
	\item If $f(y^{\oplus i}) = 0$, return FAIL. 
	\item Return \findfar$(f,y^{\oplus i},\calL, \calW, S, A \cup \{i\})$. 
\end{enumerate}

\caption{Procedure to compute a far point.}
\label{alg:gen-far}
\end{algorithm}

We will understand the algorithm by analyzing the underlying recursion tree it induces. In particular the first call, i.e. $\findfar(f,y,\calL, \calL, \emptyset, \emptyset)$, is the root. This will make subcalls defining its children.  

\subsubsection*{Runtime Analysis}

We begin by bounding the runtime of the algorithm. Towards this, we will require some simple lemmas. The first and simplest lemma bounds the size of $S$:

\begin{lemma} \label{lem:S-bound}
Any branch of the recursion tree of \Cref{alg:gen-far} has $|S| \leq k.$
\end{lemma}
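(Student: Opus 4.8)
The plan is to track how the set $S$ grows along a root-to-leaf path and argue that every element added to $S$ corresponds to a distinct literal index of the (fixed, unknown) true term $T^\star$ that the current satisfying assignment satisfies, so that $|S|$ can never exceed $|T^\star| = k$. First I would set up the invariant: along any branch, at a node labelled by a satisfying assignment $y$ with parameter set $(\calW, S, A)$, we maintain that $S \subseteq [n]$ and the coordinates of $S$ are indices of literals appearing in \emph{some} term $T^\star \in f \setminus \calL$ that $y$ satisfies; moreover $T_y(S)$ (the term induced by $y$ on $S$, cf.~\Cref{sec:notation-exact-k}) is a sub-term of that $T^\star$. The base case is the root call $\findfar(f,y,\calL,\calL,\emptyset,\emptyset)$, where $S = \emptyset$ and the invariant holds vacuously.

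Next I would walk through the three ways a recursive sub-call can be made and check that $S$ only changes in Step~7. In Steps~5(f) and 9, the set $S$ is passed unchanged (Step~5 passes $S$ through after a \leap{} step, and Step~9 adds the popular coordinate $i$ to $A$, not to $S$), so the invariant is preserved — the key point being that \leap{} with frozen set $S \sqcup A$ does not disturb the coordinates in $S$ (item (iii) of \Cref{lem:leap-frog-guarantee}), so the new point $\bz$ still agrees with $y$ on $S$ and hence $T_{\bz}(S) = T_y(S)$ is still a sub-term of $T^\star$ provided $\bz$ still satisfies $T^\star$; on the main branch this holds by the guarantee of \leap{}. The only step that enlarges $S$ is Step~7, which calls $\findfar(f, y, \calL, \calW, S \cup \{i\}, A)$ where $i$ is a $(y,\calW)$-popular coordinate. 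This branch is taken under the (implicit) hypothesis that the literal corresponding to $y_i$ appears in $T^\star$ — and Step~8 enforces a consistency check: if this sub-call fails, then $i \in T^\star$ is certified (if $f(y^{\oplus i}) = 0$ then flipping $i$ destroys satisfaction of $f$, hence of every term $y$ satisfies in $f \setminus \calL$ once one tracks the branch carefully). So along the branch where the $S$-call succeeds, $i$ is a genuine literal index of $T^\star$ not previously in $S$ (it is popular, and popularity is with respect to coordinates in $[n] \setminus (A \sqcup S)$, so $i \notin S$), and the invariant is maintained with $|S|$ increased by exactly one.

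Finally I would close the argument: since each element of $S$ is a distinct literal index of the single term $T^\star$ (fixed along the branch — note that while $y$ may satisfy several terms of $f \setminus \calL$, the invariant only needs \emph{some} witnessing $T^\star$, and the $S$-additions are all consistent with a common such term along the branch because Step~1's brute-force guess and Step~8's check together ensure it), and $|T^\star| = k$, we get $|S| \le k$. The main obstacle I anticipate is making precise the claim that the coordinates added to $S$ along a branch are all consistent with a \emph{single} true term $T^\star$ rather than with a shifting sequence of terms; this requires carefully inspecting the semantics described before \Cref{alg:gen-far} (the ``$S$-branch'' corresponds to guessing $\ell \in T^\star$, the ``$A$-branch'' to $\ell \notin T^\star$) and verifying that once we commit to an $S$-branch we have effectively committed to the sub-family of true terms containing that literal, so the relevant $T^\star$ only becomes more constrained, never contradicted, as we descend. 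Once that bookkeeping is in place, the bound $|S| \le k$ is immediate. (In fact the algorithm explicitly stops growing $S$ at Step~1 once $|S| \ge k - \log^{1000}(s)$, so the realized bound is even a bit stronger, but $|S| \le k$ is all that is claimed and all that the invariant gives cleanly.)
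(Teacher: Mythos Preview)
Your invariant-based approach has a genuine gap: the invariant ``every element of $S$ is the index of a literal of some fixed $T^\star$ satisfied by $y$'' is only maintained on the \emph{main branch}, not on an arbitrary branch of the recursion tree. The lemma is stated for \emph{any} branch. On a generic branch, the algorithm may take the $S$-branch in Step~7 even when the literal corresponding to $y_i$ is \emph{not} in any term $T^\star$ that $y$ satisfies; nothing in the algorithm prevents the Step~7 sub-call from finding a far point (and hence not returning FAIL) for the ``wrong'' reason. The consistency check in Step~8 only fires \emph{after} the Step~7 sub-call has already been fully explored, so it cannot retroactively enforce that the $S$-additions made inside that sub-call were consistent with a single $T^\star$. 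You anticipate this obstacle yourself but do not actually resolve it; the sketch of ``committing to a sub-family of true terms'' does not go through, because the algorithm's branching is purely speculative and carries no semantic guarantee off the main branch.

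The paper's proof, by contrast, is a one-line syntactic observation---precisely the remark you tuck away in your final parenthetical: $S$ is only ever enlarged in Step~7, and Step~1 returns immediately whenever $|S| \geq k - \log^{1000}(s)$, so no call that reaches Step~7 can have $|S|$ that large, and hence $|S| \leq k - \log^{1000}(s) + 1 \leq k$ on every branch. That parenthetical \emph{is} the proof; the invariant machinery is both unnecessary and, as stated, incorrect for the claim about all branches.
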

\begin{proof}
Since $|S|$ only increases in Step~7, this is immediate from inspection of Step~1.
\end{proof}

Next we bound the size of $A$:

\begin{lemma}
\label{lem:cover-bound}
Any branch of the recursion tree of \Cref{alg:gen-far} has $|A| \leq O(\log^3(ns) \log(|\calL|))$.
\end{lemma}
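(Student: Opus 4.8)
My plan is to track how the set $A$ grows and, crucially, how the ``worry set'' $\calW$ shrinks as elements are added to $A$. The key observation is Step~3 of \findfar: once $\calW$ has been pruned so that no term $T$ survives that contains the literals $\overline{y}_i$ for more than $\log^3(s)$ indices $i \in A$, those terms are gone for good. So the plan is to argue that a coordinate can only be added to $A$ in Step~9 if it is \emph{popular} with respect to the current $\calW$ (this is enforced by Step~6, which only reaches Step~9 via a popular coordinate $i$), meaning the literal corresponding to $y_i$ appears in at least $0.005|\calW|$ terms of the current $\calW$. When we flip coordinate $i$ and recurse with $A \cup \{i\}$, the literal corresponding to $\overline{y}_i$ now appears in those $\geq 0.005|\calW|$ terms. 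First I would set up a potential/counting argument on this.

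The heart of the argument: consider the (at most $|A|$) indices added to $A$ along a branch, in order $i_1, i_2, \dots$. For each $i_t$, at the moment it was added, it was popular for the then-current $\calW_t$, so the literal $\overline{y}_{i_t}$ appears in $\geq 0.005 |\calW_t|$ of the terms of $\calW_t$. After roughly $\log^3(s)$ such additions, by a pigeonhole/averaging argument a constant fraction of terms in the original $\calW$ will have accumulated $\log^3(s)$-many ``bad'' indices from $A$, hence get removed in Step~3; so $|\calW|$ drops by a constant factor every $O(\log^3(s))$ additions to $A$. More carefully: in each block of $O(\log^3(s))$ additions, the total number of (literal, surviving-term) incidences contributed is $\geq \Omega(\log^3(s)) \cdot 0.005 |\calW_{\min}|$ where $|\calW_{\min}|$ is the smallest $\calW$-size in the block; but each surviving term can absorb at most $\log^3(s)$ such incidences before Step~3 removes it, so the number of terms surviving the block is at most a constant fraction of $|\calW_{\min}|$. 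Since $|\calW| \leq |\calL|$ initially and $\calW$ is a nonempty set of positive integers, this constant-factor decrease can happen at most $O(\log|\calL|)$ times, giving $|A| \leq O(\log^3(ns)\log|\calL|)$ as claimed (absorbing the $ns$ into the $\log^3$ factor for uniformity with the lemma statement).

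The main obstacle I anticipate is handling the interleaving of $S$-branches, $A$-branches, and the \leap{} calls at unhappy nodes: along a single root-to-leaf branch, $\calW$ is replaced by a subset $\calW'$ at unhappy nodes (Steps~5(c)--(d)), and coordinates enter $S$ (not $A$) on $S$-branches without shrinking $\calW$ by the Step~3 mechanism. I need to check that (i) passing from $\calW$ to $\calW'$ only helps (it can only shrink $\calW$, which is fine since we want an upper bound on how long the constant-factor-decrease process can run), (ii) the popularity threshold in Step~6 is always measured against the \emph{current} $\calW$ so the incidence-counting is valid at each addition to $A$, and (iii) $S$-branches don't reset any progress on shrinking $\calW$ via $A$. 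Since $S$-branch recursions keep $A$ fixed and only grow $S$, and since a term removed from $\calW$ at an ancestor stays removed in all descendants, the monotone shrinkage of $\calW$ persists down the branch; so the counting argument goes through branch-globally. Once these bookkeeping points are verified, the quantitative bound follows from the geometric-decay argument sketched above.
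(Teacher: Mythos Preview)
Your approach matches the paper's: both argue that each addition to $A$ is a popular coordinate hitting at least $0.005|\calW|$ of the current terms, that any term can absorb at most $\log^3(s)$ such hits before Step~3 deletes it, and hence $|\calW|$ halves every $O(\log^3(ns))$ additions to $A$; the paper packages this same count as a capped potential $\Phi_i(\ell)=\sum_{T\in\calW^i}\min\bigl(\log^3(ns),\dass(y|_{A^{i+\ell}},T|_{A^{i+\ell}})\bigr)$ over a fixed initial $\calW^i$ and derives the contradiction from $\Phi$ exceeding its cap. One point to tighten in your counting: the conclusion ``surviving terms $\leq c|\calW_{\min}|$'' is circular since by monotonicity $|\calW_{\min}|$ \emph{is} the number of surviving terms; the correct comparison (which your hit-cap observation already supports, and which the paper's potential makes explicit) is that \emph{every} term in the block's initial $\calW$---removed or not---absorbs at most $\log^3(s)+1$ hits, so total hits are at most $(\log^3(s)+1)|\calW_{\text{start}}|$ against your lower bound $\geq 0.005\,m\,|\calW_{\text{end}}|$, giving $|\calW_{\text{end}}|\leq\tfrac12|\calW_{\text{start}}|$ once $m=\Theta(\log^3(ns))$.
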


\begin{proof}
Fix a branch $B$ in the tree of recursive executions of the procedure. Consider the node in that tree that adds the $i^\text{th}$ element of $A$ to $A$, and let $y^{(i)}$, $A^i$ and $\calW^i$ be the string $y$, set $A$, and set $\calW$ of bad terms that the node was created with. It is clear from inspecting~\Cref{alg:gen-far} that $|\calW^i|$ never increases with $i$.

	\begin{claim} \label{claim:chobani-flip}
	We have
	\[
		|\calW^{i + 10000\log^3(ns)}| \leq \frac{1}{2} |\calW^i|.
	\]	
	\end{claim}

Since $|{\cal W}|$ can only be halved $\log(|\calL|) + 1$ times before $\calW$ becomes empty (and once $\calW$ becomes empty we are at a leaf in the recursion tree), \Cref{lem:cover-bound} follows. 

\begin{proof}[Proof of~\Cref{claim:chobani-flip}]
Fix an $i$. Towards a contradiction, suppose that $|\calW^{i + 10000\log^3(ns)}| > \frac{1}{2} |\calW^i|.$	
Define the set $P := A^{i + 10000\log^3(ns)} \setminus A^i$. Moreover, consider the potential function
	\[
		\Phi_i(\ell) =\sum_{T \in \calW^i } \min(\log^3(ns), \dass(y^{(i+\ell)}|_{A^{i+\ell}}, T|_{A^{i+\ell}})). 
	\]
Note that for all $\ell$, we have that $A^{i+\ell} \subseteq A^{i+\ell+1}$ since we never remove coordinates from $A$. Moreover, we have that for all $j$, $y^{i+\ell+j}|_{A^{i+\ell}} = y^{i+\ell}|_{A^{i+\ell}}$ as coordinates in $A$ are never toggled/flipped after they are added to $A$. It then follows that $\Phi_i$ is an increasing function in $\ell$.

On the other hand, since each index is popular when added to $A$, it must be the case that for each $j \in P$, when $j \in P$ is added to $A^{i+\ell}$, the literal corresponding to $y^{i+\ell}$ appears in at least a $\frac{1}{200}$ fraction of terms in $\calW^{i+ \ell}$. Thus, the potential $\Phi$ must increase by at least $\frac{1}{200} |\calW^{i + \ell}|$.
By assumption, $|\calW^{i+\ell}|$ is at least $\frac{1}{2} |\calW^i|$ for all $\ell \leq 10000 \log^3(ns)$. It then follows that 
	\[\Phi_i(10000\log^3(ns)) \geq 10000\log^3(ns) \cdot {\frac{1}{400}} |\calW^i| > \log^3(ns) |\calW^i|. \]
But by construction, the potential is at most $\log^3(ns) |\calW^i|$, so we have reached the desired contradiction. 
\end{proof}

This completes the proof of \Cref{lem:cover-bound}.
\end{proof}

\begin{lemma}
\label{lem:branch-bound}
	Suppose that $|\calL| \leq n^{O(\log(s))}$, then any branch of \Cref{alg:gen-far} reaches Step 5(a).vi of the algorithm at most $O(\log(|\calL|) + \log(k))$ times. 
\end{lemma}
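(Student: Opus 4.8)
The plan is to bound the number of times a single branch of the recursion reaches Step 5(a).vi (the "unhappy node" recursion that calls \findfar{} after a successful \leap{} call). The key observation is that Step 5 is only entered at an \emph{unhappy} node, i.e.\ when there are currently no $(y,\calW)$-popular coordinates outside $A \sqcup S$, and the recursive call made in Step 5(a).vi is to a node where $\calW$ has been shrunk to $\calW'$ (terms far from $\bz$ are dropped in Step 5(a).iv) \emph{and} there are now at least $0.01(k-|S|)$ super-popular coordinates for $\calW'$. So I would argue that each traversal of Step 5(a).vi must be accompanied by a decrease in $|\calW|$ that is \emph{geometric}, not just additive.

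\textbf{Key steps.} First, I would fix a branch $B$ of the recursion tree and consider the sequence of nodes along $B$ at which Step 5(a).vi is executed. At such a node, the parent was unhappy: no coordinate outside $A \sqcup S$ is $(y,\calW)$-popular, meaning every literal (with respect to the current $y$) appears in fewer than $0.005|\calW|$ terms of $\calW$. After the \leap{} and the pruning of $\calW$ to $\calW'$ in Step 5(a).iv, the child node has $0.01(k-|S|)$ super-popular coordinates. The point is that a super-popular coordinate for $\calW'$ that was \emph{not} popular for $\calW$ at the parent means many terms of $\calW'$ contain a literal that few terms of $\calW$ contained under the old $y$; but $\calW' \subseteq \calW$, and the literals are being counted with respect to a \emph{different} point $\bz$. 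Here I would invoke the structural fact behind \Cref{lem:super-pop}: the surviving terms in $\calW'$ are all close to $T^\star|_{\overline{F}}$ where $F = S \sqcup A$, so they look alike, and the change of reference point from $y$ to $\bz$ is precisely what converts "no popular coordinate" into "many super-popular coordinates." The cleanest way to extract a geometric decrease is to track, across consecutive Step 5(a).vi executions along $B$, that $|\calW|$ drops by at least a constant factor (say to at most $\frac{1}{2}|\calW|$ or to at most $(1 - c)|\calW|$ for an absolute $c>0$): the terms that remain popular-consistent across the transition must be a bounded fraction, by the double-counting argument of \Cref{lem:super-pop} applied at the two reference points. Then, since $|\calW|$ starts at $|\calL| \le n^{O(\log(s))}$ and a quantity that is halved each time can only be halved $O(\log|\calL|)$ times before it drops below $1$ (at which point $\calW = \emptyset$ and Step~4 terminates the branch), we conclude that Step 5(a).vi is reached at most $O(\log|\calL|)$ times. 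The additive $+O(\log k)$ term would come from bookkeeping on $|S|$ via \Cref{lem:S-bound}: each execution of Step 7 grows $S$, and $|S| \le k$ so at most $\log k$ of the relevant doublings can be "charged" to the $S$-branch side; alternatively, if $k-|S|$ can shrink, one tracks the super-popular threshold $0.01(k-|S|)$ and argues the bound degrades gracefully, contributing the $\log k$.

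\textbf{Main obstacle.} The delicate point is establishing that \emph{consecutive} executions of Step 5(a).vi along $B$ really do force a geometric (not merely additive) decrease in $|\calW|$. The pruning in Step 5(a).iv only removes terms that are $\dass$-far from $\bz$; a priori this might remove only a single term at a time, which would give a bound linear in $|\calL|$ rather than logarithmic. The resolution I would pursue is that the \emph{combination} of "parent is unhappy" (no popular coordinate) with "child has $\geq 0.01(k-|S|)$ super-popular coordinates" is self-contradictory unless a large fraction of $\calW$ was dropped: if only a $(1-c)$-fraction survived, the super-popular coordinates of $\calW'$ would still be popular ($> 0.005|\calW|$) for $\calW$ as long as $c$ is small, contradicting unhappiness at the parent — \emph{except} that the reference point changed from $y$ to $\bz$. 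So the real content is to quantify how much the popularity counts can change under a single \leap{} step, using that \leap{} flips each free coordinate independently with probability $\log^{300}(ns)/(k-|F|)$ and that (by \Cref{lem:leap-frog-guarantee}) $\bz$ agrees with $y$ on $F = S \sqcup A$ and still satisfies $T^\star$. I expect this to require carefully combining \Cref{lem:leap-frog-guarantee}(ii) with the double-counting of \Cref{lem:super-pop}, and this interplay — showing the two constraints can only be reconciled by a constant-factor drop in $|\calW|$ — is where the bulk of the work lies.
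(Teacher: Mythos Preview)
Your proposal contains a genuine gap. You are making the wrong comparison: you try to argue that a \emph{single} execution of Step~5(a).vi forces $|\calW'|$ to be a constant fraction smaller than $|\calW|$, by contrasting the unhappy parent (no popular coordinate for $(y,\calW)$) with its child (many super-popular coordinates for $(\bz,\calW')$). This is false in general. It is entirely possible that $\calW' = \calW$: \leap{} can flip a few bits of $y$ so that coordinates which were not popular for $(y,\calW)$ become super-popular for $(\bz,\calW)$, with no terms removed in Step~5(a).iv at all. Your ``main obstacle'' --- quantifying how popularity changes across a single \leap{} --- is therefore not the right obstacle; that change can be dramatic, and nothing forces a drop in $|\calW|$ at this step.

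The paper's argument instead compares two \emph{consecutive} executions of Step~5(a).vi along the branch. After the $i$-th execution, the child has a set $P$ of at least $0.01(k-|S^i|)$ super-popular coordinates (by Step~5(a).v). Before the $(i{+}1)$-th execution, the node is again unhappy: no coordinate outside the current $A\sqcup S$ is popular. The crucial observation you are missing is that, between these two moments, the only changes to the point $y$ are single-bit flips performed in Step~9, and each such flipped coordinate is simultaneously placed into $A$. Hence for any $j\in P$ that remains outside $A^{i+1}\sqcup S^{i+1}$, the bit $y_j$ is \emph{unchanged}, so the literal is the same; since this $j$ was super-popular ($\geq 0.01|\calW^i|$ terms) but is now not even popular ($<0.005|\calW^{i+1}|$ terms) with $\calW^{i+1}\subseteq\calW^i$, one gets $|\calW^{i+1}|\leq(1-\Omega(1))|\calW^i|$. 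The alternative is that \emph{every} $j\in P$ has been absorbed into $A^{i+1}\sqcup S^{i+1}$; since $|A|$ is only $\polylog(ns)$ by \Cref{lem:cover-bound}, essentially all of $P$ went into $S$, giving $k-|S^{i+1}|\leq(1-\Omega(1))(k-|S^i|)$. Thus the argument tracks \emph{two} potentials, $|\calW|$ and $k-|S|$, and each Step~5(a).vi execution shrinks at least one of them by a constant factor --- this is the true source of the $O(\log|\calL|+\log k)$ bound, not the side bookkeeping you sketch.
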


\begin{proof}
	Fix a branch $B$ in the recursion tree. Suppose $(y^i, \calW^i)$ is the label of the node that reaches Step 5(a).vi of the algorithm for the $i^\text{th}$-time in $B$. Similarly, let $A^i$ and $S^i$ denote sets $A$ and $S$ at the child of the node that calls \leap{} for the $i^\text{th}$ time in $B$. We will show the following claim:
	
	\begin{claim} \label{claim:Bonn}
		Either $|\calW^{i+1}| \leq \left (1 - \frac{1}{1000} \right) |\calW^i|$ or $(k - |S^{i+1}|) \leq \left( 1 - \frac{1}{1000} \right) (k-|S^{i}|)$
	\end{claim}
	
Since this can only happen $O(\log(|\calL|) + \log(k))$ times before either $|S^i| \geq k - \log^{1000}(s)$ or $|\calW^i| < 1$ (either of which yields a leaf node in the recursion tree), the lemma follows. 
	
	\begin{proof}
	Let $P$ be the set of $(y^i, \calW^i)$-super popular coordinates after reaching Step 5(a).vi of the algorithm for the $i^\text{th}$ time on branch $B$. Since we reached Step 5(a).vi of the algorithm, we have $|P| \geq 0.01 (k - |S^i|)$ from Step 5(a).v.
		
	 Now there must be no $(y^{i+1}, \calW^{i+1})$-popular coordinates when we reach Step 5(a).vi for the $(i+1)^\text{th}$ time. So either (a) every coordinate in $P$ is no longer $(y^{i+1}, \calW^{i+1})$-popular, or (b) $P \subseteq S^{i+1} \sqcup A^{i+1}$. In the former case, as each coordinate in $P$ was $(y^i, \calW^i)$-super popular, by definition of super-popularity we must have that $|\calW^{i+1}| \leq (1 - \frac{1}{200}) |\calW^i|$.
	 In the latter case, we have  
	 	\[
	 		k - |S^{i+1}| \leq~ k - (|S^i| + |P| - \log^{50}(ns)) {\leq} 0.99 \cdot (k - |S^{i}|) + \log^{50}(s) \leq \left (1 - \frac{1}{1000} \right) \cdot (k - |S^{i}|)
	 	\]
	 where we used \Cref{lem:cover-bound} for the first inequality. In more detail, note that $P \cap S^i = \emptyset$, and since (i) $S^{i}\sse S^{i+1}$ and (ii) $|P| \leq |S^{i+1}| + |A^{i+1}|$, we get 
	 \[
	 	|S^{i+1}| \geq |S| + |P| - |A^{i+1}|
	 \]
	 yielding the first inequality together with~\Cref{lem:cover-bound}. 
	 This completes the proof of \Cref{claim:Bonn}.
	\end{proof}
	This completes the proof of~\Cref{lem:branch-bound}. 
\end{proof}

With this, we can now bound the runtime of the algorithm:

\begin{lemma}
Suppose that $f$ is an exact $k$-DNF and $\calL$ is a list of $n^{O(\log(s))}$ terms of size $k$, then \Cref{alg:gen-far} runs in quasi-polynomial time.
\end{lemma}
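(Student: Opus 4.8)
The plan is to bound the total running time by the product of (the number of nodes in the recursion tree that~\Cref{alg:gen-far} builds) and (the amount of work done at each node), and to show that each of these two quantities is $\exp(\log^{O(1)}(ns))$. First I would record the structure of the recursion tree. Every recursive call is of one of three types: a call made in Step~7 (which adds one coordinate to $S$ and leaves $y,A,\calW$ otherwise untouched --- an ``$S$-move''); a call made in Step~9 (which adds one coordinate to $A$ and toggles one bit of $y$ --- an ``$A$-move''); and a successful call made in Step~5(a).vi from an ``unhappy'' node (which replaces $y$ by a noised point $\bz$ and a pruned $\calW'$, leaving $S,A$ unchanged --- a ``$U$-move''). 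Thus every root-to-leaf path of the tree is a word of $S$-, $A$-, and $U$-moves. Along any such path, \Cref{lem:S-bound} bounds the number of $S$-moves by $k$, \Cref{lem:cover-bound} bounds the number of $A$-moves by $O(\log^3(ns)\log|\calL|)$, and \Cref{lem:branch-bound} bounds the number of $U$-moves by $O(\log|\calL|+\log k)$. Using the hypothesis $|\calL|\le n^{O(\log s)}$, the latter two quantities are $\log^{O(1)}(ns)$, while the number of $S$-moves is only $O(n)$; in particular the depth of the tree is $O(n)$.

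The crux of the proof is counting the leaves of the recursion tree. Each \emph{happy} node (Steps~6--9) has exactly two children: the $S$-branch of Step~7 and the $A$-branch of Step~9 (a leaf, in the event of Step~8's failure, only has fewer). Each \emph{unhappy} node has at most $B:=\exp(\log^{O(1)}(ns))$ children, one per iteration of the loop in Step~5(a) (whose iteration count is $\exp((\log(ns))^{O(1)})$ by inspection). Hence a leaf is uniquely specified by its move-word together with, at each $U$-move, a choice of one of $\le B$ loop iterations; if the word has $n_S\le k$ many $S$-moves, $n_A\le a:=O(\log^3(ns)\log|\calL|)$ many $A$-moves, and $n_U\le u:=O(\log|\calL|+\log k)$ many $U$-moves, the number of such specifications is at most $\binom{n_S+n_A+n_U}{n_S,\,n_A,\,n_U}\cdot B^{n_U}\le \pbra{2(k+a+u)}^{a+u}\cdot B^{u}$. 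Summing over the $\mathrm{poly}(n)$ many choices of $(n_S,n_A,n_U)$ and plugging in $a+u=\log^{O(1)}(ns)$, $k\le n$, and $B=\exp(\log^{O(1)}(ns))$, the number of leaves is $\exp(\log^{O(1)}(ns))$. Since the depth is $O(n)$ and in any rooted tree the number of nodes is at most (number of leaves)$\times$(depth), the recursion tree has $\exp(\log^{O(1)}(ns))$ nodes in total.

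It then remains to check that the work at a single node is $\exp(\log^{O(1)}(ns))$. Step~1 (when it is reached, which is at a leaf) brute-forces over all width-$k$ terms at $\dterm$-distance $\le\log^{1000}(s)$ from $T_y(S)$; since such a node has $|S|\ge k-\log^{1000}(s)$, there are at most $\binom{k}{\log^{1000}(s)}\binom{2n}{2\log^{1000}(s)}=\exp(\log^{O(1)}(ns))$ such terms, each testable for the far-point property in $O(n|\calL|)$ time. Steps~2--4 prune $\calW\subseteq\calL$ in time $O(n|\calL|)=\exp(\log^{O(1)}(ns))$. Step~5 runs its loop at most $\exp(\log^{O(1)}(ns))$ times, each iteration invoking \leap{} (itself a loop of $\exp(\log^{O(1)}(ns))$ iterations of $O(n)$ work) and then counting super-popular coordinates in $O(n|\calL|)$ time. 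Steps~6--9 find a popular coordinate and set up the recursive calls in $O(n|\calL|)$ time. Multiplying the node count by this per-node bound gives a total running time of $\exp(\log^{O(1)}(ns))$, i.e.\ quasi-polynomial, as claimed.

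The step I expect to be the main obstacle is the leaf count in the second paragraph: a naive reading gives a tree of depth $\Theta(k)=\Theta(n)$ with branching factor $2$ at happy nodes, which would only yield a useless $2^{\Theta(n)}$ bound. The key realization is that ``genuine'' branching occurs only at $A$-moves and $U$-moves, of which there are only $\log^{O(1)}(ns)$ many along \emph{every} root-to-leaf path, whereas the (possibly $\Theta(n)$ many) $S$-moves merely contribute a multinomial ``choose where the $A$- and $U$-moves sit'' factor of size $n^{\log^{O(1)}(ns)}$, which is still quasi-polynomial. Making this rigorous requires carefully invoking \Cref{lem:S-bound}, \Cref{lem:cover-bound}, and \Cref{lem:branch-bound} to confirm that the number of $A$- and $U$-moves is polylogarithmic along \emph{every} branch of the tree (not only along the ``main branch'' that eventually outputs a far point), and correctly folding in the $\le B$-way branching at unhappy nodes.
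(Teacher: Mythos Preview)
Your proof is correct and rests on the same key observation as the paper's: along every root-to-leaf path, the number of $A$-moves and $U$-moves is $\polylog(ns)$ (via \Cref{lem:cover-bound} and \Cref{lem:branch-bound}), so genuine branching is polylogarithmic and the $\Theta(k)$ many $S$-moves contribute only a $(k+\polylog)^{\polylog}$ factor. The difference is purely in packaging. The paper sets up a recurrence $r(a,d,\ell)$ for the maximum tree size given bounds $a,d,\ell$ on the number of left-happy, right-happy, and unhappy steps per branch, and proves by induction that $r(a,d,\ell)\le 10(a+d+1)^{a+1}E^\ell$; your argument instead encodes each leaf by its move-word in $\{S,A,U\}^*$ together with a choice in $[B]$ at each $U$, and bounds the count by a multinomial $\binom{n_S+n_A+n_U}{n_S,n_A,n_U}\cdot B^{n_U}\le(2(k+a+u))^{a+u}B^u$. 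Both yield the same $\exp(\log^{O(1)}(ns))$ bound; your direct counting is arguably more transparent and avoids the somewhat delicate algebra in the paper's inductive step, while the paper's recurrence makes the dependence on the three parameters explicit. Your per-node work analysis is more detailed than the paper's (which simply asserts each node runs in quasi-polynomial time), and your ``nodes $\le$ leaves $\times$ depth'' step is a clean way to finish.
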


\begin{proof}
Note that since each node of the recursion tree runs in quasi-polynomial time, it suffices to bound the size of the tree. Let $r(a, d,\ell)$ denote the maximum number of nodes in a tree with two types of nodes (``happy'' and ``unhappy'') with the following properties: 
\begin{enumerate}
	\item Any root-to-leaf path has at most $\ell$ unhappy nodes. (Recall that an unhappy node corresponds to a call to~\leap, i.e. to executing Step~5 of the algorithm.)
	\item Each unhappy node has at most $E := \exp\pbra{(\log(ns))^{O(1)}}$ children. 
	\item Each happy node has at most $2$ children. (Recall that such nodes correspond to calls of~\findfar~that reach Step~6 of the algorithm, and the child nodes corresponds to calls of~\findfar~in either Step~7 or Step~9 of the algorithm.)
	\item Any root-to-leaf path has at most $a$ happy {left} children. (Left branches will correspond to $A$-branches of the recursion tree, cf.~Step 9 of the algorithm.)
	\item Any root-to-leaf path has at most $d$ happy right children.  (Right branches will correspond to $S$-branches of the recursion tree, cf.~Step 7 of the algorithm.)
\end{enumerate}
We will prove the following claim:
	
	\begin{claim} \label{claim:master-theorem}
	We have 	
	\[r(a, d,\ell) \leq 10 (a + d + 1)^{a + 1} E^{\ell}.\]
	\end{claim}
	
	\begin{proof}
		We prove the statement by induction on $a+d+\ell$. For the base case, note the the statement is true when $a + d + \ell = 1$. Moreover, it is easy to see that it is true when $a = 0$ or $d = 0$. We now assume that $a,d > 0$ and $\ell \geq 0$ are such that statement holds for all $a', d', \ell'$ such that $a' + d' + \ell' < a + d + \ell$.
		
		If the root is happy, then we have that 
		\begin{align*}
			r(a,d,\ell) &\leq r(a-1,d,\ell) + r(a,d-1,\ell) \\
			&\leq 10 (a + d)^a E^{\ell} + 10 (a + d)^{a+1} E^{\ell} \\
			&\leq 10 E^\ell \left(1 + (a+d) + \dots + (a + d)^a + (a + d)^{a+1} \right ) 
			\\
			&\leq 10 E^\ell \frac{(a+d)^{a+2}}{a+d-1} \\
			&\leq 10 E^\ell (a+d +1)^{a+1} \frac{(a+d)^{3}}{(a+d+1)^2 (a+d-1)} \\
			&\leq 10 E^\ell (a+d +1)^{a+1} 
		\end{align*}
	
		where in the final line we use the fact that $x^3/((x+1)^2(x-1)) \leq 1$ when $x \geq 2$.
		
		If the root is unhappy, then 
			\[r(a,d,\ell) \leq E \cdot r(a,d,\ell - 1) \leq 10 (a+d+1)^{a+1} E^\ell \]
		as desired. This completes the inductive step.
	\end{proof}

	\sloppy By \Cref{lem:S-bound}, \Cref{lem:cover-bound} and \Cref{lem:branch-bound}, it follows that the size of the recursion tree is bounded by $r(O(\log(|\calL|)\log^3(ns)), k, O(\log(|\calL|) + \log(k)))$. By \Cref{claim:master-theorem}, this is quasi-polynomial as desired.
\end{proof}

\subsubsection*{Correctness}

We now turn to prove that the algorithm actually outputs a far point. To do so, we will show that a particular branch will succeed in doing so with high probability. Throughout this section, let $T^\star$ be a term satisfied by $y \in \zo^n$ where $y$ is the argument to the initial call of \findfar. 
 
\begin{definition}[Main branch]
The \emph{main branch} of the tree is the branch which correctly places each index $i$ into the sets $A$ and $S$, i.e. it adds $i$ to $A$ if the literal corresponding to $y_i$ is not in $T^\star$, and into $S$ if the literal corresponding to $y_i$ is in $T^\star$. If $\leap$ is called, the main branch follows along the first recursive call that satisfies $T^\star$. If no such recursive call is made, then we say the main branch {has been ``cut''}.
\end{definition}

Note that the main branch being ``uncut'' implies there exists a branch with a leaf labelled by a far point that satisfies $T^\star$. 

\begin{lemma}
\label{lem:uncut}
Suppose $|\calL| = n^{O(\log(s))}$. Then with probability at least $1-1/n^9$, the main branch is uncut.	
\end{lemma}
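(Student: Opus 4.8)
\textbf{Proof plan for \Cref{lem:uncut}.}

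The plan is to follow the main branch through the recursion tree and upper bound the probability that it gets ``cut.'' The main branch can only be cut at an unhappy node, i.e.~when \leap{} is invoked and none of the (up to $E := \exp(\log^{O(1)}(ns))$) recursive calls in Step~5(a) satisfies $T^\star$. First I would observe that, by \Cref{lem:S-bound}, \Cref{lem:cover-bound}, and \Cref{lem:branch-bound}, the total number of unhappy nodes visited along the main branch is bounded by some fixed quasipolynomial quantity $N = \exp(\log^{O(1)}(ns))$ (indeed, the number of times Step 5(a).vi is reached is $O(\log|\calL| + \log k)$, and each such node together with its ancestors contributes a bounded amount of $\leap$ activity). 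So by a union bound it suffices to show that at each individual unhappy node the main branch survives except with probability at most, say, $n^{-10}/N$; then a union bound over all at most $N$ unhappy nodes gives the claimed $1 - n^{-9}$.

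Next I would analyze a single unhappy node. Let $(y,\calW)$ be its label, with $S,A$ the current frozen sets; the main branch here needs \leap{} to produce, in at least one of its $E$ independent tries, a point $\bz$ such that (a) $T^\star(\bz)=1$, (b) $\bz$ survives the ``super-popular coordinate'' test in Step 5(a).v, and (c) the recursive call $\findfar(f,\bz,\calL,\calW',S,A)$ on the main branch itself does not output FAIL. For a single try, \Cref{lem:leap-frog-guarantee} (with the frozen set $S\sqcup A$ playing the role of $F$; one must first check the hypotheses of that lemma hold, in particular that $|T^\star|_{\overline{S\sqcup A}}| \leq 2(k - |S\sqcup A|)$, using that $S$ records literals of $T^\star$ and $A$ records non-literals, and that $|S\sqcup A| \leq k - 1000\log^{300}(ns)$ by \Cref{lem:S-bound} and \Cref{lem:cover-bound} together with \Cref{eq:covered}) gives that with probability at least $\exp(-\log^{O(1)}(ns))$, \leap{} returns a $\bz$ satisfying $T^\star$ and far (in $\dass$) from exactly the ``close'' terms of $\calW$. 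Conditioned on this, \Cref{lem:super-pop} guarantees at least $(k-|S|)/100 \gg 0.01(k-|S|)$ super-popular coordinates among the literals of $T^\star|_{\overline{S\sqcup A}}$, so the test in Step 5(a).v is passed; hence a single \leap{} try puts the algorithm on a valid main-branch continuation with probability $q := \exp(-\log^{O(1)}(ns))$. Since the $E$ tries are independent and $E$ can be taken large enough that $E \gg (1/q)\cdot \log(N n^{10})$, the probability that all $E$ tries fail to produce a valid main-branch point is at most $(1-q)^E \leq n^{-10}/N$. (Part (c), that the recursive call on the chosen $\bz$ doesn't itself FAIL, is handled by the induction / union bound over descendant unhappy nodes, so it is not charged here.)

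The main obstacle I expect is bookkeeping the conditioning correctly: the event ``main branch uncut'' is a statement about a path in a randomly-grown tree, and the randomness used at one unhappy node (the \leap{} coin flips) must be shown to be independent of the randomness at other unhappy nodes along the path, so that the union bound over nodes is legitimate. This is true because each invocation of \leap{} uses fresh independent coins, and the identity of the next unhappy node on the main branch is determined by the outcome so far; but writing this cleanly requires carefully defining the main branch as a stopping-time-like object and invoking something like the optional-stopping / tower-property argument, or more simply conditioning on the history up to each unhappy node and applying the per-node bound. A secondary technical point to get right is verifying the hypotheses of \Cref{lem:leap-frog-guarantee} and \Cref{lem:super-pop} at every unhappy node along the main branch — in particular that $\calW'$ at that node consists of terms that are genuinely ``close'' to $T^\star|_{\overline{S\sqcup A}}$, which follows from Steps 2, 3, and 5(a).iv having pruned away every term that is far in $\dass$ or disagrees with $y$ on many $A$-coordinates, combined with the fact that along the main branch $A$ contains only non-literals of $T^\star$ and $S$ contains only literals of $T^\star$.
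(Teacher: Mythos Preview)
Your proposal is correct and follows essentially the same approach as the paper: verify the hypotheses of \Cref{lem:leap-frog-guarantee} at each unhappy node along the main branch (using \Cref{lem:cover-bound} to control $|A|$ and the fact that $S$ consists of literals of $T^\star$), apply \Cref{lem:super-pop} to show the super-popular test in Step~5(a).v passes for any $\bz$ produced by the good event of \Cref{lem:leap-frog-guarantee}, amplify the per-trial success probability over the $E$ independent \leap{} attempts, and union bound over the unhappy nodes on the main branch (bounded via \Cref{lem:branch-bound}). The paper's proof is terser and simply bounds the main-branch length by $\wt O(k)$ for the union bound, but the structure is the same; your concerns about conditioning are legitimate but resolved exactly as you suggest, since each \leap{} call uses fresh randomness and the per-node bound holds conditioned on the history.
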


\begin{proof}
Note that in order for the main branch to be cut, it must be the case that some $\leap$ call along it must fail to generate a point $z$ that satisfies $T^\star$ and has many $(z,\calW')$-super popular coordinates. Fixing a particular node on the main branch, we note that 
	\[ \bigg| T^\star|_{\overline{A \sqcup S}} \bigg| = k - |S|  \leq 2 (k - |S| - |A|)\]
since $|S| \geq k - \log^{1000}(ns)$ and $|A| \leq \log^{50}(ns)$ by \Cref{lem:cover-bound}. It then follows by \Cref{lem:leap-frog-guarantee} that with probability at least $1 - \frac{1}{n^{11}}$, one of the \leap{} calls at this node outputs a $\bz$ with 
\begin{enumerate}
	\item[(i)] $T^\star(\bz) = 1$, and 
	\item[(ii)] $\dass(\bz, T) \geq \log^{150}(ns)$ for any $T \in \calL$ with
	\[
		|T|_{\overline{A \sqcup S}} \setminus T^\star|_{\overline{A \sqcup S}}| \geq \frac{|T^\star|_{\overline{A \sqcup S}} |}{\log(ns)}.
	\]
\end{enumerate}

Now note that every term $T \in \calW'$ must satisfy
	\[| T_{\overline{A \sqcup S}} | \geq k - |S| - |A| \geq 0.99 (k - |S|) = 0.99 | T^\star|_{\overline{A \sqcup S}} |. \]
It then follows by \Cref{lem:super-pop} that there are at least $0.01 (k - |S|)$ super popular coordinates with respect to $\calW'$ and $\bz$ in $[n] \setminus (A \sqcup S)$.

Thus, any particular node that calls \leap{} does not cut the main branch w.h.p. By \Cref{lem:branch-bound}, the main branch has length at most $\wt{O}(k)$. Performing a union bound over every node in the branch then gives the result.
\end{proof}

\begin{lemma}
\label{lem:far-forever}
Suppose the points given by the labels of the nodes on the main branch are $\by^1, \dots , \by^m$ and that $|\calL| \leq n^{O(\log(s))}$. With probability $1 - o(1)$,
for each term $T\in\calL$, if there exists an $i$ for which $\dass(\by^i, T) \geq \log^{100}(ns)$, then $\dass(\by^m, T)\geq \log^3(ns)$. 
\end{lemma}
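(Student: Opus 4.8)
\textbf{Proof plan for~\Cref{lem:far-forever}.}
The idea is to track, for a fixed term $T\in\calL$, the quantity $\dass(\by^j,T)$ along the main branch and argue that once it exceeds $\log^{100}(ns)$ it can never drop below $\log^3(ns)$. The main branch moves from one node's point $\by^j$ to the next in one of two ways: either by flipping a single coordinate $i$ that is being added to $A$ (Step~9 of~\Cref{alg:gen-far}), or by invoking~\leap{} at an unhappy node (Step~5). In the first case the change $|\dass(\by^{j+1},T)-\dass(\by^j,T)|\le 1$ since a single bit flip changes the Hamming distance to the closest satisfying assignment of $T$ by at most one. Crucially, the coordinates added to $A$ on the main branch are exactly those whose literal is \emph{not} in $T^\star$, but they \emph{can} be in $T$; however, by the definition of the main branch we only add $i$ to $A$ when $i$ is $(y,\calW)$-popular, and Step~3 of~\findfar{} removes from $\calW$ any term that has more than $\log^3(s)$ of its literals falsified by flips of $A$-coordinates. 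So for a term $T$ that \emph{survives} in $\calW$ along the branch, only $O(\log^3(s))$ flips ever work against $T$; and once $T$ has been removed from $\calW$ (which happens precisely because $\dass(\by^j,T)\ge 2\log^{100}(ns)$ at that point, by Step~2), no subsequent $A$-flip on the main branch is guaranteed to be ``against'' $T$ — but we will show the distance stays large anyway.

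\smallskip
First I would split the terms $T\in\calL$ into those that are dropped from $\calW$ at some node of the main branch and those that never are. For a term $T$ never dropped from $\calW$: by Step~2, it always satisfies $\dass(\by^j,T)<2\log^{100}(ns)$, so the hypothesis ``there exists $i$ with $\dass(\by^i,T)\ge\log^{100}(ns)$'' together with the fact that $A$-flips against $T$ number at most $O(\log^3(s))$ (Step~3) and~\leap{} calls against $T$ happen at most $\wt O(1)$ times along the main branch (\Cref{lem:branch-bound}) means the distance can decrease by at most $O(\log^3(s))+\wt O(1)\cdot(\text{per-\leap{} decrease})$ total; I would bound the per-\leap{} decrease using~\Cref{lem:leap-frog-guarantee}(ii), which guarantees that after a~\leap{} the new point has $\dass\ge\log^{150}(s)$ from any $T$ with $|T|_{\ol F}\setminus T^\star|_{\ol F}|$ not too small, and for the remaining (``close'') terms we appeal to~\Cref{lem:super-pop}-type reasoning to see such terms are effectively tracked in $S$ rather than causing the distance to collapse. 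Putting these together, if $\dass(\by^i,T)\ge\log^{100}(ns)$ at some point then $\dass(\by^m,T)\ge\log^{100}(ns)-O(\log^3(s))-\wt O(1)\ge\log^3(ns)$, as desired.

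\smallskip
For the second class — terms $T$ that \emph{are} dropped from $\calW$ at some node $j_0$ of the main branch — I would argue as follows. At the moment $T$ is dropped (Step~2), we have $\dass(\by^{j_0},T)\ge 2\log^{100}(ns)$. After this point $T\notin\calW$, so it no longer constrains the choice of popular coordinates; nonetheless the only operations the main branch performs are single-bit flips (which change $\dass(\cdot,T)$ by at most $1$ each, and there are at most $|A|=O(\log^3(ns)\log|\calL|)=\wt O(1)$ of them total by~\Cref{lem:cover-bound}) and~\leap{} calls (at most $\wt O(1)$ along the main branch by~\Cref{lem:branch-bound}). A~\leap{} call flips each unfrozen coordinate independently with probability $\log^{300}(ns)/(k-|F|)$, so with probability $1-\exp(-\log^{\Omega(1)}(ns))$ it flips at most, say, $2\log^{300}(ns)$ coordinates; this moves $\by$ in Hamming distance by at most $2\log^{300}(ns)$, hence changes $\dass(\cdot,T)$ by at most $2\log^{300}(ns)$. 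Thus along the whole main branch after node $j_0$, the distance $\dass(\by^j,T)$ changes by at most $\wt O(1)\cdot 2\log^{300}(ns)+\wt O(1)=O(\log^{303}(ns)\log|\calL|)$, so $\dass(\by^m,T)\ge 2\log^{100}(ns)-O(\log^{303}(ns)\log|\calL|)$. \textbf{This is where the constants in the exponents matter}, and I would need to double-check that the polylogarithmic exponents chosen in the definitions ($\log^3,\log^{100},\log^{150},\log^{300},\log^{1000}$) are set up so this bound still comfortably exceeds $\log^3(ns)$ — either the exponents in the statement/algorithm were chosen to make this work directly, or a slightly more careful accounting (splitting the ``close'' vs. ``far'' terms at each~\leap{} via~\Cref{lem:leap-frog-guarantee}) is needed so that the decrement at each~\leap{} is only a polylog \emph{smaller} than the current distance rather than an absolute $\log^{300}$. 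I expect this bookkeeping — reconciling the various polylog thresholds and the per-\leap{} decrement for terms that have already left $\calW$ — to be the main obstacle; everything else is a union bound over the $\wt O(1)$ nodes of the main branch (whose length is bounded by~\Cref{lem:branch-bound} and~\Cref{lem:S-bound}) combined with the high-probability guarantees of~\Cref{lem:leap-frog-guarantee} and standard Chernoff bounds on the number of coordinates flipped by each~\leap{}.
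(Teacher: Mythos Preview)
Your proposal contains a genuine gap in the second case, and it is exactly the step you flagged as ``where the constants in the exponents matter.'' You claim that a single \leap{} call flips at most $2\log^{300}(ns)$ coordinates with high probability. This is false in general: each unfrozen coordinate is flipped with probability $\log^{300}(ns)/(k-|F|)$, so the expected number of flipped coordinates is $(n-|F|)\cdot\log^{300}(ns)/(k-|F|)$, which can be as large as $\Theta\!\bigl(n\log^{300}(ns)/k\bigr)$ when $k\ll n$. Hence bounding the total Hamming displacement of $\by$ along the main branch, and then using $|\dass(\by^{j_0},T)-\dass(\by^m,T)|\le\|\by^{j_0}-\by^m\|_1$, cannot work: the right-hand side can swamp $2\log^{100}(ns)$. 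The alternative you sketch --- invoking \Cref{lem:leap-frog-guarantee}(ii) to reset the distance after each \leap{} --- only applies to terms with $|T|_{\ol F}\setminus T^\star|_{\ol F}|\ge (k-|F|)/\log(ns)$, and says nothing about terms that have already been removed from $\calW$, which is precisely the class you are trying to handle.

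The fix, and what the paper does, is to stop tracking the overall Hamming motion and instead track the \emph{disagreement set} $\bD^i\subseteq[n]$ of literals of $T$ falsified by $\by^i$, so $|\bD^i|=\dass(\by^i,T)\ge\log^{100}(ns)$. A coordinate $j\in\bD^i$ leaves the disagreement set only if it is flipped, either by an $A$-flip (at most $|A|=O(\log^3(ns)\log|\calL|)=O(\log^5(ns))$ times total by \Cref{lem:cover-bound}) or by some \leap{} call. For the latter, since Step~1 guarantees $|S|<k-\log^{1000}(ns)$, we have $k-|F|\ge\log^{1000}(ns)-|A|\ge\tfrac12\log^{1000}(ns)$, so each fixed $j$ is flipped by a given \leap{} call with probability at most $2\log^{-700}(ns)$; over the $O(\log^2(ns))$ \leap{} calls along the branch (\Cref{lem:branch-bound}) this is $O(\log^{-698}(ns))$ per coordinate. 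A Chernoff bound over the $\ge\log^{100}(ns)$ coordinates of $\bD^i$ then gives $\Pr[|\bD^i\setminus\bD^m|\ge|\bD^i|/2]\le\exp(-\log^{\Omega(1)}(ns))$, whence $\dass(\by^m,T)\ge|\bD^i|/2\ge\log^3(ns)$. A union bound over $T\in\calL$ and $i\in[m]$ finishes. Note that this argument makes no reference to whether $T$ remains in $\calW$, so your case split is unnecessary.
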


Note that~\Cref{lem:far-forever} justifies Step~2 of the algorithm, ensuring that if you are sufficiently far from a term \emph{not} in $\calW$ at some node, then you remain far at child nodes. 

\begin{proof}
Fix a term $T \in \calL$ and index $i \in [m]$. Suppose that $\dass(\by^i, T) \geq \log^{100}(ns)$. Denote by $\bD^i \subseteq [n]$ the set of bits that must be flipped for $\by^i$ to satisfy $T$, and define $\bD^m$ analogously for $\by^m$. Moreover, we define $A^i, S^i, A^m$, and $S^m$ 
to be the sets $A$ and $S$ at the $i^{\text{th}}$ and $m^{\text{th}}$ nodes respectively. 

\begin{claim} \label{claim:coconut-lentil-curry}
	We have 
	\[\Pr[|\bD^i \setminus \bD^m| \geq |\bD^i|/2
	] \leq \exp(-\log^3(ns))\]\
\end{claim}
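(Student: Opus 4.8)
The quantity $\bD^i$ is the set of coordinates on which $\by^i$ disagrees with (the closest satisfying assignment of) $T$; by hypothesis $|\bD^i| \geq \log^{100}(ns)$. To pass from node $i$ to node $m$ along the main branch, the algorithm performs a bounded number of operations of only two kinds: it flips single coordinates that get added to $A$ (Step~9), and it invokes $\leap$ (Step~5), which re-randomizes the coordinates outside $S \sqcup A$ with per-coordinate flip probability $\log^{300}(ns)/(k - |S|)$. A coordinate $j \in \bD^i$ leaves $\bD^m$ only if it is "repaired," i.e.\ its value in $\by^m$ matches $T$. The plan is to bound the probability that at least half of the coordinates of $\bD^i$ get repaired. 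First I would observe that the coordinates of $\bD^i$ that lie in $S^m \cup A^m$ contribute negligibly: $|S^m \setminus S^i| + |A^m \setminus A^i|$ is at most $\wt O(k)$ along the main branch by \Cref{lem:branch-bound} and \Cref{lem:cover-bound}, but more to the point, only coordinates that are frozen \emph{before} being repaired matter, and by \Cref{lem:cover-bound} the total number of coordinates ever added to $A$ is $O(\log^3(ns)\log|\calL|) = \polylog(ns) \ll |\bD^i|/4$; similarly the coordinates added to $S$ that happen to repair $T$ can only be the literals of $T^\star$, so at most $\log^{1000}(s)$ of them (via Step~1's threshold) — I would just bound both of these crude counts by $|\bD^i|/4$, using $|\bD^i| \geq \log^{100}(ns)$.

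So the only substantial mechanism that could repair $\ge |\bD^i|/4$ coordinates of $\bD^i$ is the sequence of $\leap$ calls. Here I would argue as in the proof of \Cref{lem:leap-frog-guarantee}: a single invocation of $\leap$ that is accepted (returns a point satisfying $f$, hence in particular satisfying $T^\star$ — this is exactly the conditioning that defines the main branch) flips each free coordinate independently with probability $\log^{300}(ns)/(k-|S|)$, and repairs a given coordinate of $\bD^i$ only if it is flipped to the correct value, an event of probability at most $\log^{300}(ns)/(k-|S|) \leq \polylog(ns)/n$ (using \Cref{eq:covered}, which guarantees $k \le n - \polylog(ns)$, so $k - |S| = \Omega(\polylog(ns))$ and the denominator is comparable to $n$ up to $\polylog$ factors). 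Since the number of $\leap$ calls along the main branch is at most $\exp(\polylog(ns)) \cdot \wt O(k)$ — no: I should be careful, this would be too many. In fact the main branch encounters at most $O(\log|\calL| + \log k) = \polylog(ns)$ unhappy nodes (\Cref{lem:branch-bound}), and at each unhappy node exactly one $\leap$ call succeeds and is followed (the rest are rejected and do not alter $\by$). So the number of \emph{successful, branch-advancing} $\leap$ calls is $\polylog(ns)$, and each repairs a fixed coordinate of $\bD^i$ with probability $\polylog(ns)/n$; by a union bound the expected number of repaired coordinates from $\leap$ is $|\bD^i| \cdot \polylog(ns)/n \cdot \polylog(ns) = |\bD^i| \cdot \polylog(ns)/n$. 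Then a Chernoff/Bernstein bound over these independent per-coordinate, per-call events gives that the probability of repairing at least $|\bD^i|/4$ of them is $\exp(-\Omega(|\bD^i|)) \leq \exp(-\Omega(\log^{100}(ns))) \leq \exp(-\log^3(ns))$, with room to spare.

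Combining the two contributions — at most $|\bD^i|/4$ coordinates lost to being frozen into $S \sqcup A$, plus at most $|\bD^i|/4$ lost to $\leap$ except with probability $\exp(-\log^3(ns))$ — we get $|\bD^i \setminus \bD^m| < |\bD^i|/2$ except with probability $\exp(-\log^3(ns))$, which is the claim. The main obstacle, and the step I would be most careful about, is pinning down exactly which coordinates of $\bD^i$ can be repaired and ensuring the independence structure needed for the Chernoff bound: the $\leap$ calls at distinct nodes use fresh randomness and freeze strictly more coordinates as we descend, so conditioning on reaching a given node only restricts which coordinates are \emph{eligible} to flip (never increasing any flip probability), which is exactly what is needed to dominate the repair process by a sum of independent Bernoulli variables. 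I would phrase this domination carefully rather than claiming literal independence.
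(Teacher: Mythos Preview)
Your plan is essentially the paper's: bound repairs coming from $A$-flips by \Cref{lem:cover-bound}, bound the number of accepted $\leap$ calls by \Cref{lem:branch-bound}, show each $\leap$ call repairs a fixed $j\in\bD^i$ with tiny probability, and finish with Chernoff. Two points need fixing.

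First, your bound on the per-coordinate flip probability is wrong. You write $\log^{300}(ns)/(k-|S|) \leq \polylog(ns)/n$ and justify it by \Cref{eq:covered}, claiming ``the denominator is comparable to $n$.'' But \Cref{eq:covered} only gives $k \le n - \polylog(ns)$, an \emph{upper} bound on $k$; it says nothing about $k-|S|$ being close to $n$. Indeed $k$ can be as small as $\polylog(ns)$, so $k - |S\sqcup A|$ can be as small as $\log^{1000}(ns) - O(\log^5(ns))$ (using the Step~1 threshold $|S| \le k - \log^{1000}(ns)$ together with \Cref{lem:cover-bound}). The correct per-call flip probability is therefore $\log^{300}(ns)/\Omega(\log^{1000}(ns)) = O(\log^{-700}(ns))$, not $\polylog(ns)/n$. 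This still yields, over $O(\log^2(ns))$ accepted $\leap$ calls, a per-coordinate repair probability of $O(\log^{-600}(ns))$, and the Chernoff bound on $|\bD^i|\ge\log^{100}(ns)$ Bernoullis with this mean gives $\exp(-\Omega(\log^{100}(ns)))$, with room to spare. So your approach survives, but the numerics must be rewritten.

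Second, your discussion of ``coordinates added to $S$ that happen to repair $T$'' is confused: Step~7 calls $\findfar$ with the same $y$, so adding to $S$ never changes any bit and cannot repair anything. The only single-bit flips along the branch come from Step~9 (adding to $A$). This is harmless for your bound but should be cleaned up. With these two fixes your argument coincides with the paper's.
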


\begin{proof}
To see this, consider how any $j \in \bD^i$ is removed. On the one hand, it could be toggled and added to the set $A$ at some point. By \Cref{lem:cover-bound}, this only accounts for at most $O(\log^5(ns))$ indices. Thus the remainder of the bits must be flipped during one of the \leap{} calls. By \Cref{lem:branch-bound}, there are at most $O(\log^2(ns))$ such calls. Since $|S| \leq k - \log^{1000}(ns)$ and $|A| \leq O(\log^5(ns))$ (by \Cref{lem:cover-bound}) for each call, it follows that any fixed coordinate $j \in \bD^i$ is flipped by some \leap{} call after $\by^i$ with probability at most $\smash{O \left( \log^{-600}(ns) \right)}$. The claim now follows, with room to spare, by a Chernoff bound since $|\bD^i| \geq \log^{100}(ns).$
\end{proof}

Recalling that $|{\cal L}| \leq n^{O(\log s)}$ and $m \leq \tilde{O}(k)$, applying \Cref{claim:coconut-lentil-curry} together with a union bound over each $T \in \calL$ and $i \in [m]$ proves the lemma.
\end{proof}

We now have everything we need to prove the correctness of \findfar~(\Cref{alg:gen-far}).

\begin{proof}[Proof of~\Cref{lem:far-point-found}]
Note that it suffies to show that the main branch outputs a far point with high probability, so we restrict our attention to the main branch. If we return because $|S| \geq k - \log^{1000}(ns)$, then we correctly enumerate term $T^\star$ in Step 1(a) of the algorithm. It then follows that since $f$ is far from $\calL$, randomizing the coordinates outside of $T^\star$ will yield a far point with extremely high probability.

So now assume this is not the case. Since the branch is uncut with high probability by \Cref{lem:uncut}, Step~5(b) never returns FAIL. Moreover, since we always follow the branch according to $T^\star$, we note that line $8$ also never returns FAIL since we only add $i$ to $A$ when $i \not \in T^{\star}$. 

Thanks to the above, note that a leaf node in our recursion tree can only arise from Step~4 of the algorithm. It remains to show that Step $4$ does not return FAIL w.h.p. Let $\by^t$ denote the string in the $t^\text{th}$ node of the main branch, $A^t$ the set $A$ at this node, and $\by^m$ the final string in the main branch. Fix a term $T \in \calL$. 
\begin{itemize}
	\item If $T$ was removed from $\calW$ because it was covered by $A^t$, then $\dass(\by^m, T) \geq \log^{3}(ns)$ as the coordinates in $A^t$ are fixed.
	\item On the other hand, if $T$ was removed from $\calW$ because for some $t$, $\dass(\by^t, T) \geq 2\log^{100}(ns)$ (as in Step~2 of the algorithm), then $\by^m$ is $\log^3(ns)$ far from satisfying $T$ with high probability by \Cref{lem:far-forever}.
\end{itemize}
Thus, $\by^m$ is far from these terms as well. We conclude that $\by^m$ will indeed be a far point as desired.
\end{proof}

\section*{Acknowledgements}
Josh Alman is supported in part by NSF Grant CCF-2238221. Shivam Nadimpalli is supported by NSF grants CCF-2106429, CCF-2211238, CCF-1763970, and CCF-2107187. Shyamal Patel is supported by NSF grants CCF-2106429, CCF-2107187, CCF-2218677, ONR grant ONR-13533312, and an NSF Graduate Student Fellowship. Rocco Servedio is supported by NSF grants CCF-2106429 and CCF-2211238. This work was partially completed while a subset of the authors was visiting the Simons Institute for the Theory of Computing.

\begin{flushleft}
\bibliographystyle{alpha}
\bibliography{allrefs}
\end{flushleft}

\appendix
%!TEX root = ../Learning-DNFs.tex

\section{Corner Cases for \Cref{thm:exact-k-learn}:  Learning Size-$s$ Exact-$k$ DNFs When $k$ is Very Small or Very Large} \label{sec:small-or-large-k}

It is easy to learn size-$s$ exact-$k$ DNFs efficiently, using DNFs of size $O(s \log(1/\eps))$ as hypotheses, if $k$ is very small (even without using membership queries). The following lemma covers the $k \leq (\log^{O(1)}(ns))$ case of \Cref{thm:exact-k-learn}:

\begin{lemma} [Small $k$] \label{lem:small-k}
The class of all size-$s$ exact-$k$ DNFs can be distribution-free PAC learned to error $\eps$ in time $\poly(n^k,1/\eps)$, using only random examples and no membership queries, using the hypothesis class of size-$(s \log(2/\eps))$ exact-$k$ DNF formulas.
\end{lemma}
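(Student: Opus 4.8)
The plan is to reduce the learning problem to learning a disjunction over an expanded feature space, exactly as in Valiant's original argument for $s$-term DNF but with the feature set tailored to exact-$k$ DNF. First I would introduce a feature for every possible width-$k$ term over $\zo^n$: there are $N := \binom{n}{k} 2^k \leq (2n)^k = \poly(n^k)$ such terms, and each can be evaluated on a given $x \in \zo^n$ in $\poly(n)$ time. Mapping $x \mapsto (T(x))_{T}$ where $T$ ranges over all width-$k$ terms sends $\zo^n$ into $\zo^N$, and the unknown target $f = T_1 \vee \cdots \vee T_s$ becomes the monotone disjunction $\bigvee_{i=1}^s z_{T_i}$ on the new coordinates. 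So it suffices to PAC learn a disjunction of at most $s$ of the $N$ features, which we do with the standard elimination/greedy algorithm for learning disjunctions.

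The key steps, in order: (1) Bound the feature count $N = \binom{n}{k}2^k \leq n^{O(k)}$ and note each feature is efficiently evaluable, so the reduction runs in $\poly(n^k, 1/\eps)$ time. (2) Invoke the classical fact that a disjunction over $N$ variables is PAC learnable to error $\eps$ from $O((\log N + \log(1/\delta))/\eps) = \poly(n^k, \log(1/\eps))$ random examples in $\poly(N, 1/\eps)$ time, using a disjunction as hypothesis. The easiest concrete route is the standard ``list-and-eliminate'' learner: start with the hypothesis that is the OR of all $N$ features and, for each sampled negative example of $f$, delete every feature that the example would make true; this never deletes a true $T_i$ (since $f(x)=0$ implies $T_i(x)=0$) and a VC-dimension / Occam argument shows that after $O((s\log N + \log(1/\delta))/\eps)$ samples the surviving disjunction has error $\le \eps$ with probability $1-\delta$. (3) Observe that the surviving hypothesis, pulled back to $\zo^n$, is itself an OR of width-$k$ terms, hence an exact-$k$ DNF; and since it is a subset of an initially-all-features OR that contained $T_1,\dots,T_s$, after elimination it has at most\ldots well, it could be large, so instead I would run the standard \emph{greedy set-cover / boosting} version to get the claimed $O(s\log(1/\eps))$ bound on the hypothesis size, or simply cite \Cref{lem:term-booster}-style boosting: each round, find a width-$k$ term consistent with all current negative examples and covering a $1/s$ fraction of the currently-uncovered positives (such a term exists because one of $T_1,\dots,T_s$ does this), giving an $O(s\log(1/\eps))$-term hypothesis after $O(s\log(1/\eps))$ rounds by the usual multiplicative-weights/coverage argument.

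Actually the cleanest writeup: apply \Cref{lem:term-booster}. A $\gamma$-weak term learner for exact-$k$ DNF that uses \emph{no} membership queries and runs in $\poly(n^k,1/\gamma)$ time is trivial here --- brute-force search over all $N = n^{O(k)}$ width-$k$ terms $T$, estimate $\Pr_{\bx\sim\calD}[T(\bx)=1 \wedge f(\bx)=0]$ and $\Pr_{\bx\sim\calD}[T(\bx)=1 \mid f(\bx)=1]$ from $\poly(n^k, 1/\gamma, \log(1/\delta))$ random labeled examples to additive accuracy $o(\gamma/(s\log(1/\gamma)))$, and output any term passing both thresholds; one of $T_1,\dots,T_s$ always passes, since $\Pr[T_i(\bx)=1\wedge f(\bx)=0]=0$ and $\sum_i \Pr[T_i(\bx)=1\mid f(\bx)=1]\ge 1$ forces some $i$ with $\Pr[T_i(\bx)=1\mid f(\bx)=1]\ge 1/s$. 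Feeding this weak term learner into \DNFLearn{} (\Cref{alg:DNFLearn}) and taking $\gamma=\Theta(\eps)$ yields, by \Cref{lem:term-booster}, an $O(s\log(1/\eps))$-term exact-$k$ DNF hypothesis with error $O(\eps)$, in total time $\poly(n^k,1/\eps)$; rescaling $\eps$ by a constant gives exactly the statement.

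\textbf{Main obstacle.} There is no deep obstacle --- this is a ``corner case'' lemma --- but the one point requiring a little care is getting the \emph{hypothesis class} right: a naive elimination learner over all $N$ features produces a disjunction whose size is not controlled, so to land on the hypothesis class of size-$(s\log(2/\eps))$ exact-$k$ DNFs one must route through the boosting/coverage argument (or cite \Cref{lem:term-booster}) rather than through plain consistency-based PAC learning. The only thing to double-check is that the weak term learner's sample complexity and accuracy requirements from \Cref{def:weak-term-learner} are met by straightforward empirical estimation, which they are since all probabilities in question can be estimated to any $1/\poly$ additive accuracy with $\poly$ samples and a union bound over the $N = n^{O(k)}$ candidate terms.
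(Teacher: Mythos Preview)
Your proposal is correct. The feature-expansion to $N=\binom{n}{k}2^k$ width-$k$ terms is exactly what the paper does, and your final route through \Cref{lem:term-booster} with a brute-force weak term learner is sound: one of the true terms $T_i$ always satisfies the weak-term conditions, empirical estimation over $\poly(n^k,1/\gamma)$ samples suffices by a union bound over the $N$ candidates, and \Cref{lem:term-booster} then yields an $O(s\log(1/\eps))$-term exact-$k$ DNF hypothesis.

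The paper takes a shorter but essentially equivalent route: rather than invoking its own boosting machinery, it simply cites Haussler's attribute-efficient disjunction learner (\Cref{thm:haussler88}), which learns a size-$s$ disjunction over $\zo^N$ and outputs a disjunction of size exactly $s\log(2/\eps)$. Applying this with $N=\binom{n}{k}2^k$ meta-variables gives the lemma in one line. Conceptually the two proofs are the same --- Haussler's result is precisely a greedy-cover argument of the kind you sketch --- so the only difference is packaging: the paper outsources the greedy/boosting step to a black-box citation, while you reconstruct it from \Cref{lem:term-booster}. Your version has the mild advantage of being self-contained within the paper's framework; the paper's version has the advantage of hitting the stated constant $s\log(2/\eps)$ on the nose rather than $O(s\log(1/\eps))$.
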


\begin{proof}
We recall the following result due to Haussler (see also Section~2.3 of \cite{KearnsVazirani:94}):

\begin{theorem} [Attribute-efficiently learning disjunctions using disjunction hypotheses \cite{Haussler88}] \label{thm:haussler88}
There is a distribution-free PAC learning algorithm that learns the class of size-$s$ disjunctions over $\zo^N$ in time $\poly(N,1/\eps,\log(1/\delta))$, using hypotheses which are disjunctions of size $s \cdot \log(2/\eps)$.
\end{theorem}

In our context, we can view a size-$s$ exact-$k$ DNF as a size-$s$ disjunction over an expanded feature space of $N={n \choose k} \cdot 2^k =O(n^k)$ ``meta-variables'' corresponding to all possible conjunctions of exactly $k$ literals over $x_1,\dots,x_n$.  Applying \Cref{thm:haussler88}, we get \Cref{lem:small-k}.
\end{proof}

For the case of \Cref{thm:exact-k-learn} when $k$ is very large (at least $n-(\log(ns))^{O(1)}$), 
a simple approach in our context is to introduce $n$ new ``dummy'' variables $x_{n+1},\dots,x_{2n}$ and view the unknown DNF formula $f(x_1,\dots,x_n)$ as a DNF formula over the $2n$ variables $x_1,\dots,x_{2n}$ which happens to only depend on the $n$ variables $x_1,\dots,x_n$.  
Define the distribution ${\cal D}'$ over $\zo^{2n}$ by taking ${\cal D}'(x_1,\dots,x_{2n})={\frac 1 {2^n}} {\cal D}(x_1,\dots,x_n)$ for every $x \in \{0,1\}^n$.  
We can simulate drawing a $2n$-bit random example from ${\cal D}'$ by simply drawing a random $n$-bit example from ${\cal D}$ and appending a uniform random $n$-bit string for the final $n$ coordinates.
And we can trivially simulate membership queries to $f$ (viewed as a $2n$-bit function) using the membership oracle to $f$, since the last $n$ bits are guaranteed to be irrelevant for the target DNF.
Since $k \leq n$, and since we may assume that $n \leq 2n - \log^{O(1)}(2ns)$ (as otherwise we can run in time $2^{\poly(n)}$ within the allowed $\exp((\log(ns))^{O(1)})$ time bound and solve the DNF learning problem in a trivial way),
this reduces our exact-DNF learning problem to the setting which is covered by \Cref{eq:covered}.

Finally, let $h(x_1,\dots,x_{2n})$ be the exact-$k$ DNF hypothesis which is provided by \Cref{thm:exact-k-learn}; taking the accuracy parameter in \Cref{thm:exact-k-learn} to be $\eps/100$, we have $\Pr_{\bx \sim {\cal D}'}[h(\bx) \neq f(\bx)] \leq \eps/100$.
By Markov's inequality, if we choose a uniform random assignment $(\bb_{n+1},\dots,\bb_{2n})$ fixing the last $n$ coordinates and we define $h': \zo^n \to \zo$ to be $h'(x_1,\dots,x_n)=h(x_1,\dots,x_n,\bb_{n+1},\dots,\bb_n)$, with probability 99/100 we have that $\Pr_{\bx \sim {\cal D}}[h'(\bx) \neq f(\bx)] \leq \eps$, thus with high probability achieving an $\eps$-accurate hypothesis under the original distribution ${\cal D}$ over the original domain, as required.

%!TEX root = ../Learning-DNFs.tex

\section{List-Decoding a Term in $(ns)^{O(\log (ns))}$ Time: Proof of \Cref{thm:list-decoding}}
\label{sec:speedy-mixing}

Recall that  \Cref{sec:list-decoding-DNFs} gave a proof of \Cref{thm:list-decoding} with a slightly weaker quantitative runtime bound of
 ${\frac 1 p} \cdot (ns)^{O(\log(s) \log(n))}$.
 We now show how to achieve the ${\frac 1 p} \cdot n^{O(\log(ns))}$ running time claimed in \Cref{thm:list-decoding}.
 This is done using the \genlistofterms~algorithm which we recall below for convenience.

\begin{algorithm}
\addtolength\linewidth{-2em}

\vspace{0.5em}

\textbf{Input:} Query access to $f: \zo^n \rightarrow \zo$, $y \in f^{-1}(1)$ \\[0.25em]
\textbf{Output:} A list $\calL$ of terms

\

\genlistofterms($f, y$):
\begin{enumerate}
	\item $\calL \gets \emptyset$. 
	\item Let $M := (ns)^{O(\log s)}$ and let $y=\bY_0,\bY_1, ..., \bY_M$ denote a random walk starting from $y$.
	\item For $t \in [M]$ and $\ell \in [(ns)^{O(\log s)}]$:
		\begin{enumerate}
			\item Run $2 \log(n)$ independent random walks of length $\ell$ starting from $\bY_t$ to get $\bZ^1_t, \dots , \bZ^{2 \log(n)}_t$.
			\item Add the largest term $T$ that satisfies all of $\bY^{1}_t, \dots, \bY^{2 \log n}_t$ to $\calL$.
		\end{enumerate}
	\item Return $\calL$.
\end{enumerate}

\caption{(Restatement of \Cref{alg:speedy-generate-term}) Generating a list of terms using Locally Mixing Random Walks}
\label{alg:recall-speedy-generate-term}
\end{algorithm}

It is clear that \genlistofterms~runs in time $(ns)^{O(\log s )}$.
Note that \genlistofterms~differs from the algorithm described in \Cref{sec:list-decoding-DNFs} in that it runs an initial random walk and uses the points along this walk as candidate starting points for the ``second-stage'' random walks. Using this modification, we will prove the following:

\begin{lemma}
\label{lem:speedy-generate-term-analysis}
Given a DNF $f$ and a point $y \in \{0,1\}^n$, with probability at least $n^{-O(\log(ns))}$, the procedure $\genlistofterms(f,y)$ outputs a list of terms $\calL$ such that $T \in \calL$ for some term $T \in f$.
\end{lemma}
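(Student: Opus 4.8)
The plan is to leverage \Cref{thm:local-mixing} (the Local Mixing theorem) applied to the graph $G = G[f^{-1}(1)]$, together with the two-stage random walk structure of \genlistofterms, mirroring the black-box argument from \Cref{sec:list-decoding-DNFs} but organizing it so the running time comes out to $(ns)^{O(\log(ns))}$ rather than $(ns)^{O(\log(s)\log(n))}$. First I would recall the setup: $G$ has $\dmax = O(n)$, and the sets $A_i = T_i^{-1}(1)$ form a $\theta$-cover with $\theta = 1/n$ (each $G[A_i]$ is a subcube, hence a $1/n$-expander). Taking $\eps = 1/n$ in \Cref{thm:local-mixing}, for \emph{every} vertex $v \in f^{-1}(1)$ there is a local mixing time $\ell_v = (ns)^{O(\log s)}$, an event $E_v$, and an index $j(v)$ such that a length-$\ell_v$ random walk from $v$, conditioned on $E_v$, is within TV distance $1/n$ of $\mathcal{U}_{A_{j(v)}}$, and $\Pr[E_v] \geq (ns)^{-O(\log s)} =: p_0$.

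The key structural point — and the reason for the initial walk $y = \bY_0, \dots, \bY_M$ in Step~2 — is that we do not need the adversarial starting point $y$ itself to have small local mixing time with a good event; instead, we argue that among the $M = (ns)^{O(\log s)}$ points $\bY_0, \dots, \bY_M$ visited by the first-stage walk, with probability at least $n^{-O(\log(ns))}$ at least one point $\bY_t$ is a vertex from which the \emph{second-stage} walk mixes well with only the $2\log n$-fold repetition of a \emph{single} conditioning event. Concretely: once we are at a vertex $u = \bY_t$ whose local mixing time is $\ell$ (which is one of the $(ns)^{O(\log s)}$ guessed values), running $2\log n$ independent length-$\ell$ walks from $u$ and conditioning all of them on $E_u$ succeeds with probability $p_0^{2\log n} = (ns)^{-O(\log(s)\log n)}$ — this is exactly the bottleneck that produced the weaker bound. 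To beat it, I would instead use the refinement (alluded to in the excerpt as ``some modifications of intermediate technical results in the proof of \Cref{thm:local-mixing-intro}''): the event $E_u$ can be taken of the form ``the first $a$ steps escape the bad region $R$ and land in a good vertex, then the next $b$ steps mix''; crucially, the escape-from-$R$ portion succeeds with probability $\Omega(1/\poly(ns))$ \emph{per walk} and, more importantly, once a single walk has escaped $R$ at some point $\bY_t$ along the first-stage walk, \emph{that same} $\bY_t$ is then a low-local-mixing-time starting vertex for which the remaining event is just ``the walk stays in $A_I$'' — which by \Cref{lem:escape-lb} succeeds with probability $1 - t\Psi(A_I) \geq 1/2$, a constant. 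Thus the $2\log n$ repetitions each succeed with constant (not $1/\poly$) probability, giving overall success probability $p_0 \cdot 2^{-O(\log n)} \cdot (\text{constant})^{2\log n} = n^{-O(\log(ns))}$, as claimed.

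With the walk-analysis in hand, the rest is the \cite{de2014learning}-style argument: letting $\mu$ be the distribution of the common endpoint conditioned on the good event, $\mu$ is within TV distance $O(1/n)$ of $\mathcal{U}_{A_j}$ for some true term $T_j$, so every literal in $T_j$ is satisfied by a $\mu$-draw with probability $\geq 1 - O(1/n)$, and every coordinate \emph{not} mentioned in $T_j$ is $0$ (resp. $1$) with probability $\geq 1/4$ each. Hence the largest term satisfied by all $2\log n$ of $\bZ_t^1, \dots, \bZ_t^{2\log n}$ equals $T_j$ with probability $1 - o(1)$ by a union bound over the $\leq n$ irrelevant coordinates (each wrongly ``pinned'' with probability $\leq (3/4)^{2\log n} \leq n^{-1.5}$). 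Since \genlistofterms~iterates over all $t \in [M]$ and all guessed lengths $\ell \in [(ns)^{O(\log s)}]$, the pair $(t,\ell)$ for which $\bY_t$ is a good escaped vertex with local mixing time $\ell$ is eventually hit, and the term $T_j$ gets added to $\calL$, proving \Cref{lem:speedy-generate-term-analysis}.

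The main obstacle I anticipate is making the ``single escaped vertex along $\bY_0,\dots,\bY_M$ suffices, and the conditioning event then has constant per-repetition success probability'' claim fully rigorous. The subtlety is that the vertex $\bY_t$ at which the first-stage walk first escapes $R$ is itself random and correlated with the walk's history, so one must be careful that, \emph{conditioned on} $\bY_t = u$ being this escape vertex, the second-stage walks from $u$ are genuinely fresh and the escape-probability bound of \Cref{lem:escape-lb} (or \Cref{lem:escape-ub}) applies with the right starting distribution. This requires essentially re-deriving the good-set machinery of \Cref{sec:goodsetslocalmixing} in a form that tracks a specific escape vertex rather than a distribution, and checking that the parameters ($\Delta$, $g$, $\alpha$ from \Cref{lem:good-mixing}) still come out as $(ns)^{O(\log s)}$ and $(ns)^{-O(\log s)}$ respectively — but this is exactly the content of the promised modifications, so I would cite or re-run those arguments rather than reprove them from scratch.
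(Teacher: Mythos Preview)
Your approach is essentially the paper's: the first-stage walk $\bY_0,\dots,\bY_M$ is used (via the escape-from-$R$ argument inside \Cref{lem:good-mixing}, repackaged in the paper as \Cref{lem:good-mixing-rephrased} and \Cref{lem:good-point-reached}) to reach, with probability $\geq 1/8$, a vertex $\bY_t$ whose local-mixing event has probability $(s\dmax)^{-O(1)} = (ns)^{-O(1)}$ rather than $(ns)^{-O(\log s)}$, so that the $2\log n$ second-stage walks all succeed with probability $((ns)^{-O(1)})^{2\log n} = (ns)^{-O(\log n)} = n^{-O(\log(ns))}$. Your accounting is slightly off---the first-stage escape probability is a constant (not $p_0$), and the per-walk event probability at the good vertex is $(ns)^{-O(1)}$ (not a constant, because of the $\br$-rejection and super-cover passage), but neither change affects the final $n^{-O(\log(ns))}$ bound.
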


\Cref{lem:speedy-generate-term-analysis} is easily seen to yield \Cref{thm:list-decoding}, by simply performing $n^{O(\log(ns))}$ runs of \genlistofterms; the total runtime is 
\[(ns)^{O(\log s)} \cdot n^{O(\log(ns))}
=(ns)^{O(\log s)} \cdot (ns)^{O(\log n)}
=(ns)^{O(\log(ns))}.
\] 

To give some intuition for the proof of \Cref{lem:speedy-generate-term-analysis}, recall that since the $p$-parameter of \Cref{thm:local-mixing} is $p = \left( \frac{s \dmax  \log(|V|) \log(1/\eps)}{\theta} \right)^{-\Omega(\log(s))}$, \Cref{thm:local-mixing} only guarantees a near-uniform sample from some term $T_i$ with probability $(ns)^{-O(\log(s))}$. Thus, to get $2\log(n)$ samples we need to repeat a total of $(ns)^{-O(\log(s) \log(n))}$ times. If we could improve the success probability of a walk to $(ns)^{-O(1)}$, then we would immediately get the desired speed-up. While we do not know how to do this, it turns out that the following weaker result will suffice:

\begin{lemma}
\label{lem:good-point-reached}
Let $G = (V,E)$ be a graph and $\mathcal{C} := X_1, ..., X_s$ be a $\theta$-cover of $G$. For any $v \in V$, with probability at least $1/8$ a random walk starting from $v$ will reach a point $w \in V$ with $T_{\mix}^{(s\dmax)^{-O(1)}, \eps}(w, \mathcal{C}) \leq (\frac{s \dmax \log(|V|) \log(1/\eps)}{\theta})^{O(\log s )}$ in at most $(\frac{s \dmax  \log(|V|) \log(1/\eps)}{\theta})^{O(\log s )}$ steps.
\end{lemma}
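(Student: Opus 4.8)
\textbf{Proof proposal for \Cref{lem:good-point-reached}.}

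The plan is to argue that ``most'' points in each $X_i$ have small local mixing time, and then use an escape-probability argument (in the spirit of \Cref{lem:good-mixing}) to show that a random walk started from an arbitrary $v$ quickly reaches one of these good points. Concretely, let $\Delta := (s \dmax \log(|V|)\log(1/\eps)/\theta)^{O(\log s)}$ be the local mixing time bound coming from \Cref{thm:local-mixing} (after disjointification and the super-cover machinery of \Cref{sec:localmixingboundproof}), and let $S \subseteq V$ be the set of vertices $w$ with $T_{\mix}^{q,\eps}(w,\mathcal{C}) \leq \Delta$ for the appropriate $q = (s\dmax)^{-O(1)}$ threshold. The first step is to extract from the proof of \Cref{thm:disjoint-local-mixing} — in particular from \Cref{lem:good-cover-exists} together with the definition of a $(\Delta,g)$-good set and \Cref{lem:super-cover-mixing} — that there is a super-cover $\{B_1,\dots,B_s\}$ of $\mathcal{C}$ in which every $B_i$ is $(\Delta, g)$-good for $g = \lambda^{O(\log s)} = (s\dmax\log(|V|)\log(1/\eps)/\theta)^{-O(\log s)}$. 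Hence $|S \cap B_i| \geq g|B_i|$ for every $i$, so $R := V \setminus S$ satisfies $|R \cap B_i| \leq (1-g)|B_i|$ for all $i$.

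The second step is to lower bound the one-sided conductance of every subset $T \subseteq R$, exactly as in the first part of the proof of \Cref{lem:good-mixing}: since $\{B_i\}$ covers $V$, some $B_i$ has $|T \cap B_i| \geq |T|/s$; if $\Vol(T\cap B_i) \leq \frac12 \Vol_{B_i}(B_i)$ then $|E(T,\overline{T})| \geq \alpha \Vol_{B_i}(T\cap B_i) \geq \alpha|T|/s \geq \alpha \vol_G(T)/(s\dmax)$ where $\alpha = \Phi(G[B_i]) \geq \lambda^{O(\log s)}$; and if $\Vol(T\cap B_i) > \frac12\Vol_{B_i}(B_i)$ then goodness of $B_i$ gives $|\overline{T}\cap B_i| \geq |\overline{R}\cap B_i| \geq g|B_i|$, so again $|E(T,\overline{T})| \geq \alpha g |B_i| \geq \alpha g\vol_G(T)/(s\dmax)$. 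Either way $\Psi(T) \geq \alpha g/(s\dmax) =: \theta'$, which is $(s\dmax\log(|V|)\log(1/\eps)/\theta)^{-O(\log s)}$.

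The third step invokes \Cref{lem:escape-ub}: since every $T \subseteq R$ has $\Psi(T) \geq \theta'$, a random walk started from any vertex in $R$ leaves $R$ within $\wt{O}(1/\theta'^2) = (s\dmax\log(|V|)\log(1/\eps)/\theta)^{O(\log s)}$ steps with probability at least $1/8$. If $v \in S$ already then we are trivially done (taking $w = v$), and if $v \in R$ then with probability at least $1/8$ the walk enters $S$ within the stated number of steps, reaching a vertex $w$ with $T_{\mix}^{q,\eps}(w,\mathcal{C}) \leq \Delta \leq (s\dmax\log(|V|)\log(1/\eps)/\theta)^{O(\log s)}$. Padding $\Delta$ and the escape bound into a single bound of the claimed form $(s\dmax\log(|V|)\log(1/\eps)/\theta)^{O(\log s)}$ completes the argument. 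The main obstacle is bookkeeping: one must verify that the $q$-threshold in the local-mixing-time guarantee for points of $S$ can be taken to be $(s\dmax)^{-O(1)}$ rather than the much smaller $p$ of \Cref{thm:local-mixing} — this requires re-reading the proof of \Cref{lem:good-mixing} to check that the only place the probability degrades below $(s\dmax)^{-O(1)}$ is the escape step (the factor $\Omega(\alpha^2 g^2/(s^4\dmax^3\Delta))$), and that the ``good point'' $w$ we land on after escaping $R$ genuinely has the larger $q$-probability event available to it; alternatively one shows $S$ can be defined with respect to this weaker threshold and the whole argument still goes through. A secondary subtlety is that $\mathcal{C}$ here is a general (not necessarily disjoint) $\theta$-cover, so one must either first pass to the disjointification as in \Cref{sec:disjointification} or note that \Cref{lem:good-cover-exists} combined with \Cref{lem:super-cover-mixing} already applies at the level of $\mathcal{P}(\mathcal{C})$.
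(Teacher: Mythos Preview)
Your proposal is correct and follows essentially the same route as the paper. The paper organizes the argument by first stating a disjoint-cover version of the lemma and reducing to it via the disjointification coupling of \Cref{sec:disjointification}, then invoking \Cref{lem:good-cover-exists} to obtain the $\lambda^{O(\log s)}$-good $\alpha$-super cover and quoting the escape-from-$R$ portion of the proof of \Cref{lem:good-mixing} (packaged as a separate lemma) to conclude; your inline expansion of that escape argument and your identification of both the $(s\dmax)^{-O(1)}$ threshold issue and the disjointification subtlety match this exactly.
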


This was essentially proved along the way to \Cref{thm:local-mixing}, but we will need to rephrase a few of the lemmata to put them in the form that we need. We begin by noting that it suffices to prove a disjoint-cover version of
\Cref{lem:good-point-reached}, specifically the following:

\begin{lemma}
\label{lem:good-point-reached-disjoint}
Let $G = (V,E)$ be a graph and $\mathcal{C} := X_1, ..., X_s$ be a disjoint $\theta$-cover of $G$. For any $v \in V$, with probability at least $1/8$, a random walk starting from $v$ will reach a point $u \in V$ with $T_\mix^{(s\dmax )^{-O(1)}, \eps}(u, \mathcal{C}) \leq (\frac{s \dmax  \log(|V|) \log(1/\eps)}{\theta})^{O(\log(s))}$ in at most $(\frac{s \dmax  \log(|V|) \log(1/\eps)}{\theta})^{O(\log(s))}$ steps.
\end{lemma}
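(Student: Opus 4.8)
\textbf{Plan for proving \Cref{lem:good-point-reached-disjoint}.}
The plan is to repackage the machinery that was already developed en route to \Cref{thm:disjoint-local-mixing}, but rather than tracking where a walk \emph{ends up}, we track where it \emph{passes through}. Recall that \Cref{lem:good-cover-exists} gives a super cover $\calC' = \{B_1, \ldots, B_s\}$ of $\calC$ in which every $B_i$ is a $\lambda^{O(\log s)}$-expander and is $(\lambda^{-O(\log s)}, \lambda^{O(\log s)})$-good with respect to $\calC$. Fix the vertex $v$, and let $B_i$ be a piece of this super cover containing $v$. The key point is that inside $B_i$, an $\lambda^{O(\log s)}$-fraction of vertices $w$ already have $T_{\mix}^{1/8\dmax, \eps}(w, \calP(\calC)) \leq \Delta$ for $\Delta = (\frac{s\dmax \log(|V|)\log(1/\eps)}{\theta})^{O(\log s)}$; by \Cref{lem:super-cover-mixing} such $w$ also satisfy $T_{\mix}^{1/(16s\dmax),\, 4s\eps}(w,\calC) \leq \Delta$, which after rescaling $\eps$ is exactly the kind of bound \Cref{lem:good-point-reached-disjoint} asks for (with $(s\dmax)^{-O(1)}$ in place of the exact constant $1/8$). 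So it suffices to show that a walk from $v$ reaches such a ``good'' vertex with probability at least $1/8$ within the allotted number of steps.

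The argument for this is essentially the escape-from-$R$ argument inside the proof of \Cref{lem:good-mixing}, and I would carry it out in the same order. Let $S$ be the set of vertices with local mixing time at most $\Delta$ (with respect to $\calP(\calC)$) and $R = V \setminus S$. First, mimic the conductance computation from \Cref{lem:good-mixing}: for every $T \subseteq R$, using that the $B_j$'s cover $V$, that some $B_j$ captures a $1/s$-fraction of $T$, and that each $B_j$ is $(\Delta,g)$-good with $g = \lambda^{O(\log s)}$, conclude $\Psi(T) \geq \frac{\alpha g}{\dmax s}$ where $\alpha = \lambda^{O(\log s)}$ is the expansion of the $B_j$'s. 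Second, apply \Cref{lem:escape-ub} to the set $R$ with this one-sided-conductance lower bound: a walk started anywhere in $R$ leaves $R$ within $\wt O(s^2 \dmax^2 / (g\alpha)^2) = (\frac{s\dmax \log(|V|)\log(1/\eps)}{\theta})^{O(\log s)}$ steps with probability at least $1/8$. Since leaving $R$ means landing in $S$, this is exactly the statement we want; the step count is absorbed into the quasi-polynomial bound, and the $1/8$ matches the lemma.

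Then, to deduce \Cref{lem:good-point-reached} from \Cref{lem:good-point-reached-disjoint}, I would reuse the disjointification reduction from \Cref{sec:disjointification} essentially verbatim. Pass to the disjointification $(H, \calC')$ of $(G,\calC)$, which by the lemma there is a disjoint $\theta/s^2$-cover; a random walk from $v$ in $G$ couples to a walk in $H$ from a copy $u$ of $v$, exactly as in the proof of ``\Cref{thm:local-mixing} assuming \Cref{thm:disjoint-local-mixing},'' so that reaching a good vertex of $H$ (with respect to $\calC'$) corresponds to reaching a good vertex of $G$ (with respect to $\calC$), at the cost of the usual $\poly(s)$ losses in the $p$-parameter and $\eps$-parameter. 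Since $\log|V(H)| = \log(s|V(G)|) = O(\log|V(G)| + \log s)$, the quasi-polynomial bound is preserved. I expect the main obstacle to be purely bookkeeping: making sure the ``good'' guarantee I extract — which is stated for $\calP(\calC)$ in \Cref{lem:good-cover-exists} — is converted, via \Cref{lem:super-cover-mixing} and then the disjointification coupling, into a clean statement of the form $T_{\mix}^{(s\dmax)^{-O(1)},\eps}(\cdot,\calC)$, with all the $\eps \to \eps/\poly(s)$ and $p \to p/\poly(s)$ substitutions tracked consistently so that the final exponents remain $O(\log s)$ and the probability of reaching a good point stays a constant (which I can boost back to $1/8$ by choosing the implicit constants in $\Delta$ appropriately, or by a constant number of independent restarts).
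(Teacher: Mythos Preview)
Your proposal is correct and follows essentially the same approach as the paper. The paper packages your conductance-and-escape argument into a separate lemma (\Cref{lem:good-mixing-rephrased}, stated as ``this was shown in the proof of \Cref{lem:good-mixing}'') and then combines it with \Cref{lem:good-cover-exists} exactly as you outline; your reduction from the general to the disjoint case via the disjointification coupling is likewise identical to the paper's.
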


\begin{proof}[Proof of \Cref{lem:good-point-reached} Assuming \Cref{lem:good-point-reached-disjoint}]
We will closely follow the proof of \Cref{thm:local-mixing} (assuming \Cref{thm:disjoint-local-mixing}	).

Let $(H, \mathcal{C}')$ be the disjointification of $(G,\mathcal{C})$. Fix a vertex $v \in V(G)$ and a copy $u \in V(H)$ of $v$. We start by recalling that 
\begin{equation} \label{eq:see-me-now}
T_{\mix}^{p/s, \eps} (v, \mathcal{C}) \leq T_{\mix}^{p, \eps/2s}(u, \mathcal{C}'),
\end{equation}
	which was shown in the proof of \Cref{thm:local-mixing} (\Cref{eq:see-you-later}). 
	
	Now fix an integer $t$ and let $v := \bY_0, ..., \bY_t$ denote a random walk starting from $v$ in $G$. Similarly, let $u := \bZ_0, \dots, \bZ_t$ denote a random walk starting from $u$ in $H$. 
	Recall that there is a natural coupling between walks in $G$ starting from $v$ and walks in $H$ starting from $u$, namely, for each $j \in \{1, ..., t\}$, choose a random number from $[s]$, which corresponds to the copy of $G$ that is moved to in $H$ (if a self-loop is traversed, we ignore the coin flip and stay at the same copy.) As described in the proof of \Cref{thm:local-mixing}, this gives us a coupling between the walk $\bY_1, \dots, \bY_t$ in $G$ and the walk $\bZ_1, \dots, \bZ_t$ in $H$. 
	
	The vital property of the above coupling is that for each $i$, the vertex $\bZ_i$ in $H$ corresponds to (a copy of) the vertex $\bY_i$ in $G$. In particular, combining this coupling with \Cref{lem:good-point-reached-disjoint}, it follows that  with probability at least $1/8$, a random walk from $v \in V(G)$ reaches a vertex $a \in V(G)$ with a corresponding vertex $b \in V(H)$ satisfying
	\[ T_\mix^{(s\dmax (H))^{-O(1)}, \eps}(b, \mathcal{C}) \leq \left(\frac{s \dmax (H) \log(|V(H)|) \log(1/\eps)}{\theta} \right)^{O(\log(s))} \]
	in at most $(\frac{s \dmax  \log(|V(H)|) \log(1/\eps)}{\theta})^{O(\log s)}$ steps. 
Using the relation between local mixing times in $H$ and $G$ that is given by \Cref{eq:see-me-now}, it then follows that
\begin{align*}
T_\mix^{(s\dmax (G))^{-O(1)}, \eps}(a, \mathcal{C}) 
&\leq \left(\frac{s^2 \dmax (G) \log(|V(H)|) \log(2s/\eps)}{\theta} \right)^{O(\log s)} \\
&= \left(\frac{s \dmax (G) \log(|V|) \log(1/\eps)}{\theta} \right)^{O(\log s)}
\end{align*}
	as desired. 
\end{proof}

It remains to prove \Cref{lem:good-point-reached-disjoint}. First we will need the following result.

\begin{lemma}
\label{lem:good-mixing-rephrased}
	Suppose that $G = (V,E)$ is a graph and $\mathcal{C} := \{A_1, \dots, A_s\}$ is a disjoint $\theta$-cover. Moreover, let $\{B_1, ..,. B_s\}$ be an $\alpha$-super cover, such that $B_1, \dots, B_s$ are all $(\Delta, g)$-good with respect to $\calC$. We then have that for each $v \in V$, a random walk starting at $v$ and going for at most $\wt{O}(s^2 \dmax ^2/ g^2 \alpha^2)$ steps with probability at least $1/8$
	reaches a vertex $u$  such that 
		\[T_{\mix}^{1/O(s\dmax ), 4 \eps s}(u,\calC) \leq \Delta. \]
\end{lemma}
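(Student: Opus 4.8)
The plan is to observe that \Cref{lem:good-mixing-rephrased} is essentially the first half of the proof of \Cref{lem:good-mixing}, stopping just before the convex-combination argument that concludes that proof. First I would set $S := \{w \in V : T_{\mix}^{1/8\dmax,\eps}(w, \mathcal{P}(\mathcal{C})) \le \Delta\}$ and $R := V \setminus S$. Since each $B_i$ is $(\Delta,g)$-good with respect to $\mathcal{C}$ with $g > 0$ and the $B_i$ cover $V$, we have $S \neq \emptyset$, hence $R \subsetneq V$. By \Cref{lem:super-cover-mixing} applied with the super cover $\mathcal{P}(\mathcal{C})$ of $\mathcal{C}$, every $w \in S$ satisfies $T_{\mix}^{1/(16s\dmax),\, 2\eps s}(w, \mathcal{C}) \le \Delta$, so in particular $T_{\mix}^{1/O(s\dmax),\, 4\eps s}(w, \mathcal{C}) \le \Delta$ for every $w \in S$. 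It therefore suffices to show that a random walk of length $\wt{O}(s^2\dmax^2/(g^2\alpha^2))$ started at $v$ visits a vertex of $S$ with probability at least $1/8$; and if $v \in S$ this is immediate, since the walk is at $v$ already at step $0$, so we may assume $v \in R$.

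Second, I would establish the conductance lower bound $\Psi(T) \ge \frac{\alpha g}{s\dmax}$ for every nonempty $T \subseteq R$, verbatim from the proof of \Cref{lem:good-mixing}: choose $i$ with $|T \cap B_i| \ge |T|/s$ (possible since the $B_i$ cover $V$); if $\vol_{B_i}(T \cap B_i) \le \frac{1}{2}\vol_{B_i}(B_i)$, then the $\alpha$-expansion of $G[B_i]$ gives $|E(T,\overline{T})| \ge \alpha\vol_{B_i}(T \cap B_i) \ge \alpha|T|/s \ge \alpha\vol_G(T)/(s\dmax)$; otherwise, $(\Delta,g)$-goodness of $B_i$ gives $|\overline{T} \cap B_i| \ge |\overline{R} \cap B_i| \ge g|B_i|$, and applying the $\alpha$-expansion of $G[B_i]$ to the set $\overline{T} \cap B_i$ (whose volume is now at most half of $\vol_{B_i}(B_i)$) gives $|E(T,\overline{T})| \ge \alpha g|B_i| \ge \alpha g|T|/s \ge \alpha g\vol_G(T)/(s\dmax)$. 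Either way $\Psi(T) \ge \alpha g/(s\dmax)$.

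Third, I would apply \Cref{lem:escape-ub} to the set $R$ with parameter $\theta' := \alpha g/(s\dmax)$: starting from the point mass on $v \in R$, with probability at least $1/8$ the walk leaves $R$ within $\wt{\Omega}(1/\theta'^2) = \wt{O}(s^2\dmax^2/(g^2\alpha^2))$ steps. At the first time the walk leaves $R$ it is at a vertex $u \in S$, which by the first paragraph satisfies $T_{\mix}^{1/O(s\dmax),\, 4\eps s}(u, \mathcal{C}) \le \Delta$, which is exactly the desired conclusion.

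I do not expect a genuine obstacle here: the lemma merely repackages steps already carried out inside the proof of \Cref{lem:good-mixing}. The only points requiring care are (i) verifying $R \subsetneq V$ so that \Cref{lem:escape-ub} applies, which is where $g > 0$ is used, and (ii) tracking the $(p,\eps)$-parameters through \Cref{lem:super-cover-mixing}, which costs only a $\poly(s)$ factor in $1/p$ and an extra factor of $s$ in the total variation accuracy — both absorbed into the $1/O(s\dmax)$ and $4\eps s$ in the statement. In particular, unlike in the proof of \Cref{lem:good-mixing} we do not need the $g \le 1/(2\dmax)$ refinement, since the escape-time bound $\wt{O}((s\dmax/(\alpha g))^2)$ already matches the claimed $\wt{O}(s^2\dmax^2/(g^2\alpha^2))$.
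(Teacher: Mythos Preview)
Your proposal is correct and follows exactly the approach implicit in the paper's one-line proof, which simply points back to the argument inside the proof of \Cref{lem:good-mixing}. You have correctly identified and extracted the relevant steps (definition of $S$ and $R$, the conductance lower bound $\Psi(T)\ge \alpha g/(s\dmax)$ via the two-case analysis, the application of \Cref{lem:escape-ub}, and the parameter conversion via \Cref{lem:super-cover-mixing}), and your observation that the $g\le 1/(2\dmax)$ reduction is unnecessary here is also right, since that reduction in \Cref{lem:good-mixing} only serves to match the final stated bound there rather than the escape-time estimate itself.
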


\begin{proof}
This was shown in the proof of \Cref{lem:good-mixing}.
\end{proof}

We can now establish \Cref{lem:good-point-reached-disjoint}:

\begin{proof}[Proof of \Cref{lem:good-point-reached-disjoint}]
Applying \Cref{lem:good-cover-exists} to $(G, \calC)$, we get an $\alpha$-super cover $\{B_1,\dots,B_s\}$ with $\alpha = \lambda^{O(\log(s))}$,  where each $B_i$ is the $A_I$ given by \Cref{lem:good-cover-exists}, such that $B_1, \dots, B_s$ are all $(\lambda^{-O(\log s)}, \lambda^{O(\log s)})$-good with respect to $\calC$.  Combining this with \Cref{lem:good-mixing-rephrased}, we get that the random walk in $G$ from $v$ reaches a point $w$ with
		\[T_{\mix}^{1/O(s \dmax ), 4 \eps s}(w,\calC') \leq \left( \frac{\dmax  s \log(|V|) \log(1/\eps)}{\theta} \right)^{O(\log(s))} \]
	in at most $\left(\frac{\dmax  s \log(|V|) \log(1/\eps)}{\theta} \right)^{O(\log(s))}$ steps with probability at least $1/8$.
\end{proof}

Finally, it remains to prove \Cref{lem:speedy-generate-term-analysis} using \Cref{lem:good-point-reached}:

\begin{proof}[Proof of \Cref{lem:speedy-generate-term-analysis} using \Cref{lem:good-point-reached}]
As described in \Cref{sec:list-decoding-DNFs}, we create a graph on the satisfying assignments of the DNF with a $\theta$-cover corresponding to the satisfying assignments of the $s$ different terms of $f$, so $\dmax=O(n).$ and $\theta=1/n$.  Set $\eps = 1/n$ and condition on the event (which happens with probability at least 1/8) that the random walk $\bY_1, ..., \bY_M$ satisfies \Cref{lem:good-point-reached}. Now consider the iteration $t$ for which $\bY_t$ satisfies $T_\mix^{(ns)^{-O(1)}, \eps}(\bY_t,\mathcal{C}) \leq (ns)^{O(\log(s))}$ and $\ell = T_\mix^{(ns)^{-O(1)}, \eps}(\bY_t,\mathcal{C})$. If in this iteration, the event for local mixing, $E_{\bY_t}$, holds over all $2 \log(n)$ random walks, then we will correctly output the term $T$ with probability $1-o(1)$. (This is shown in the proof of \Cref{lem:far-point-learn}.) We can thus lower bound the probability of the compound event that \Cref{lem:good-point-reached} holds; all $2 \log(n)$ walks satisfy $E_{\bY_t}$; and we correctly add a term in $f$ to $\calL$, as occurring with probability at least
		\[
\frac{1}{8} \cdot \left ( \frac{1}{ns} \right)^{O(\log(n))} \cdot 		(1 - o(1)) = n^{-O(\log(ns))},
		\]
		and the proof is complete.
\end{proof}

\end{document}